\newtheorem{theorem}{Theorem}
\newtheorem{lemma}[theorem]{Lemma}
\newtheorem{notation}[theorem]{Notation}
\newtheorem{problem}[theorem]{Problem}
\newtheorem{proposition}[theorem]{Proposition}
\newtheorem{remark}[theorem]{Remark}
\newenvironment{proof}[1][Proof]{\noindent\textbf{#1.} }{\ \rule{0.5em}{0.5em}}
\let\pdfoutput=\undefined\fi
\begin{document}

\title{Dynamic Mean-Field Theory for Continuous Random Networks}
\author{W. A. Z\'{u}\~{n}iga-Galindo\\University of Texas Rio Grande Valley\\School of Mathematical \& Statistical Sciences\\One West University Blvd\\Brownsville, TX 78520, United States}
\maketitle

\begin{abstract}
This article studies the dynamics of the mean-field approximation of
continuous random networks. These networks are stochastic integrodifferential
equations driven by Gaussian noise. The kernels in the integral operators are
realizations of generalized Gaussian random variables. The equation controls
the time evolution of a macroscopic state interpreted as neural activity,
which depends on position and time. The position is an element of a measurable
space. Such a network corresponds to a statistical field theory (STF) given by
a momenta-generating functional. Discrete versions of the mentioned networks
appeared in spin glasses and as models of artificial neural networks (NNs).
Each of these discrete networks corresponds to a lattice SFT, where the action
contains a finite number of neurons and two scalar fields for each neuron.
Recently, it has been proposed that these networks can be used as models for
deep learning. In this application, the number of neurons is astronomical;
consequently, continuous models are required. In this article, we develop
mathematically rigorous, continuous versions of the mean-field theory
approximation and the double-copy system that allow us to derive a condition
for the criticality of continuous stochastic networks via the largest Lyapunov
exponent. It is essential to mention that the classical methods for mean-field
theory approximation and the double-copy based on the stationary phase
approximation cannot be used here because we are dealing with oscillatory
integrals on infinite dimensional spaces. To our knowledge, the approach
presented here is completely new. We use two basic architectures; in the first
one, the space of neurons is the real line, and then the neurons are organized
in one layer; in the second one, the space of neurons is the $p$-adic line,
and then the neurons are organized in an infinite, fractal, tree-like
structure. We also studied a toy model of a continuous Gaussian network with a
continuous phase transition. This behavior matches the critical brain
hypothesis, which states that certain biological neuronal networks work near
phase transitions.

\end{abstract}
\tableofcontents

\section{\label{SECT_1}Introduction}

This article is framed into the correspondence between continuous random
neural networks (NNs) and statistical field theories (SFTs). A continuous
random NN is given by a stochastic integrodifferential equation of the form%
\begin{gather}
\frac{d\boldsymbol{h}(x,t)}{dt}=-\gamma\boldsymbol{h}\left(  x,t\right)
-\widetilde{\boldsymbol{j}}\left(  x,t\right)  +%
{\displaystyle\int\limits_{\Omega}}
\boldsymbol{A}(x,y)\boldsymbol{h}\left(  y,t\right)  d\mu\left(  y\right)
\label{Continuous_Network}\\%
{\displaystyle\int\limits_{\Omega}}
\boldsymbol{B}(x,y)\boldsymbol{x}\left(  y,t\right)  d\mu\left(  y\right)  +%
{\displaystyle\int\limits_{\Omega}}
\boldsymbol{J}\left(  x,y\right)  \phi\left(  \boldsymbol{h}\left(
y,t\right)  \right)  d\mu\left(  y\right) \nonumber\\
+%
{\displaystyle\int\limits_{\Omega}}
\boldsymbol{U}\left(  x,y\right)  \varphi\left(  \boldsymbol{x}\left(
y,t\right)  \right)  d\mu\left(  y\right)  +\eta\left(  x,t\right)  \text{,
}\nonumber
\end{gather}
where $x\in\Omega$,\ $t\in\mathbb{R}$. The space of neurons $\Omega$ is an
infinite set, with measure $\mu$, and the real-valued function $\boldsymbol{h}%
(x,t)$ is a neural activity at the location $x$ and time $t$; we assume that
$\boldsymbol{h}(x,t)\in L^{2}\left(  \Omega\times\mathbb{R},d\mu dt\right)  $.
The functions $\boldsymbol{A}(x,y)$, $\boldsymbol{B}(x,y)$, $\boldsymbol{J}%
\left(  x,y\right)  $, and $\boldsymbol{U}\left(  x,y\right)  $ are
realizations of generalized Gaussian random variables in $L^{2}\left(
\Omega\times\Omega\right)  $, with mean zero, and $\eta\left(  x,t\right)  $
is a realization of a generalized Gaussian noise in $L^{2}(\Omega
\times\mathbb{R}\mathbb{)}$. The functions $\phi$ and $\varphi$ are activation
functions, sigmoidal functions of a general type, $\boldsymbol{x}\left(
x,t\right)  $, $\widetilde{\boldsymbol{j}}\left(  x,t\right)  $ are input
signals, and $\gamma>0$. The system (\ref{Continuous_Network}) is a stochastic
version of a continuous cellular neural network (CNN), \cite{Chua-Tamas},
\cite{Slavova}, \cite{Zambrano-Zuniga-1}-\cite{Zambrano-Zuniga-2}.

In the case $\boldsymbol{A}(x,y)=\boldsymbol{B}(x,y)=\boldsymbol{U}\left(
x,y\right)  =0$, the SFT associated with the network (\ref{Continuous_Network}%
) corresponds to a moment-generating functional of\ the form%
\begin{equation}
{\LARGE Z}\left(  \boldsymbol{j},\widetilde{\boldsymbol{j}};\boldsymbol{J}%
\right)  =%
{\displaystyle\iint}
D\boldsymbol{h}D\widetilde{\boldsymbol{h}}\exp\left(  S_{0}\left[
\boldsymbol{h},\widetilde{\boldsymbol{h}}\right]  -S_{int}\left[
\boldsymbol{h},\widetilde{\boldsymbol{h}};\boldsymbol{J}\right]  +\left\langle
\boldsymbol{j},\boldsymbol{h}\right\rangle +\left\langle \widetilde
{\boldsymbol{j}},\widetilde{\boldsymbol{h}}\right\rangle \right)  ,
\label{Zeta_function}%
\end{equation}
where the term $S_{0}\left[  \boldsymbol{h},\widetilde{\boldsymbol{h}}\right]
$ plays the role of free action and the term
\[
S_{int}\left[  \boldsymbol{h},\widetilde{\boldsymbol{h}};\mathbf{J}\right]
=\left\langle \widetilde{\boldsymbol{h}},%
{\displaystyle\int\limits_{\Omega}}
\boldsymbol{J}\left(  x,y\right)  \phi\left(  \boldsymbol{h}\left(
y,t\right)  \right)  d\mu\left(  y\right)  \right\rangle
\]
controls the interaction between the fields $\widetilde{\boldsymbol{h}}%
$,$\boldsymbol{h}$; here $\left\langle \cdot,\cdot\right\rangle $ is the usual
inner product on the Hilbert space $L^{2}\left(  \Omega\times\mathbb{R},d\mu
dt\right)  $. This article has two goals. The first one is to develop
mathematically rigorous, continuous versions of the mean-field theory
approximation and the double-copy system. It is essential to mention that the
methods based on the stationary phase approximation cannot be used here
because we are dealing with oscillatory integrals on infinite dimensional
spaces like $L^{2}\left(  \Omega\times\Omega\right)  $. We used white noise
calculus techniques, the Bochner-Minlos theorem, and Gaussian measures in
Hilbert spaces. To the best of our knowledge the approach presented here is
completely new.

As a first application, we formally adapt the technique of Sompolinsky et al.,
\cite{Sompolinsky et al}, that allows us to derive a condition for the
criticality of networks of type (\ref{Continuous_Network}) via the largest
Lyapunov exponent. In the case of continuous networks, the mean-field theory
approximation and the double-copy system associated with networks of type
(\ref{Continuous_Network}) are determined by a system of differential
equations on the covariance functions of the fields. The covariance functions
depend on time and the neurons' spatial distribution/concentration. Only time
derivatives appear in the mentioned systems, and the spatial variables appear
as parameters. Several types of spatial averages of the covariance functions
also satisfy the mentioned system of differential equations. This new
"self-averaging" phenomenon plays a central role in the dynamics of continuous
networks. As a second application, we study the mean-field of a toy model of a
random network. We show that the toy network works in the second phase of the
transition. In the section `The road map,' we discuss the mathematical results
presented here and many open problems.

The Buice-Cowan STFs play a central role in modern theoretical neuroscience,
\cite{Buice and Cowan}-\cite{Chow et al}. These SFTs are the limits of
discrete SFTs, and the rigorous study of these theories is very relevant. The
SFTs considered in this article correspond to CNNs, artificial neural networks
bioinspired in the Wilson-Cowan model. Discrete versions of CNNs can be
considered models of deep neural networks \cite{Grosvenor-Jefferson}, and
consequently, the corresponding discrete STFs have natural, formal,
thermodynamic limits as the standard Buice-Cowan SFTs. The work aims to
contribute to a mathematically rigorous understanding of the Buice-Cowan STFs
and their deep learning. We warn the reader that our contributions focus on
fundamental mathematical-physical aspects rather than machine learning applications.

We use three\ basic spaces of neurons: $\mathbb{R}$, $\mathbb{Q}_{p}$,
$\mathbb{Z}_{p}$, where $\mathbb{Q}_{p}$ is the field of $p$-adic numbers, and
$\mathbb{Z}_{p}$ is the ring of $p$-adic integers. In the case $\Omega
=\mathbb{R}$, the neurons are organized in one layer containing infinitely
many neurons. In the case $\Omega=\mathbb{Q}_{p}$, the neurons are organized
in a tree-like hierarchical structure. In contrast, $\Omega=\mathbb{Z}_{p}$,
the neurons are organized in an infinite rooted tree having infinitely many
layers; each layer contains a finite number of neurons. It is worth to mention
that $\mathbb{Q}_{p}$, $\mathbb{Z}_{p}$ are fractals. Networks of type
(\ref{Continuous_Network}) appear in models of cortical neural networks as
well as in computing as CNNs. The author and his collaborators have developed
image-processing algorithms based on $p$-adic CNNs, \cite{Zambrano-Zuniga-2},
\cite{Zuniga-images}. On the other hand, in \cite{Zuniga-Entropy}, a new
$p$-adic Wilson-Cowan model was introduced. Extensive numerical simulations
show that the new model predicts the same phenomena as the standard one, with
the advantage that the $p$-adic model may incorporate experimental data from
connection matrices. In \cite{Zuniga et al}, new $p$-adic Boltzmann machines
were implanted. These NNs are the discretization of continuous STFs. This
shows that continuous STFs, similar to the ones studied here, have
applications in machine learning.

Several researchers have hypothesized that in the cortical neural networks,
the neurons are organized in self-similar (fractal) patterns, with the neural
connections having a hierarchical ordering, see, e.g., \cite{Sporns}. In the
case of artificial NNs, this hypothesis drives to the notion of deep
architecture, understood as hierarchical structure where the neurons are
organized in layers, with connections between neurons in different layers,
see, e.g., \cite{Lecun et al}. For this reason, we argue, the $p$-adic
versions of the network (\ref{Continuous_Network}) are a very promising model.

Assuming that $\boldsymbol{A}(x,y)=\boldsymbol{B}(x,y)=\boldsymbol{U}\left(
x,y\right)  =0$, the partition function $\overline{{\LARGE Z}}\left(
\boldsymbol{0},\boldsymbol{0}\right)  $ of the mean-field theory associated to
${\LARGE Z}\left(  \boldsymbol{0},\boldsymbol{0};\boldsymbol{J}\right)  $ is
given by
\[
\overline{{\LARGE Z}}\left(  \boldsymbol{0},\boldsymbol{0}\right)
=\left\langle {\LARGE Z}\left(  \boldsymbol{0},\boldsymbol{0};\boldsymbol{J}%
\right)  \right\rangle _{\boldsymbol{J}};
\]
which is the partition function of a Gaussian SFT, where the fields are
elements of an infinite-dimensional Hilbert space. In this framework, the
correlation functional of the Gaussian measure is determined by a trace class
operator on the space of fields. So, a Gaussian NN is determined by picking
such an operator.

The second goal of this article is the rigorous study of a toy model of a
Gaussian NN using the techniques introduced here. We set $\Omega
=\mathbb{Z}_{p}$, so the fields $\boldsymbol{h}\left(  x,t\right)  \in
L^{2}\left(  \mathbb{Z}_{p}\times\mathbb{R}\right)  $, and pick a trace class
operator on $L^{2}\left(  \mathbb{Z}_{p}\times\mathbb{Z}_{p}\right)  $
depending on a real parameter $\rho\in\left(  1,\infty\right)  $; we interpret
$\rho$ as the control parameter of our model. In this framework,
$\overline{{\LARGE Z}}_{M}\left(  \rho\right)  $ has a pole at $\rho=2$, i.e.,
$\overline{{\LARGE Z}}_{M}\left(  \rho\right)  =\overline{{\LARGE Z}}%
_{M}\left(  \frac{1}{1-p^{\rho-2}}\right)  $, so
\[
\left.  \frac{d^{k}}{d\rho^{k}}\ln\overline{{\LARGE Z}}_{M}\left(
\rho\right)  \right\vert _{\rho=2}=\infty\text{, for }k=1,2,\ldots
\]
consequently, the network has a continuous phase transition at $\rho=2$. We
compute $G_{\boldsymbol{hh}}^{\rho}(x,y,\tau)=\left\langle \boldsymbol{h}%
(x,t_{1}),\boldsymbol{h}(y,t_{2})\right\rangle _{\boldsymbol{h}}$, $\tau
=t_{1}-t_{2}$, for $\boldsymbol{h}(x,t)$ in a dense subset of $L^{2}\left(
\mathbb{Z}_{p}\times\mathbb{R}\right)  $. A key observation is that for $t$
fixed $\boldsymbol{h}(x,t)\in L^{2}\left(  \mathbb{Z}_{p}\right)  $, and then
it can be approximated for a linear combination of characteristic functions of
ball contained in $\mathbb{Z}_{p}$. For $\rho\in\left(  1,\infty\right)
\smallsetminus\left\{  2\right\}  $, $G_{\boldsymbol{hh}}^{\rho}(x,y,\tau)$ is
controlled by a term of the form $-\sigma^{2}\frac{e^{\rho\tau}}{2\rho}\left(
1-e^{-2\rho\tau}\right)  \delta\left(  x-y\right)  $, which corresponds a
white noise, which we interpret as the `baseline state/pattern' (or background
noise) of the network. For $\rho=2$,%
\[
G_{\boldsymbol{hh}}^{2}(x,y,\tau)\approx C(\tau,\phi)+f\left(  \left\vert
y-x\right\vert _{p}\right)  +\sigma^{2}\frac{e^{\rho\tau}}{2\rho}\left(
1-e^{-2\rho\tau}\right)  \delta\left(  x-y\right)  ,
\]
where $f$ is a positive, non-bounded, increasing function of the distance
($\left\vert y-x\right\vert _{p}$) between the neurons at positions $x$, $y$.
Interpreting $G_{\boldsymbol{hh}}^{\rho}(x,y,\tau),\rho\in\left(
1,\infty\right)  ,$ as an order parameter, we argue that $G_{\boldsymbol{hh}%
}^{\rho}(x,y,\tau)$, $\rho\in\left(  1,\infty\right)  \smallsetminus\left\{
2\right\}  $ describes a disordered phase, meaning that it is completely
controlled it by the background noise and that the short-range or large-range
neuronal connections are not relevant. The phase described by
$G_{\boldsymbol{hh}}^{2}(x,y,\tau)$ is ordered; the long-range interactions
between the neurons control the network behavior. In addition, for $\tau$
fixed, these interactions cannot be masked by the background noise. This
behavior matches the critical brain hypothesis, which states that certain
biological neuronal networks work near phase transitions, \cite{Chialvo},
\cite{Hesse et al}, see also \cite{Buice and Cowan}.

By a well-known paradigm in statistical physics, the phase transitions and
their associated non-analyticities appear only in NNs with infinitely many
neurons, i.e., in the thermodynamic limit when the number of neurons tends to
infinity. A mathematical framework is required to study NNs with infinitely
many neurons. Our toy model shows that our mathematical framework helps us
understand the organization of NNs with infinite neurons. Our mathematical
framework allows us to extend and adapt the techniques based on the largest
Lyapunov exponent of a random NN to assess the conditions under which the
network transitions into the chaotic regime; as a consequence, the continuous
NNs considered here exhibit a type edge of the chaos organization,
\cite{Sompolinsky et al}, \cite[Chapter 10]{Helias et al},
\cite{Grosvenor-Jefferson}. It is an interesting open question is to analyze
the organization of our toy NN from the perspective of the largest Lyapunov exponent.

The practical implications of the mathematical framework developed here in
deep learning and neuroscience applications have not been fully explored. Of
course, these matters are out of scope in this work, which is the development
of mathematical techniques to study random NNs with infinitely many neurons.
For instance, it is relevant to investigate the specific connections to
challenges in modern deep learning (e.g., training stability, expressivity, or
generalization) and to explore whether the framework introduced here could
help understand phenomena like gradient explosion/vanishing in deep networks
or assist in optimizing network depth.

The existence of a correspondence between large/deep neural networks and
statistical field theories (SFTs), also called Euclidean quantum field
theories, is a promising idea for understanding the dynamics of large/deep
neural networks. This correspondence can take several different forms, as
demonstrated by the extensive literature available, see, e.g., \cite{Buice and
Cowan}-\cite{Chow et al}, \cite{Demirtsas et al}-\cite{Erbin et al},
\cite{Grosvenor-Jefferson}-\cite{Helias et al}, \cite{Roberts et
al}-\cite{Segadlo et al}, \cite{Zuniga et al}-\cite{Zuniga-PhyA}, among many
references. In particular, the author and his collaborators have developed a
rigorous mathematical theory for the correspondence of deep Boltzmann machines
(DBNs) having a tree-like topology and certain SFTs, \cite{Zuniga-ATMP}%
-\cite{Zuniga-PhyA}. $p$-Adic numbers were used to encode this type of
tree-like topology. In this framework, a $p$-adic continuous DBM is a
statistical field theory defined by an energy functional on the space of
square-integrable functions defined on a $p$-adic $N$-dimensional ball. In
\cite{Zuniga et al}, some of these $p$-adic deep Boltzmann machines were
implemented. Furthermore, the discrete $p$-adic Boltzmann machines are
universal approximators, \cite{Zuniga-PhyA}.

The study of the correspondence between random networks NNs and SFT is a
relevant problem. For instance, in \cite{Schoenholz et al}, the authors show
that random neural networks (with a finite number of neurons) can be precisely
mapped onto lattice SFTs. We argue that the rigorous study of the
thermodynamic limit of these lattice STFs is important in understanding the
organization of random NNs.

The models presented in \cite{Roberts et al}, \cite{Segadlo et al} use
timeless lattice field theories with two variables ($\boldsymbol{h}%
_{i},\widetilde{\boldsymbol{h}}_{i}$) for each neuron, while in
\cite{Grosvenor-Jefferson} uses\ a lattice field theory with continuous time
and two variables ($\boldsymbol{h}_{i}\left(  t\right)  ,\widetilde
{\boldsymbol{h}}_{i}\left(  t\right)  $) for each neuron. In practical
applications (like ChatGPT), NNs with billions of neurons naturally happen;
the mentioned lattice SFTs use in the definition of the action billions of
variables of the\ types ($\boldsymbol{h}_{i},\widetilde{\boldsymbol{h}}_{i}$)
or ($\boldsymbol{h}_{i}\left(  t\right)  ,\widetilde{\boldsymbol{h}}%
_{i}\left(  t\right)  $). We argue that these networks should be considered as
having infinite neurons distributed continuously in a certain space. This
approach is based on the well-known principle in statistical physics: when a
system contains a very large number of particles, their individual behaviors
tend to average out, leading to a smooth, continuous behavior on a macroscopic
scale. The corresponding SFTs should use time and space continuous variables.

On the other hand, most of the STFs proposed for the deep NNs are lattice
SFTs, for\ instance, \cite{Roberts et al}, \cite{Segadlo et al},
\cite{Grosvenor-Jefferson}, among many works available. A central problem is
to find the thermodynamic limit of these lattice STFs. This article aims to
study the thermodynamic limit of the networks introduced by Grosvenor and
Jefferson in \cite{Grosvenor-Jefferson}.The continuous versions of this type
of network have the form (\ref{Continuous_Network}). The Buice-Cowan SFTs are
similar to (\ref{Zeta_function}); in these models, the time and space are
continuous variables, and the network has infinitely many neurons. These
models were obtained by taking the thermodynamic limit of certain lattice
SFTs, \cite{Buice-Cowean-2007}.

In this work, we use thermodynamic limit as the process for passing from a
lattice SFT to a continuous SFT. This thermodynamic limit approach is beyond
the central limit theorem (CLT). Consider a term of the form
\begin{equation}
\sum_{j=1}^{N}\mathbb{J}_{ij}\phi\left(  \boldsymbol{h}_{j}\left(  t\right)
\right)  , \label{Eq_term}%
\end{equation}
where $N$ is the number of neurons of the network. Assume that the
$\mathbb{J}_{ij}$ are independent identically Gaussian distributed\ random
coupling weights with zero mean and variance $g^{2}/N$. The CLT cannot be used
to conclude that the mentioned term converges to $\int_{\Omega}\boldsymbol{J}%
\left(  x,y\right)  \phi\left(  \boldsymbol{h}\left(  y,t\right)  \right)
d\mu\left(  y\right)  $; we can only expect that $\boldsymbol{J}$ be a
generalized Gaussian random variable with mean zero. The thermodynamic limit
of lattice STFs is not well-posed problem, which means that starting with
discrete action, we can attach to it several limits (continuos actions) by
interpreting the terms of the form (\ref{Eq_term}) as approximations of
integrals in some measurable space $(\Omega,\mathcal{B},d\mu)$.

The construction of continuous versions of the SFTs for NNs involves several
new and old challenges. This article assumes that the fields $\boldsymbol{h}%
(x,t)$ and $\widetilde{\boldsymbol{h}}(x,t)$ are elements from a Hilbert
subspace $\mathcal{H}$ of $L^{2}\left(  \Omega\times\mathbb{R},d\mu dt\right)
$. The first problem is to know if the fields are well-defined on all points
of $\Omega\times\mathbb{R}$; this is true because $\mathcal{H}$ admits an
orthonormal basis consisting of continuous functions (see Lemma \ref{Lemma_2}%
-(ii) in Appendix A). We also require that the evaluation of functions from
$\mathcal{H}$ at some fixed point $(x_{0},t_{0})\in\Omega\times\mathbb{R}$
gives rise to a usual Gaussian random variable with mean of zero. This
assertion is true, see Lemma \ref{Lemma_16} in Appendix E, when $(x_{0}%
,t_{0})$ belongs to $\Omega\times\mathbb{R}$ $\mathbb{\smallsetminus}$
$\mathcal{M}$, where $\mathcal{M}$ is $d\mu dt$-measure zero set. The
evaluation map $\boldsymbol{h}\rightarrow\boldsymbol{h}(x_{0},t_{0})$ agrees
with a linear map $\left\langle \boldsymbol{h},\boldsymbol{v}(x_{0}%
,t_{0})\right\rangle _{\mathcal{H}}$, where $\boldsymbol{v}(x_{0},t_{0}%
)\in\mathcal{H}$, and $\left\langle \boldsymbol{\cdot},\boldsymbol{\cdot
}\right\rangle _{\mathcal{H}}$\ is the inner product\ in $\mathcal{H}$, almost
everywhere in $(x_{0},t_{0})$. This means that $\boldsymbol{h}(x,t)$ can be
extended to a smeared field like in ordinary QFT. Then, given two functions in
$\boldsymbol{h}_{1},\boldsymbol{h}_{2}\in\mathcal{H}$, $\left\langle
\boldsymbol{h}_{1}(x,t)\boldsymbol{h}_{2}(x,t)\right\rangle _{\boldsymbol{h}%
_{1},\boldsymbol{h}_{2}\text{ }}$is the cross-correlation of two ordinary
Gaussian random variables with mean zero, for $(x,t)\in\Omega\times\mathbb{R}$
$\mathbb{\smallsetminus}$ $\mathcal{M}^{\prime}$, where $\mathcal{M}^{\prime}$
has measure zero. This fact allows us to use the techniques of \cite[Chapter
10]{Helias et al} to study the dynamics of continuous random NNs.

We systematically use Gel'fand triplets jointly with the Bochner-Minlos
theorem to construct measures on infinite dimensional spaces. The construction
of these triplets is well-known in the case $\Omega=\mathbb{R}$. Here, we give
some new triplets in the cases $\Omega=\mathbb{Z}_{p}$, $\mathbb{Q}_{p}$, see
Appendix\ C. Non-mathematically inclined readers may disregard the
construction of measures on the space of fields without any prejudice in
understanding the main results presented here. These measures can be
considered in a formal way, as happens in the standard literature of quantum
field theory. The construction of the toy model requires basic techniques of
$p$-adic analysis. Appendix H provides an overview of the essential aspects of
the $p$-adic analysis. For an in-depth exposition the reader may consult
\cite{A-K-S}, \cite{Taibleson}, \cite{V-V-Z}, \cite{Zuniga-Textbook}.

In the STFs used in modeling of NNs, the fields represent signals (or data).
Assuming that the fields are functions from $L^{2}\left(  \Omega
\times\mathbb{R},d\mu dt\right)  $ is quite convenient because the $L^{2}%
$-norm corresponds to the energy of a signal, and the physical signals have
finite energy. For a given NN, the signals it produces should have finite
energy bounded by some fixed positive constant $M$; this means that the
signals are elements from a ball in $L^{2}\left(  \Omega\times\mathbb{R},d\mu
dt\right)  $ with radius $M$. The characteristic function of this ball acts as
a cutoff function for the moment-generating functionals associated with the
NN, i.e., the integration over fields is restricted to the elements of a ball
with radius $M$. Then, our notion of cutoff differs radically from the
classical one, where the cutoff is used to restrict the domain of the fields
to a bounded subset $\Omega\times\mathbb{R}$. The existence of a natural
cutoff for STFs associated with NNs was already pointed out by the author in
\cite{Zuniga-ATMP}; this cutoff allows the rigorous construction of partition
functions for NNs. It is relevant to mention that using the mentioned cutoff
does not prevent the propagators computed from the partition function from
having singularities, which in turn may produce infinities in the perturbative calculations.

We have written the article for an interdisciplinary audience. For this
reason, all the mathematical proofs have been placed in appendices at the end
of the article. In this way, it is possible to read the article without
needing to go through all the mathematical proofs. We focus on the case
$\boldsymbol{A}(x,y)=\boldsymbol{B}(x,y)=\boldsymbol{U}\left(  x,y\right)
=0$. The extension to networks of the type (\ref{Continuous_Network})\ is
straightforward. The mean-field theory for the discrete counterparts of this
type networks is presented in the book \cite[Chapter 10]{Helias et al}. This
chapter exposes the relation between the mean-field theory of spin glasses and
neural networks \ from the perspective of the seminal work by Sompolinsky et
al., \cite{Sompolinsky et al}. We develop a mathematical framework so that the
rigorous study of the transition to chaos for networks of type
(\ref{Continuous_Network}) can be carried out using the physical reasoning
given in \cite{Grosvenor-Jefferson}, \cite{Helias et al}, \cite{Schuecker et
al}, \cite{Sompolinsky et al}.

\section{\label{SECT_2}Related work}

\subsection{Buice-Cowan statistical field theories}

We review here the correspondence NN-SFT from the perspective of the
Buice-Cowan work on SFTs for the cortex and the techniques to study stochastic
equations via path integrals, see \cite{Buice and Cowan}-\cite{Chow et al},
see also \cite{Neural-Fields}, and the references therein.

Consider a Langevin equation of the type,%
\begin{equation}
\left\{
\begin{array}
[c]{l}%
\frac{d\boldsymbol{\varphi}}{dt}=f(\boldsymbol{\varphi}%
,t)+g(\boldsymbol{\varphi},t)\eta\left(  t\right)  ,\\
\\
\boldsymbol{\varphi}\left(  0\right)  =\boldsymbol{y},
\end{array}
\right.  \label{Eq_20}%
\end{equation}
where the stochastic forcing term obeys to $\left\langle \eta\left(  t\right)
\right\rangle =0$ and $\left\langle \eta\left(  t\right)  \eta\left(
s\right)  \right\rangle =\delta\left(  t-s\right)  $. By interpreting this
equation as an It\^{o} stochastic differential equation, the probability
density of function of the trajectory $\boldsymbol{\varphi}\left(  t\right)  $
is given by%
\[
\boldsymbol{P}\left(  \boldsymbol{\varphi}\left(  t\right)  \mid
\boldsymbol{y}\right)  =%
{\displaystyle\int}
D\widetilde{\boldsymbol{\varphi}}\left(  t\right)  \exp\left(
-S(\boldsymbol{\varphi},\widetilde{\boldsymbol{\varphi}})\right)  ,
\]
where
\[
S(\boldsymbol{\varphi},\widetilde{\boldsymbol{\varphi}})=%
{\displaystyle\int}
\left[  \widetilde{\boldsymbol{\varphi}}\left(  t\right)  \left(  \frac{d}%
{dt}\boldsymbol{\varphi}\left(  t\right)  -f\left(  \boldsymbol{\varphi
}\left(  t\right)  ,t\right)  -\boldsymbol{y}\delta\left(  t\right)  \right)
+\frac{1}{2}\widetilde{\boldsymbol{\varphi}}^{2}\left(  t\right)  g^{2}\left(
\boldsymbol{\varphi}\left(  t\right)  ,t\right)  \right]  dt.
\]
The generating functional for $\boldsymbol{\varphi}\left(  t\right)  $,
$\widetilde{\boldsymbol{\varphi}}\left(  t\right)  $ is given by
\begin{equation}
{\LARGE Z}(\boldsymbol{j}\left(  t\right)  ,\widetilde{\boldsymbol{j}}\left(
t\right)  )=%
{\displaystyle\iint}
D\boldsymbol{\varphi}\left(  t\right)  D\widetilde{\boldsymbol{\varphi}%
}\left(  t\right)  e^{S(\boldsymbol{\varphi},\widetilde{\boldsymbol{\varphi}%
})+%
{\textstyle\int}
\boldsymbol{j}\left(  t\right)  \boldsymbol{\varphi}\left(  t\right)  dt+%
{\textstyle\int}
\widetilde{\boldsymbol{j}}\left(  t\right)  \widetilde{\boldsymbol{\varphi}%
}\left(  t\right)  }dt,\label{Eq_21}%
\end{equation}
see \cite[Section 3]{Chow et al}. So, we have a correspondence between the
network (\ref{Eq_20}) and the SFT (\ref{Eq_21}). In \cite{Buice-Cowean-2007},
Buice and Cowan introduced a general type of SFTs associated with the
Cowan-Wilson rate equations,
\begin{equation}
{\LARGE Z}(\boldsymbol{j}\left(  x,t\right)  ,\widetilde{\boldsymbol{j}%
}\left(  x,t\right)  )=%
{\displaystyle\iint}
D\boldsymbol{\varphi}\left(  x,t\right)  D\widetilde{\boldsymbol{\varphi}%
}\left(  x,t\right)  e^{S_{BC}(\boldsymbol{\varphi},\widetilde
{\boldsymbol{\varphi}})+\boldsymbol{j}\cdot\boldsymbol{\varphi}+\widetilde
{\boldsymbol{j}}\cdot\widetilde{\boldsymbol{\varphi}}},\label{Eq_22}%
\end{equation}
where the action is given by%
\[
S_{BC}(\boldsymbol{\varphi},\widetilde{\boldsymbol{\varphi}}):=S_{0}%
(\boldsymbol{\varphi},\widetilde{\boldsymbol{\varphi}})+S_{int}%
(\boldsymbol{\varphi},\widetilde{\boldsymbol{\varphi}})-W\left[
\widetilde{\boldsymbol{\varphi}}\left(  x,0\right)  \right]  ,
\]%
\[
S_{0}(\boldsymbol{\varphi},\widetilde{\boldsymbol{\varphi}})=%
{\textstyle\iint}
d^{N}xdt\text{ }\widetilde{\boldsymbol{\varphi}}\left(  x,t\right)  \left[
\frac{\partial}{\partial t}\boldsymbol{\varphi}\left(  x,t\right)
+\alpha\boldsymbol{\varphi}\left(  x,t\right)  \right]  ,
\]%
\begin{equation}
S_{int}(\boldsymbol{\varphi},\widetilde{\boldsymbol{\varphi}})=-%
{\textstyle\iint}
d^{N}xdt\left[  f\left(
{\textstyle\int}
w\left(  x-y\right)  \left[  \widetilde{\boldsymbol{\varphi}}\left(
y,t\right)  \boldsymbol{\varphi}\left(  y,t\right)  +\boldsymbol{\varphi
}\left(  y,t\right)  \right]  d^{N}y\right)  \right]  ,\label{Eq_21B}%
\end{equation}
here $f$ is the activation function of the network, a sigmoidal type function,
$w\left(  x-y\right)  $ gives the strength of the connection between the
neurons located at positions $x$ and $y$, $\boldsymbol{u}\cdot\boldsymbol{v}%
=\int\int$ $d^{N}xdt$ $\boldsymbol{u}(x,t)\boldsymbol{v}(x,t)$, and $W\left[
\widetilde{\boldsymbol{\varphi}}\left(  x,0\right)  \right]  $\ is the
cumulant generating function functional of the initial distribution. This
generating functional is obtained by taking the limit when the number of
neurons tends to infinity. This calculation assumes that the neurons are
organized in an $N$-dimensional lattice contained in $\mathbb{R}^{N}$ (or
$N$-dimensional cube); however, this hypothesis is not necessary or convenient.

In the lattice version of (\ref{Eq_22}), the fields $\widetilde
{\boldsymbol{\varphi}}\left(  x,t\right)  $, $\boldsymbol{\varphi}\left(
x,t\right)  $ are replaced by $\widetilde{\boldsymbol{\varphi}}_{i}\left(
t\right)  $, $\boldsymbol{\varphi}_{i}\left(  t\right)  $, for $i=1,\ldots,N$,
where $N$ is the number of neurons in the network. As mentioned in the
introduction of \cite{Buice and Cowan}, there are approximately $3\cdot
10^{10}$ neurons in the human neocortex, each supporting up to $10^{4}$
synaptic contacts, all packed in a volume of $3000cc$. By
statistical-mechanics considerations, a system with $6\cdot10^{10}$ degrees of
freedom should be studied through a continuous model.

In \cite{Buice-Cowan-Chow}, generating functionals of type (\ref{Eq_22}) were
heuristically derived starting with an equation of type Wilson-Cowan-Langevin,%
\begin{equation}
\frac{\partial}{\partial t}\boldsymbol{\varphi}\left(  x,t\right)
=-\alpha\boldsymbol{\varphi}\left(  x,t\right)  +F\left(  \boldsymbol{\varphi
}\left(  x,t\right)  \right)  +\boldsymbol{\varphi}_{0}\left(  x\right)
\delta\left(  t\right)  +\xi\left(  x,t\right)  , \label{Eq_general network}%
\end{equation}
were $\boldsymbol{\varphi}\left(  x,t\right)  $ is a neural activity at the
location $x$ and time $t$, $\xi\left(  x,t\right)  $ is a stochastic forcing
with probability density $\boldsymbol{P}(\xi)$, and the firing rate $F$ is not
necessarily the function $f$ appearing in (\ref{Eq_21B}). The probability
density for $\boldsymbol{\varphi}\left(  x,t\right)  $, up to multiplication
for a positive constant, is
\[
\boldsymbol{P}\left(  \boldsymbol{\varphi}\left(  x,t\right)  \mid
\boldsymbol{\varphi}_{0}\left(  x\right)  \right)  =%
{\displaystyle\int}
D\xi\text{ }\delta\left[  \frac{\partial}{\partial t}\boldsymbol{\varphi
}-\alpha\boldsymbol{\varphi}-F\left(  \boldsymbol{\varphi}\right)
-\boldsymbol{\varphi}_{0}\delta\left(  t\right)  -\xi\left(  x,t\right)
\right]  \boldsymbol{P}(\xi),
\]
where $\delta\left(  \cdot\right)  $ is a functional generalization of the
Dirac delta function. By using the generalized Fourier transform of
$\delta\left(  \cdot\right)  $,%
\begin{equation}
\boldsymbol{P}\left(  \boldsymbol{\varphi}\left(  x,t\right)  \mid
\boldsymbol{\varphi}_{0}\left(  x\right)  \right)  =%
{\displaystyle\iint}
D\widetilde{\boldsymbol{\varphi}}\left(  x,t\right)  \exp\left[  S_{0}\left(
\widetilde{\boldsymbol{\varphi}},\boldsymbol{\varphi}\right)  +S_{int}\left(
\widetilde{\boldsymbol{\varphi}},\boldsymbol{\varphi}\right)  -W(\widetilde
{\boldsymbol{\varphi}},\boldsymbol{\varphi}_{0})\right]  ), \label{Eq_24}%
\end{equation}
where%
\[
S_{0}\left(  \widetilde{\boldsymbol{\varphi}},\boldsymbol{\varphi}\right)  =%
{\displaystyle\int}
d^{N}xdt\text{ }\widetilde{\boldsymbol{\varphi}}\left(  x,t\right)
\widetilde{\boldsymbol{\varphi}}\left(  x,t\right)  \left\{  \frac{\partial
}{\partial t}\boldsymbol{\varphi}\left(  x,t\right)  +\alpha
\boldsymbol{\varphi}\left(  x,t\right)  -\boldsymbol{\varphi}_{0}\left(
x\right)  \delta\left(  t\right)  \right\}  ,
\]%
\[
S_{int}\left(  \widetilde{\boldsymbol{\varphi}},\boldsymbol{\varphi}\right)
=-%
{\displaystyle\int}
d^{N}xdt\text{ }\widetilde{\boldsymbol{\varphi}}\left(  x,t\right)  F\left(
\boldsymbol{\varphi}\left(  x,t\right)  \right)
\]

\[
W(\widetilde{\boldsymbol{\varphi}},\boldsymbol{\varphi}_{0})=%
{\displaystyle\int}
D\xi\left(  x,t\right)  \exp\left\{
{\displaystyle\int}
d^{N}xdt\text{ }\widetilde{\boldsymbol{\varphi}}\left(  x,t\right)
\boldsymbol{\varphi}_{0}\left(  x\right)  \right\}  \boldsymbol{P}(\xi\left(
x,t\right)  ,
\]
see \cite{Buice-Cowan-Chow}\ for details. The generating functional
corresponding to (\ref{Eq_24}) is a generalization of (\ref{Eq_22}).\ In the
works \cite{Buice and Cowan}-\cite{Chow et al}, $d^{N}x$ is intended to
denoted the Lebesgue measure of $\mathbb{R}^{N}$. However, this is not
necessary. The arguments given in these works are valid if we assume that the
neurons form an abstract space with measure $d^{N}x$. This work aims to start
the studying of the chaotic behavior of networks of type
(\ref{Eq_general network}) using the correspondence NN-SFT via the largest
Lyapunov exponent.

\subsection{Partial differential equation based models for neural fields}

The spatiotemporal continuum models for the dynamics of macroscopic activity
patterns in the cortex were introduced in the 1970s following the seminal work
by Wilson and Cowan, Amari, among others, see, e.g., \cite{Wilson-Cowan-1}%
-\cite{Wilson-Cowan-2}, \cite{Amari}, see also \cite{Neural-Fields} and the
references therein. Such models take the form of integrodifferential evolution
equations. The integration kernels represent the strength of the connection
between neurons, or more generally, the spatial distribution of synaptic
connections between different neural populations, and the state macroscopic
variables represent some average of neural activity.

The simplest continuous model in one spatial dimension is%
\begin{equation}
\frac{\partial u\left(  x,t\right)  }{\partial t}=-u\left(  x,t\right)  +%
{\displaystyle\int\nolimits_{-\infty}^{\infty}}
J\left(  x-y\right)  \phi\left(  u\left(  y,t\right)  \right)  dy,
\label{Model_0}%
\end{equation}
where at the position $x\in\mathbb{R}$ there is a group of neurons, and
$u\left(  x,t\right)  $ is some average network activity at the time
$t\in\mathbb{R}$. The kernel $J$ describes the strength of the connection
between the neurons located at the positions $x$ and $y$. We assume that the
neurons are organized in an infinite straight line. The non-linear function
$\phi$ is the activation (or firing) function. Models of type (\ref{Model_0}%
)\ provide an approximate description of the mean-field dynamics of a neural network.

There are two basic discrete models. The first one is the voltage-based rate
model:%
\begin{equation}
\tau_{m}\frac{\partial a_{i}\left(  t\right)  }{\partial t}=-a_{i}\left(
t\right)  +%
{\displaystyle\sum\limits_{j\in\Omega_{N}}}
\mathbb{J}_{ij}\phi\left(  a_{j}\left(  t\right)  \right)  , \label{Model_1}%
\end{equation}
where $\tau_{m}$ is a time constant and $\mathbb{J}_{ij}$ is the strength
between neurons $i$ and $j$, and $\Omega_{N}=\left\{  j_{1},\ldots
,j_{N}\right\}  $ is the space of neurons. We assume that $\Omega_{N}$ has a
`topology' which provides the architecture of the network. For instance, if
the neurons are organized in a hierarchical tree-like structure, $\Omega_{N}$
is a finite union of rooted trees. In the most common choice, $\Omega_{N}$ is
a finite subset of the lattice $\mathbb{Z}^{D}$.

The second one is called the activity-based model; it has the form%
\begin{equation}
\tau_{s}\frac{\partial v_{i}\left(  t\right)  }{\partial t}=-v_{i}\left(
t\right)  +\phi\left(
{\displaystyle\sum\limits_{j\in\Omega_{N}}}
\mathbb{J}_{ij}v_{j}\left(  t\right)  \right)  , \label{Model_2}%
\end{equation}
where $\tau_{s}$ is a time constant. The reader may consult \cite{Bressloff}
for an in-depth discussion about these two models. The continuous versions of
the models (\ref{Model_1})-(\ref{Model_2}) are obtained by taking the limit
when the number of neurons tends to infinity. In the limit, the discrete
variable $i$ becomes a continuous variable $x$. Almost all the published
literature assumes that the neurons are organized in a lattice contained in
$\mathbb{R}^{n}$. Here, we assume that there exists a measure space $\left(
\Omega,\Sigma,d\mu\right)  $ such that
\begin{align*}
\lim_{N\rightarrow\infty}%
{\displaystyle\sum\limits_{j\in\Omega_{N}}}
\mathbb{J}_{ij}\phi\left(  a_{j}\left(  t\right)  \right)   &  =%
{\displaystyle\int\limits_{\Omega}}
J(x,y)\phi\left(  a\left(  y,t\right)  \right)  d\mu\left(  y\right)  \text{
and }\\
& \\
\lim_{N\rightarrow\infty}\phi\left(
{\displaystyle\sum\limits_{j\in\Omega_{N}}}
\mathbb{J}_{ij}v_{j}\left(  t\right)  \right)   &  =\phi\left(
{\displaystyle\int\limits_{\Omega}}
J(x,y)v\left(  y,t\right)  d\mu\left(  y\right)  \right)  .
\end{align*}

Then, the continuous versions of the models (\ref{Model_1})-(\ref{Model_2})
are%
\[
\tau_{m}\frac{\partial a\left(  x,t\right)  }{\partial t}=-a\left(
x,t\right)  +%
{\displaystyle\int\limits_{\Omega}}
J(x,y)\phi\left(  a\left(  y,t\right)  \right)  d\mu\left(  y\right)  ,
\]%
\[
\tau_{s}\frac{\partial v\left(  x,t\right)  }{\partial t}=-v\left(
x,t\right)  +\phi\left(
{\displaystyle\int\limits_{\Omega}}
J(x,y)v\left(  y,t\right)  d\mu\left(  y\right)  \right)  ,
\]
where $x\in\Omega$, $t\in\mathbb{R}$.

\subsection{Cellular neural Networks as deep neural networks}

In the late 80s, Chua and Yang introduced a new natural computing paradigm
called the cellular neural networks (CNNs), which included the cellular
automata as a particular case. This paradigm has been highly successful in
various applications in vision, robotics, and remote sensing, among many
applications, \cite{Chua-Tamas}, \cite{Slavova}. Recently, the author and
Zambrano-Luna introduced $p$-adic CNNs, which are CNNs where the neurons are
organized in an infinite-tree-like structure, meaning that this type of neural
network has a natural deep architecture. The $p$-adic CNNs have been used in
image processing tasks, \cite{Zambrano-Zuniga-1}-\cite{Zambrano-Zuniga-2},
\cite{Zuniga-images}. In a recent article, Grosvenor and Jefferson studied a
class of stochastic CNNs using SFT techniques and proposed such type of
network as a deep neural network model, \cite{Grosvenor-Jefferson}.

Let $\Omega_{N}$ be the set neurons. We assume that the cardinality of
$\Omega$ is $N$. A discrete CNN is a dynamical system on $\Omega_{N}$. The
state $\boldsymbol{h}_{i}(t)\in\mathbb{R}$ of the neuron $i$ is described by
the following system of differential equations:
\begin{gather}
\frac{d\boldsymbol{h}_{i}(t)}{dt}=-\gamma\boldsymbol{h}_{i}\left(  t\right)
+\sum_{j\in\Omega_{N}}\mathbb{A}_{i,j}^{\left(  N\right)  }\boldsymbol{h}%
_{i}\left(  t\right)  +\sum_{j\in\Omega_{N}}\mathbb{B}_{i,j}^{\left(
N\right)  }\boldsymbol{x}_{j}(t)\label{Model_3}\\
+\sum_{j\in\Omega_{N}}\mathbb{J}_{i,j}^{\left(  N\right)  }\phi\left(
\boldsymbol{h}_{i}\left(  t\right)  \right)  +\sum_{j\in\Omega_{N}}%
\mathbb{U}_{i,j}^{N}\varphi\left(  \boldsymbol{x}_{i}\left(  t\right)
\right)  +\eta_{i}^{\left(  N\right)  }\left(  t\right)  \text{, }i\in
\Omega_{N}\text{,\ \ }i=1,\ldots,N\text{,}\nonumber
\end{gather}
where $\phi,\varphi:\mathbb{R}\rightarrow\mathbb{R}$ are firing (or
activation) functions, typically bounded, differentiable, Lipschitz functions.
The function $\boldsymbol{x}_{i}(t)\in\mathbb{R}$ is the input of the cell $i$
at time $t$, the real matrices $\mathbb{A}^{\left(  N\right)  }$,
$\mathbb{B}^{(N)}$, $\mathbb{J}^{\left(  N\right)  }$, $\mathbb{U}^{\left(
N\right)  }$ control the architecture of the network, and the function
$\eta_{i}^{(N)}\left(  t\right)  \in\mathbb{R}$ is the threshold of the neuron
$i$. In the classical CNNs, the output of the neuron $i$ at the time $t$ is
defined as $\boldsymbol{y}_{i}(t)=\phi(\boldsymbol{h}_{i}(t))$.

The model (\ref{Model_3}) is a generalization of the classical and $p$-adic
CNNs, as well as the recurrent neural networks, and multilayer perceptron.
This model is a generalization of the model proposed in
\cite{Grosvenor-Jefferson}. In the stochastic version, the $\mathbb{A}%
_{i,j}^{(N)}$, $\mathbb{B}_{i,j}^{(N)}$, $\mathbb{J}_{i,j}^{(N)}$,
$\mathbb{U}_{i,j}^{(N)}$ are identically distributed random variables, here,
we assume that they are Gaussian variables with mean zero, and $\eta_{i}%
^{(N)}\left(  t\right)  $ is a noise. In this article, we study the chaotic
behavior of\ continuous versions of NNs introduced in
\cite{Grosvenor-Jefferson}.

If we expect that model (\ref{Model_3}) can be helpful in the study of
large/deep neural networks, then $N$ should be allowed to take values of order
$10^{11}$. For instance, in the case of GPT-3, it is widely known that it
consists of 175 billion parameters, akin to the \textquotedblleft
neurons\textquotedblright\ in the AI model. Then, we require the thermodynamic
limit ($N\rightarrow\infty$) of model (\ref{Model_3}) in the sense used in
this article, which means to get a moment-generating functional involving an
action with time and space continuous variables, \cite[Section 7.1]{Buice and
Cowan}. In contrast, in \cite{Grosvenor-Jefferson}, \cite{Helias et al} and in
many other publications, a network is studied in the the thermodynamic limit
($N\rightarrow\infty$) when the assertions about averages of observables
become exact in the limit $N\rightarrow\infty$ due to the central limit theorem.

To a random NN of type (\ref{Model_3}) corresponds an SFT given by a
generating functional $Z(\boldsymbol{j},\widetilde{\boldsymbol{j}}%
;\mathbb{A}^{\left(  N\right)  },\mathbb{B}^{(N)},\mathbb{J}^{\left(
N\right)  },\mathbb{U}^{\left(  N\right)  })$; here we assume that the
activation functions $\phi,\varphi$ and noise $\eta_{i}^{(N)}\left(  t\right)
$ are fixed. This type of network has been studied extensively using
techniques like mean-field approximation and double-copy; see for instance,
the Helias \& Dahmen book \cite[Chapter 10]{Helias et al}, and
\cite{Grosvenor-Jefferson}. In particular, the study of the chaotic behavior
of such networks has received significant attention after the seminal work of
Sompolinsky et al., \cite{Sompolinsky et al}. In this type of model,
${\LARGE Z}(\boldsymbol{j},\widetilde{\boldsymbol{j}};\mathbb{A}^{\left(
N\right)  },\mathbb{B}^{(N)},\mathbb{J}^{\left(  N\right)  },\mathbb{U}%
^{\left(  N\right)  })$ contains variables ( $\widetilde{\boldsymbol{\varphi}%
}_{j}(t)$, $\boldsymbol{\varphi}_{j}(t)$) for each neuron $j$, so this
functional may depend on an astronomical number of variables, and thus one
should consider
\[
\lim_{N\rightarrow\infty}{\LARGE Z}(\boldsymbol{j},\widetilde{\boldsymbol{j}%
};\mathbb{A}^{\left(  N\right)  },\mathbb{B}^{(N)},\mathbb{J}^{\left(
N\right)  },\mathbb{U}^{\left(  N\right)  }).
\]
Formally, the limit generating functional looks like (\ref{Eq_22}), and then,
the techniques of the mean-field approximation and double-copy, as presented
in \cite[Chapter 10]{Helias et al}, cannot be used directly in study
continuous networks.

On the other hand, the expected value
\[
\left\langle Z(\boldsymbol{0},\widetilde{\boldsymbol{0}};\mathbb{A}^{\left(
N\right)  },\mathbb{B}^{(N)},\mathbb{J}^{\left(  N\right)  },\mathbb{U}%
^{\left(  N\right)  })\right\rangle _{\mathbb{A}^{\left(  N\right)
},\mathbb{B}^{(N)},\mathbb{J}^{\left(  N\right)  },\mathbb{U}^{\left(
N\right)  }}%
\]
corresponds to a partition function of a Gaussian theory, the mean-field
theory of the network, which provides a relevant approximation of the dynamics
of the network. Then, we have a correspondence between stochastic NNs and
Gaussian SFTs similar to the one discussed by Halverson et al. in
\cite{Halverson et al}.

\section{\label{SECT_3}Statistical field theory of random networks}

In this section, we provide a heuristic introduction to the correspondence
between random NNs and SFTs that we study in this article. A discrete random
network is a stochastic dynamical system defined as%
\begin{equation}
\frac{d\boldsymbol{h}_{i}\left(  t\right)  }{dt}=-\gamma\boldsymbol{h}%
_{i}\left(  t\right)  +%
{\displaystyle\sum\limits_{j\in\Omega_{N}}}
\mathbb{J}_{ij}\phi\left(  \boldsymbol{h}_{j}\left(  t\right)  \right)
+\eta_{i}\left(  t\right)  \text{, \ }i\in\Omega_{N}, \label{Eq_1_networks}%
\end{equation}
where $\Omega_{N}$ is the set of neurons, and $N$ is its cardinality,
$\boldsymbol{h}_{i}\left(  t\right)  \in\mathbb{R}$ is the state of the neuron
$i$, $\gamma>0$, the $\mathbb{J}_{ij}$ are independent identically Gaussian
distributed\ random coupling weights with zero mean and variance $g^{2}/N$.
The time-varying inputs $\eta_{i}\left(  t\right)  $\ are independent Gaussian
white-noise processes with mean zero and correlation functions
\[
\left\langle \eta_{i}\left(  t\right)  \eta_{j}\left(  s\right)  \right\rangle
=\sigma^{2}\delta_{ij}\delta\left(  t-s\right)  ,
\]
The activation function $\phi$ is a sigmoid, for instance $\phi\left(
x\right)  =\tanh(x)$.

By interpreting the stochastic differential equations in the It\^{o}
convention, the moment-generating functional ${\LARGE Z}\left(  \boldsymbol{j}%
;\left[  \boldsymbol{J}_{ij}\right]  \right)  $ for the probability density
\[
\boldsymbol{P}(\left[  \boldsymbol{h}_{i}\left(  t\right)  \right]
_{i\in\Omega_{N}}\mid\left[  \boldsymbol{h}_{i}\left(  0\right)  \right]
_{i\in\Omega_{N}}=\boldsymbol{0})
\]
can be obtained by using the Martin--Siggia--Rose--de Dominicis--Janssen path
integral formalism \cite{Martin et al}, \cite{de Domicis et al}, \cite{Altland
et al}, \cite{Chow et al}. Following the exposition in \cite[Chapter
10]{Helias et al}, this functional is given by%
\begin{equation}
{\LARGE Z}\left(  \boldsymbol{j};\left[  \boldsymbol{J}_{ij}\right]  \right)
=%
{\displaystyle\int}
D\boldsymbol{h}%
{\displaystyle\int}
D\widetilde{\boldsymbol{h}}\exp\left(  S_{0}\left[  \boldsymbol{h}%
,\widetilde{\boldsymbol{h}}\right]  -\widetilde{\boldsymbol{h}}^{T}\left[
\boldsymbol{J}_{ij}\right]  \phi\left(  \boldsymbol{h}\right)  +\boldsymbol{j}%
^{T}\boldsymbol{h}\right)  , \label{Eq_generrating_func}%
\end{equation}
where $\boldsymbol{h=}\left[  \boldsymbol{h}_{i}\left(  t\right)  \right]
_{i\in\Omega_{N}}$, $\widetilde{\boldsymbol{h}}\boldsymbol{=}\left[
\widetilde{\boldsymbol{h}}_{i}\left(  t\right)  \right]  _{i\in\Omega_{N}}$,
$\boldsymbol{j=}\left[  \boldsymbol{j}_{i}\left(  t\right)  \right]
_{i\in\Omega_{N}}$\ are column vectors in $\mathbb{R}^{N}$,
\[
\boldsymbol{j}^{T}\boldsymbol{h=}%
{\displaystyle\sum\limits_{i\in\Omega_{N}}}
\boldsymbol{j}_{i}\left(  t\right)  \boldsymbol{h}_{i}\left(  t\right)  ,
\]%
\[
S_{0}\left[  \boldsymbol{h},\widetilde{\boldsymbol{h}}\right]  =\widetilde
{\boldsymbol{h}}^{T}\left(  \partial_{t}+\gamma\right)  \boldsymbol{h}%
+\frac{1}{2}\sigma^{2}\widetilde{\boldsymbol{h}}^{T}\widetilde{\boldsymbol{h}%
}\text{.}%
\]
The vector $\boldsymbol{j}$ represents the source field, and the response
field $\widetilde{\boldsymbol{h}}$ appears as a result of the
Hubbard-Stratonovich transformation, representing a Dirac delta as
\[
\delta\left(  x\right)  =\frac{1}{2\pi i}%
{\displaystyle\int\nolimits_{a-i\infty}^{a+i\infty}}
d\widetilde{x}\exp\left(  x\widetilde{x}\right)  .
\]
In the thermodynamic limit ($N\rightarrow\infty$), the field components
$\boldsymbol{h}_{i}\left(  t\right)  $ become $\boldsymbol{h}\left(
x,t\right)  $, where $x$\ denotes the position of a neuron in a continuous
space $\Omega$, and $\widetilde{\boldsymbol{h}}_{i}\left(  t\right)
\rightarrow\widetilde{\boldsymbol{h}}\left(  x,t\right)  $.\ Similarly, in a
formal way, we assume that%
\begin{equation}
\boldsymbol{j}^{T}\boldsymbol{h\rightarrow}\text{ }\left\langle \boldsymbol{j}%
\left(  x,t\right)  ,\boldsymbol{h}\left(  x,t\right)  \right\rangle :=%
{\displaystyle\int\limits_{\Omega}}
{\displaystyle\int\limits_{\mathbb{R}}}
\boldsymbol{j}\left(  x,t\right)  \boldsymbol{h}\left(  x,t\right)
dtd\mu\left(  x\right)  , \label{inner_product}%
\end{equation}%
\[
\text{\ }\widetilde{\boldsymbol{h}}^{T}\left[  \boldsymbol{J}_{ij}\right]
\phi\left(  \boldsymbol{h}\right)  \rightarrow\left\langle \widetilde
{\boldsymbol{h}},%
{\displaystyle\int\limits_{\Omega}}
\boldsymbol{J}\left(  x,y\right)  \phi\left(  \boldsymbol{h}\left(
y,t\right)  \right)  d\mu\left(  y\right)  \right\rangle ,
\]
as $N\rightarrow\infty$, where $\mu\left(  x\right)  $ is a measure in the
space $\Omega$,\ and the action $S_{0}\left[  \boldsymbol{h},\widetilde
{\boldsymbol{h}}\right]  $ becomes
\begin{equation}
S_{0}\left[  \boldsymbol{h},\widetilde{\boldsymbol{h}}\right]  =\left\langle
\widetilde{\boldsymbol{h}},\left(  \partial_{t}+\gamma\right)  \boldsymbol{h}%
\right\rangle +\frac{1}{2}\sigma^{2}\left\langle \widetilde{\boldsymbol{h}%
},\widetilde{\boldsymbol{h}}\right\rangle . \label{action}%
\end{equation}
Then, formally, in the thermodynamic limit, the generating functional
(\ref{Eq_generrating_func}) becomes%
\begin{equation}
{\LARGE Z}\left(  \boldsymbol{j};\boldsymbol{J}\right)  =%
{\displaystyle\int}
D\boldsymbol{h}%
{\displaystyle\int}
D\widetilde{\boldsymbol{h}}\exp\left(  S_{0}\left[  \boldsymbol{h}%
,\widetilde{\boldsymbol{h}}\right]  -%
{\displaystyle\int\limits_{\Omega}}
\boldsymbol{J}\left(  x,y\right)  \phi\left(  \boldsymbol{h}\left(
y,t\right)  \right)  d\mu\left(  y\right)  +\left\langle \boldsymbol{j}%
,\boldsymbol{h}\right\rangle \right)  , \label{Eq_generrating_func_2}%
\end{equation}
where $S_{0}\left[  \boldsymbol{h},\widetilde{\boldsymbol{h}}\right]  $ is
defined in (\ref{action}). We identify $\boldsymbol{h}\left(  x,t\right)  $,
$\widetilde{\boldsymbol{h}}\left(  x,t\right)  $ with functions from
$L^{2}(\Omega\times\mathbb{R})$, which is a real Hilbert space with the inner
product $\left\langle \cdot,\cdot\right\rangle $ defined in
(\ref{inner_product}).

On the other hand, formally, in the thermodynamic limit, the system
(\ref{Eq_1_networks}) becomes%
\begin{equation}
\frac{d\boldsymbol{h}\left(  x,t\right)  }{dt}=-\gamma\boldsymbol{h}\left(
x,t\right)  +%
{\displaystyle\int\limits_{\Omega}}
\boldsymbol{J}\left(  x,y\right)  \phi\left(  \boldsymbol{h}\left(
y,t\right)  \right)  d\mu\left(  y\right)  +\eta\left(  x,t\right)
,\ \label{Continuous_system}%
\end{equation}
where $\boldsymbol{J}\left(  x,y\right)  $ is a realization of a generalized
Gaussian process in $L^{2}\left(  \Omega\times\Omega\right)  $, and
$\eta\left(  x,t\right)  $ is a realization of a generalized Gaussian process
in $L^{2}(\Omega\times\mathbb{R}\mathbb{)}$. The system (\ref{Eq_1_networks})
is a stochastic version of a continuous cellular neural network.

All the assertions about the thermodynamic limits for the generating
functional (\ref{Eq_generrating_func}) and the system (\ref{Eq_1_networks})
are just an ansatz. In this way, we obtain a correspondence between random
neural networks (\ref{Continuous_system}) and statistical field theories
(\ref{Eq_generrating_func_2}). This correspondence is similar to the one
introduced by Buice and Cowan in \cite[Section 7.1]{Buice and Cowan}.

It is more convenient for us to consider generating functionals of the form%
\begin{equation}
{\LARGE Z}\left(  \boldsymbol{j},\widetilde{\boldsymbol{j}};\boldsymbol{J}%
\right)  =%
{\displaystyle\iint}
D\boldsymbol{h}D\widetilde{\boldsymbol{h}}\exp\left(  S_{0}\left[
\boldsymbol{h},\widetilde{\boldsymbol{h}}\right]  -S_{int}\left[
\boldsymbol{h},\widetilde{\boldsymbol{h}};\boldsymbol{J}\right]  +\left\langle
\boldsymbol{j},\boldsymbol{h}\right\rangle +\left\langle \widetilde
{\boldsymbol{j}},\widetilde{\boldsymbol{h}}\right\rangle \right)  ,
\label{Generating_Funt_J_2}%
\end{equation}
where $S_{0}\left[  \boldsymbol{h},\widetilde{\boldsymbol{h}}\right]  $
defined as in (\ref{action}), and
\begin{equation}
S_{int}\left[  \boldsymbol{h},\widetilde{\boldsymbol{h}};\mathbf{J}\right]
=\left\langle \widetilde{\boldsymbol{h}},%
{\displaystyle\int\limits_{\Omega}}
\boldsymbol{J}\left(  x,y\right)  \phi\left(  \boldsymbol{h}\left(
y,t\right)  \right)  d\mu\left(  y\right)  \right\rangle .
\label{S_interaction}%
\end{equation}
The corresponding network is determined as%
\begin{equation}
\frac{d\boldsymbol{h}\left(  x,t\right)  }{dt}=-\gamma\boldsymbol{h}\left(
x,t\right)  -\widetilde{\boldsymbol{j}}\left(  x,t\right)  +%
{\displaystyle\int\limits_{\Omega}}
\boldsymbol{J}\left(  x,y\right)  \phi\left(  \boldsymbol{h}\left(
y,t\right)  \right)  d\mu\left(  y\right)  +\eta\left(  x,t\right)  .
\label{Eq_Network_gneral}%
\end{equation}
This article aims to show that the correspondence (ansatz) between the network
(\ref{Eq_Network_gneral}) and the STF (\ref{Generating_Funt_J_2}) is correct,
in the sense that the chaotic behavior of this type of network can be obtained
${\LARGE Z}\left(  \boldsymbol{j},\widetilde{\boldsymbol{j}};\boldsymbol{J}%
\right)  $ using techniques of SFT.

\section{\label{SECT_4}Some network prototypes}

In this section, we discuss some network prototypes whose chaotic behavior can
be studied using the techniques introduced in this article. The first choice
is the set of neurons $\Omega$, which is an infinite topological space
supporting a measure $\mu$, i.e., $\left(  \Omega,\mathcal{B}\left(
\Omega\right)  ,\mu\right)  $ is a measure space, where $\mathcal{B}\left(
\Omega\right)  $ is the Borel $\sigma$-algebra of $\Omega$. We assume that
$\mu$ is Radon measure; examples of such measures are the Lebesgue measure of
$\mathbb{R}^{n}$ restricted to some Borel subset, the Haar measure on any
locally compact topological group, the Dirac measure, and the Gaussian measure
in $\mathbb{R}^{n}$.

We assume the existence of a space of continuous functions $\mathcal{D}\left(
\Omega\right)  $ on $\Omega$, which plays the role of space of test functions.
This space is embedded in its topological dual $\mathcal{D}^{\prime}\left(
\Omega\right)  $, the space of distributions. This theory of distributions
should be sufficient to carry out most mathematical calculations required in
mathematical physics. In particular, we assume that $0\in\Omega$ so the Dirac
delta function can be defined. Also, we assume that $\Omega$ is contained in
an additive group, so the expression $x-y$ makes sense for $x,y\in\Omega$, but
$x-y$ is not necessarily an element from $\Omega$. This hypothesis is
necessary to define $\delta\left(  x-y\right)  $ and convolution of functions.

In addition, we also require the existence of a Gel'fand triplet%
\[
\mathcal{D}\left(  \Omega\right)  \hookrightarrow L^{2}\left(  \Omega\right)
\hookrightarrow\mathcal{D}^{\prime}\left(  \Omega\right)  ,
\]
where the arrow $\hookrightarrow$ denotes a continuous and dense embedding,
and $\mathcal{D}\left(  \Omega\right)  $ is a nuclear space. This condition
will be used to use white noise analysis techniques to construct measures in
infinite-dimensional spaces, as we will discuss later. The Appendices provide
further details and references.

\subsection{Archimedean CNNs}

We now take $\Omega=\mathbb{R}$, $\mathcal{D}\left(  \Omega\right)
=\mathcal{S}\left(  \mathbb{R}\right)  $ (the Schwartz space), $\mathcal{D}%
^{\prime}\left(  \Omega\right)  =\mathcal{S}^{\prime}\left(  \mathbb{R}%
\right)  $ (the space of tempered distributions), and the Gel'fand triplet is
\[
\mathcal{S}\left(  \mathbb{R}\right)  \hookrightarrow L^{2}\left(
\mathbb{R}\right)  \hookrightarrow\mathcal{S}^{\prime}\left(  \mathbb{R}%
\right)  ,
\]
where $L^{2}\left(  \mathbb{R},dt\right)  $ is the real vector space of square
integrable functions with respect to the Lebesgue measure in $\mathbb{R}$.
This triplet is well-known, see Appendix C.

In this case, the neurons are organized in an infinite straight line, with
only one layer. So, the network is `shallow.' Typically, it is assumed that
discrete network (\ref{Eq_1_networks}) has a hierarchical structure (i.e., the
network is deep) using, for instance, that the entries of the matrix
$\mathbb{J}_{ij}$ form a graph. However, if we assume that in the
thermodynamic limit the space of neurons is $\mathbb{R}$, then this geometric
information is not preserved.

\subsection{Non-Archimedean CNNs}

In this section, we use $p$ to denote a fixed prime number. Any non-zero
$p$-adic number $x$ has a unique expansion of the form%
\begin{equation}
x=x_{-k}p^{-k}+x_{-k+1}p^{-k+1}+\ldots+x_{0}+x_{1}p+\ldots,\text{ }
\label{p-adic-number}%
\end{equation}
with $x_{-k}\neq0$, where $k$ is an integer, and the $x_{j}$s \ are numbers
from the set $\left\{  0,1,\ldots,p-1\right\}  $. The set of all possible
sequences form the (\ref{p-adic-number}) constitutes the field of $p$-adic
numbers $\mathbb{Q}_{p}$. There are natural field operations, sum and
multiplication, on series of form (\ref{p-adic-number}). There is also a norm
in $\mathbb{Q}_{p}$ defined as $\left\vert x\right\vert _{p}=p^{k}$, for a
nonzero $p$-adic number $x$. The field of $p$-adic numbers with the distance
induced by $\left\vert \cdot\right\vert _{p}$ is a complete ultrametric space.
The ultrametric property refers to the fact that $\left\vert x-y\right\vert
_{p}\leq\max\left\{  \left\vert x-z\right\vert _{p},\left\vert z-y\right\vert
_{p}\right\}  $ for any $x$, $y$, $z$ in $\mathbb{Q}_{p}$. The $p$-adic
integers which are sequences of the form (\ref{p-adic-number}) with $-k\geq0$.
All these sequence constitute the unit ball $\mathbb{Z}_{p}$.

There is a natural truncation operation on $p$-adic integers:
\[
x=%
{\displaystyle\sum\limits_{k=0}^{\infty}}
x_{k}p^{k}\rightarrow x_{0}+x_{1}p+\ldots+x_{l-1}p^{l-1}\text{, \ }l\geq1.
\]
The set all truncated integers mod $p^{l}$ is denoted as $G_{l}=\mathbb{Z}%
_{p}/p^{l}\mathbb{Z}_{p}$. This set is \ a rooted tree with $l$ levels. The
unit ball $\mathbb{Z}_{p}$ (which is the inverse limit of the $G_{l}$s) is an
infinite rooted tree with fractal structure.

A function $\varphi:\mathbb{Q}_{p}\rightarrow\mathbb{R}$ is called locally
constant, if for any $a\in\mathbb{Q}_{p}$, there is an integer $l=l(a)$, such
that
\[
\varphi\left(  a+x\right)  =\varphi\left(  a\right)  \text{ for any
}\left\vert x\right\vert _{p}\leq p^{l}.
\]
We say $\varphi$ is a test function if it is locally constant with compact
support. We denote by $\mathcal{D}(\mathbb{Q}_{p})$ the real vector space of
test functions. There is a natural integration theory so that $\int
_{\mathbb{Q}_{p}}\varphi\left(  x\right)  dx$ gives a well-defined complex
number. The measure $dx$ is the Haar measure of $\mathbb{Q}_{p}$. Let
$U\subset\mathbb{Q}_{p}$ be an open subset, we denote by $\mathcal{D}(U)$ the
space of test functions with supports contained in $U$. Then $\mathcal{D}(U)$
is dense in
\[
L^{2}(U):=L^{2}(U,dx)=\left\{  f:U\rightarrow\mathbb{R};\left\Vert
f\right\Vert _{2}=\left(
{\displaystyle\int\limits_{U}}
\left\vert f\left(  x\right)  \right\vert ^{2}d^{N}x\right)  ^{\frac{1}{2}%
}<\infty\right\}  ,
\]
cf. \cite[Proposition 4.3.3]{A-K-S}. All the balls with respect to $\left\vert
\cdot\right\vert _{p}$ are open and compact.\ For further details about
$p$-adic analysis the reader may consult Appendix H.

We first take $\Omega=\mathbb{Q}_{p}$ and $\mathcal{D}\left(  \mathbb{Q}%
_{p}\right)  $, the space of real-valued test functions. Then
\[
\mathcal{D}\left(  \mathbb{Q}_{p}\right)  \hookrightarrow L^{2}\left(
\mathbb{Q}_{p}\right)  \hookrightarrow\mathcal{D}^{\prime}\left(
\mathbb{Q}_{p}\right)
\]
is a Gel'fand triplet. This triplet were introduced in \cite{Fuquen et al},
see also \cite{Arroyo et al}. In this case, the neurons are organized in an
infinite tree-like structure, which is the network is hierarchically
organized. We now take $\Omega=\mathbb{Z}_{p}$ and $\mathcal{D}\left(
\mathbb{Z}_{p}\right)  $, the space of real-valued test functions supported in
$\mathbb{Z}_{p}$.\ Then
\[
\mathcal{D}\left(  \mathbb{Z}_{p}\right)  \hookrightarrow L^{2}\left(
\mathbb{Z}_{p}\right)  \hookrightarrow\mathcal{D}^{\prime}\left(
\mathbb{Z}_{p}\right)
\]
is a Gel'fand triplet. In this case, the neurons are organized in an infinite
rooted tree.

\subsection{Further comments about the required hypotheses}

\textit{All the results announced in this article are valid for Archimedean
and non-Archimedean networks}. However, the results can be extended, in
several different ways, to a more general scope. The number fields
$\mathbb{R}$ and $\mathbb{Q}_{p}$ are examples of local number fields. By the
classification\ theorem, every local field is isomorphic (as a topological
field) to one of the following fields:

\begin{itemize}
\item the real numbers $\mathbb{R}$, and the complex numbers $\mathbb{C}$
(Archimedean local fields)

\item finite extensions of the $p$-adic numbers $\mathbb{Q}_{p}$
(Non-Archimedean local fields of characteristic zero)

\item the field of formal Laurent series $F_{q}((T))$ over a finite field
$F_{q}$, where $q$ is a power of $p$ (Non-Archimedean local fields of
characteristic $p$),
\end{itemize}

\noindent cf. \cite{Weil}. Any local number field supports a sufficiently
general theory of distributions so most of the calculations in mathematical
physics can be carried out on it. Also, it is possible to consider continuous
NNs with an adelic space of neurons:%
\[
\Omega=\mathbb{R\times}%
{\displaystyle\prod\limits_{2\leq p<M}}
\mathbb{Q}_{p},
\]
where $M=2,3,\ldots,\infty$. Another possible generalization is based on the
fact that $\mathbb{R}$ and $\mathbb{Q}_{p}$ are examples of locally compact
topological groups. This type of structure also supports a sufficiently rich
theory of distributions for applications in mathematical physics. The
extension of the results presented here to the case\ where $\Omega$ is a local
number field or a locally compact topological group is not given here. In most
of the cases, we give explicitly the hypotheses to perform a particular calculation.

\section{\label{SECT_5}The road map}

This work's organization has been strongly influenced by Chapter 10 in the
book \cite{Helias et al}. So, using this reference as companion material for
reading this work can be useful. A discrete random network, see
(\ref{Eq_1_networks}), has two sources of randomness: the quenched disorder
due to the random coupling weights and the temporal fluctuating drive. A
particular realization of the couplings $\mathbb{J}_{ij}$ defines a network
configuration, and the dynamical properties vary between different
realizations. For a large network size $N$ (number of neurons), certain
quantities are self-averaging, which means that their values for a typical
realization can be obtained by an average over the network configurations. In
this direction, it is natural to conjecture that
\begin{equation}
\lim_{N\rightarrow\infty}\ {\LARGE Z}\left(  \boldsymbol{j};\left[
\boldsymbol{J}_{ij}\right]  \right)  ={\LARGE Z}\left(  \boldsymbol{j}%
;\boldsymbol{J}\right)  , \label{LIMIT_conjecture_2}%
\end{equation}
see (\ref{Eq_generrating_func}) and (\ref{Eq_generrating_func_2}), be a random
variable; and that all observables that can be calculated from ${\LARGE Z}%
\left(  \boldsymbol{j};\boldsymbol{J}\right)  $ can be approximately
obtained\ from $\left\langle {\LARGE Z}\left(  \boldsymbol{j};\boldsymbol{J}%
\right)  \right\rangle _{\boldsymbol{J}}$.\ Each network is obtained as one
realization of the random variable $J$; the values of relevant observables are
expected to be independent of the particular realization of the randomness.

Hidden in (\ref{LIMIT_conjecture_2}) is the fact that\ for a very large number
of neurons, the lattice SFT corresponding to ${\LARGE Z}\left(  \boldsymbol{j}%
;\left[  \boldsymbol{J}_{ij}\right]  \right)  $ is approximated by continuous
SFT corresponding to ${\LARGE Z}\left(  \boldsymbol{j};\boldsymbol{J}\right)
$. It is more convenient to have two currents $\boldsymbol{j},\widetilde
{\boldsymbol{j}}$, so\ we work with ${\LARGE Z}\left(  \boldsymbol{j}%
,\widetilde{\boldsymbol{j}};\boldsymbol{J}\right)  $. Section \ref{SECT_6} is
dedicated to the construction of rigorous moment-generating functionals
${\LARGE Z}_{M}\left(  \boldsymbol{j},\widetilde{\boldsymbol{j}}%
;\boldsymbol{J}\right)  $ for\ networks of type (\ref{Eq_Network_gneral}). The
conclusion is stated in the Theorem \ref{Porp2} in Appendix C. The space of
fields is a Hilbert space; we use white noise calculus techniques to construct
probability measures on the space of fields. The mathematical details appear
in Appendices A, B, and C. The construction of the probability measures
requires using a specific Gel'fand triplets. In the case $\Omega=\mathbb{R}$,
we use well-know triplet; see subsection \ref{Section_Gelfan_triplets} in
Appendix C. In the cases $\Omega=\mathbb{Q}_{p},\mathbb{Z}_{p}$, we construct
some new Gel'fand triplets; see Theorem \ref{Theorem6} in Appendix C. The
author has studied extensively Euclidean quantum field theories on $p$-adic
spaces, \cite{Zuniga-ATMP}, \cite{Zuniga-JFAA}, \cite{Zuniga-RIM}. In
particular, the correspondence between $p$-adic continuous Boltzmann NNs and
SFTs, \cite{Zuniga-ATMP}.

There are many relevant open problems related to the rigorous construction of
moment-generating functionals for several types of continuous networks.

\begin{problem}
The rigorous mathematical formulation of the Buice-Cowan STFs describing the
cortex activity, \cite{Buice-Cowean-2007}-\cite{Buice and Cowan}. In the
original model $\Omega=\mathbb{R}^{n}$, so the neurons are not organized in
hierarchical structure. For instance, if $n=1$, the neurons are organized in a
straight line. The cases $\Omega=\mathbb{Q}_{p},\mathbb{Z}_{p}$ are relevant
since the neurons have a hierarchical structure. To the best of our knowledge,
this type of SFTs has not been studied in the literature.
\end{problem}

\begin{problem}
In \cite{Schoenholz et al}, a correspondence between random NNs, with finite
neurons, and lattice SFTs is developed. An interesting problem is finding and
studying the continuous version of these lattice SFTs.
\end{problem}

\begin{problem}
A relevant problem is to extend the theory developed here to
reaction-diffusion random NNs:
\end{problem}

\[
\frac{d\boldsymbol{h}\left(  x,t\right)  }{dt}=-\gamma\boldsymbol{h}\left(
x,t\right)  +\Delta\boldsymbol{h}\left(  x,t\right)  +%
{\displaystyle\int\limits_{\Omega}}
\boldsymbol{J}\left(  x,y\right)  \phi\left(  \boldsymbol{h}\left(
y,t\right)  \right)  d\mu\left(  y\right)  +\eta\left(  x,t\right)  ,
\]
where $\frac{d\boldsymbol{h}\left(  x,t\right)  }{dt}=\Delta\boldsymbol{h}%
\left(  x,t\right)  $ is a heat equation. For the $p$-adic theory of these
equations, the reader may consult \cite{KKZuniga}, \cite{Zuniga-Textbook}, and
the references therein. In the case $\Omega=\mathbb{Z}_{p}$, in
\cite{Zambrano-Zuniga-2}, we have developed algorithms for image denoising
using $p$-adic reaction-diffusion NNs.

The first consequence of Theorem \ref{Theorem6} is that ${\LARGE Z}_{M}\left(
\boldsymbol{j},\widetilde{\boldsymbol{j}};\boldsymbol{J}\right)  $ is a
well-defined generalized Gaussian random variable with mean zero; so the
average
\begin{equation}
\overline{{\LARGE Z}}_{M}\left(  \boldsymbol{j},\widetilde{\boldsymbol{j}%
}\right)  =\left\langle {\LARGE Z}_{M}\left(  \boldsymbol{j},\widetilde
{\boldsymbol{j}};\boldsymbol{J}\right)  \right\rangle _{\boldsymbol{J}}
\label{Average}%
\end{equation}
is the moment-generating functional of the MFT of a\ random NN. In Section
\ref{SECT_7}, we present all the details of the computation of $\overline
{{\LARGE Z}}_{M}\left(  \boldsymbol{j},\widetilde{\boldsymbol{j}}\right)  $.
Hidden in (\ref{Average}) is the calculation of an oscillatory integral in
infinite dimension. The white noise calculus allow us to compute this integral
in a simple and rigorous form. The coupling $\boldsymbol{J}$ is also a
generalized Gaussian random variable, which corresponds to a Gaussian measure
on\ $L^{2}(\Omega^{2})$. This type of measure is uniquely determined by the
trace class operators on $L^{2}(\Omega^{2})$.\ We select integral operators
with kernels in $L^{2}(\Omega^{4})$. This is a standard choice in quantum
field theory. The details are given in Appendix D. For our purposes, we just
need $\overline{{\LARGE Z}}_{M}\left(  \boldsymbol{0},0\right)  =:\overline
{{\LARGE Z}}_{M}$, which is the partition function. Theorem \ref{Porp3}, in
Appendix D, provide an explicit formula for $\overline{{\LARGE Z}}_{M}$, which
involves an action of the form%
\[
-\left\langle \widetilde{\boldsymbol{h}},\left(  \partial_{t}+\gamma\right)
\boldsymbol{h}\right\rangle +\frac{1}{2}\sigma^{2}\left\langle \widetilde
{\boldsymbol{h}},\widetilde{\boldsymbol{h}}\right\rangle +\left\langle
\widetilde{\boldsymbol{h}},C_{\phi\left(  \boldsymbol{h}\right)  \phi\left(
\boldsymbol{h}\right)  }\widetilde{\boldsymbol{h}}\right\rangle _{\left(
L^{2}(\Omega\times\mathbb{R})\right)  ^{2}},
\]
where the bilinear form $B(\widetilde{\boldsymbol{h}},\widetilde
{\boldsymbol{h}})=$ $\left\langle \widetilde{\boldsymbol{h}},C_{\phi\left(
\boldsymbol{h}\right)  \phi\left(  \boldsymbol{h}\right)  }\widetilde
{\boldsymbol{h}}\right\rangle _{\left(  L^{2}(\Omega\times\mathbb{R})\right)
^{2}}$ is the correlation functional of a Gaussian noise, with mean zero. In
addition, Theorem \ref{Theorem2}, in Appendix E, gives a formula where the
above action is written in a matrix form; using this formula, in Section
\ref{SECT_8}, we derive a system of differential equations for the covariance
functions that controlled the MFT. The functional $\overline{{\LARGE Z}}%
_{M}\left(  \boldsymbol{j},\widetilde{\boldsymbol{j}}\right)  $\ corresponds
to an infinite Gaussian NN. This type of network has been studied intensively
in the literature; see e.g. \cite{Halverson et al}, \cite{Lee et al},
\cite{Neal}, \cite{Pleiss et al}, \cite{Poole et al}. These works deal with
neural network architectures with a finite number of neurons $N$, admitting a
limit as $N\rightarrow\infty$ in which the networks are drawn from a Gaussian
process. Here, the infinite Gaussian NN corresponding to $\overline
{{\LARGE Z}}_{M}$ is \textquotedblleft just an approximation\textquotedblright%
\ of the original network. It is worth quoting here that Neal [69, pp. 161]
describes the Gaussian process limit as \textquotedblleft
disappointing,\textquotedblright\ noting that \textquotedblleft infinite
networks do not have hidden units that represent `hidden features' ... often
seen [as the] interesting aspect of neural network learning.\textquotedblright%
\ This phenomenon does not occur if we use $p$-adic numbers to codify the
hierarchical structure of a NN. On the other hand, the mentioned phenomenon
occurs if we use real numbers.

There are many open problems related to our work on the MFT approximation. We
just mention two.

\begin{problem}
Is it possible to recover the usual (discrete) MFT of an NN as the
discretization of a continuous MFT? The partition function $\overline
{{\LARGE Z}}_{M}$ is determined by an action, which is a functional on a
function space. By restricting this functional to a finite-dimensional vector
space a discretization of it can be obtained. This technique works very well
in the case of $p$-adic NNs; see \cite{Zuniga-ATMP}, and the references therein.
\end{problem}

In Section \ref{Eq_9}, we give a rigorous formulation of the partition
function of the double-copy system. By taking the formal limit of the discrete
moment-generating function of the double-copy system given in \cite[Formula
(10.32)]{Helias et al}, we introduce a continuous functional ${\LARGE Z}%
_{M}^{\left(  2\right)  }\left(  \left\{  \boldsymbol{j}^{\alpha}%
,\widetilde{\boldsymbol{j}}^{\alpha}\right\}  _{\alpha\in\left\{  1,2\right\}
};\boldsymbol{J}\right)  $ for the continuous version of the double-copy
system. As before, we work with the partition function ${\LARGE Z}%
_{M}^{\left(  2\right)  }\left(  \left\{  \boldsymbol{0},\boldsymbol{0}%
\right\}  _{\alpha\in\left\{  1,2\right\}  };\mathbf{J}\right)  ={\LARGE Z}%
_{M}^{\left(  2\right)  }\left(  \mathbf{J}\right)  $. We compute rigorously
the expected value $\left\langle {\LARGE Z}_{M}^{\left(  2\right)  }\left(
\mathbf{J}\right)  \right\rangle _{\mathbf{J}}=\overline{{\LARGE Z}%
_{M}^{\left(  2\right)  }}$, which is the partition function for the
continuous double-copy system, see Theorem \ref{Lemma_15}, Theorem
\ref{Theorem3} in Appendix F. Then we compute the matrix propagator for the
double-copy system, which is system of differential equations on the
covariance matrices controlling the dynamics of the model. An important open
problem is the following:

\begin{problem}
Is it possible to recover the usual (discrete) double-copy system of an NN as
the discretization of a continuous version of it?
\end{problem}

In Section \ref{SECT_10}, using\ \ the system of differential equations that
control the covariance functions of the double-copy system, we formally adapt
the technique of Sompolinsky et al., \cite{Sompolinsky et al}, that allows us
to derive a condition for the criticality of networks of type
(\ref{Continuous_Network}) via the largest Lyapunov exponent. The problem of
determining rigorously largest Lyapunov exponent of a random NN of type
(\ref{Eq_Network_gneral}) is open. In our view, from a mathematical
perspective, the technique introduced in \cite{Sompolinsky et al} for
determining largest Lyapunov is a physical reasoning, not a mathematical theorem.

The Section \ref{SECT_11} is dedicated to the self-averaging phenomena. The
dynamics of the double-copy system is controlled by a system of differential
equations, whose solutions (the covariance functions) are distributions
depending on the spatial variables $x,y\in\Omega$. Then, several spatial
averages of the covariance functions satisfy the above-mentioned system of
differential equations. Appendix G collects some results about this
phenomenon; see Theorems \ref{Theorem4} and \ref{Theorem5}.

In Section \ref{SECT_12}, we introduce a toy model of a random NN, and compute
the covariance functions of the MFT of this network.

We set $\Omega=\mathbb{Z}_{p}$, so the fields $\boldsymbol{h}\left(
x,t\right)  \in L^{2}\left(  \mathbb{Z}_{p}\times\mathbb{R}\right)  $, and
pick an integral trace class operator on $L^{2}\left(  \mathbb{Z}_{p}%
\times\mathbb{Z}_{p}\right)  $, whose kernel depends on a real parameter
$\rho\in\left(  1,\infty\right)  $; we interpret $\rho$ as the control
parameter of our model. In this framework, $\overline{{\LARGE Z}}_{M}\left(
\rho\right)  =\overline{{\LARGE Z}}_{M}\left(  \frac{1}{1-p^{\rho-2}}\right)
$ has a pole at $\rho=2$ ,so the network has a continuous phase transition at
$\rho=2$. We introduce a technique that allows the computation of
$G_{\boldsymbol{hh}}^{\rho}(x,y,\tau)=\left\langle \boldsymbol{h}%
(x,t_{1}),\boldsymbol{h}(y,t_{2})\right\rangle _{\boldsymbol{h}}$, $\tau
=t_{1}-t_{2}$, for $\boldsymbol{h}(x,t)$ in a dense subset of $L^{2}\left(
\mathbb{Z}_{p}\times\mathbb{R}\right)  $. Interpreting $G_{\boldsymbol{hh}%
}^{\rho}(x,y,\tau),\rho\in\left(  1,\infty\right)  ,$ as an order parameter,
we argue that $G_{\boldsymbol{hh}}^{\rho}(x,y,\tau)$, $\rho\in\left(
1,\infty\right)  \smallsetminus\left\{  2\right\}  $ describes a disordered
phase, while $G_{\boldsymbol{hh}}^{2}(x,y,\tau)$ describes an ordered one.
This behavior shows that the NN works near phase transition. We believe that
the computation of the double-copy model for this toy NN and its application
to derive a condition for the criticality of NN via the largest Lyapunov
exponent is relevant.

\section{\label{SECT_6}A rigorous definition of the generating functional
${\protect\LARGE Z}_{M}\left(  \boldsymbol{j},\widetilde{\boldsymbol{j}%
};\boldsymbol{J}\right)  $}

By the considerations made at the end of the previous section, it is natural
to consider that the fields $\boldsymbol{h},\widetilde{\boldsymbol{h}%
},\boldsymbol{j},\widetilde{\boldsymbol{j}}$ are elements of the Hilbert space%
\[
L^{2}(\Omega\times\mathbb{R)}=L^{2}(\Omega\times\mathbb{R},d\mu dt)\mathbb{)=}%
\left\{  \boldsymbol{f}:\Omega\times\mathbb{R\rightarrow R};\left\Vert
\boldsymbol{f}\right\Vert _{2}<\infty\right\}  ,
\]
where $d\mu$ is fixed measure in $\Omega$, and $dt$ is the Lebesgue measure in
$\mathbb{R}$, $\left\Vert \boldsymbol{f}\right\Vert _{2}^{2}=\left\langle
\boldsymbol{f},\boldsymbol{f}\right\rangle $, and
\[
\left\langle \boldsymbol{f},\boldsymbol{g}\right\rangle =%
{\displaystyle\int\limits_{\Omega}}
{\displaystyle\int\limits_{\mathbb{R}}}
\boldsymbol{f}\left(  x,t\right)  \boldsymbol{g}\left(  x,t\right)
dtd\mu\left(  x\right)  .
\]

In order to provide a rigorous mathematical version of ${\LARGE Z}\left(
\boldsymbol{j},\widetilde{\boldsymbol{j}};\boldsymbol{J}\right)  $, it is
necessary to solve the following problems:

\begin{problem}
\label{P1} Find a Hilbert subspace $\mathcal{H}\subset L^{2}(\Omega
\times\mathbb{R)}$ such that $\partial_{t}\boldsymbol{f}\left(  x,t\right)
\in\mathcal{H}$ for any $\boldsymbol{f}\left(  x,t\right)  \in\mathcal{H}$;
\end{problem}

\begin{problem}
\label{P2} Show that the bilinear form $\left\langle \widetilde{\boldsymbol{h}%
},%
{\displaystyle\int\limits_{\Omega}}
\boldsymbol{J}\left(  x,y\right)  \phi\left(  \boldsymbol{h}\left(
y,t\right)  \right)  d\mu\left(  y\right)  \right\rangle $ is well-defined for
any $\boldsymbol{h},\widetilde{\boldsymbol{h}}\in\mathcal{H}$, and any
$\boldsymbol{J}\in L^{2}(\Omega\times\mathbb{R)}$;
\end{problem}

\begin{problem}
\label{P3}Find probability measures $d\boldsymbol{P}\left(  \boldsymbol{h}%
\right)  d\widetilde{\boldsymbol{P}}\left(  \widetilde{\boldsymbol{h}}\right)
$ in $\mathcal{H\times H}$;
\end{problem}

\begin{problem}
\label{P4}Verify that
\begin{equation}
\exp\left(  S_{0}\left[  \boldsymbol{h},\widetilde{\boldsymbol{h}}\right]
-S_{int}\left[  \boldsymbol{h},\widetilde{\boldsymbol{h}};\boldsymbol{J}%
\right]  +\left\langle \boldsymbol{j},\boldsymbol{h}\right\rangle
+\left\langle \widetilde{\boldsymbol{j}},\widetilde{\boldsymbol{h}%
}\right\rangle \right)  \label{Function_exponential}%
\end{equation}
is in $L^{1}(\mathcal{H}\times\mathcal{H},d\boldsymbol{P}\left(
\boldsymbol{h}\right)  d\widetilde{\boldsymbol{P}}\left(  \widetilde
{\boldsymbol{h}}\right)  )$.
\end{problem}

The mathematical constructions and demonstrations related to these problems
are given in Appendices A, B, and C. Here, we give a general discussion
emphasizing the connection with the classical case presented in \cite[Chapter
10]{Helias et al}.

\subsection{The construction of the space of fields}

We first note that if $\boldsymbol{h}\left(  x,t\right)  \in L^{2}%
(\Omega\times\mathbb{R)}$, then $\boldsymbol{h}\left(  \cdot,t\right)  \in
L^{2}(\Omega\mathbb{)}$ for almost every $t\in\mathbb{R}$, and $\boldsymbol{h}%
\left(  x,\cdot\right)  \in L^{2}(\mathbb{R)}$ for almost every $x\in\Omega$.
This fact is a consequence of the Fubini theorem. Now, it is necessary to
define $\partial_{t}\boldsymbol{h}\left(  x,\cdot\right)  $ for
$\boldsymbol{h}\left(  x,\cdot\right)  \in L^{2}(\mathbb{R)}$. This drives
naturally to the Sobolev space $W_{1}(\mathbb{R})$:%
\[
W_{1}\left(  \mathbb{R}\right)  =\left\{  G\left(  t\right)
:\mathbb{R\rightarrow C};G\in L^{2}(\mathbb{R})\text{, }\left(  \sqrt
{1+\left\vert \xi\right\vert ^{2}}\right)  \mathcal{F}_{t\rightarrow\xi
}\left(  G\right)  \in L^{2}(\mathbb{R})\right\}  ,
\]
where $\mathcal{F}_{t\rightarrow\xi}\left(  G\right)  $ denotes the Fourier
transform. $W_{1}\left(  \mathbb{R}\right)  $ is a real separable Hilbert
space. In this space, the definition%
\begin{equation}
\partial_{t}G\left(  t\right)  =\mathcal{F}_{\xi\rightarrow t}^{-1}\left(
\sqrt{-1}\xi\mathcal{F}_{t\rightarrow\xi}\left(  G\right)  \right)
\label{Derivative}%
\end{equation}
makes sense, cf. Lemma \ref{Lemma_1} in Appendix A. We warn the reader that in
this article, the fields are real-valued functions, but most of the
mathematical literature about functional analysis required here considers
complex-valued functions. This fact does not create any problem in our
presentation; however, in some cases, like in the formula (\ref{Derivative}),
it is necessary to ensure that certain functions are real-valued, cf. Remark
\ref{Nota1A} in Appendix A.

We now note that if $f_{i}(t)\in W_{1}(\mathbb{R})$, $g_{i}(x)\in L^{2}%
(\Omega\mathbb{)}$, $c_{i}\in\mathbb{R}$, $i=1,\ldots,m$, then
\begin{equation}%
{\displaystyle\sum\limits_{i=1}^{m}}
c_{i}f_{i}(t)%
{\textstyle\bigotimes\limits_{\text{alg}}}
g_{i}(x):=%
{\displaystyle\sum\limits_{i=1}^{m}}
c_{i}f_{i}(t)g_{i}(x)\in L^{2}(\Omega\times\mathbb{R)},
\label{LInear combinations}%
\end{equation}
and%
\[
\partial_{t}\left(
{\displaystyle\sum\limits_{i=1}^{m}}
c_{i}f_{i}(t)%
{\textstyle\bigotimes\limits_{\text{alg}}}
g_{i}(x)\right)  =%
{\displaystyle\sum\limits_{i=1}^{m}}
c_{i}\partial_{t}\left(  f_{i}(t)\right)
{\textstyle\bigotimes\limits_{\text{alg}}}
g_{i}(x)\in L^{2}(\Omega\times\mathbb{R)}.
\]
So, the natural candidate for the space of fields is the closure in
$L^{2}(\Omega\times\mathbb{R)}$ of the set of linear combinations of type
(\ref{LInear combinations}). This insight is correct, and its mathematical
formulation requires some results about the tensor product of separable
Hilbert spaces. Assuming that $L^{2}(\Omega\mathbb{)}$ is separable Hilbert
space, the Hilbert tensor product
\[
\mathcal{H}:=L^{2}(\Omega\mathbb{)}%
{\textstyle\bigotimes}
W_{1}\left(  \mathbb{R}\right)
\]
of $L^{2}(\Omega\mathbb{)}$ and $W_{1}\left(  \mathbb{R}\right)  $, which is a
separable Hilbert space, is continuously embedded in $L^{2}(\Omega
\times\mathbb{R)}$, cf. Lemmas \ref{Lemma_2} and \ref{Lemma_3} in Appendix A.
In this space, the functions
\[
S_{0}\left[  \boldsymbol{h},\widetilde{\boldsymbol{h}}\right]  ,\left\langle
\boldsymbol{j},\boldsymbol{h}\right\rangle ,\left\langle \widetilde
{\boldsymbol{j}},\widetilde{\boldsymbol{h}}\right\rangle ,\text{ for
}\boldsymbol{h},\widetilde{\boldsymbol{h}},\boldsymbol{j},\widetilde
{\boldsymbol{j}}\in\mathcal{H}%
\]
are well-defined. The term $S_{int}\left[  \boldsymbol{h},\widetilde
{\boldsymbol{h}};\boldsymbol{J}\right]  $ is also a well-defined real-valued
function due to the fact that for $\boldsymbol{J}\left(  x,y\right)  \in
L^{2}\left(  \Omega\times\Omega,d\mu d\mu\right)  =L^{2}\left(  \Omega
^{2}\right)  $, and $g\left(  y,t\right)  \in L^{2}\left(  \Omega
\times\mathbb{R},d\mu dt\right)  $, the function $\int_{\Omega}\boldsymbol{J}%
\left(  x,y\right)  \phi\left(  g\left(  y,t\right)  \right)  d\mu\left(
x\right)  $ belongs to $L^{2}\left(  \Omega\times\mathbb{R},d\mu dt\right)  $,
cf. Lemma \ref{Lemma_4} in Appendix A. Our results are valid for a general
class of activation functions, which contains $\phi\left(  z\right)
=\tanh(z)$ as a particular case, see Remark \ref{Note_act_fun} in Appendix A.

An explicit description of the fields, i.e., the elements of $\mathcal{H}$, is
the following:%
\[
\boldsymbol{h}\left(  x,t\right)  =%
{\displaystyle\sum\limits_{n=0}^{\infty}}
\text{ }%
{\displaystyle\sum\limits_{n=0}^{\infty}}
c_{n,m}e_{n}\left(  x\right)  f_{m}\left(  t\right)  ,
\]
where $\left\{  e_{n}\left(  x\right)  \right\}  _{n\in\mathbb{N}}$ is an
orthonormal basis of $L^{2}\left(  \Omega,d\mu\right)  $ and $\left\{
f_{m}\left(  t\right)  \right\}  _{m\in\mathbb{N}}$ is an orthonormal basis of
$W_{1}$, and the $c_{n,m}$\ are real numbers, cf. Lemma \ref{Lemma_2}-(ii) in
Appendix A.

In conclusion, in the space $\mathcal{H\times H}$, the expression
\[
\exp\left(  S_{0}\left[  \boldsymbol{h},\widetilde{\boldsymbol{h}}\right]
-S_{int}\left[  \boldsymbol{h},\widetilde{\boldsymbol{h}};\boldsymbol{J}%
\right]  +\left\langle \boldsymbol{j},\boldsymbol{h}\right\rangle
+\left\langle \widetilde{\boldsymbol{j}},\widetilde{\boldsymbol{h}%
}\right\rangle \right)
\]
is a well-defined real-valued function, cf. Proposition \ref{Porp1} in
Appendix A. For further details, the reader may consult Appendix A and the
references therein.

\begin{remark}
We use the symbols $\left\langle \cdot,\cdot\right\rangle _{\mathcal{H}}$,
respectively $\left\Vert \cdot\right\Vert _{\mathcal{H}}$, for the inner
product, the norm, in $\mathcal{H}$. The symbols $\left\langle \cdot
,\cdot\right\rangle $, $\left\Vert \cdot\right\Vert $ are used only in the
case $L^{2}(\Omega\times\mathbb{R)}$. For other $L^{2}$ spaces, we use
specific notation that allows us to identify the space uniquely.
\end{remark}

\subsection{Probability measures on the space of fields}

A central difference between the study of continuous random NNs versus the
discrete counterparts is that the continuous case requires random variables
taking values in infinite dimensional vector spaces like $L^{2}\left(
\Omega^{2}\right)  $, $\mathcal{H}$; in contrast, the discrete case uses only
$\mathbb{R}^{D}$. To construct probability measures, we use white noise
calculus techniques; in turn, the construction of such measures is possible in
certain topological spaces (nuclear spaces). In the Appendix B, we review the
basic ideas about white noise calculus. We assume the existence of a Gel'fand
triplet of the form%
\begin{equation}
\Phi\hookrightarrow\mathcal{H}\hookrightarrow\Phi^{\prime}, \label{Triplet1}%
\end{equation}
which means that $\Phi$ is a countably Hilbert nuclear space continuously and
densely embedded into $\mathcal{H=}L^{2}(\Omega\mathbb{)}%
{\textstyle\bigotimes}
W_{1}$. By identifying $\mathcal{H}$ with is dual (by the Riesz representation
theorem), $\mathcal{H}$ is continuously and densely embedded into
$\Phi^{\prime}$ (the space of continuous linear functionals on $\Phi$, which
reflexive $\Phi\mathcal{=}\Phi^{\prime\prime}$). This implies that the pairing
$\left(  T,f\right)  $ on $\Phi^{\prime}\times\Phi$ satisfies $\left(
T,\varphi\right)  =\left\langle T,\varphi\right\rangle $ for $T\in
\mathcal{H}^{\prime}=\mathcal{H}$ and $\varphi\in\Phi\subset\mathcal{H}$. For
further details are given in Section \ref{Section_Gelfand_triplet} of the
Appendix C.

The Bochner-Minlos theorem, see Appendix B for details, establishes a
bijection between probability measures $\boldsymbol{P}$ on $\left(
\Phi^{\prime},\mathcal{B}\right)  $, where $\mathcal{B}$ is the $\sigma
$-algebra of cylindric subsets of $\Phi^{\prime}$, and and characteristic
functionals (or moment-generating functionals), $\mathcal{C}:\Phi
\rightarrow\mathbb{C}$ as follows:%
\[
\mathcal{C}\left(  \varphi\right)  =%
{\displaystyle\int\limits_{\Phi^{\prime}}}
e^{\sqrt{-1}\left(  T,\varphi\right)  }d\boldsymbol{P}(T)\text{, for }%
\varphi\in\Phi\text{.}%
\]
We now select two characteristic functions $\mathcal{C}$, $\widetilde
{\mathcal{C}}:\Phi\rightarrow\mathbb{C}$ to get two probability measures
$d\boldsymbol{P}\left(  \boldsymbol{h}\right)  $, $d\widetilde{\boldsymbol{P}%
}\left(  \widetilde{\boldsymbol{h}}\right)  $ on the $\sigma$-algebra
$\mathcal{B}$ and the probability product measure $d\boldsymbol{P}\left(
\boldsymbol{h}\right)  d\widetilde{\boldsymbol{P}}\left(  \widetilde
{\boldsymbol{h}}\right)  $ on the product $\sigma$-algebra $\mathcal{B\times
B}$. The restriction of $d\boldsymbol{P}\left(  \boldsymbol{h}\right)
d\widetilde{\boldsymbol{P}}\left(  \widetilde{\boldsymbol{h}}\right)  $ to
$\mathcal{H\times H}$ is a probability measure (cf. Lemma \ref{Lemma_5A} in
the Appendix C), we use this measure as a replacement of the ill-defined
measure $D\boldsymbol{h}D\widetilde{\boldsymbol{h}}$ in the generating
functional (\ref{Generating_Funt_J_2}). For some examples of measures of the
type $d\boldsymbol{P}\left(  \boldsymbol{h}\right)  d\widetilde{\boldsymbol{P}%
}\left(  \widetilde{\boldsymbol{h}}\right)  $, the reader may consult Section
\ref{Sect_examples} in the Appendix C. The explicit construction of the
measures $d\boldsymbol{P}\left(  \boldsymbol{h}\right)  d\widetilde
{\boldsymbol{P}}\left(  \widetilde{\boldsymbol{h}}\right)  $ for the
Archimedean/non-Archimedean NNs is given at the end of the Appendix C.

\begin{remark}
The technical details about constructing the specific Gel'fand triplets
required for studying Archimedean/non-Archimedean random NNs are given in
Section \ref{Section_Gelfand_triplets} in Appendix C. More precisely, in
Theorem \ref{Theorem6}.
\end{remark}

\subsection{SFTs with cutoff}

We now consider the Problem \ref{P4}. General results provide explicit
families of functions in $L^{1}(\mathcal{H}\times\mathcal{H},d\boldsymbol{P}%
\left(  \boldsymbol{h}\right)  d\widetilde{\boldsymbol{P}}\left(
\widetilde{\boldsymbol{h}}\right)  )$; see the last section in Appendix B and
the references therein; the function (\ref{Function_exponential}) does not
belong to any of these families, and thus, currently the Problem \ref{P4} is
open. However, we do not need the solution of this problem in the rigorous
construction of the moment-generating functionals for NNs.

The fields $\boldsymbol{h}$ and $\widetilde{\boldsymbol{h}}$ represent the
physical signals produced by a network. The physical signals have finite
energy, i.e., $\left\Vert \boldsymbol{h}\right\Vert _{\mathcal{H}},\left\Vert
\widetilde{\boldsymbol{h}}\right\Vert _{\mathcal{H}}<M$, where the constant
$M$ depends on each particular $NN$. Motivated by this observation, we set%
\[
\mathcal{P}_{M}:=\left\{  \left(  \boldsymbol{h},\widetilde{\boldsymbol{h}%
}\right)  \in\mathcal{H}\times\mathcal{H};\left\Vert \boldsymbol{h}\right\Vert
_{\mathcal{H}},\left\Vert \widetilde{\boldsymbol{h}}\right\Vert _{\mathcal{H}%
}<M\right\}  ,
\]
where $M$ is a positive constant. The polydics $\mathcal{P}_{M}$ contains all
the physical signals produced by a NN, constituting a parameter of the STF of
the network. We denote by ${\large 1}_{\mathcal{P}_{M}}\left(  \boldsymbol{h}%
,\widetilde{\boldsymbol{h}}\right)  $ the characteristic function of
$\mathcal{P}_{M}$. Now%
\[
{\large 1}_{\mathcal{P}_{M}}\left(  \boldsymbol{h},\widetilde{\boldsymbol{h}%
}\right)  \exp\left(  S_{0}\left[  \boldsymbol{h},\widetilde{\boldsymbol{h}%
}\right]  -S_{int}\left[  \boldsymbol{h},\widetilde{\boldsymbol{h}%
};\boldsymbol{J}\right]  +\left\langle \boldsymbol{j},\boldsymbol{h}%
\right\rangle +\left\langle \widetilde{\boldsymbol{j}},\widetilde
{\boldsymbol{h}}\right\rangle \right)  ,
\]
belongs to $L^{1}\left(  \mathcal{H}\times\mathcal{H},\mathcal{B}%
\times\mathcal{B},d\boldsymbol{h}d\widetilde{\boldsymbol{h}}\right)  $, see
Lemma \ref{Lemma_5} in Appendix C. Now, for $\boldsymbol{J}\left(  x,y\right)
\in L^{2}\left(  \Omega\times\Omega\right)  $, $\boldsymbol{j}\left(
x,t\right)  $, $\widetilde{\boldsymbol{j}}\left(  x,t\right)  \in\mathcal{H}$,
the functional%
\begin{align}
{\LARGE Z}_{M}\left(  \boldsymbol{j},\widetilde{\boldsymbol{j}};\boldsymbol{J}%
\right)   &  =%
{\displaystyle\iint\limits_{\mathcal{H}\times\mathcal{H}}}
{\large 1}_{\mathcal{P}_{M}}\left(  \boldsymbol{h},\widetilde{\boldsymbol{h}%
}\right)  \exp\left(  S_{0}\left[  \boldsymbol{h},\widetilde{\boldsymbol{h}%
}\right]  -S_{int}\left[  \boldsymbol{h},\widetilde{\boldsymbol{h}%
};\boldsymbol{J}\right]  \right)  \times\nonumber\\
&  \exp\left(  \left\langle \boldsymbol{j},\boldsymbol{h}\right\rangle
+\left\langle \widetilde{\boldsymbol{j}},\widetilde{\boldsymbol{h}%
}\right\rangle \right)  d\boldsymbol{P}\left(  \boldsymbol{h}\right)
d\widetilde{\boldsymbol{P}}\left(  \widetilde{\boldsymbol{h}}\right)
\label{Generating_Funt_J_2_M}%
\end{align}
is well-defined, see Theorem \ref{Porp2} in Appendix C.

\begin{remark}
We warn the reader that the construction of the generating functionals
${\LARGE Z}_{M}\left(  \boldsymbol{j},\widetilde{\boldsymbol{j}}%
;\boldsymbol{J}\right)  $ involves several technical subtleties, which are
discussed in Appendix A. First, the fields $\boldsymbol{h},\widetilde
{\boldsymbol{h}}$ are considered at the same foot; this means that \ in
(\ref{Generating_Funt_J_2_M}) $\boldsymbol{h}$ and $\widetilde{\boldsymbol{h}%
}$ are interchangeable. There is an inclusion $\mathfrak{i}:\mathcal{H}%
\rightarrow L^{2}(\Omega\times\mathbb{R})$ such that $\mathfrak{i}\left(
\mathcal{H}\right)  $ is a dense pre-Hilbert subspace of $L^{2}(\Omega
\times\mathbb{R})$, cf. Lemma \ref{Lemma_3} in Appendix A. In
(\ref{Generating_Funt_J_2_M}), we are obligated to integrate over two copies
of $\mathcal{H}$\ so $\partial_{t}\boldsymbol{h}$ is well-defined. However,
the integrand in (\ref{Generating_Funt_J_2_M}) involves the inner product of
$L^{2}(\Omega\times\mathbb{R})$; this is not a problem since we identify
$\mathcal{H}$ with a pre-Hilbert subspace of $L^{2}(\Omega\times\mathbb{R})$.
On the other hand, we can also use a triplet like $\Psi\hookrightarrow
L^{2}\left(  \Omega\right)
{\textstyle\bigotimes}
L^{2}\left(  \mathbb{R}\right)  \hookrightarrow\Psi^{\prime}$ to construct the
measure $d\boldsymbol{P}\left(  \boldsymbol{h}\right)  d\widetilde
{\boldsymbol{P}}\left(  \widetilde{\boldsymbol{h}}\right)  $, see Lemma
\ref{Lemma_5A} in Appendix C. This implies that in
(\ref{Generating_Funt_J_2_M}), we can integrate on $L^{2}(\Omega
\times\mathbb{R})\times L^{2}(\Omega\times\mathbb{R})$, and the integrand is
well-defined up to a set of $d\boldsymbol{P}\left(  \boldsymbol{h}\right)
d\widetilde{\boldsymbol{P}}\left(  \widetilde{\boldsymbol{h}}\right)
$-measure zero.
\end{remark}

As an application of the rigorous construction of ${\LARGE Z}_{M}\left(
\boldsymbol{j},\widetilde{\boldsymbol{j}};\boldsymbol{J}\right)  $, the
mapping
\begin{equation}
\boldsymbol{J}\left(  \in L^{2}(\Omega^{2})\right)  \text{ }%
\boldsymbol{\rightarrow}\text{ }{\LARGE Z}_{M}\left(  \boldsymbol{j}%
,\widetilde{\boldsymbol{j}};\boldsymbol{J}\right)  \label{Random_variable}%
\end{equation}
is a well-defined (generalized) random variable. The mean-field theory
corresponding to ${\LARGE Z}_{M}\left(  \boldsymbol{j},\widetilde
{\boldsymbol{j}};\boldsymbol{J}\right)  $ is given by%
\[
\overline{{\LARGE Z}}_{M}\left(  \boldsymbol{j},\widetilde{\boldsymbol{j}%
}\right)  :=\left\langle {\LARGE Z}_{M}\left(  \boldsymbol{j},\widetilde
{\boldsymbol{j}};\boldsymbol{J}\right)  \right\rangle _{\boldsymbol{J}},
\]
where $\left\langle \cdot\right\rangle _{\boldsymbol{J}}$ denotes the expected
value with respect to the noise $\boldsymbol{J}$. In the next section, we
compute this expectation in a rigorous mathematical way. Here, it is relevant
to mention that in the discrete case, it is widely accepted that \ in the
limit $N\rightarrow\infty$ the moment-generating functional ${\LARGE Z}\left(
\boldsymbol{j};\left[  \boldsymbol{J}_{ij}\right]  \right)  $, which is a
random object due to $\left[  \boldsymbol{J}_{ij}\right]  $, has a
concentration measure, cf. \cite[pg. 98]{Helias et al}. This is precisely the
meaning of (\ref{Random_variable}), but for a moment-generating functional
with cutoff. In the next sections, we show that the mean-field approximation
and the double-copy techniques can be rigorously formulated, under suitable
hypotheses, using $\overline{{\LARGE Z}}_{M}\left(  \boldsymbol{j}%
,\widetilde{\boldsymbol{j}}\right)  $.

\section{\label{SECT_7}Mean-field theory approximation}

\subsection{Averaging random NNs}

In $S_{int}\left[  \boldsymbol{h},\widetilde{\boldsymbol{h}};\boldsymbol{J}%
\right]  $, see (\ref{S_interaction}), $\boldsymbol{J}$ is a realization of a
generalized Gaussian random variable on the measurable space $\left(
L^{2}\left(  \Omega^{2}\right)  ,\mathcal{B}(\Omega^{2})\right)  $, with mean
zero. This means\ that there exists a probability measure $\boldsymbol{P}%
_{\boldsymbol{J}}$, on the Borel $\mathcal{\sigma}$-algebra $\mathcal{B}%
(\Omega^{2})$ of the space $\Omega^{2}$, such that the probability of the
event $\boldsymbol{J}\in B$, for $B\in\mathcal{B}(\Omega^{2})$, is $\int
_{B}d\boldsymbol{P}_{\boldsymbol{J}}$. For the sake of simplicity, we use
$\boldsymbol{J}$ for the random variable and its value. In the
infinite-dimensional case, there are no explicit formulas for the measure
$\boldsymbol{P}_{\boldsymbol{J}}$; only the Fourier transform of
$d\boldsymbol{P}_{\boldsymbol{J}}$\ is available:%
\begin{equation}%
{\displaystyle\int\limits_{L^{2}\left(  \Omega^{2}\right)  }}
e^{\sqrt{-1}\lambda\left\langle \boldsymbol{J},f\right\rangle _{L^{2}\left(
\Omega^{2}\right)  }}d\boldsymbol{P}_{\boldsymbol{J}}=e^{\frac{-\lambda^{2}%
}{2}\left\langle \square_{\boldsymbol{J}}f,f\right\rangle _{L^{2}\left(
\Omega^{2}\right)  }}\text{, } \label{Fourier_Trans_Prob}%
\end{equation}
for $f\in L^{2}\left(  \Omega^{2}\right)  $ and $\lambda\in\mathbb{R}$, where
$\square_{\boldsymbol{J}}:L^{2}\left(  \Omega^{2}\right)  \rightarrow
L^{2}\left(  \Omega^{2}\right)  $\ is a symmetric, positive definite, trace
class operator. Appendix D reviews the results of the Gaussian measures on
Hilbert spaces required here.

We assume that $\square_{\boldsymbol{J}}$ is an integral operator, which means
that there exists a kernel $K_{\boldsymbol{J}}\left(  u_{1},u_{2},y_{1}%
,y_{2}\right)  \in L^{2}\left(  \Omega^{4}\right)  $ such that%
\[
\square_{\boldsymbol{J}}f\left(  y_{1},y_{2}\right)  =%
{\displaystyle\int\limits_{\Omega^{2}}}
K_{\boldsymbol{J}}\left(  u_{1},u_{2},y_{1},y_{2}\right)  f\left(  u_{1}%
,u_{2}\right)  d\mu\left(  u_{1}\right)  d\mu\left(  u_{2}\right)  .
\]
This generic class of operators is widely used in QFT and amenable for
calculations, \cite[Section 5.1]{Zinn-Justin}; see also Remark
\ref{Nota_Mercer} in Appendix D.

In the classical calculations using the technique of mean-field approximation,
the auxiliary fields $\boldsymbol{j}\left(  x,t\right)  ,\widetilde
{\boldsymbol{j}}\left(  x,t\right)  $ are taken to be zero because they are
not physical fields; this fact also occurs in the continuous counterpart, for
this reason we work only with ${\LARGE Z}_{M}\left(  \boldsymbol{0}%
,\boldsymbol{0};\boldsymbol{J}\right)  =:{\LARGE Z}_{M}\left(  \boldsymbol{J}%
\right)  $, which is an analog of a partition function with cutoff. In this
section, we compute the expectation
\begin{multline*}
\overline{{\LARGE Z}}_{M}:=\left\langle {\LARGE Z}_{M}\left(  \boldsymbol{J}%
\right)  \right\rangle _{\boldsymbol{J}}=%
{\displaystyle\iint\limits_{\mathcal{H}\times\mathcal{H}}}
{\large 1}_{\mathcal{P}_{M}}\left(  \boldsymbol{h},\widetilde{\boldsymbol{h}%
}\right)  \exp\left(  S_{0}\left[  \boldsymbol{h},\widetilde{\boldsymbol{h}%
}\right]  \right)  \times\\
\left\{  \text{ \ }%
{\displaystyle\int\limits_{L^{2}\left(  \Omega^{2}\right)  }}
\exp\left(  -S_{int}\left[  \boldsymbol{h},\widetilde{\boldsymbol{h}%
};\boldsymbol{J}\right]  \right)  d\boldsymbol{P}_{\boldsymbol{J}}\right\}
d\boldsymbol{P}\left(  \boldsymbol{h}\right)  d\widetilde{\boldsymbol{P}%
}\left(  \widetilde{\boldsymbol{h}}\right)  .
\end{multline*}
Set
\begin{align*}
\mathcal{I}(\boldsymbol{h},\widetilde{\boldsymbol{h}})  &  :=\left\langle
\exp\left(  -S_{int}\left[  \boldsymbol{h},\widetilde{\boldsymbol{h}%
};\boldsymbol{J}\right]  \right)  \right\rangle _{_{\boldsymbol{J}}}\\
&  =%
{\displaystyle\int\limits_{L^{2}\left(  \Omega^{2}\right)  }}
\exp\left(  \left\langle \widetilde{\boldsymbol{h}}\left(  x,t\right)  ,%
{\displaystyle\int\limits_{\Omega}}
\boldsymbol{J}\left(  x,y\right)  \phi\left(  \boldsymbol{h}\left(
y,t\right)  \right)  d\mu\left(  y\right)  \right\rangle \right)
d\boldsymbol{P}_{\boldsymbol{J}}.
\end{align*}
By using the identity,%
\begin{gather*}
\left\langle \widetilde{\boldsymbol{h}}\left(  x,t\right)  ,%
{\displaystyle\int\limits_{\Omega}}
\boldsymbol{J}\left(  x,y\right)  \phi\left(  \boldsymbol{h}\left(
y,t\right)  \right)  d\mu\left(  y\right)  \right\rangle =\\
\left\langle \boldsymbol{J}\left(  x,y\right)  ,%
{\displaystyle\int\limits_{\mathbb{R}}}
\widetilde{\boldsymbol{h}}\left(  x,t\right)  \phi\left(  \boldsymbol{h}%
\left(  y,t\right)  \right)  dt\right\rangle _{L^{2}\left(  \Omega\times
\Omega\right)  },
\end{gather*}
see Remark \ref{Nota7} in Appendix D,
\begin{equation}
\mathcal{I}(\boldsymbol{h},\widetilde{\boldsymbol{h}})=%
{\displaystyle\int\limits_{L^{2}\left(  \Omega^{2}\right)  }}
\exp\left(  \left\langle \boldsymbol{J}\left(  x,y\right)  ,%
{\displaystyle\int\limits_{\mathbb{R}}}
\widetilde{\boldsymbol{h}}\left(  x,t\right)  \phi\left(  \boldsymbol{h}%
\left(  y,t\right)  \right)  dt\right\rangle _{L^{2}\left(  \Omega\times
\Omega\right)  }\right)  d\boldsymbol{P}_{\boldsymbol{J}}.
\label{Integral_exp}%
\end{equation}
The integral in (\ref{Integral_exp}) is similar to the one in
(\ref{Fourier_Trans_Prob}), if we take
\[
f=%
{\displaystyle\int\limits_{\mathbb{R}}}
\widetilde{\boldsymbol{h}}\left(  x,t\right)  \phi\left(  \boldsymbol{h}%
\left(  y,t\right)  \right)  dt.
\]
Indeed, formally, $\mathcal{I}(\boldsymbol{h},\widetilde{\boldsymbol{h}}%
)$\ can be calculated by performing a Wick rotation $\lambda\rightarrow
-\sqrt{-1}\lambda$ in (\ref{Fourier_Trans_Prob}),%
\begin{equation}%
{\displaystyle\int\limits_{L^{2}\left(  \Omega^{2}\right)  }}
e^{\lambda\left\langle \boldsymbol{J},f\right\rangle _{L^{2}\left(  \Omega
^{2}\right)  }}d\boldsymbol{P}_{\boldsymbol{J}}=e^{\frac{\lambda^{2}}%
{2}\left\langle \square_{\boldsymbol{J}}f,f\right\rangle _{L^{2}\left(
\Omega^{2}\right)  }}\text{, for }f\in L^{2}\left(  \Omega^{2}\right)  .
\label{Key_formula_A}%
\end{equation}
This heuristics is a mathematical theorem; see the formula (\ref{Key_formula})
in Appendix D. Then, we can compute the integral $\mathcal{I}(\boldsymbol{h}%
,\widetilde{\boldsymbol{h}})$ using (\ref{Key_formula_A}), with $\lambda=1$,%
\begin{multline*}
\mathcal{I}(\boldsymbol{h},\widetilde{\boldsymbol{h}})=\\
\exp\left(  \frac{1}{2}%
{\displaystyle\int\limits_{\mathbb{R}^{2}}}
\text{ }%
{\displaystyle\int\limits_{\Omega^{2}}}
\widetilde{\boldsymbol{h}}\left(  u_{1},t_{1}\right)  C_{\phi\phi}\left(
u_{1},y_{1},t_{1},t_{2}\right)  \widetilde{\boldsymbol{h}}\left(  y_{1}%
,t_{2}\right)  d\mu\left(  u_{1}\right)  d\mu\left(  y_{1}\right)
dt_{1}dt_{2}\right)  ,
\end{multline*}
where%
\begin{gather*}
C_{\phi\left(  \boldsymbol{h}\right)  \phi\left(  \boldsymbol{h}\right)
}\left(  u_{1},y_{1},t_{1},t_{2}\right)  :=\\%
{\displaystyle\int\limits_{\Omega^{2}}}
\phi\left(  \boldsymbol{h}\left(  y_{2},t_{2}\right)  \right)
K_{\boldsymbol{J}}\left(  u_{1},u_{2},y_{1},y_{2}\right)  \phi\left(
\boldsymbol{h}\left(  u_{2},t_{1}\right)  \right)  d\mu\left(  y_{2}\right)
d\mu\left(  u_{2}\right)  ,
\end{gather*}
cf. Lemma \ref{Lemma_7} in Appendix D.

In conclusion,
\begin{gather}
\overline{{\LARGE Z}}_{M}=%
{\displaystyle\iint\limits_{\mathcal{H}\times\mathcal{H}}}
{\large 1}_{\mathcal{P}_{M}}\left(  \boldsymbol{h},\widetilde{\boldsymbol{h}%
}\right)  \exp\left(  S_{0}\left[  \widetilde{\boldsymbol{h}},\boldsymbol{h}%
\right]  +\left\langle \widetilde{\boldsymbol{h}},C_{\phi\left(
\boldsymbol{h}\right)  \phi\left(  \boldsymbol{h}\right)  }\widetilde
{\boldsymbol{h}}\right\rangle _{\left(  L^{2}(\Omega\times\mathbb{R})\right)
^{2}}\right)  \times\label{Averaged_part_funct}\\
d\boldsymbol{P}\left(  \boldsymbol{h}\right)  d\widetilde{\boldsymbol{P}%
}\left(  \widetilde{\boldsymbol{h}}\right)  ,\nonumber
\end{gather}
where $S_{0}\left[  \widetilde{\boldsymbol{h}},\boldsymbol{h}\right]
=-\left\langle \widetilde{\boldsymbol{h}},\left(  \partial_{t}+\gamma\right)
\boldsymbol{h}\right\rangle +\frac{1}{2}\sigma^{2}\left\langle \widetilde
{\boldsymbol{h}},\widetilde{\boldsymbol{h}}\right\rangle $, and%
\begin{gather*}
\left\langle \widetilde{\boldsymbol{h}},C_{\phi\left(  \boldsymbol{h}\right)
\phi\left(  \boldsymbol{h}\right)  }\widetilde{\boldsymbol{h}}\right\rangle
_{\left(  L^{2}(\Omega\times\mathbb{R})\right)  ^{2}}:=\\%
{\displaystyle\int\limits_{\Omega^{2}}}
\text{ }%
{\displaystyle\int\limits_{\mathbb{R}^{2}}}
\widetilde{\boldsymbol{h}}\left(  x,t_{1}\right)  C_{\phi\left(
\boldsymbol{h}\right)  \phi\left(  \boldsymbol{h}\right)  }\left(
x,y,t_{1},t_{2}\right)  \widetilde{\boldsymbol{h}}\left(  y,t_{2}\right)
d\mu\left(  x\right)  d\mu\left(  y\right)  dt_{1}dt_{2},
\end{gather*}
cf. Theorem \ref{Porp3} in Appendix D. The bilinear from $\left\langle
\widetilde{\boldsymbol{h}},C_{\phi\left(  \boldsymbol{h}\right)  \phi\left(
\boldsymbol{h}\right)  }\widetilde{\boldsymbol{h}}\right\rangle _{\left(
L^{2}(\Omega\times\mathbb{R})\right)  ^{2}}$ is the correlation functional of
a Gaussian noise, cf. Remark \ref{Nota2} in Appendix D.

\begin{remark}
The averaged partition function is a continuous analog of the discrete
partition function obtained by applying the classical technique based on the
saddle point approximation, see \cite[Formula 10.18]{Helias et al}. Following
\cite{Helias et al}, we say that the action in (\ref{Generating_Funt_J_2_M})
decomposes into a sum of actions for individual, non-interacting units
(neurons) that feel a common field with self-consistently determined
statistics, characterized by its second cumulant $C_{\phi\left(
\boldsymbol{h}\right)  \phi\left(  \boldsymbol{h}\right)  }$. Hence, the
averaging approximation reduces the network to infinitely many non-interacting
units (neurons).
\end{remark}

\subsection{The dependency on the cutoff}

The function $S_{0}\left[  \widetilde{\boldsymbol{h}},\boldsymbol{h}\right]  $
is well-defined for any $\widetilde{\boldsymbol{h}}$, $\boldsymbol{h}\in
L^{2}(\Omega\times\mathbb{R)}$, so it is not affected by the support of the
cutoff function ${\large 1}_{\mathcal{P}_{M}}\left(  \boldsymbol{h}%
,\widetilde{\boldsymbol{h}}\right)  $, see (\ref{Averaged_part_funct}). The
same assertion is valid for $\left\langle \widetilde{\boldsymbol{h}}%
,C_{\phi\left(  \boldsymbol{h}\right)  \phi\left(  \boldsymbol{h}\right)
}\widetilde{\boldsymbol{h}}\right\rangle _{\left(  L^{2}(\Omega\times
\mathbb{R})\right)  ^{2}}$, cf. Lemma \ref{Lemma_9} in Appendix D.
Consequently, in the formula (\ref{Averaged_part_funct}), the term
\[
\exp\left(  S_{0}\left[  \widetilde{\boldsymbol{h}},\boldsymbol{h}\right]
+\left\langle \widetilde{\boldsymbol{h}},C_{\phi\left(  \boldsymbol{h}\right)
\phi\left(  \boldsymbol{h}\right)  }\widetilde{\boldsymbol{h}}\right\rangle
_{\left(  L^{2}(\Omega\times\mathbb{R})\right)  ^{2}}\right)
\]
is defined for any $\widetilde{\boldsymbol{h}}$, $\boldsymbol{h}\in
L^{2}(\Omega\times\mathbb{R)}$, i.e., independent of the cutoff, cf. Theorem
\ref{Porp3} in Appendix D.

\section{\label{SECT_8}The computation of the matrix propagator}

The Mean-field theory (MFT) of a random NN provides a simpler model of it;
this approximation is obtained by averaging over the states of the network
realizations. The approximated model is encoded in the averaged partition
function $\overline{{\LARGE Z}}_{M}$. The entries of the matrix propagator
associated with $\overline{{\LARGE Z}}_{M}$ satisfy a system of differential
equations that control the dynamic of the mean-field approximation. The
explicit determination of this system requires a convenient formula for
$\overline{{\LARGE Z}}_{M}$, which is our next step.

\subsection{A formula for $\overline{{\protect\LARGE Z}}_{M}$}

We now find a matrix representation for $S_{0}\left[  \widetilde
{\boldsymbol{h}},\boldsymbol{h}\right]  +\left\langle \widetilde
{\boldsymbol{h}},C_{\phi\left(  \boldsymbol{h}\right)  \phi\left(
\boldsymbol{h}\right)  }\widetilde{\boldsymbol{h}}\right\rangle _{\left(
L^{2}(\Omega\times\mathbb{R})\right)  ^{2}}$ in (\ref{Averaged_part_funct}).
This calculation is completely analogue to the one given in
\cite{Grosvenor-Jefferson}. Initially, we proceed formally
following\ \cite[pp. 15-16]{Grosvenor-Jefferson}. We set%
\[
\boldsymbol{z}\left(  y,t\right)  =\left[
\begin{array}
[c]{c}%
\boldsymbol{h}\left(  y,t\right) \\
\\
\widetilde{\boldsymbol{h}}\left(  y,t\right)
\end{array}
\right]  \text{, \ }\Xi\left(  x,y,t_{1},t_{2}\right)  =\left[
\begin{array}
[c]{ccc}%
\Xi_{\boldsymbol{hh}} &  & \Xi_{\boldsymbol{h}\widetilde{\boldsymbol{h}}}\\
&  & \\
\Xi_{\widetilde{\boldsymbol{h}}\boldsymbol{h}} &  & \Xi_{\widetilde
{\boldsymbol{h}}\widetilde{\boldsymbol{h}}}%
\end{array}
\right]  \text{,}%
\]
with
\[
\Xi_{\boldsymbol{hh}}=0\text{, }\Xi_{\boldsymbol{h}\widetilde{\boldsymbol{h}}%
}=\delta\left(  y-x\right)  \delta\left(  t_{2}-t_{1}\right)  \left(
\partial_{t_{2}}-\gamma\right)  \text{,}%
\]%
\[
\Xi_{\widetilde{\boldsymbol{h}}\boldsymbol{h}}=-\delta\left(  y-x\right)
\delta\left(  t_{2}-t_{1}\right)  \left(  \partial_{t_{2}}+\gamma\right)
\text{,}%
\]%
\[
\Xi_{\widetilde{\boldsymbol{h}}\widetilde{\boldsymbol{h}}}=C_{\phi\left(
\boldsymbol{h}\right)  \phi\left(  \boldsymbol{h}\right)  }\left(
x,y,t_{1},t_{2}\right)  +\sigma^{2}\delta\left(  y-x\right)  \delta\left(
t_{2}-t_{1}\right)  ,
\]
and%
\[
S\left[  \boldsymbol{h},\widetilde{\boldsymbol{h}};\Xi\right]  =\frac{1}{2}%
{\displaystyle\int\limits_{\Omega^{2}}}
\text{ }%
{\displaystyle\int\limits_{\mathbb{R}^{2}}}
\text{\ }\boldsymbol{z}\left(  x,t_{1}\right)  ^{T}\Xi\left(  x,y,t_{1}%
,t_{2}\right)  \boldsymbol{z}\left(  y,t_{2}\right)  d\mu\left(  x\right)
d\mu\left(  y\right)  dt_{1}dt_{2},
\]
where $\delta\left(  y-x\right)  $ $\ $is the Dirac delta function, with $x$,
$y$, $y-x\in\Omega$, and $\delta\left(  t_{2}-t_{1}\right)  $ is the Dirac
delta function on $\mathbb{R}$. Here we need a space of continuous functions
$\mathcal{D}(\Omega)$, where the Dirac function can be defined:%
\[
\left(  \delta\left(  y-x\right)  ,\varphi\left(  y\right)  \right)  :=%
{\displaystyle\int\limits_{\Omega}}
\delta\left(  y-x\right)  \varphi\left(  y\right)  d\mu\left(  y\right)
=\varphi\left(  x\right)  ,\text{ for }\varphi\in\mathcal{D}(\Omega).
\]
For this reason, we assume (i) that the space of continuous functions
$\mathcal{D}(\Omega)$ is a nuclear space, and
\[
\mathcal{D}(\Omega)\subset L^{2}(\Omega)\subset\mathcal{D}^{\prime}(\Omega)
\]
is a Gel'fand triplet, and (ii) $\Omega$ is contained in additive group. The
nuclearity of $\mathcal{D}(\Omega)$ is required in the specific construction
of the measures $d\boldsymbol{P}\left(  \boldsymbol{h}\right)  d\widetilde
{\boldsymbol{P}}\left(  \widetilde{\boldsymbol{h}}\right)  $, see Section
\ref{Section_Gelfand_triplets} in Appendix C.

With these hypotheses one shows that
\[
\overline{{\LARGE Z}}_{M}=%
{\displaystyle\iint\limits_{\mathcal{H}\times\mathcal{H}}}
{\large 1}_{\mathcal{P}_{M}}\left(  \boldsymbol{h},\widetilde{\boldsymbol{h}%
}\right)  \exp\left(  S\left[  \boldsymbol{h},\widetilde{\boldsymbol{h}}%
;\Xi\right]  \right)  d\boldsymbol{P}\left(  \boldsymbol{h}\right)
d\widetilde{\boldsymbol{P}}\left(  \widetilde{\boldsymbol{h}}\right)  ,
\]
and that the functions $\Xi$, $S\left[  \boldsymbol{h},\widetilde
{\boldsymbol{h}};\Xi\right]  $ do not depend on the cutoff ${\large 1}%
_{\mathcal{P}_{M}}\left(  \boldsymbol{h},\widetilde{\boldsymbol{h}}\right)  $.
The calculations are given in the proof of Theorem \ref{Theorem2} in Appendix E.

\subsection{The matrix propagator}

We now consider the solutions of the following matrix integral equation:%
\[%
{\displaystyle\int\limits_{\Omega}}
{\displaystyle\int\limits_{\mathbb{R}}}
\Xi\left(  x,z,t_{1},s\right)  G\left(  z,y,s,t_{2}\right)  dsd\mu\left(
z\right)  =\delta\left(  x-y\right)  \delta\left(  t_{1}-t_{2}\right)  \left[
\begin{array}
[c]{cc}%
1 & 0\\
0 & 1
\end{array}
\right]  \text{,}%
\]
where%
\[
G\left(  x,y,t_{1},t_{2}\right)  =\left[
\begin{array}
[c]{cc}%
G_{\boldsymbol{hh}} & G_{\boldsymbol{h}\widetilde{\boldsymbol{h}}}\\
& \\
G_{\widetilde{\boldsymbol{h}}\boldsymbol{h}} & G_{\widetilde{\boldsymbol{h}%
}\widetilde{\boldsymbol{h}}}%
\end{array}
\right]  ,\text{ with }G_{\widetilde{\boldsymbol{h}}\widetilde{\boldsymbol{h}%
}}=0.
\]
The condition $G_{\widetilde{\boldsymbol{h}}\widetilde{\boldsymbol{h}}}=0$
occurs naturally in discrete versions of models considered here, see \cite[the
top of pg. 16 and the formula 14]{Grosvenor-Jefferson}. For this reason, we
propose $G_{\widetilde{\boldsymbol{h}}\widetilde{\boldsymbol{h}}}=0$ as an
ansatz. Now, from Theorem \ref{Theorem2}, by a direct calculation, we obtain
the following result:

Under the hypothesis of Theorem \ref{Theorem2} in Appendix E, the entries of
the matrix propagator $G$ are determined by the following system of equations:%
\begin{equation}
\left(  \partial_{t_{1}}-\gamma\right)  G_{\widetilde{\boldsymbol{h}%
}\boldsymbol{h}}\left(  x,y,t_{1},t_{2}\right)  =\delta\left(  x-y\right)
\delta\left(  t_{1}-t_{2}\right)  ; \label{Eq_1DD}%
\end{equation}

\begin{gather}
-\left(  \partial_{t_{1}}+\gamma\right)  G_{\boldsymbol{hh}}\left(
x,y,t_{1},t_{2}\right)  +%
{\displaystyle\int\limits_{\Omega}}
{\displaystyle\int\limits_{\mathbb{R}}}
C_{\phi\left(  \boldsymbol{h}\right)  \phi\left(  \boldsymbol{h}\right)
}\left(  x,z,t_{1},s\right)  G_{\widetilde{\boldsymbol{h}}\boldsymbol{h}%
}\left(  z,y,s,t_{2}\right)  dsd\mu\left(  z\right) \label{Eq_2DD}\\
\nonumber\\
+\sigma^{2}G_{\widetilde{\boldsymbol{h}}\boldsymbol{h}}\left(  x,y,t_{1}%
,t_{2}\right)  =0;\nonumber
\end{gather}

\begin{equation}
-\left(  \partial_{t_{1}}+\gamma\right)  G_{\boldsymbol{h}\widetilde
{\boldsymbol{h}}}\left(  x,y,t_{1},t_{2}\right)  =\delta\left(  x-y\right)
\delta\left(  t_{1}-t_{2}\right)  . \label{Eq_3DD}%
\end{equation}
which is independent of the averaged partition function $\overline{{\LARGE Z}%
}_{M}$ used.

\subsection{Time translational invariance}

We now adapt the techniques developed in \cite{Sompolinsky et al} and
\cite{Schuecker et al}, see also \cite[Chapter 10]{Helias et al},
\cite{Grosvenor-Jefferson}, for\ studying discrete\ random NNs using the
system of differential equations (\ref{Eq_1DD})-(\ref{Eq_3DD}), under the
assumption of time translational symmetry. We proceed formally, i.e., we do
not determine the function spaces where the solutions of mentioned
differential equations exist. However, these calculations can be rigorously
justified with the techniques and results given in Appendix G.

In order to obtain additional information from the system (\ref{Eq_1DD}%
)-(\ref{Eq_3DD}), we assume time translation symmetry, so $G\left(
x,y,t_{1},t_{2}\right)  =G\left(  x,y,t_{1}-t_{2}\right)  $; this implies that
$\partial_{t_{1}}G\left(  x,y,t_{1}-t_{2}\right)  =-\partial_{t_{2}}G\left(
x,y,t_{1}-t_{2}\right)  $. Since we have only time derivatives, we consider
$x$ and $y$ as parameters and use the notation $\partial_{t_{1}}=\partial_{1}%
$, \ $\partial_{2}=\partial_{t_{2}}$. Then, the equation (\ref{Eq_1DD}) can be
rewritten as%
\begin{equation}
-\left(  \partial_{2}+\gamma\right)  G_{\widetilde{\boldsymbol{h}%
}\boldsymbol{h}}\left(  x,y,t_{1}-t_{2}\right)  =\delta\left(  x-y\right)
\delta\left(  t_{1}-t_{2}\right)  . \label{Eq_1AA}%
\end{equation}
The time translation symmetry requires that
\begin{align*}
G_{\boldsymbol{hh}}\left(  x,y,t_{1},t_{2}\right)   &  =G_{\boldsymbol{hh}%
}\left(  x,y,t_{1}-t_{2}\right)  \text{, }\\
C_{\phi\left(  \boldsymbol{h}\right)  \phi\left(  \boldsymbol{h}\right)
}\left(  x,y,t_{1},t_{2}\right)   &  =C_{\phi\left(  \boldsymbol{h}\right)
\phi\left(  \boldsymbol{h}\right)  }\left(  x,y,t_{1}-t_{2}\right)  .
\end{align*}
Now, the equation (\ref{Eq_2DD}) becomes%
\begin{gather}
-\left(  \partial_{1}+\gamma\right)  G_{\boldsymbol{hh}}\left(  x,y,t_{1}%
-t_{2}\right) \nonumber\\
+%
{\displaystyle\int\limits_{\Omega}}
{\displaystyle\int\limits_{\mathbb{R}}}
C_{\phi\left(  \boldsymbol{h}\right)  \phi\left(  \boldsymbol{h}\right)
}\left(  x,z,t_{1}-s\right)  G_{\widetilde{\boldsymbol{h}}\boldsymbol{h}%
}\left(  z,y,s-t_{2}\right)  dsd\mu\left(  z\right) \label{Eq_1BB}\\
+\sigma^{2}G_{\widetilde{\boldsymbol{h}}\boldsymbol{h}}\left(  x,y,t_{1}%
-t_{2}\right)  =0.\nonumber
\end{gather}
By multiplying both sides in (\ref{Eq_1BB}) by $\left(  \partial_{2}%
+\gamma\right)  $, and using (\ref{Eq_1AA}), we get%
\begin{gather*}
\left(  \partial_{2}+\gamma\right)  \left(  \partial_{1}+\gamma\right)
G_{\boldsymbol{hh}}\left(  x,y,t_{1}-t_{2}\right)  =\\%
{\displaystyle\int\limits_{\Omega}}
{\displaystyle\int\limits_{\mathbb{R}}}
C_{\phi\left(  \boldsymbol{h}\right)  \phi\left(  \boldsymbol{h}\right)
}\left(  x,z,t_{1}-s\right)  \left(  \partial_{2}+\gamma\right)
G_{\widetilde{\boldsymbol{h}}\boldsymbol{h}}\left(  z,y,s-t_{2}\right)
dsd\mu\left(  z\right) \\
+\sigma^{2}\left(  \partial_{2}+\gamma\right)  G_{\widetilde{\boldsymbol{h}%
}\boldsymbol{h}}\left(  x,y,t_{1}-t_{2}\right)
\end{gather*}

i.e.,
\begin{align}
\left(  \partial_{1}+\gamma\right)  \left(  \partial_{2}+\gamma\right)
G_{\boldsymbol{hh}}\left(  x,y,t_{1}-t_{2}\right)   &  =C_{\phi\left(
\boldsymbol{h}\right)  \phi\left(  \boldsymbol{h}\right)  }\left(
x,y,t_{1}-t_{2}\right) \nonumber\\
&  +\sigma^{2}\delta\left(  x-y\right)  \delta\left(  t_{1}-t_{2}\right)  .
\label{Eq_3BB}%
\end{align}

We now take $\tau=t_{1}-t_{2}$, then, the equations (\ref{Eq_3DD}) and
(\ref{Eq_3BB}) can be re\-written as%
\begin{equation}
-\left(  \partial_{\tau}+\gamma\right)  G_{\widetilde{\boldsymbol{h}%
}\boldsymbol{h}}\left(  x,y,\tau\right)  =\left(  \partial_{\tau}%
-\gamma\right)  G_{\boldsymbol{h}\widetilde{\boldsymbol{h}}}\left(
x,y,\tau\right)  =\delta\left(  x-y\right)  \delta\left(  \tau\right)  ,
\label{Eq_11AA}%
\end{equation}%
\begin{equation}
\left(  -\partial_{\tau}^{2}+\gamma^{2}\right)  G_{\boldsymbol{hh}}\left(
x,y,\tau\right)  =C_{\phi\left(  \boldsymbol{h}\right)  \phi\left(
\boldsymbol{h}\right)  }\left(  x,y,\tau\right)  +\sigma^{2}\delta\left(
x-y\right)  \delta\left(  \tau\right)  . \label{Eq_11}%
\end{equation}

From the right-side of the differential equations (\ref{Eq_11AA}) and
(\ref{Eq_11}) follow that $G_{\boldsymbol{hh}}$, $G_{\boldsymbol{h}%
\widetilde{\boldsymbol{h}}}$, $G_{\widetilde{\boldsymbol{h}}\boldsymbol{h}}$,
$G_{\widetilde{\boldsymbol{h}}\widetilde{\boldsymbol{h}}}$ are distributions
from $\mathcal{D}^{\prime}(\Omega)%
{\textstyle\bigotimes_{\text{alg}}}
\mathcal{S}^{\prime}(\mathbb{R})$, and consequently, all the calculations done
to derive the mentioned differential equations should interpreted in
distributional sense; see Appendix G.

\begin{remark}
Following \cite[Section 10.4]{Helias et al}, we interpret $\overline
{{\LARGE Z}}_{M}\left(  \boldsymbol{j},\widetilde{\boldsymbol{j}}\right)  $ as
the generating functional associated with the network%
\begin{equation}
\left(  \partial_{t}+\gamma\right)  \boldsymbol{h}\left(  x,t\right)
=\theta\left(  x,t\right)  , \label{Eq_12A}%
\end{equation}
driving by the Gaussian noise $\theta\left(  x,t\right)  =\varsigma\left(
x,t\right)  +\eta\left(  x,t\right)  $, with autocorrelation
\[
\left\langle \theta\left(  x,t\right)  \theta\left(  y,s\right)  \right\rangle
=C_{\phi\left(  \boldsymbol{h}\right)  \phi\left(  \boldsymbol{h}\right)
}\left(  x,y,t,s\right)  +\sigma^{2}\delta\left(  y-x\right)  \delta\left(
t-s\right)  ,
\]
see Remark \ref{Nota2} in Appendix E. We now set
\[
C_{\boldsymbol{hh}}\left(  x,y,t,s\right)  =\left\langle \boldsymbol{h}\left(
x,t\right)  \boldsymbol{h}\left(  y,s\right)  \right\rangle _{\boldsymbol{h}}%
\]
for the autocorrelation of $\boldsymbol{h}$. By taking in
\[
\left(  \partial_{t}+\gamma\right)  \boldsymbol{h}\left(  x,t\right)  \left(
\partial_{s}+\gamma\right)  \boldsymbol{h}\left(  y,s\right)  =\theta\left(
x,t\right)  \theta\left(  y,s\right)
\]
\ the expectation with respect to the noise $\theta$, we obtain formally,
\[
\left(  \partial_{t}+\gamma\right)  \left(  \partial_{s}+\gamma\right)
C_{\boldsymbol{hh}}\left(  x,y,t,s\right)  =C_{\phi\left(  \boldsymbol{h}%
\right)  \phi\left(  \boldsymbol{h}\right)  }\left(  x,y,t,s\right)
+\sigma^{2}\delta\left(  y-x\right)  \delta\left(  t-s\right)  .
\]
Now by assuming time translation invariance and taking $\tau=t-s$, we have%
\begin{equation}
\left(  -\partial_{\tau}^{2}+\gamma^{2}\right)  C_{\boldsymbol{hh}}\left(
x,y,\tau\right)  =C_{\phi\left(  \boldsymbol{h}\right)  \phi\left(
\boldsymbol{h}\right)  }\left(  x,y,\tau\right)  +\sigma^{2}\delta\left(
y-x\right)  \delta\left(  \tau\right)  . \label{Eq_12B}%
\end{equation}
Then, $G_{\boldsymbol{hh}}\left(  x,y,\tau\right)  =C_{\boldsymbol{hh}}\left(
x,y,\tau\right)  $, so we can interpret $G_{\boldsymbol{hh}}\left(
x,y,\tau\right)  $ as the autocorrelation of $\boldsymbol{h}$.
\end{remark}

\subsection{\label{Section_Propagator}The propagator $G_{\boldsymbol{hh}%
}\left(  x,y,\tau\right)  $}

There exists\ a subset $\mathcal{M}\subset\Omega\times\mathbb{R}$ with $d\mu
dt$-measure zero, such that for any $\left(  x,t\right)  \in\Omega
\times\mathbb{R}$ $\mathbb{\smallsetminus}$ $\mathcal{M}$, the mapping%
\[%
\begin{array}
[c]{cccc}%
\operatorname{eval}_{\left(  x,t\right)  } & \mathcal{H} & \rightarrow &
\mathbb{R}\\
&  &  & \\
& \boldsymbol{h} & \rightarrow & \boldsymbol{h}\left(  x,t\right)
\end{array}
\]
is an ordinary Gaussian random variable with mean zero, cf. Lemma
\ref{Lemma_16} in Appendix E. We now fixed to two points $\left(
x,t_{1}\right)  $, $\left(  y,t_{2}\right)  \in\Omega\times\mathbb{R}$
$\mathbb{\smallsetminus}$ $\mathcal{M}$, and take $\tau=t_{1}-t_{2}$, then
$\boldsymbol{h}\left(  x,t_{1}\right)  $, $\boldsymbol{h}\left(
y,t_{2}\right)  $ are Gaussian random variables with covariance matrix
\[
G_{\boldsymbol{hh}}\left(  x,y,\tau\right)  =\left\langle \boldsymbol{h}%
\left(  x,t_{1}\right)  \boldsymbol{h}\left(  y,t_{2}\right)  \right\rangle
_{\boldsymbol{h}}.
\]
Here, it is important to mention that under the time translation symmetry
hypothesis,
\[
G_{\boldsymbol{hh}}\left(  x,y,0\right)  =G_{\boldsymbol{hh}}\left(
x,y,t_{1},t_{2}\right)  \mid_{t_{1}=t_{2}}\text{, for }\left(  x,t_{1}\right)
,\left(  x,t_{2}\right)  \in\Omega\times\mathbb{R\smallsetminus}%
\mathcal{M}\text{,}%
\]
is well-defined.

We adapt the techniques presented in \cite{Sompolinsky et al}, \cite{Helias et
al}, \cite{Grosvenor-Jefferson}, among many references, to derive a
differential equation for $G_{\boldsymbol{hh}}\left(  x,y,\tau\right)  $.

For the rest of this section, we fix $x,y\in\Omega$, and adopt the notation,%
\[
\boldsymbol{h}\left(  \cdot,t_{i}\right)  =\boldsymbol{h}\left(  t_{i}\right)
=\boldsymbol{h}_{i}\text{, for }i\in\left\{  1,2\right\}  .
\]
These values are realizations of the random vector $\left(  \boldsymbol{h}%
_{1}\text{, }\boldsymbol{h}_{2}\right)  $ with bivariate normal distribution%
\begin{equation}
\left(  \boldsymbol{h}_{1}\text{, }\boldsymbol{h}_{2}\right)  \sim
\mathcal{N}\left(  0,\left[
\begin{array}
[c]{cc}%
G_{\boldsymbol{hh}}\left(  x,y,0\right)  & G_{\boldsymbol{hh}}\left(
x,y,\tau\right) \\
& \\
G_{\boldsymbol{hh}}\left(  x,y,\tau\right)  & G_{\boldsymbol{hh}}\left(
x,y,0\right)
\end{array}
\right]  \right)  =\mathcal{N}\left(  0,\left[
\begin{array}
[c]{cc}%
c_{0} & c_{\tau}\\
& \\
c_{\tau} & c_{0}%
\end{array}
\right]  \right)  , \label{covariance_single_copy}%
\end{equation}
where $c_{0}:=c_{0}\left(  x,y\right)  =\left\langle \boldsymbol{h}\left(
x,t\right)  \boldsymbol{h}\left(  x,t\right)  \right\rangle _{\boldsymbol{h}}%
$, and
\begin{equation}
c_{\tau}:=c_{\tau}\left(  x,y\right)  =\left\langle \boldsymbol{h}\left(
x,t_{1}\right)  \boldsymbol{h}\left(  y,t_{2}\right)  \right\rangle
_{\boldsymbol{h}}\text{ \ for }\tau=t_{1}-t_{2}\neq0. \label{correlator_tau}%
\end{equation}
On the other hand, we set%
\[
C_{\phi\phi}:=C_{\phi\left(  \boldsymbol{h}_{1}\right)  \phi\left(
\boldsymbol{h}_{2}\right)  }=C_{\phi\left(  \boldsymbol{h}\right)  \phi\left(
\boldsymbol{h}\right)  }\left(  x,y,t_{1},t_{2}\right)  ,
\]
then%
\[
C_{\phi\phi}=\left\langle \phi\left(  \boldsymbol{h}_{1}\right)  \phi\left(
\boldsymbol{h}_{2}\right)  \right\rangle _{\boldsymbol{h}}=%
{\displaystyle\int\limits_{\mathbb{R}^{2}}}
\phi\left(  \boldsymbol{h}_{1}\right)  \phi\left(  \boldsymbol{h}_{2}\right)
D\boldsymbol{h}_{1}D\boldsymbol{h}_{2},
\]
where%
\[
D\boldsymbol{h}_{1}D\boldsymbol{h}_{2}=\frac{\exp\left(  -\frac{1}%
{2c_{0}\left(  1-\rho^{2}\right)  }\left\{  \boldsymbol{h}_{1}^{2}%
+\boldsymbol{h}_{2}^{2}-2\rho\boldsymbol{h}_{1}\boldsymbol{h}_{2}\right\}
\right)  }{2\pi\sqrt{c_{0}^{2}\left(  1-\rho^{2}\right)  }}d\boldsymbol{h}%
_{1}d\boldsymbol{h}_{2},
\]
here $d\boldsymbol{h}_{1}d\boldsymbol{h}_{2}$ is the Lebesgue measure of
\ $\mathbb{R\times R}$, and $\rho=\frac{c_{\tau}}{c_{0}}$. By changing
variables as%
\begin{align*}
\boldsymbol{h}_{1}  &  =\sqrt{c_{0}}\boldsymbol{h}_{a}\text{, \ }%
\boldsymbol{h}_{2}=\sqrt{c_{0}}\left(  \rho\boldsymbol{h}_{b}+\sqrt{\left(
1-\rho^{2}\right)  }\boldsymbol{h}_{b}\right)  ,\\
D\boldsymbol{h}_{a}D\boldsymbol{h}_{b}  &  =\frac{\exp\left(  -\frac{1}%
{2}\left\{  \boldsymbol{h}_{a}^{2}+\boldsymbol{h}_{b}^{2}\right\}  \right)
}{2\pi}d\boldsymbol{h}_{a}d\boldsymbol{h}_{b},
\end{align*}
we have%
\[
C_{\phi\phi}=%
{\displaystyle\int\limits_{\mathbb{R}^{2}}}
\phi\left(  \sqrt{c_{0}}\boldsymbol{h}_{a}\right)  \phi\left(  \sqrt{c_{0}%
}\left(  \rho\boldsymbol{h}_{b}+\sqrt{\left(  1-\rho^{2}\right)
}\boldsymbol{h}_{b}\right)  \right)  D\boldsymbol{h}_{a}D\boldsymbol{h}_{b}.
\]
By Price's theorem, \cite{Price}-\cite{Papoulis et al}, for a Gaussian
process, it verifies that%
\begin{equation}
C_{\phi\phi}=\frac{\partial}{\partial c_{\tau}}C_{\Phi\Phi},\text{ with }%
\Phi\left(  u\right)  =%
{\displaystyle\int\nolimits_{0}^{u}}
\phi\left(  s\right)  ds\text{, }s\in\mathbb{R}_{+}. \label{Price_theorem}%
\end{equation}
Now, using equation (\ref{Eq_11}), with%
\[
G_{\boldsymbol{hh}}\left(  x,y,\tau\right)  =c_{\tau}\left(  x,y\right)
=c_{\tau}\text{, }C_{\phi\phi}\left(  \tau\right)  :=C_{\phi\phi}\left(
x,y,\tau\right)  \text{, }%
\]
we have%
\begin{equation}
\partial_{\tau}^{2}c_{\tau}=\gamma^{2}c_{\tau}-C_{\phi\phi}\left(
\tau\right)  -\sigma^{2}\delta\left(  y-x\right)  \delta\left(  \tau\right)  .
\label{Eq_12}%
\end{equation}
We now introduce the potential%
\begin{equation}
V\left(  c_{\tau},c_{0}\right)  =-\frac{1}{2}\gamma^{2}c_{\tau}^{2}%
+C_{\Phi\Phi}\left(  c_{\tau},c_{0}\right)  . \label{Potential}%
\end{equation}
Then, equation (\ref{Eq_12}) takes the form%
\begin{equation}
\partial_{\tau}^{2}c_{\tau}=-V^{\prime}\left(  c_{\tau},c_{0}\right)
-\sigma^{2}\delta\left(  y-x\right)  \delta\left(  \tau\right)  ,
\label{Key_Eq_1}%
\end{equation}
where the prime denotes the derivative with respect to $c_{\tau}$. The
equation (\ref{Key_Eq_1}) is a generalization of the differential equations
(6-7) in \cite{Sompolinsky et al}, and (10.26) in \cite{Helias et al}. In our
generalized equation the noise and the covariance function $c_{\tau}%
\in\mathcal{D}^{\prime}(\Omega)%
{\textstyle\bigotimes_{\text{alg}}}
\mathcal{S}^{\prime}(\mathbb{R})$ depend on the position of two neurons $x,y$;
see Appendix G.

\section{\label{SECT_9}The double-copy system}

In this section, we present a mathematical formulation of the continuous
version of the double-copy system. Our generating functional for the
double-copy system is the \ formal thermodynamic limit of the one given in
\cite[Formula (10.32)]{Helias et al}:%
\begin{gather*}
{\LARGE Z}^{\left(  2\right)  }\left(  \left\{  \boldsymbol{j}^{\alpha
},\widetilde{\boldsymbol{j}}^{\alpha}\right\}  _{\alpha\in\left\{
1,2\right\}  };\mathbf{J}\right)  =\\%
{\displaystyle\prod\limits_{\alpha=1}^{2}}
\left\{
{\displaystyle\int}
D\boldsymbol{h}^{\alpha}%
{\displaystyle\int}
D\widetilde{\boldsymbol{h}}^{\alpha}\exp\left(  S_{0}\left[  \boldsymbol{h}%
^{\alpha},\widetilde{\boldsymbol{h}}^{\alpha}\right]  -\left\langle
\widetilde{\boldsymbol{h}}^{\alpha},%
{\displaystyle\int\limits_{\Omega}}
\boldsymbol{J}\left(  x,y\right)  \phi\left(  \boldsymbol{h}^{\alpha}\left(
y,t\right)  \right)  d\mu\left(  y\right)  \right\rangle \right)
\times\right. \\
\left.  \exp\left(  \left\langle \boldsymbol{j}^{\alpha},\boldsymbol{h}%
^{\alpha}\right\rangle +\left\langle \widetilde{\boldsymbol{j}}^{\alpha
},\widetilde{\boldsymbol{h}}^{\alpha}\right\rangle \right)  \right\}
\exp\left(  \frac{\sigma^{2}}{2}\left\langle \widetilde{\boldsymbol{h}}%
^{1},\widetilde{\boldsymbol{h}}^{2}\right\rangle \right)
\end{gather*}

\begin{gather*}
=%
{\displaystyle\iint}
{\displaystyle\iint}
\left(
{\displaystyle\prod\limits_{\alpha=1}^{2}}
D\widetilde{\boldsymbol{h}}^{\alpha}D\boldsymbol{h}^{\alpha}\right)
\exp\left(  -S_{0}\left(  \widetilde{\boldsymbol{h}}^{\alpha},\boldsymbol{h}%
^{\alpha}\right)  \right)  \times\\
\exp\left(  S_{int}\left(  \widetilde{\boldsymbol{h}}^{\alpha},\boldsymbol{h}%
^{\alpha},\boldsymbol{J}\right)  +\sum_{\alpha=1}^{2}\left(  \left\langle
\boldsymbol{j}^{\alpha},\boldsymbol{h}^{\alpha}\right\rangle +\left\langle
\widetilde{\boldsymbol{j}}^{\alpha},\widetilde{\boldsymbol{h}}^{\alpha
}\right\rangle \right)  \right)  \exp\left(  \frac{\sigma^{2}}{2}\left\langle
\widetilde{\boldsymbol{h}}^{1},\widetilde{\boldsymbol{h}}^{2}\right\rangle
\right)  ,
\end{gather*}
where%
\[
S_{0}\left(  \widetilde{\boldsymbol{h}}^{\alpha},\boldsymbol{h}^{\alpha
}\right)  :=\sum_{\alpha=1}^{2}\left\langle \widetilde{\boldsymbol{h}}%
^{\alpha},\left(  \partial_{t}+\gamma\right)  \boldsymbol{h}^{\alpha
}\right\rangle \text{,}%
\]
and
\[
S_{int}\left(  \widetilde{\boldsymbol{h}}^{\alpha},\boldsymbol{h}^{\alpha
},\boldsymbol{J}\right)  :=\sum_{\alpha=1}^{2}\left\langle \widetilde
{\boldsymbol{h}}^{\alpha},%
{\displaystyle\int\limits_{\Omega}}
\boldsymbol{J}\left(  x,y\right)  \phi\left(  \boldsymbol{h}^{\alpha}\left(
y,t\right)  \right)  d\mu\left(  y\right)  \right\rangle .
\]
We use the following mathematical version of the ${\LARGE Z}^{\left(
2\right)  }(\left\{  \boldsymbol{j}^{\alpha},\widetilde{\boldsymbol{j}%
}^{\alpha}\right\}  _{\alpha\in\left\{  1,2\right\}  })\left(  \mathbf{J}%
\right)  $:%
\begin{multline*}
{\LARGE Z}_{M}^{\left(  2\right)  }\left(  \left\{  \boldsymbol{j}^{\alpha
},\widetilde{\boldsymbol{j}}^{\alpha}\right\}  _{\alpha\in\left\{
1,2\right\}  };\boldsymbol{J}\right)  :=%
{\displaystyle\iint\limits_{\mathcal{H}\times\mathcal{H}}}
\text{ \ }%
{\displaystyle\iint\limits_{\mathcal{H}\times\mathcal{H}}}
\exp\left(  S_{0}\left(  \widetilde{\boldsymbol{h}}^{\alpha},\boldsymbol{h}%
^{\alpha}\right)  -S_{int}\left(  \widetilde{\boldsymbol{h}}^{\alpha
},\boldsymbol{h}^{\alpha}\right)  \right)  \times\\
\exp\left(  \sum_{\alpha=1}^{2}\left(  \left\langle \boldsymbol{j}^{\alpha
},\boldsymbol{h}^{\alpha}\right\rangle +\left\langle \widetilde{\boldsymbol{j}%
}^{\alpha},\widetilde{\boldsymbol{h}}^{\alpha}\right\rangle \right)
+\frac{\sigma^{2}}{2}\left\langle \widetilde{\boldsymbol{h}}^{1}%
,\widetilde{\boldsymbol{h}}^{2}\right\rangle \right)  \times\\%
{\displaystyle\prod\limits_{\alpha=1}^{2}}
{\large 1}_{\mathcal{P}_{M}}\left(  \boldsymbol{h}^{\alpha},\widetilde
{\boldsymbol{h}}^{\alpha}\right)  d\boldsymbol{P}\left(  \boldsymbol{h}%
^{\alpha}\right)  d\widetilde{\boldsymbol{P}}\left(  \widetilde{\boldsymbol{h}%
}^{\alpha}\right)  .
\end{multline*}
We only require the partition function ${\LARGE Z}_{M}^{\left(  2\right)
}\left(  \left\{  \boldsymbol{0},\boldsymbol{0}\right\}  _{\alpha\in\left\{
1,2\right\}  };\mathbf{J}\right)  ={\LARGE Z}_{M}^{\left(  2\right)  }\left(
\mathbf{J}\right)  $. The calculation of the expected value$\left\langle
{\LARGE Z}_{M}^{\left(  2\right)  }\left(  \mathbf{J}\right)  \right\rangle
_{\mathbf{J}}=\overline{{\LARGE Z}_{M}^{\left(  2\right)  }}$ is similar the
single-copy case, the details are given in the Appendix F, see Lemma
\ref{Lemma_14} and Theorem \ref{Lemma_15}. The average partition function for
the double-copy system is given by%
\begin{gather*}
\overline{{\LARGE Z}_{M}^{\left(  2\right)  }}=%
{\displaystyle\iint\limits_{\mathcal{H}\times\mathcal{H}}}
\text{ \ }%
{\displaystyle\iint\limits_{\mathcal{H}\times\mathcal{H}}}
\exp\left(  S_{0}\left(  \widetilde{\boldsymbol{h}}^{\alpha},\boldsymbol{h}%
^{\alpha}\right)  +\frac{\sigma^{2}}{2}\left\langle \widetilde{\boldsymbol{h}%
}^{1},\widetilde{\boldsymbol{h}}^{2}\right\rangle \right)  \times\\
\exp\left(  \frac{1}{2}%
{\displaystyle\sum\limits_{\alpha=1}^{2}}
\text{ }%
{\displaystyle\sum\limits_{\beta=1}^{2}}
\text{ \ }\left\langle \widetilde{\boldsymbol{h}}^{\alpha},C_{\phi\left(
\boldsymbol{h}^{\alpha}\right)  \phi\left(  \boldsymbol{h}^{\beta}\right)
}^{\alpha\beta}\widetilde{\boldsymbol{h}}^{\beta}\right\rangle _{\left(
L^{2}(\Omega\times\mathbb{R})\right)  ^{2}}\right)  \times\\%
{\displaystyle\prod\limits_{\alpha=1}^{2}}
{\large 1}_{\mathcal{P}_{M}}\left(  \boldsymbol{h}^{\alpha},\widetilde
{\boldsymbol{h}}^{\alpha}\right)  d\boldsymbol{P}\left(  \boldsymbol{h}%
^{\alpha}\right)  d\widetilde{\boldsymbol{P}}\left(  \widetilde{\boldsymbol{h}%
}^{\alpha}\right)  ,
\end{gather*}
where the bilinear forms $\left\langle \widetilde{\boldsymbol{h}}^{\alpha
},C_{\phi\left(  \boldsymbol{h}^{\alpha}\right)  \phi\left(  \boldsymbol{h}%
^{\beta}\right)  }^{\alpha\beta}\widetilde{\boldsymbol{h}}^{\beta
}\right\rangle _{\left(  L^{2}(\Omega\times\mathbb{R})\right)  ^{2}}$ are
correlation functionals of a Gaussian noises with mean zero, where the
$C_{\phi\left(  \boldsymbol{h}^{\alpha}\right)  \phi\left(  \boldsymbol{h}%
^{\beta}\right)  }^{\alpha\beta}$ are defined in the formula (\ref{Eq_8}), in
Appendix F.

In the next step, we show that%
\begin{multline*}
\overline{{\LARGE Z}_{M}^{\left(  2\right)  }}=%
{\displaystyle\iint\limits_{\mathcal{H}\times\mathcal{H}}}
\text{ \ }%
{\displaystyle\iint\limits_{\mathcal{H}\times\mathcal{H}}}
\exp\left(  S\left[  \boldsymbol{h},\widetilde{\boldsymbol{h}};\left\{
\Xi^{\alpha\beta}\right\}  _{\alpha,\beta\in\left\{  1,2\right\}  }\right]
\right)  \times\\%
{\displaystyle\prod\limits_{\alpha=1}^{2}}
{\large 1}_{\mathcal{P}_{M}}\left(  \boldsymbol{h}^{\alpha},\widetilde
{\boldsymbol{h}}^{\alpha}\right)  d\boldsymbol{P}\left(  \boldsymbol{h}%
^{\alpha}\right)  d\widetilde{\boldsymbol{P}}\left(  \widetilde{\boldsymbol{h}%
}^{\alpha}\right)  ,
\end{multline*}
where
\begin{gather*}
S\left[  \boldsymbol{h},\widetilde{\boldsymbol{h}};\left\{  \Xi^{\alpha\beta
}\right\}  _{\alpha,\beta\in\left\{  1,2\right\}  }\right]  :=\\
\frac{1}{2}%
{\displaystyle\int\limits_{\Omega^{2}}}
\text{ }%
{\displaystyle\int\limits_{\mathbb{R}^{2}}}
\boldsymbol{z}^{\alpha}\left(  x,t_{1}\right)  ^{T}\Xi^{\alpha\beta}\left(
x,y,t_{1},t_{2}\right)  \boldsymbol{z}^{\beta}\left(  y,t_{2}\right)
d\mu\left(  x\right)  d\mu\left(  y\right)  dt_{1}dt_{2}.
\end{gather*}
and the functions $\left\{  \Xi^{\alpha\beta}\right\}  _{\alpha,\beta
\in\left\{  1,2\right\}  }$, $S\left[  \boldsymbol{h},\widetilde
{\boldsymbol{h}};\left\{  \Xi^{\alpha\beta}\right\}  _{\alpha,\beta\in\left\{
1,2\right\}  }\right]  $ are well-defined in $L^{2}\left(  \Omega
\times\mathbb{R}\right)  \times L^{2}\left(  \Omega\times\mathbb{R}\right)  $,
so they do not depend on the cutoff function. See Theorem \ref{Theorem3} in
Appendix F for all the details.

\subsection{\label{Section matrix prop}Matrix propagators of the double-copy}

We set%
\[
G^{\alpha\beta}\left(  x,y,t_{1},t_{2}\right)  =\left[
\begin{array}
[c]{cc}%
G_{\boldsymbol{hh}}^{\alpha\beta} & G_{\boldsymbol{h}\widetilde{\boldsymbol{h}%
}}^{\alpha\beta}\\
& \\
G_{\widetilde{\boldsymbol{h}}\boldsymbol{h}}^{\alpha\beta} & G_{\widetilde
{\boldsymbol{h}}\widetilde{\boldsymbol{h}}}^{\alpha\beta}%
\end{array}
\right]  \text{, with \ }G_{\widetilde{\boldsymbol{h}}\widetilde
{\boldsymbol{h}}}^{\alpha\beta}=0\text{,}%
\]
and
\[
\Xi^{\alpha\beta}\left(  x,y,t_{1},t_{2}\right)  =\left[
\begin{array}
[c]{ccc}%
\Xi_{\boldsymbol{hh}}^{\alpha\beta} &  & \Xi_{\boldsymbol{h}\widetilde
{\boldsymbol{h}}}^{\alpha\beta}\\
&  & \\
\Xi_{\widetilde{\boldsymbol{h}}\boldsymbol{h}}^{\alpha\beta} &  &
\Xi_{\widetilde{\boldsymbol{h}}\widetilde{\boldsymbol{h}}}^{\alpha\beta}%
\end{array}
\right]  ,
\]
for $\alpha,\beta\in\left\{  1,2\right\}  $.

We now look for the propagators $G^{\alpha\beta}\left(  x,y,t_{1}%
,t_{2}\right)  $ of the double-copy MFT defined as the matrix elements of the
right-inverse of $\Xi^{\alpha\beta}\left(  x,y,t_{1},t_{2}\right)  $:%
\begin{gather}%
{\displaystyle\sum\limits_{\mu=1}^{2}}
\text{ }%
{\displaystyle\int\limits_{\Omega}}
{\displaystyle\int\limits_{\mathbb{R}}}
\Xi^{\alpha\mu}\left(  x,z,t_{1},s\right)  G^{\mu\beta}\left(  z,y,s,t_{2}%
\right)  dsd\mu\left(  z\right)  =\label{Motion_equation}\\
\delta_{\alpha\beta}\delta\left(  x-y\right)  \delta\left(  t_{1}%
-t_{2}\right)  \left[
\begin{array}
[c]{cc}%
1 & 0\\
0 & 1
\end{array}
\right]  .\nonumber
\end{gather}
This yields to the following system of equations for the $G^{\alpha\beta
}\left(  x,y,t_{1},t_{2}\right)  $:%
\begin{equation}
\left(  \partial_{t_{1}}-\gamma\right)  G_{\widetilde{\boldsymbol{h}%
}\boldsymbol{h}}^{\alpha\beta}\left(  x,y,t_{1},t_{2}\right)  =\delta
_{\alpha\beta}\delta\left(  x-y\right)  \delta\left(  t_{1}-t_{2}\right)  ,
\label{Eq_9}%
\end{equation}%
\begin{gather}
-\left(  \partial_{t_{1}}+\gamma\right)  G_{\boldsymbol{hh}}^{\alpha\beta
}\left(  x,y,t_{1},t_{2}\right)  +\label{Eq_10}\\%
{\displaystyle\sum\limits_{\mu=1}^{2}}
{\displaystyle\int\limits_{\Omega}}
{\displaystyle\int\limits_{\mathbb{R}}}
\left\{  C_{\phi\left(  \boldsymbol{h}^{\alpha}\right)  \phi\left(
\boldsymbol{h}^{\beta}\right)  }^{\alpha\mu}\left(  x,z,t_{1},s\right)
G_{\widetilde{\boldsymbol{h}}\boldsymbol{h}}^{\mu\beta}\left(  z,y,s,t_{2}%
\right)  dsd\mu\left(  z\right)  \right\} \nonumber\\
+\sigma^{2}%
{\displaystyle\sum\limits_{\mu=1}^{2}}
G_{\widetilde{\boldsymbol{h}}\boldsymbol{h}}^{\mu\beta}\left(  x,y,t_{1}%
,t_{2}\right)  ds=0\nonumber
\end{gather}%
\begin{equation}
-\left(  \partial_{t_{1}}+\gamma\right)  G_{\boldsymbol{h}\widetilde
{\boldsymbol{h}}}^{\alpha\beta}\left(  x,y,t_{1},t_{2}\right)  =\delta
_{\alpha\beta}\delta\left(  x-y\right)  \left(  t_{1}-t_{2}\right)  .
\label{Eq_11A}%
\end{equation}
Besides that our equation (\ref{Motion_equation}) is different from the one in
\cite[equation 65]{Grosvenor-Jefferson}, the derivation of the equations
(\ref{Eq_9})-(\ref{Eq_11A}) is completely similar to the one presented in
\cite[pp. 21-22]{Grosvenor-Jefferson}. The formal calculations presented in
this section can be rigorously justified with the techniques presented in
Appendix G. Again, we mention that the system (\ref{Eq_9})-(\ref{Eq_11A}) is
independent of the cutoff function used in the definition of the partition function.

Assuming time-translation invariance, and replacing $\partial_{1}$ by
$-\partial_{2}$, $\tau:=t_{1}-t_{2}$, equations (\ref{Eq_9})and (\ref{Eq_11A})
become%
\[
-\left(  \partial_{\tau}+\gamma\right)  G_{\widetilde{\boldsymbol{h}%
}\boldsymbol{h}}^{\alpha\beta}\left(  x,y,\tau\right)  =\left(  \partial
_{\tau}-\gamma\right)  G_{\boldsymbol{h}\widetilde{\boldsymbol{h}}}%
^{\alpha\beta}\left(  x,y,\tau\right)  =\delta_{\alpha\beta}\delta\left(
x-y\right)  \delta\left(  \tau\right)  .
\]
By multiplying the equation (\ref{Eq_10}) by $\left(  \partial_{2}%
+\gamma\right)  $, and using equation (\ref{Eq_9}) with $\partial
_{1}=-\partial_{2}$, we obtain%
\begin{gather}
\left(  \partial_{2}+\gamma\right)  \left(  \partial_{1}+\gamma\right)
G_{\boldsymbol{hh}}^{\alpha\beta}\left(  x,y,t_{1},t_{2}\right)
=\label{Eq_15A}\\
C_{\phi\left(  \boldsymbol{h}^{\alpha}\right)  \phi\left(  \boldsymbol{h}%
^{\beta}\right)  }^{\alpha\beta}\left(  x,y,t_{1},t_{2}\right)  +\sigma
^{2}\delta\left(  x-y\right)  \delta\left(  t_{1}-t_{2}\right)  .\nonumber
\end{gather}
Finally taking $\partial_{1}=-\partial_{2}$, and $\tau=t_{1}-t_{2}$, we have%
\begin{equation}
\left(  -\partial_{\tau}^{2}+\gamma^{2}\right)  G_{\boldsymbol{hh}}%
^{\alpha\beta}\left(  x,y,\tau\right)  =C_{\phi\left(  \boldsymbol{h}^{\alpha
}\right)  \phi\left(  \boldsymbol{h}^{\beta}\right)  }^{\alpha\beta}\left(
x,y,\tau\right)  +\sigma^{2}\delta\left(  x-y\right)  \delta\left(
\tau\right)  . \label{Eq_15}%
\end{equation}

We note that the $\alpha=\beta$ components in the equation (\ref{Eq_15}) are
precisely (\ref{Eq_12B}), as expected, and thus the autocorrelation
$G_{\boldsymbol{hh}}^{\alpha\alpha}\left(  x,y,\tau\right)  :=c^{\alpha\alpha
}\left(  x,y,\tau\right)  $ agrees with the autocorrelation of the single-copy
case $c\left(  x,y,\tau\right)  $, i.e. $c^{11}\left(  x,y,\tau\right)
=c^{22}\left(  x,y,\tau\right)  =c\left(  x,y,\tau\right)  $. Finally, when
interpreting the propagators as correlations between ordinary Gaussian
variables, it is necessary to recall that this interpretation is valid when
the parameters $x,y$, and $\tau$ are taken outside of a set of $d\mu
dt$-measure zero in $\Omega\times\mathbb{R}$.

\section{\label{SECT_10}The edge of chaos}

In this section, we derive the largest Lyapunov exponent of a random NN of
type (\ref{Eq_Network_gneral}), which allows us to assess the conditions under
which the network transitions into the chaotic regime. We adapt formally the
techniques developed in \cite{Sompolinsky et al}, \cite[Chapter 10]{Helias et
al}, \cite{Grosvenor-Jefferson}. However, this task is not straightforward
because new features appear in the continuous case.

We consider a pair of identically prepared networks with identical parameter
$\boldsymbol{J}$, and also with the same realization of the Gaussian noise
$\eta\left(  x,t\right)  $. Let $\boldsymbol{h}^{1}\left(  x,t_{1}\right)  $,
$\boldsymbol{h}^{2}\left(  y,t_{2}\right)  $ be the trajectories of the
prepared networks.\ Given $\left(  x,t_{1}\right)  $, $\left(  y,t_{2}\right)
\in\Omega\times\mathbb{R}$, the cross-correlator of $\boldsymbol{h}^{\alpha
}\left(  x,t_{1}\right)  $, $\boldsymbol{h}^{\beta}\left(  y,t_{2}\right)  $
is defined \ as%
\begin{align*}
G_{\boldsymbol{hh}}^{\alpha\beta}\left(  x,y,t_{1},t_{2}\right)   &
=c^{\alpha\beta}\left(  x,y,t_{1},t_{2}\right)  =\left\langle \boldsymbol{h}%
^{\alpha}\left(  x,t_{1}\right)  \boldsymbol{h}^{\beta}\left(  y,t_{2}\right)
\right\rangle _{\left(  \boldsymbol{h}^{\alpha},\boldsymbol{h}^{\beta}\right)
}=\\
&
{\displaystyle\iint\limits_{\left(  L^{2}(\Omega\times\mathbb{R})\right)
^{2}}}
\text{ }\boldsymbol{h}^{\alpha}\left(  x,t_{1}\right)  \boldsymbol{h}^{\beta
}\left(  y,t_{2}\right)  d\boldsymbol{Q}\left(  \boldsymbol{h}^{\alpha
}\right)  d\boldsymbol{Q}\left(  \boldsymbol{h}^{\beta}\right)  ,
\end{align*}
where $d\boldsymbol{Q}\left(  \boldsymbol{h}^{\alpha}\right)  d\boldsymbol{Q}%
\left(  \boldsymbol{h}^{\beta}\right)  $\ is the product of two probability
measures $d\boldsymbol{Q}\left(  \boldsymbol{h}^{\alpha}\right)  $,
$d\boldsymbol{Q}\left(  \boldsymbol{h}^{\beta}\right)  $ defined on
$L^{2}(\Omega\times\mathbb{R})$.

Our next goal is to define a notion of distance between $\boldsymbol{h}%
^{1}\left(  \cdot,\cdot\right)  $, $\boldsymbol{h}^{2}\left(  \cdot
,\cdot\right)  \in L^{2}(\Omega\times\mathbb{R})$. We fix two points $\left(
x,t\right)  $, $\left(  x,s\right)  \in\Omega\times\mathbb{R}$, and use the
cross-correlator $G_{\boldsymbol{hh}}^{\alpha\beta}$ to define the following
pseudo-distance:%
\begin{align}
d(\boldsymbol{h}^{1},\boldsymbol{h}^{2};x,x,t,s)  &  =d(x,x,t,s)\nonumber\\
&  :=%
{\displaystyle\iint\limits_{\left(  L^{2}(\Omega\times\mathbb{R})\right)
^{2}}}
\text{ \ }\left\{  \boldsymbol{h}^{1}\left(  x,t\right)  -\boldsymbol{h}%
^{2}\left(  x,s\right)  \right\}  ^{2}d\boldsymbol{Q}\left(  \boldsymbol{h}%
^{1}\right)  d\boldsymbol{Q}\left(  \boldsymbol{h}^{2}\right) \nonumber\\
&  =%
{\displaystyle\int\limits_{L^{2}\left(  \Omega\times\mathbb{R}\right)  }}
\text{ }\boldsymbol{h}^{1}\left(  x,t\right)  ^{2}d\boldsymbol{Q}\left(
\boldsymbol{h}^{1}\right)  +%
{\displaystyle\int\limits_{L^{2}\left(  \Omega\times\mathbb{R}\right)  }}
\text{ }\boldsymbol{h}^{2}\left(  x,s\right)  ^{2}d\boldsymbol{Q}\left(
\boldsymbol{h}^{2}\right) \nonumber\\
&  -2%
{\displaystyle\iint\limits_{\left(  L^{2}(\Omega\times\mathbb{R})\right)
^{2}}}
\text{ \ }\boldsymbol{h}^{1}\left(  x,t\right)  \boldsymbol{h}^{2}\left(
x,s\right)  d\boldsymbol{Q}\left(  \boldsymbol{h}^{1}\right)  d\boldsymbol{Q}%
\left(  \boldsymbol{h}^{2}\right) \nonumber\\
&  =c^{11}\left(  x,x,t,t\right)  +c^{22}\left(  x,x,s,s\right)
-2c^{12}\left(  x,x,t,s\right)  . \label{distance_formula}%
\end{align}
We note that $\sqrt{d(x,x,t,s)}$ is a distance. Here, we should mention a
subtle point: the autocorrelation $c^{11}\left(  x,x,t,t\right)  $ is the
restriction of the distribution $c^{11}\left(  x,y,t,s\right)  $ to the plane
$\left\{  y=x,s=t\right\}  $, which is a non-trivial operation. However, by
Lemma \ref{Lemma_16} in Appendix E, that for generic points $\left(
x,t\right)  ,\left(  y,s\right)  $, $c^{11}\left(  x,y,t,s\right)  $ is the
correlation function of two ordinary Gaussian variables, with mean zero, and
consequently, $c^{11}\left(  x,x,t,t\right)  $ is a autocorrelation of an
ordinary Gaussian random variable for $\left(  x,t\right)  $ in a generic set.

\subsection{Lyapunov exponents}

We now adapt formally the technique introduced in \cite{Sompolinsky et al},
see also \cite[Chapter 10]{Helias et al} and \cite{Grosvenor-Jefferson}, to
compute the Lyapunov exponents associate with continuous random networks.

We fix $x\in\Omega$, and use the notation $c^{\alpha\beta}\left(
x,x,t,s\right)  :=c^{\alpha\beta}\left(  t,s\right)  =c^{\alpha\beta}\left(
\tau\right)  $, with $\tau=t-s$. We note that
\[
c\left(  \tau\right)  =c^{11}\left(  \tau\right)  =c^{22}\left(  \tau\right)
,
\]
where $c\left(  \tau\right)  =c_{\tau}$ was already introduced in
(\ref{correlator_tau}). Here, we used the results of Section
\ref{Section_Propagator}, however, in this section, we consider correlations
$c_{\tau}\left(  x,y\right)  ,$ $\tau=t_{1}-t_{2}$ depending on two points $x$
and $y$, here $x=y$, and thus $c_{0}=c_{0}\left(  x,x\right)  $, and $c_{\tau
}=c_{\tau}\left(  x,x\right)  $.

We start with the following linear approximation of the cross-correlator
$c^{12}\left(  t,s\right)  =G_{\boldsymbol{hh}}^{12}$:%
\begin{equation}
c^{12}\left(  t,s\right)  =c\left(  \tau\right)  +\eta k\left(  t,s\right)
+O\left(  \eta^{2}\right)  , \label{EQ_C_12}%
\end{equation}
where $\eta\ll1$ is some small expansion parameter.

\begin{remark}
We set $C_{\phi\phi}^{\alpha\beta}\left(  x,x,t,s\right)  :=C_{\phi\left(
\boldsymbol{h}^{\alpha}\right)  \phi\left(  \boldsymbol{h}^{\beta}\right)
}^{\alpha\beta}\left(  x,x,t,s\right)  $, and propose that%
\[
C_{\phi\phi}^{\alpha\beta}\left(  x,x,t,s\right)  =C_{\phi\phi}^{\alpha\beta
}\left(  t,s\right)  =\left\langle \phi\left(  \boldsymbol{h}^{\alpha}\left(
x,t\right)  \right)  \phi\left(  \boldsymbol{h}^{\beta}\left(  x,s\right)
\right)  \right\rangle _{\boldsymbol{h}^{\alpha}\boldsymbol{h}^{\beta}}.
\]
This interpretation is in line with the fact that $\left\langle \widetilde
{\boldsymbol{h}}^{\alpha},C_{\phi\left(  \boldsymbol{h}^{\alpha}\right)
\phi\left(  \boldsymbol{h}^{\beta}\right)  }^{\alpha\beta}\widetilde
{\boldsymbol{h}}^{\beta}\right\rangle _{\left(  L^{2}(\Omega\times
\mathbb{R})\right)  ^{2}}$ are correlation functionals of a Gaussian noises
with mean zero, and then, the $C_{\phi\left(  \boldsymbol{h}^{\alpha}\right)
\phi\left(  \boldsymbol{h}^{\beta}\right)  }^{\alpha\beta}$ are the kernels of
the covariance operators of these noises. From a mathematical perspective, the
existence of a restriction of the distribution $C_{\phi\phi}^{\alpha\beta
}\left(  x,y,t,s\right)  $ to the hyperplane $y=x$ is a non-trivial result. In
the Remark \ref{Nota3}, we \ discuss these matters under suitable conditions
on the space of neurons. The problem of determining rigorously largest
Lyapunov exponent of a random NN of type (\ref{Eq_Network_gneral}) is open. In
our view, the technique introduced in \cite{Sompolinsky et al} for determining
largest Lyapunov is a physical reasoning not a mathematical theorem.
\end{remark}

Following \cite{Grosvenor-Jefferson}, we introduce the following notation
\ for the two-point correlation of $\phi\left(  \boldsymbol{h}\right)  $:%
\begin{equation}
f_{\phi}\left(  c^{12},c_{0}\right)  :=C_{\phi\phi}^{12}\text{, \ }f_{\phi
}\left(  c,c_{0}\right)  :=C_{\phi\phi}^{11}=C_{\phi\phi}^{22}, \label{eq_F}%
\end{equation}
where $c_{0}$, $c:=c_{\tau}$ are the components of the covariance matrix of
the single copy in (\ref{covariance_single_copy}), and $c_{0}=c_{0}^{11}%
=c_{0}^{22}$, $c^{12}$ are defined as follows. Then, the random vector
$\left(  \boldsymbol{h}^{1}\left(  t\right)  \text{, }\boldsymbol{h}%
^{2}\left(  s\right)  \right)  $ is bivariate\ with normal distribution%
\begin{multline*}
\left(  \boldsymbol{h}^{1}\left(  t\right)  \text{, }\boldsymbol{h}^{2}\left(
s\right)  \right)  \sim\mathcal{N}\left(  0,\left[
\begin{array}
[c]{cc}%
G_{\boldsymbol{hh}}^{11}\left(  x,x,t,t\right)  & G_{\boldsymbol{hh}}%
^{12}\left(  x,x,t,s\right) \\
& \\
G_{\boldsymbol{hh}}^{12}\left(  x,x,t,s\right)  & G_{\boldsymbol{hh}}%
^{22}\left(  x,x,s,s\right)
\end{array}
\right]  \right) \\
\\
=:\mathcal{N}\left(  0,\left[
\begin{array}
[c]{cc}%
c_{0} & c^{12}\\
& \\
c^{12} & c_{0}%
\end{array}
\right]  \right)  .
\end{multline*}
We now Taylor expand\ (\ref{Eq_15A}) around $c$, on the left-side we
substitute $G_{\boldsymbol{hh}}^{12}=c^{12}$ and use (\ref{EQ_C_12}). On the
right-side we expand $C_{\phi\phi}^{12}$ as%
\begin{align*}
f_{\phi}\left(  c^{12},c_{0}\right)   &  =f_{\phi}\left(  c,c_{0}\right)
+\left(  c^{12}-c\right)  \frac{\partial}{\partial c^{12}}f_{\phi}\left(
c^{12},c_{0}\right)  \mid_{c^{12}=c}+\cdots\\
&  =f_{\phi}\left(  c,c_{0}\right)  +\eta k\left(  t,s\right)  f_{\phi
^{\prime}}\left(  c,c_{0}\right)  +O\left(  \eta^{2}\right)  ,
\end{align*}
where we used Price's theorem, see (\ref{Price_theorem}). By substituting
these expression in (\ref{Eq_15A}), and using \ref{Eq_3BB},%

\begin{equation}
\left(  \partial_{1}+\gamma\right)  \left(  \partial_{2}+\gamma\right)
k\left(  t,s\right)  =\eta k\left(  t,s\right)  f_{\phi^{\prime}}\left(
c,c_{0}\right)  \label{Eq_eta}%
\end{equation}
at order $O\left(  \eta\right)  $.

Now, we use the formula for the pseudo-distance between $\boldsymbol{h}%
^{1},\boldsymbol{h}^{2}$, with $t=s$, $c\left(  \tau\right)  =c\left(
x,x,\tau\right)  ,$ and $c\left(  0\right)  =c\left(  x,x,0\right)  $,%

\begin{align*}
d(\boldsymbol{h}^{1},\boldsymbol{h}^{2};x,x,t,t)  &  =d(\boldsymbol{h}%
^{1},\boldsymbol{h}^{2};x,t)\\
&  =c^{11}\left(  x,x,t,t\right)  +c^{22}\left(  x,x,t,t\right)
-2c^{12}\left(  x,x,t,t\right) \\
&  =2c\left(  0\right)  -2\left(  c\left(  0\right)  +\eta k\left(
t,t\right)  +O\left(  \eta^{2}\right)  \right) \\
&  =-2\eta k\left(  t,t\right)  +O\left(  \eta^{2}\right)
\end{align*}
Following, \cite{Helias et al}-\cite{Grosvenor-Jefferson}, we\ rewrite
(\ref{Eq_eta}) changing variables $\tau=t-s$, $u=t+s$, and using that%
\[
\partial_{t}=\partial_{\tau}+\partial_{u}\text{, \ }\partial_{s}%
=-\partial_{\tau}+\partial_{u},
\]
to get
\[
\left[  \left(  \partial_{u}+\gamma\right)  ^{2}-\partial_{\tau}^{2}\right]
k\left(  t,s\right)  =f_{\phi^{\prime}}\left(  c,c_{0}\right)  k\left(
\tau,u\right)  .
\]
To solve this last equation, we assume that
\begin{equation}
k\left(  \tau,u\right)  =e^{\lambda u}\psi\left(  \tau\right)  ,
\label{K-Function}%
\end{equation}
then%
\[
\left[  \left(  \lambda+\gamma\right)  ^{2}-\partial_{\tau}^{2}\right]
\psi\left(  \tau\right)  =f_{\phi^{\prime}}\left(  c,c_{0}\right)  \psi\left(
\tau\right)  ;
\]
we now use that $V^{\prime\prime}\left(  c_{\tau},c_{0}\right)  =-\gamma
^{2}+f_{\phi^{\prime}}\left(  c_{\tau},c_{0}\right)  $, see (\ref{Potential}),
to rewrite this equation as%
\begin{equation}
\mathbf{H}\psi\left(  \tau\right)  =E\psi\left(  \tau\right)  ,
\label{Schodinger_equation}%
\end{equation}
where $\boldsymbol{H}=-\partial_{\tau}^{2}-V^{\prime\prime}\left(  c_{\tau
},c_{0}\right)  $, and $E:=\gamma^{2}-\left(  \lambda+\gamma\right)  ^{2}%
$.\ We now consider (\ref{Schodinger_equation}) as \ time-independent
Schr\"{o}dinger equation with Hamiltonian $\boldsymbol{H}$. We assume that
existence \ of a countable set of eigenvalues
\[
E_{n}=\gamma^{2}-\left(  \lambda_{n}+\gamma\right)  ^{2}\text{, }%
n\in\mathbb{N}=\left\{  0,1,\ldots\right\}  ,
\]
which implies the existence a countable set of Lyapunov exponents $\lambda
_{n}=\lambda_{n}\left(  x,y\right)  $ which controls the growth of $k\left(
\tau,u\right)  $, see (\ref{K-Function}). In particular, the largest exponent
$\lambda_{0}$ corresponds to ground state energy $E_{0}$:%
\[
\lambda_{0}=-\gamma+\sqrt{\gamma^{2}-E_{0}}.
\]
In the unstable regime $\lambda_{0}>0$ the pseudo-distance $d(\boldsymbol{h}%
^{1},\boldsymbol{h}^{2};x,t)$ grows exponentially with time. Note that if
$\gamma<0$, then $\lambda_{0}>0$ for all possible values of the ground state
energy $E_{0}$. For this reason, we consider only the case $\gamma>0$. Our
goal is to determine \ the point at which ground state energy becomes
negative, $E_{0}<0$. Following, \cite[Chapter 10]{Helias et al},\ we construct
a normalizable solution\ with $E=0$, and derive the necessary conditions on
the existence\ of\ lower-energy states.

By differentiating equation (\ref{Key_Eq_1}) with respect to $\tau$ for
$\tau\neq0$, and\ denoting $\partial_{\tau}c_{\tau}=\overset{\cdot}{c_{\tau}}%
$, we have
\begin{equation}
\partial_{\tau}^{2}\overset{\cdot}{c_{\tau}}+\partial_{\tau}V^{\prime}\left(
c_{\tau},c_{0}\right)  =\left[  \partial_{\tau}^{2}+V^{\prime\prime}\left(
c_{\tau},c_{0}\right)  \right]  \overset{\cdot}{c_{\tau}}=0.
\label{Ptential_2}%
\end{equation}
now, comparing with (\ref{Schodinger_equation}), we conclude that $\psi\left(
\tau\right)  =\overset{\cdot}{c_{\tau}}$ is an eigenfunction for $E=0$, for
$\tau\neq0$. The discontinuity at $\tau=0$ comes from the delta function
$\delta\left(  \tau\right)  $ in (\ref{Key_Eq_1}). To fix this problem, we
define%
\[
y\left(  \tau\right)  =\left\{
\begin{array}
[c]{ll}%
\overset{\cdot}{c_{\tau}} & \text{for }\tau>0\\
& \\
-\overset{\cdot}{c_{\tau}} & \text{for }\tau<0,
\end{array}
\right.
\]
and impose the condition that $y\left(  \tau\right)  $ be differentiable for
any $\tau\in\mathbb{R}$. Then, $\overset{\cdot}{y}\left(  \tau\right)  $
should be continuous at $\tau=0$, which implies that%
\[
\lim_{\tau\rightarrow0^{+}}\overset{\cdot}{y}\left(  \tau\right)
=\overset{\cdot}{y}\left(  0^{+}\right)  =\overset{\cdot}{y}\left(
0^{-}\right)  =\lim_{\tau\rightarrow0^{-}}\overset{\cdot}{y}\left(
\tau\right)  ;
\]
then, using (\ref{Key_Eq_1}), (\ref{Potential}), (\ref{Price_theorem}), and
\ref{eq_F},%
\begin{align}
\overset{\cdot}{y}\left(  0^{+}\right)  -\overset{\cdot}{y}\left(
0^{-}\right)   &  =\overset{\cdot\cdot}{c_{\tau}}\left(  0^{+}\right)
+\overset{\cdot\cdot}{c_{\tau}}\left(  0^{-}\right)  =-2V^{\prime}\left(
c_{0},c_{0}\right) \label{extremum}\\
&  =-2\left(  -\gamma^{2}c_{0}+\text{\ }f_{\phi}\left(  c_{0},c_{0}\right)
\right)  =0,\nonumber
\end{align}
i.e.,%
\[
V^{\prime}\left(  c_{0},c_{0}\right)  =-\gamma^{2}c_{0}+\text{\ }f_{\phi
}\left(  c_{0},c_{0}\right)  =0.
\]
On the other hand, $y\left(  \tau\right)  $ has zero total energy by
construction, then, it is necessary to impose \ that the potential be less
than or equal zero\ at the extremum given by (\ref{extremum}), see
\cite{Helias et al}, and \cite{Schuecker et al}, for a detailed discussion.
Now, by (\ref{Ptential_2}), the potential for $y\left(  \tau\right)  $ is
$-V^{\prime\prime}\left(  c_{\tau},c_{0}\right)  $, then%
\begin{equation}
V^{\prime\prime}\left(  c_{0},c_{0}\right)  \geq0\text{, i.e. }-\gamma
^{2}+\text{\ }f_{\phi^{\prime}}\left(  c_{0},c_{0}\right)  \geq0.
\label{inequality}%
\end{equation}
When this equality is attained in (\ref{inequality}), $y\left(  \tau\right)  $
must be the ground state with $E_{0}=0$; when the strict inequality occurs,
the energy of the ground state is $E_{0}<0$.

\section{\label{SECT_11}Self-averaging}

The dynamics of the mean-field of the double-copy system is controlled by the
equations (\ref{Eq_15A}). The solutions (the covariance functions) are
distributions depending on the spatial variables $x,y\in\Omega$. Then, several
spatial averages of the covariance functions satisfy the above-mentioned
system of differential equations.

Let us multiply both sides of (\ref{Eq_15A}) by a test function $\theta\left(
x,y\right)  $, and then integrate with respect $d\mu\left(  x\right)
d\mu\left(  y\right)  $, formally, we get%
\begin{gather*}
\left(  \partial_{1}+\gamma\right)  \left(  \partial_{2}+\gamma\right)
\left\{  \text{ }%
{\displaystyle\int\limits_{\Omega^{2}}}
G_{\boldsymbol{hh}}^{\alpha\beta}\left(  x,y,t_{1},t_{2}\right)  \theta\left(
x,y\right)  d\mu\left(  x\right)  d\mu\left(  y\right)  \right\}  =\\%
{\displaystyle\int\limits_{\Omega^{2}}}
C_{\phi\phi}^{\alpha\beta}\left(  x,y,t_{1},t_{2}\right)  \theta\left(
x,y\right)  d\mu\left(  x\right)  d\mu\left(  y\right)  +\\
\sigma^{2}\left\{  \text{ }%
{\displaystyle\int\limits_{\Omega^{2}}}
\delta\left(  x-y\right)  \theta\left(  x,y\right)  d\mu\left(  x\right)
d\mu\left(  y\right)  \right\}  \delta\left(  t_{1}-t_{2}\right)  ,
\end{gather*}
i.e.,%
\begin{equation}
\left(  \partial_{1}+\gamma\right)  \left(  \partial_{2}+\gamma\right)
\overline{G}_{\boldsymbol{hh}}^{\alpha\beta}\left(  t_{1},t_{2}\right)
=\overline{C}_{\phi\phi}^{\alpha\beta}\left(  t_{1},t_{2}\right)
+\sigma_{\theta}^{2}\delta\left(  t_{1}-t_{2}\right)  ,
\label{Average-dynamics}%
\end{equation}
as distributions, where%
\begin{equation}
\overline{G}_{\boldsymbol{hh}}^{\alpha\beta}\left(  t_{1},t_{2}\right)  =%
{\displaystyle\int\limits_{\Omega^{2}}}
G_{\boldsymbol{hh}}^{\alpha\beta}\left(  x,y,t_{1},t_{2}\right)  \theta\left(
x,y\right)  d\mu\left(  x\right)  d\mu\left(  y\right)  , \label{average_0}%
\end{equation}%
\begin{equation}
\overline{C}_{\phi\phi}^{\alpha\beta}\left(  t_{1},t_{2}\right)  =%
{\displaystyle\int\limits_{\Omega^{2}}}
C_{\phi\phi}^{\alpha\beta}\left(  x,y,t_{1},t_{2}\right)  \theta\left(
x,y\right)  d\mu\left(  x\right)  d\mu\left(  y\right)  \text{, }
\label{average_1A}%
\end{equation}
and
\begin{equation}
\sigma_{\theta}^{2}=\sigma^{2}%
{\displaystyle\int\limits_{\Omega}}
\theta\left(  y,y\right)  d\mu\left(  y\right)  . \label{average_2}%
\end{equation}
By interpreting $G_{\boldsymbol{hh}}^{\alpha\beta}\left(  x,y,t_{1}%
,t_{2}\right)  $ as the cross-correlations of the double-copy system, and
\ (\ref{average_0})-(\ref{average_2}) as spatial averages, we can expect that
$G_{\boldsymbol{hh}}^{\alpha\beta}\left(  x,y,t_{1},t_{2}\right)  $ and
$\overline{G}_{\boldsymbol{hh}}^{\alpha\beta}\left(  t_{1},t_{2}\right)  $
provide\ a similar\ descriptions of the double-copy system. A precise
mathematical formulation of this result is given in the Appendix G, Theorems
\ref{Theorem4} and \ref{Theorem5}.

\section{\label{SECT_12}Continuous phase transitions in a toy model}

In this section, we introduce a family of NNs corresponding to certain
partition functions $\overline{{\LARGE Z}}_{M}$, for which the function
$C_{\phi\left(  \boldsymbol{h}\right)  \phi\left(  \boldsymbol{h}\right)  }$
can be studied explicitly. We take $\Omega=\mathbb{Z}_{p}$, and denote by $dx$
the Haar measure of $\mathbb{Q}_{p}$ normalized by the condition
$\int_{\mathbb{Z}_{p}}dx=1$. We show the existence of a continuous phase
transition for this type of NNs.

\subsection{The generating functional}

For our convenience, we recall some results established in the previous
sections. We consider the functional%
\[
\overline{{\LARGE Z}}_{M}=%
{\displaystyle\iint\limits_{\mathcal{H}\times\mathcal{H}}}
{\large 1}_{\mathcal{P}_{M}}\left(  \boldsymbol{h},\widetilde{\boldsymbol{h}%
}\right)  \exp\left(  S\left[  \widetilde{\boldsymbol{h}},\boldsymbol{h}%
\right]  \right)  d\boldsymbol{P}\left(  \boldsymbol{h}\right)  d\widetilde
{\boldsymbol{P}}\left(  \widetilde{\boldsymbol{h}}\right)  ,
\]
where $S\left[  \widetilde{\boldsymbol{h}},\boldsymbol{h}\right]
=S_{0}\left[  \widetilde{\boldsymbol{h}},\boldsymbol{h}\right]  +\left\langle
\widetilde{\boldsymbol{h}},C_{\phi\left(  \boldsymbol{h}\right)  \phi\left(
\boldsymbol{h}\right)  }\widetilde{\boldsymbol{h}}\right\rangle _{\left(
L^{2}(\mathbb{Z}_{p}\times\mathbb{R})\right)  ^{2}}$, $\ $with $S_{0}\left[
\widetilde{\boldsymbol{h}},\boldsymbol{h}\right]  =-\left\langle
\widetilde{\boldsymbol{h}},\left(  \partial_{t}+\gamma\right)  \boldsymbol{h}%
\right\rangle +\frac{1}{2}\sigma^{2}\left\langle \widetilde{\boldsymbol{h}%
},\widetilde{\boldsymbol{h}}\right\rangle $, and%
\begin{gather*}
\left\langle \widetilde{\boldsymbol{h}},C_{\phi\left(  \boldsymbol{h}\right)
\phi\left(  \boldsymbol{h}\right)  }\widetilde{\boldsymbol{h}}\right\rangle
_{\left(  L^{2}(\mathbb{Z}_{p}\times\mathbb{R})\right)  ^{2}}=\\%
{\displaystyle\int\limits_{\mathbb{Z}_{p}^{2}}}
\text{ }%
{\displaystyle\int\limits_{\mathbb{R}^{2}}}
\widetilde{\boldsymbol{h}}\left(  x,t_{1}\right)  C_{\phi\left(
\boldsymbol{h}\right)  \phi\left(  \boldsymbol{h}\right)  }\left(
x,y,t_{1},t_{2}\right)  \widetilde{\boldsymbol{h}}\left(  y,t_{2}\right)
dxdydt_{1}dt_{2}.
\end{gather*}
The partition function is essentially determined by functional $C_{\phi\left(
\boldsymbol{h}\right)  \phi\left(  \boldsymbol{h}\right)  }$, which in turn is
determined by an integral operator. In this section, we introduce a family of
integral operators for which, we can describe the functionals $C_{\phi\left(
\boldsymbol{h}\right)  \phi\left(  \boldsymbol{h}\right)  }$ for
$\boldsymbol{h}$ in a dense space of $\mathcal{H}$.

We also recall that
\begin{gather*}
S\left[  \boldsymbol{h},\widetilde{\boldsymbol{h}};\Xi\right]  =\frac{1}{2}%
{\displaystyle\int\limits_{\mathbb{Z}_{p}^{2}}}
\text{ }%
{\displaystyle\int\limits_{\mathbb{R}^{2}}}
\text{\ }\boldsymbol{z}\left(  x,t_{1}\right)  ^{T}\Xi\left(  x,y,t_{1}%
,t_{2}\right)  \boldsymbol{z}\left(  y,t_{2}\right)  dxdydt_{1}dt_{2}=\\
\frac{1}{2}%
{\displaystyle\int\limits_{\mathbb{Z}_{p}^{2}}}
\text{ }%
{\displaystyle\int\limits_{\mathbb{R}^{2}}}
\text{\ }\left[  \boldsymbol{h}\left(  x,t_{1}\right)  ,\widetilde
{\boldsymbol{h}}\left(  x,t_{1}\right)  \right]  \left[
\begin{array}
[c]{cc}%
\Xi_{\boldsymbol{hh}} & \Xi_{\boldsymbol{h}\widetilde{\boldsymbol{h}}}\\
& \\
\Xi_{\widetilde{\boldsymbol{h}}\boldsymbol{h}} & \Xi_{\widetilde
{\boldsymbol{h}}\widetilde{\boldsymbol{h}}}%
\end{array}
\right]  \left(  x,y,t_{1},t_{2}\right)  \left[
\begin{array}
[c]{c}%
\boldsymbol{h}\left(  y,t_{2}\right) \\
\widetilde{\boldsymbol{h}}\left(  y,t_{2}\right)
\end{array}
\right]  dxdydt_{1}dt_{2},
\end{gather*}
\ and that the functional inverse of $\Xi\left(  x,y,t_{1},t_{2}\right)  $ is
given by%
\[%
{\displaystyle\int\limits_{\mathbb{Z}_{p}}}
{\displaystyle\int\limits_{\mathbb{R}}}
\Xi\left(  x,z,t_{1},s\right)  G\left(  z,y,s,t_{2}\right)  dsdz=\delta\left(
x-y\right)  \delta\left(  t_{1}-t_{2}\right)  \left[
\begin{array}
[c]{cc}%
1 & 0\\
0 & 1
\end{array}
\right]  ,
\]
where%
\[
G\left(  x,y,t_{1},t_{2}\right)  =\left[
\begin{array}
[c]{cc}%
G_{\boldsymbol{hh}} & G_{\boldsymbol{h}\widetilde{\boldsymbol{h}}}\\
& \\
G_{\widetilde{\boldsymbol{h}}\boldsymbol{h}} & G_{\widetilde{\boldsymbol{h}%
}\widetilde{\boldsymbol{h}}}%
\end{array}
\right]  ,\text{ with }G_{\widetilde{\boldsymbol{h}}\widetilde{\boldsymbol{h}%
}}=0.
\]
Furthermore, the entries of the matrix propagator $G$ are determined by the
following system of equations:%
\begin{equation}
\left\{
\begin{array}
[c]{l}%
\left(  \partial_{t_{1}}-\gamma\right)  G_{\widetilde{\boldsymbol{h}%
}\boldsymbol{h}}\left(  x,y,t_{1},t_{2}\right)  =\delta\left(  x-y\right)
\delta\left(  t_{1}-t_{2}\right)  ;\\
\\%
\begin{array}
[c]{c}%
-\left(  \partial_{t_{1}}+\gamma\right)  G_{\boldsymbol{hh}}\left(
x,y,t_{1},t_{2}\right)  +%
{\displaystyle\int\limits_{\mathbb{Z}_{p}}}
{\displaystyle\int\limits_{\mathbb{R}}}
C_{\phi\left(  \boldsymbol{h}\right)  \phi\left(  \boldsymbol{h}\right)
}\left(  x,z,t_{1},s\right)  G_{\widetilde{\boldsymbol{h}}\boldsymbol{h}%
}\left(  z,y,s,t_{2}\right)  dsdz+\\
\multicolumn{1}{l}{}\\
\multicolumn{1}{l}{\sigma^{2}G_{\widetilde{\boldsymbol{h}}\boldsymbol{h}%
}\left(  x,y,t_{1},t_{2}\right)  =0;}%
\end{array}
\\
\\
-\left(  \partial_{t_{1}}+\gamma\right)  G_{\boldsymbol{h}\widetilde
{\boldsymbol{h}}}\left(  x,y,t_{1},t_{2}\right)  =\delta\left(  x-y\right)
\delta\left(  t_{1}-t_{2}\right)  .
\end{array}
\right.  \label{Equation_G}%
\end{equation}

\subsection{A family of trace class operators}

For $\xi=\left(  \xi_{1},\xi_{2}\right)  \in\mathbb{Q}_{p}^{2}$, we set
$\left\Vert \xi\right\Vert _{p}=\max\left\{  \left\vert \xi_{1}\right\vert
_{p},\left\vert \xi_{2}\right\vert _{p}\right\}  $. We use the results from
(\cite[Section 4.1]{Zuniga-ATMP}), so we fix a function $\widehat
{K}:\mathbb{Q}_{p}^{2}\rightarrow\mathbb{R}$ \ satisfying the following:

\begin{enumerate}
\item[(H1)] $\widehat{K}\in$ $L^{1}\left(  \mathbb{Q}_{p}^{2}\right)  \cap
C(\mathbb{Q}_{p}^{2},\mathbb{R})$,

\item[(H2)] $\widehat{K}\left(  \xi\right)  =\widehat{K}(\max\left\{
1,\left\Vert \xi\right\Vert _{p}\right\}  )$, for $\xi\in\mathbb{Q}_{p}^{2}$,

\item[(H3)] $\widehat{K}\left(  \xi\right)  >0$, for $\xi\in\mathbb{Q}_{p}%
^{2}$.
\end{enumerate}

Here $\widehat{K}$ denotes the Fourier transform of an integral function $K$.
For further details about the Fourier transform the reader may consul Appendix
H. Now, since $\widehat{K}(\max\left\{  1,\left\Vert \xi\right\Vert
_{p}\right\}  )$ is integrable, $K$ is a continuous real-valued function.

We now define the operator%
\[%
\begin{array}
[c]{ccc}%
L^{2}\left(  \mathbb{Z}_{p}^{2}\right)  & \rightarrow & L^{2}\left(
\mathbb{Z}_{p}^{2}\right) \\
f & \rightarrow & \square_{K}f,
\end{array}
\]
where $\left(  \square_{K}f\right)  \left(  x\right)  :=K(x)\ast f(x)$. Since
$\left(  \mathbb{Z}_{p}^{2},+\right)  $ is an Abelian group, the function
$K(x)\ast f(x)$\ is supported in the two-dimensional unit ball.

Under the hypothesis
\begin{equation}
\widehat{K}(1)+%
{\displaystyle\sum\limits_{\lambda=1}^{\infty}}
\left(  2\lambda p^{\lambda+1}\right)  ^{N}\widehat{K}(p^{\lambda})<\infty,
\label{hypo1}%
\end{equation}
operator $\square_{K}$ is a symmetric, positive, trace class operator on
$L^{2}\left(  \mathbb{Z}_{p}^{2}\right)  $. cf. \cite[Lemmas 1, 2, Theorem
1]{Zuniga-ATMP}.

We set $\Omega\left(  z\right)  $ for the characteristic function of the
interval $\left[  0,1\right]  $. Then, $\Omega\left(  p^{l}\left\vert
x-I\right\vert _{p}\right)  $ is the characteristic function of the ball
\[
B_{-l}(I)=\left\{  x\in\mathbb{Q}_{p};\left\vert x-I\right\vert _{p}\leq
p^{l}\right\}  .
\]

The next step is to pick a convenient family of kernels $K$. For $\rho>0$, for
$\xi=\left(  \xi_{1},\xi_{2}\right)  \in\mathbb{Q}_{p}^{2}$, we set
\[
\widehat{K}_{\rho,2}\left(  \xi\right)  :=\max\left(  1,\left\Vert
\xi\right\Vert _{p}\right)  ^{-\rho}.
\]
Then $K_{\rho,2}\in L^{1}\left(  \mathbb{Q}_{p}^{2}\right)  $, and
\begin{equation}
K_{\rho,2}\left(  x\right)  =\left\{
\begin{array}
[c]{lll}%
\frac{1}{\Gamma_{2}\left(  \rho\right)  }\left(  \left\Vert x\right\Vert
_{p}^{\rho-2}-p^{\rho-2}\right)  \Omega\left(  \left\Vert x\right\Vert
_{p}\right)  & \text{if} & \rho\neq2\\
&  & \\
\left(  1-p^{-2}\right)  \log_{p}\left(  \frac{p}{\left\Vert x\right\Vert
_{p}}\right)  \Omega\left(  \left\Vert x\right\Vert _{p}\right)  & \text{if} &
\rho=2,
\end{array}
\right.  \label{Kernel}%
\end{equation}
where $\Gamma_{2}\left(  \rho\right)  =\frac{1-p^{\rho-2}}{1-p^{-\rho}}$, for
$\rho\neq0$, cf. \cite[ Chapter III, Lema 5.2]{Taibleson}. Since $\widehat
{K}_{\rho}\left(  \xi\right)  $ verifies (\ref{hypo1}),
\[%
\begin{array}
[c]{ccc}%
L^{2}\left(  \mathbb{Z}_{p}^{2}\right)  & \rightarrow & L^{2}\left(
\mathbb{Z}_{p}^{2}\right) \\
&  & \\
f\left(  x_{1},x_{2}\right)  & \rightarrow & K_{\rho,2}\left(  x_{1}%
,x_{2}\right)  \ast f\left(  x_{1},x_{2}\right)  .
\end{array}
\]
is a symmetric, positive, trace class operator on $L^{2}\left(  \mathbb{Z}%
_{p}^{2}\right)  $.

\subsection{Computation of $C_{\phi\left(  \boldsymbol{h}\right)  \phi\left(
\boldsymbol{h}\right)  }^{\rho}$}

The $\mathbb{R}$-vector space $\mathcal{D}_{l}\left(  \mathbb{Z}_{p}\right)
$, $l\in\mathbb{N}$, consists of the test functions supported in the ball
$\mathbb{Z}_{p}$\ of the form%
\[
\varphi\left(  x\right)  =%
{\displaystyle\sum\limits_{I\in G_{l}}}
\varphi_{I}\Omega\left(  p^{l}\left\vert x-I\right\vert _{p}\right)  \text{,
}\varphi_{I}\in\mathbb{R}\text{,}%
\]
where $G_{l}=\mathbb{Z}_{p}/p^{l}\mathbb{Z}_{p}\simeq\mathbb{Z}/p^{l}%
\mathbb{Z}$ is the set of $p$-adic numbers of the form $I=\sum_{k=0}%
^{l-1}I_{k}p^{k}$ , with $I_{k}\in\left\{  0,\ldots,p-1\right\}  $.

The space of test functions $\mathcal{D}\left(  \mathbb{Z}_{p}\right)
=\cup_{l\in\mathbb{N}}\mathcal{D}_{l}\left(  \mathbb{Z}_{p}\right)  $,
$\mathcal{D}_{l}\left(  \mathbb{Z}_{p}\right)  \subset\mathcal{D}_{l+1}\left(
\mathbb{Z}_{p}\right)  $, is dense in $L^{1}(\mathbb{Z}_{p})\supset
L^{2}(\mathbb{Z}_{p})$.

We set%
\begin{multline*}
C_{\phi\left(  \boldsymbol{h}\right)  \phi\left(  \boldsymbol{h}\right)
}^{\rho}\left(  u_{1},y_{1},t_{1},t_{2}\right)  :=\\%
{\displaystyle\int\limits_{\mathbb{Z}_{p}^{2}}}
K_{\rho,2}\left(  y_{1}-u_{1},y_{2}-u_{2}\right)  \phi\left(  \boldsymbol{h}%
\left(  y_{2},t_{2}\right)  \right)  \phi\left(  \boldsymbol{h}\left(
u_{2},t_{1}\right)  \right)  dy_{2}du_{2}.
\end{multline*}

The functions of type%
\[
\boldsymbol{h}\left(  x,t\right)  =%
{\displaystyle\sum\limits_{I\in G_{l}}}
\boldsymbol{h}_{I}\left(  t\right)  \Omega\left(  p^{l}\left\vert
x-I\right\vert _{p}\right)  \in C^{\infty}(\mathbb{R})%
{\textstyle\bigotimes\nolimits_{\text{alg}}}
\mathcal{D}_{l}\left(  \mathbb{Z}_{p}\right)  ,
\]
where each $\boldsymbol{h}_{I}\left(  t\right)  \in C^{\infty}(\mathbb{R})$
are dense in $\mathcal{H}$. We now compute the functional $C_{\phi\left(
\boldsymbol{h}\right)  \phi\left(  \boldsymbol{h}\right)  }^{\rho}$ on
$C^{\infty}(\mathbb{R})%
{\textstyle\bigotimes\nolimits_{\text{alg}}}
\mathcal{D}_{l}\left(  \mathbb{Z}_{p}\right)  $, for $l$ fixed.

\paragraph{Computation of $C_{\phi\left(  \boldsymbol{h}\right)  \phi\left(
\boldsymbol{h}\right)  }^{\rho}$ for $\rho\neq2$}

We take $\rho\neq2$, and%
\[
\boldsymbol{h}\left(  y_{2},t_{2}\right)  =%
{\displaystyle\sum\limits_{I\in G_{l}}}
\boldsymbol{h}_{I}\left(  t_{2}\right)  \Omega\left(  p^{l}\left\vert
y_{2}-I\right\vert _{p}\right)  \text{, }\boldsymbol{h}\left(  u_{2}%
,t_{1}\right)  =%
{\displaystyle\sum\limits_{I\in G_{l}}}
\boldsymbol{h}_{I}\left(  t_{1}\right)  \Omega\left(  p^{l}\left\vert
u_{2}-I\right\vert _{p}\right)  ,
\]

then%
\begin{gather*}
\phi\left(  \boldsymbol{h}\left(  y_{2},t_{2}\right)  \right)  \phi\left(
\boldsymbol{h}\left(  u_{2},t_{1}\right)  \right)  =\\
\left\{
{\displaystyle\sum\limits_{I\in G_{l}}}
\phi\left(  \boldsymbol{h}_{I}\left(  t_{2}\right)  \right)  \Omega\left(
p^{l}\left\vert y_{2}-I\right\vert _{p}\right)  \right\}  \left\{
{\displaystyle\sum\limits_{J\in G_{l}}}
\phi\left(  \boldsymbol{h}_{J}\left(  t_{1}\right)  \right)  \Omega\left(
p^{l}\left\vert u_{2}-J\right\vert _{p}\right)  \right\} \\
=%
{\displaystyle\sum\limits_{I\in G_{l}}}
{\displaystyle\sum\limits_{J\in G_{l}}}
\phi\left(  \boldsymbol{h}_{I}\left(  t_{2}\right)  \right)  \phi\left(
\boldsymbol{h}_{J}\left(  t_{1}\right)  \right)  \Omega\left(  p^{l}\left\vert
y_{2}-I\right\vert _{p}\right)  \Omega\left(  p^{l}\left\vert u_{2}%
-J\right\vert _{p}\right)  .
\end{gather*}
Now by using that $\mathbb{Z}_{p}=\left\{  x\in\mathbb{Q}_{p};\left\vert
x\right\vert _{p}\leq1\right\}  $ is an additive group,
\[
\Omega\left(  \max\left\{  \left\vert y_{1}-u_{1}\right\vert _{p},\left\vert
y_{2}-u_{2}\right\vert _{p}\right\}  \right)  =1,
\]
for $y_{1},u_{1},y_{2},u_{2}\in\mathbb{Z}_{p}$, we have
\begin{gather}
C_{\phi\left(  \boldsymbol{h}\right)  \phi\left(  \boldsymbol{h}\right)
}^{\rho}\left(  u_{1},y_{1},t_{1},t_{2}\right)  =\frac{1}{\Gamma_{2}\left(
\rho\right)  }%
{\displaystyle\sum\limits_{I\in G_{l}}}
{\displaystyle\sum\limits_{J\in G_{l}}}
\phi\left(  \boldsymbol{h}_{I}\left(  t_{2}\right)  \right)  \phi\left(
\boldsymbol{h}_{J}\left(  t_{1}\right)  \right)  \times\label{Fomula_C_phi}\\%
{\displaystyle\int\limits_{J+p^{l}\mathbb{Z}_{p}}}
\text{ }%
{\displaystyle\int\limits_{I+p^{l}\mathbb{Z}_{p}}}
\text{ }\left(  \left[  \max\left\{  \left\vert y_{1}-u_{1}\right\vert
_{p},\left\vert y_{2}-u_{2}\right\vert _{p}\right\}  \right]  ^{\rho
-2}-p^{\rho-2}\right)  dy_{2}du_{2}.\nonumber
\end{gather}
We now compute this last integral.

\textbf{Claim 1}

For $\rho\in\left(  1,\infty\right)  \smallsetminus\left\{  2\right\}  $,
\begin{align}
{\LARGE L}(y_{1}-u_{1};J,I)  &  :=%
{\displaystyle\int\limits_{J+p^{l}\mathbb{Z}_{p}}}
{\displaystyle\int\limits_{I+p^{l}\mathbb{Z}_{p}}}
\left[  \max\left\{  \left\vert y_{1}-u_{1}\right\vert _{p},\left\vert
y_{2}-u_{2}\right\vert _{p}\right\}  \right]  ^{\rho-2}dy_{2}du_{2}%
\label{Formula_L_0}\\
&  =\left\{
\begin{array}
[c]{lll}%
p^{-l\rho}%
{\displaystyle\int\limits_{\mathbb{Z}_{p}}}
\left[  \max\left\{  p^{l}\left\vert y_{1}-u_{1}\right\vert _{p},\left\vert
w\right\vert _{p}\right\}  \right]  ^{\rho-2}dw & \text{if } & I=J\\
&  & \\
p^{-2l}\left[  \max\left\{  \left\vert y_{1}-u_{1}\right\vert _{p},\left\vert
I-J\right\vert _{p}\right\}  \right]  ^{\rho-2} & \text{if} & I\neq J.
\end{array}
\right. \nonumber
\end{align}

By changing variables in ${\LARGE L}(y_{1}-u_{1};J,I)$ as $y_{2}%
=I+p^{l}\widetilde{y}_{2}$, $u_{2}=J+p^{l}\widetilde{u}_{2}$, $dy_{2}%
du_{2}=p^{-2l}d\widetilde{y}_{2}d\widetilde{u}_{2}$,%
\[
{\LARGE L}(y_{1}-u_{1};J,I)=p^{-2l}%
{\displaystyle\int\limits_{\mathbb{Z}_{p}}}
{\displaystyle\int\limits_{\mathbb{Z}_{p}}}
\left[  \max\left\{  \left\vert y_{1}-u_{1}\right\vert _{p},\left\vert
I-J+p^{l}\left(  \widetilde{y}_{2}-\widetilde{u}_{2}\right)  \right\vert
_{p}\right\}  \right]  ^{\rho-2}d\widetilde{y}_{2}d\widetilde{u}_{2}.
\]
If $I\neq J$, the ultrametric property of $\left\vert \cdot\right\vert _{p}$
implies that
\[
\left\vert I-J+p^{l}\left(  \widetilde{y}_{2}-\widetilde{u}_{2}\right)
\right\vert _{p}=\left\vert I-J\right\vert _{p},
\]
then%
\begin{align*}
{\LARGE L}(y_{1}-u_{1};J,I)  &  =p^{-2l}%
{\displaystyle\int\limits_{\mathbb{Z}_{p}}}
{\displaystyle\int\limits_{\mathbb{Z}_{p}}}
\left[  \max\left\{  \left\vert y_{1}-u_{1}\right\vert _{p},\left\vert
I-J\right\vert _{p}\right\}  \right]  ^{\rho-2}d\widetilde{y}_{2}%
d\widetilde{u}_{2}\\
&  =p^{-2l}\left[  \max\left\{  \left\vert y_{1}-u_{1}\right\vert
_{p},\left\vert I-J\right\vert _{p}\right\}  \right]  ^{\rho-2}.
\end{align*}
If $I=J$,%
\[
{\LARGE L}(y_{1}-u_{1};J,I)=p^{-l\rho}%
{\displaystyle\int\limits_{\mathbb{Z}_{p}}}
{\displaystyle\int\limits_{\mathbb{Z}_{p}}}
\left[  \max\left\{  p^{l}\left\vert y_{1}-u_{1}\right\vert _{p},\left\vert
\left(  \widetilde{y}_{2}-\widetilde{u}_{2}\right)  \right\vert _{p}\right\}
\right]  ^{\rho-2}d\widetilde{y}_{2}d\widetilde{u}_{2}.
\]
By changing variables as $w=\widetilde{y}_{2}-\widetilde{u}_{2}$,
$\widetilde{w}=\widetilde{y}_{2}$, $d\widetilde{y}_{2}d\widetilde{u}%
_{2}=dwd\widetilde{w}$,%
\[
{\LARGE L}(y_{1}-u_{1};J,I)=p^{-l\rho}%
{\displaystyle\int\limits_{\mathbb{Z}_{p}}}
{\displaystyle\int\limits_{\mathbb{Z}_{p}}}
\left[  \max\left\{  p^{l}\left\vert y_{1}-u_{1}\right\vert _{p},\left\vert
w\right\vert _{p}\right\}  \right]  ^{\rho-2}dw.
\]
In the case $y_{1}=u_{1}$,%
\[
{\LARGE L}(y_{1}-u_{1};J,I)=p^{-l\rho}%
{\displaystyle\int\limits_{\mathbb{Z}_{p}}}
\left\vert w\right\vert _{p}^{\rho-2}dw=\frac{p^{-l\rho}\left(  1-p^{-1}%
\right)  }{1-p^{1-\rho}}\text{, for }\rho>1\text{.}%
\]
In this way, we obtain the condition $\rho\in\left(  1,\infty\right)
\smallsetminus\left\{  2\right\}  $.

\textbf{Claim 2}

If $\rho\in\left(  1,\infty\right)  \smallsetminus\left\{  2\right\}  $, then
\begin{gather*}
C_{\phi\left(  \boldsymbol{h}\right)  \phi\left(  \boldsymbol{h}\right)
}^{\rho}\left(  u_{1},y_{1},t_{1},t_{2}\right)  =\frac{p^{-l\rho}}{\Gamma
_{2}\left(  \rho\right)  }%
{\displaystyle\sum\limits_{I\in G_{l}}}
\phi\left(  \boldsymbol{h}_{I}\left(  t_{2}\right)  \right)  \phi\left(
\boldsymbol{h}_{I}\left(  t_{1}\right)  \right)  \times\\
\left\{
{\displaystyle\int\limits_{\mathbb{Z}_{p}}}
\left[  \max\left\{  p^{l}\left\vert y_{1}-u_{1}\right\vert _{p},\left\vert
w\right\vert _{p}\right\}  \right]  ^{\rho-2}dw-p^{-2l+\rho-2}\right\}  +\\
\frac{p^{-2l}}{\Gamma_{2}\left(  \rho\right)  }%
{\displaystyle\sum\limits_{\substack{I,J\in G_{l}\\I\neq J}}}
\phi\left(  \boldsymbol{h}_{I}\left(  t_{2}\right)  \right)  \phi\left(
\boldsymbol{h}_{J}\left(  t_{1}\right)  \right)  \left[  \max\left\{
\left\vert y_{1}-u_{1}\right\vert _{p},\left\vert I-J\right\vert _{p}\right\}
\right]  ^{\rho-2}.
\end{gather*}
This formula follows from formula (\ref{Fomula_C_phi}) by Claim 1.

\begin{remark}
\label{Nota1AAA}For $A>0$, by L'Hospital's rule,%
\begin{gather*}
\lim_{\rho\rightarrow2}\frac{p^{-2l}A^{\rho-2}-p^{-2l+\rho-2}}{\Gamma
_{2}\left(  \rho\right)  }=\lim_{\rho\rightarrow2}\frac{\left(  p^{-2l}%
A^{\rho-2}-p^{-2l+\rho-2}\right)  \left(  1-p^{-\rho}\right)  }{1-p^{\rho-2}%
}\\
=\left(  1-p^{-2}\right)  \lim_{\rho\rightarrow2}\frac{\left(  p^{-2l}%
A^{\rho-2}-p^{-2l+\rho-2}\right)  }{1-p^{\rho-2}}=\frac{p^{-2l}\left(
1-p^{-2}\right)  \ln\frac{p}{A}}{\ln p}.
\end{gather*}

\end{remark}

By applying this formula, we obtain%
\begin{gather*}
\lim_{\rho\rightarrow2}\frac{p^{-2l}}{\Gamma_{2}\left(  \rho\right)  }\left[
\max\left\{  \left\vert y_{1}-u_{1}\right\vert _{p},\left\vert I-J\right\vert
_{p}\right\}  \right]  ^{\rho-2}=\\
\frac{p^{-2l}\left(  1-p^{-2}\right)  }{\ln p}\ln\left(  \frac{p}{\max\left\{
\left\vert y_{1}-u_{1}\right\vert _{p},\left\vert I-J\right\vert _{p}\right\}
}\right)  .
\end{gather*}

\begin{remark}
\label{Nota2AAA}By L'Hospital's rule,%
\begin{gather*}
\lim_{\rho\rightarrow2}\frac{p^{-l\rho}%
{\displaystyle\int\limits_{\mathbb{Z}_{p}}}
\left[  \max\left\{  p^{l}\left\vert y_{1}-u_{1}\right\vert _{p},\left\vert
w\right\vert _{p}\right\}  \right]  ^{\rho-2}dw-p^{-2l+\rho-2}}{\Gamma
_{2}\left(  \rho\right)  }\\
=\lim_{\rho\rightarrow2}\frac{p^{-l\rho}\left(  1-p^{-\rho}\right)
{\displaystyle\int\limits_{\mathbb{Z}_{p}}}
\left[  \max\left\{  p^{l}\left\vert y_{1}-u_{1}\right\vert _{p},\left\vert
w\right\vert _{p}\right\}  \right]  ^{\rho-2}dw-p^{-2l+\rho-2}}{1-p^{\rho-2}%
}\\
=\left(  1-p^{-2}\right)  \lim_{\rho\rightarrow2}\frac{p^{-l\rho}%
{\displaystyle\int\limits_{\mathbb{Z}_{p}}}
\left[  \max\left\{  p^{l}\left\vert y_{1}-u_{1}\right\vert _{p},\left\vert
w\right\vert _{p}\right\}  \right]  ^{\rho-2}dw-p^{-2l+\rho-2}}{1-p^{\rho-2}%
}\\
=\left(  1-p^{-2}\right)  \frac{-lp^{-2l}\ln p+%
{\displaystyle\int\limits_{\mathbb{Z}_{p}}}
\ln\left[  \max\left\{  p^{l}\left\vert y_{1}-u_{1}\right\vert _{p},\left\vert
w\right\vert _{p}\right\}  \right]  dw-p^{-2l}\ln p}{-\ln p}\\
=\left(  1-p^{-2}\right)  \left(  l+1\right)  p^{-2l}+\frac{\left(
1-p^{-2}\right)  }{\ln p}%
{\displaystyle\int\limits_{\mathbb{Z}_{p}}}
\ln\left[  \max\left\{  p^{l}\left\vert y_{1}-u_{1}\right\vert _{p},\left\vert
w\right\vert _{p}\right\}  \right]  dw.
\end{gather*}

\end{remark}

\textbf{Claim 3}

By Remarks (\ref{Nota1AAA})-(\ref{Nota2AAA}),%

\begin{gather*}
C_{\phi\left(  \boldsymbol{h}\right)  \phi\left(  \boldsymbol{h}\right)  }%
^{2}\left(  u_{1},y_{1},t_{1},t_{2}\right)  :=\lim_{\rho\rightarrow2}%
C_{\phi\left(  \boldsymbol{h}\right)  \phi\left(  \boldsymbol{h}\right)
}^{\rho}\left(  u_{1},y_{1},t_{1},t_{2}\right) \\
=%
{\displaystyle\sum\limits_{I\in G_{l}}}
\phi\left(  \boldsymbol{h}_{I}\left(  t_{2}\right)  \right)  \phi\left(
\boldsymbol{h}_{I}\left(  t_{1}\right)  \right)  \times\\
\left\{  \left(  1-p^{-2}\right)  \left(  l+1\right)  p^{-2l}+\frac{\left(
1-p^{-2}\right)  }{\ln p}%
{\displaystyle\int\limits_{\mathbb{Z}_{p}}}
\ln\left[  \max\left\{  p^{l}\left\vert y_{1}-u_{1}\right\vert _{p},\left\vert
w\right\vert _{p}\right\}  \right]  dw\right\}  +\\
\frac{p^{-2l}\left(  1-p^{-2}\right)  }{\ln p}%
{\displaystyle\sum\limits_{\substack{I,J\in G_{l}\\I\neq J}}}
\phi\left(  \boldsymbol{h}_{I}\left(  t_{2}\right)  \right)  \phi\left(
\boldsymbol{h}_{J}\left(  t_{1}\right)  \right)  \ln\left(  \frac{p}%
{\max\left\{  \left\vert y_{1}-u_{1}\right\vert _{p},\left\vert I-J\right\vert
_{p}\right\}  }\right)  .
\end{gather*}

We note by Remark \ref{Nota1AAA},%
\[
\lim_{\rho\rightarrow2}K_{\rho,2}\left(  x\right)  =K_{2,2}(x),
\]
see (\ref{Kernel}). For this reason, instead of repeating the calculation for
$C_{\phi\left(  \boldsymbol{h}\right)  \phi\left(  \boldsymbol{h}\right)
}^{2}$ starting with $K_{2,2}(x)$, we can obtain this functional by
$\lim_{\rho\rightarrow2}C_{\phi\left(  \boldsymbol{h}\right)  \phi\left(
\boldsymbol{h}\right)  }^{\rho}$.

We now solve the equations (\ref{Equation_G}) for the case $C_{\phi\left(
\boldsymbol{h}\right)  \phi\left(  \boldsymbol{h}\right)  }^{\rho}$. To
emphasize the dependency of this particular kernel, we use the notation
$G_{\boldsymbol{hh}}^{\rho}=G_{\boldsymbol{hh}}$. But, we do not use this
notation for $G_{\widetilde{\boldsymbol{h}}\boldsymbol{h}}$, $G_{\widetilde
{\boldsymbol{h}}\boldsymbol{h}}$ because these functions do not depend on
$\rho$

\textbf{Claim 4}

The solution of%

\[
\left(  \partial_{t_{1}}-\gamma\right)  G_{\widetilde{\boldsymbol{h}%
}\boldsymbol{h}}\left(  x,y,t_{1},t_{2}\right)  =\delta\left(  x-y\right)
\delta\left(  t_{1}-t_{2}\right)
\]
is%
\[
G_{\widetilde{\boldsymbol{h}}\boldsymbol{h}}\left(  x,y,t_{1},t_{2}\right)
=e^{\gamma\left(  t_{1}-t_{2}\right)  }\Theta\left(  t_{1}-t_{2}\right)
\delta\left(  x-y\right)  ,
\]
where $\Theta\left(  t\right)  =1$ for $t>0$, and $\Theta\left(  t\right)  =0$
for $t\leq0$.

\textbf{Claim 5}

The solution of%
\[
-\left(  \partial_{t_{1}}+\gamma\right)  G_{\boldsymbol{h}\widetilde
{\boldsymbol{h}}}\left(  x,y,t_{1},t_{2}\right)  =\delta\left(  x-y\right)
\delta\left(  t_{1}-t_{2}\right)  .
\]
is%
\[
G_{\boldsymbol{h}\widetilde{\boldsymbol{h}}}\left(  x,y,t_{1},t_{2}\right)
=-e^{-\gamma\left(  t_{1}-t_{2}\right)  }\Theta\left(  t_{1}-t_{2}\right)
\delta\left(  x-y\right)  .
\]

\textbf{Claim 6}

We take $\rho\in\left(  1,\infty\right)  \smallsetminus\left\{  2\right\}  $.
By Claims 2, 4,
\begin{multline*}
C_{\phi\left(  \boldsymbol{h}\right)  \phi\left(  \boldsymbol{h}\right)
}^{\rho}\left(  x,z,t_{1},s\right)  G_{\widetilde{\boldsymbol{h}%
}\boldsymbol{h}}\left(  z,y,s,t_{2}\right)  =\\
\frac{p^{-l\rho}}{\Gamma_{2}\left(  \rho\right)  }%
{\displaystyle\sum\limits_{I\in G_{l}}}
\phi\left(  \boldsymbol{h}_{I}\left(  s\right)  \right)  \phi\left(
\boldsymbol{h}_{I}\left(  t_{1}\right)  \right)  e^{\gamma\left(
s-t_{2}\right)  }\Theta\left(  s-t_{2}\right)  \delta\left(  z-y\right)
\times\\
\left\{
{\displaystyle\int\limits_{\mathbb{Z}_{p}}}
\left[  \max\left\{  p^{l}\left\vert z-x\right\vert _{p},\left\vert
w\right\vert _{p}\right\}  \right]  ^{\rho-2}dw-p^{-2l+\rho-2}\right\}  +\\
\frac{p^{-2l}}{\Gamma_{2}\left(  \rho\right)  }e^{\gamma\left(  s-t_{2}%
\right)  }\Theta\left(  s-t_{2}\right)  \delta\left(  z-y\right)
{\displaystyle\sum\limits_{\substack{I,J\in G_{l}\\I\neq J}}}
\phi\left(  \boldsymbol{h}_{I}\left(  s\right)  \right)  \phi\left(
\boldsymbol{h}_{J}\left(  t_{1}\right)  \right)  \times\\
\left[  \max\left\{  \left\vert z-x\right\vert _{p},\left\vert I-J\right\vert
_{p}\right\}  \right]  ^{\rho-2}.
\end{multline*}

Then%
\begin{gather*}%
{\displaystyle\int\limits_{\mathbb{Z}_{p}}}
{\displaystyle\int\limits_{\mathbb{R}}}
C_{\phi\left(  \boldsymbol{h}\right)  \phi\left(  \boldsymbol{h}\right)
}^{\rho}\left(  x,z,t_{1},s\right)  G_{\widetilde{\boldsymbol{h}%
}\boldsymbol{h}}\left(  z,y,s,t_{2}\right)  dsdz=\\
\frac{p^{-l\rho}}{\Gamma_{2}\left(  \rho\right)  }\left\{
{\displaystyle\int\limits_{\mathbb{Z}_{p}}}
\left[  \max\left\{  p^{l}\left\vert y-x\right\vert _{p},\left\vert
w\right\vert _{p}\right\}  \right]  ^{\rho-2}dw-p^{-2l+\rho-2}\right\}
\times\\%
{\displaystyle\sum\limits_{I\in G_{l}}}
\phi\left(  \boldsymbol{h}_{I}\left(  t_{1}\right)  \right)
{\displaystyle\int\nolimits_{t_{2}}^{\infty}}
\phi\left(  \boldsymbol{h}_{I}\left(  s\right)  \right)  e^{\gamma\left(
s-t_{2}\right)  }ds+\\
\frac{p^{-2l}}{\Gamma_{2}\left(  \rho\right)  }%
{\displaystyle\sum\limits_{\substack{I,J\in G_{l}\\I\neq J}}}
\phi\left(  \boldsymbol{h}_{J}\left(  t_{1}\right)  \right)  \left(
{\displaystyle\int\nolimits_{t_{2}}^{\infty}}
\phi\left(  \boldsymbol{h}_{I}\left(  s\right)  \right)  e^{\gamma\left(
s-t_{2}\right)  }ds\right)  \left[  \max\left\{  \left\vert y-x\right\vert
_{p},\left\vert I-J\right\vert _{p}\right\}  \right]  ^{\rho-2}.
\end{gather*}

We set
\begin{align*}
A(t_{1},t_{2};\phi,\boldsymbol{h},\gamma)  &  :=%
{\displaystyle\sum\limits_{I\in G_{l}}}
\phi\left(  \boldsymbol{h}_{I}\left(  t_{1}\right)  \right)
{\displaystyle\int\nolimits_{t_{2}}^{\infty}}
\phi\left(  \boldsymbol{h}_{I}\left(  s\right)  \right)  e^{\gamma\left(
s-t_{2}\right)  }ds,\\
B_{I,J}(t_{1},t_{2};\phi,\boldsymbol{h},\gamma)  &  :=\phi\left(
\boldsymbol{h}_{J}\left(  t_{1}\right)  \right)
{\displaystyle\int\nolimits_{t_{2}}^{\infty}}
\phi\left(  \boldsymbol{h}_{I}\left(  s\right)  \right)  e^{\gamma\left(
s-t_{2}\right)  }ds,
\end{align*}
then%
\begin{gather*}%
{\displaystyle\int\limits_{\mathbb{Z}_{p}}}
{\displaystyle\int\limits_{\mathbb{R}}}
C_{\phi\left(  \boldsymbol{h}\right)  \phi\left(  \boldsymbol{h}\right)
}^{\rho}\left(  x,z,t_{1},s\right)  G_{\widetilde{\boldsymbol{h}%
}\boldsymbol{h}}\left(  z,y,s,t_{2}\right)  dsdz=\\
\frac{p^{-l\rho}A(t_{1},t_{2};\phi,\boldsymbol{h},\gamma)}{\Gamma_{2}\left(
\rho\right)  }\left\{
{\displaystyle\int\limits_{\mathbb{Z}_{p}}}
\left[  \max\left\{  p^{l}\left\vert y-x\right\vert _{p},\left\vert
w\right\vert _{p}\right\}  \right]  ^{\rho-2}dw-p^{-2l+\rho-2}\right\}  +\\
\frac{p^{-2l}}{\Gamma_{2}\left(  \rho\right)  }%
{\displaystyle\sum\limits_{\substack{I,J\in G_{l}\\I\neq J}}}
B_{I,J}(t_{1},t_{2};\phi,\boldsymbol{h},\gamma)\left[  \max\left\{  \left\vert
y-x\right\vert _{p},\left\vert I-J\right\vert _{p}\right\}  \right]  ^{\rho-2}%
\end{gather*}

\subsection{Computation of $G_{\boldsymbol{hh}}^{\rho}\left(  x,y,\tau\right)
$, $\rho\neq2$}

To solve the second equation in (\ref{Equation_G}), we assume invariance
translation, so we take $t_{2}=0$, $\tau=t_{1}$, and $G_{\boldsymbol{hh}%
}^{\rho}\left(  x,y,\tau\right)  :=G_{\boldsymbol{hh}}^{\rho}\left(
x,y,-\tau,0\right)  $, and assume that $G_{\boldsymbol{hh}}^{\rho}\left(
x,y,\tau\right)  =G_{\boldsymbol{hh}}^{\rho}\left(  x,y,-\tau\right)  $. We
also set $G_{\widetilde{\boldsymbol{h}}\boldsymbol{h}}\left(  z,y,\tau\right)
:=G_{\widetilde{\boldsymbol{h}}\boldsymbol{h}}\left(  z,y,s,0\right)  $,
$G_{\widetilde{\boldsymbol{h}}\boldsymbol{h}}\left(  x,y,\tau\right)
:=G_{\widetilde{\boldsymbol{h}}\boldsymbol{h}}\left(  x,y,\tau,0\right)  $.
Now, the mentioned equation becomes%
\begin{gather*}
-\left(  \partial_{\tau}+\gamma\right)  G_{\boldsymbol{hh}}^{\rho}\left(
x,y,\tau\right)  +%
{\displaystyle\int\limits_{\mathbb{Z}_{p}}}
{\displaystyle\int\limits_{\mathbb{R}}}
C_{\phi\left(  \boldsymbol{h}\right)  \phi\left(  \boldsymbol{h}\right)
}^{\rho}\left(  x,z,\tau,s\right)  G_{\widetilde{\boldsymbol{h}}%
\boldsymbol{h}}\left(  z,y,s\right)  dsdz\\
\\
+\sigma^{2}G_{\widetilde{\boldsymbol{h}}\boldsymbol{h}}\left(  x,y,\tau
\right)  =0,
\end{gather*}

i.e.,%
\begin{gather*}
\left(  \partial_{\tau}+\gamma\right)  G_{\boldsymbol{hh}}^{\rho}\left(
x,y,\tau\right)  =\\
\frac{p^{-l\rho}A(\tau,0;\phi,\boldsymbol{h},\gamma)}{\Gamma_{2}\left(
\rho\right)  }\left\{
{\displaystyle\int\limits_{\mathbb{Z}_{p}}}
\left[  \max\left\{  p^{l}\left\vert y-x\right\vert _{p},\left\vert
w\right\vert _{p}\right\}  \right]  ^{\rho-2}dw-p^{-2l+\rho-2}\right\}  +\\
\frac{p^{-2l}}{\Gamma_{2}\left(  \rho\right)  }%
{\displaystyle\sum\limits_{\substack{I,J\in G_{l}\\I\neq J}}}
B_{I,J}(\tau,0;\phi,\boldsymbol{h},\gamma)\left[  \max\left\{  \left\vert
y-x\right\vert _{p},\left\vert I-J\right\vert _{p}\right\}  \right]  ^{\rho
-2}\\
-\sigma^{2}e^{\gamma\tau}\Theta\left(  \tau\right)  \delta\left(  x-y\right)
.
\end{gather*}

We set $\Delta\left(  \tau\right)  =e^{-\gamma\tau}\Theta\left(  \tau\right)
$, where $\Theta\left(  \tau\right)  $ is the step function as before. Then,
the solution of the equation%
\[
\left(  \partial_{\tau}+\gamma\right)  U\left(  \tau\right)  =F\left(
\tau\right)
\]
is $U\left(  \tau\right)  =\Delta\left(  \tau\right)  \ast F\left(
\tau\right)  $.

Furthermore,%
\begin{align*}
e^{\gamma\tau}\Theta\left(  \tau\right)  \ast\Delta\left(  \tau\right)   &
=e^{\gamma\tau}\Theta\left(  \tau\right)  \ast e^{-\gamma\tau}\Theta\left(
\tau\right)  =%
{\displaystyle\int\limits_{\mathbb{R}}}
e^{\gamma\left(  \tau-s\right)  }\Theta\left(  \tau-s\right)  e^{-\gamma
s}\Theta\left(  s\right)  ds\\
&  =e^{\gamma\tau}%
{\displaystyle\int\nolimits_{0}^{\tau}}
e^{-2\gamma s}ds=\frac{e^{\gamma\tau}}{2\gamma}\left(  1-e^{-2\gamma\tau
}\right)  .
\end{align*}

\textbf{Claim 7}

If $\rho\in\left(  1,\infty\right)  \smallsetminus\left\{  2\right\}  $,%
\begin{gather}
G_{\boldsymbol{hh}}^{\rho}\left(  x,y,\tau\right)  =\frac{p^{-l\rho}%
A(\tau,0;\phi,\boldsymbol{h},\gamma)\ast\Delta\left(  \tau\right)  }%
{\Gamma_{2}\left(  \rho\right)  }\times\nonumber\\
\left\{
{\displaystyle\int\limits_{\mathbb{Z}_{p}}}
\left[  \max\left\{  p^{l}\left\vert y-x\right\vert _{p},\left\vert
w\right\vert _{p}\right\}  \right]  ^{\rho-2}dw-p^{-2l+\rho-2}\right\}
+\nonumber\\
\frac{p^{-2l}}{\Gamma_{2}\left(  \rho\right)  }%
{\displaystyle\sum\limits_{\substack{I,J\in G_{l}\\I\neq J}}}
\left[  \max\left\{  \left\vert y-x\right\vert _{p},\left\vert I-J\right\vert
_{p}\right\}  \right]  ^{\rho-2}B_{I,J}(\tau,0;\phi,\boldsymbol{h},\gamma
)\ast\Delta\left(  \tau\right) \nonumber\\
-\sigma^{2}\frac{e^{\gamma\tau}}{2\gamma}\left(  1-e^{-2\gamma\tau}\right)
\delta\left(  x-y\right)  . \label{Covar_G_alpha}%
\end{gather}
for $\boldsymbol{h\in}C^{\infty}(\mathbb{R})%
{\textstyle\bigotimes\nolimits_{\text{alg}}}
\mathcal{D}_{l}\left(  \mathbb{Z}_{p}\right)  $.

\subsection{Computation of $G_{\boldsymbol{hh}}^{2}\left(  x,y,\tau\right)  $}%

\begin{gather*}%
{\displaystyle\int\limits_{\mathbb{Z}_{p}}}
{\displaystyle\int\limits_{\mathbb{R}}}
C_{\phi\left(  \boldsymbol{h}\right)  \phi\left(  \boldsymbol{h}\right)  }%
^{2}\left(  x,z,t_{1},s\right)  G_{\widetilde{\boldsymbol{h}}\boldsymbol{h}%
}\left(  z,y,s,t_{2}\right)  dsdz=\\
=%
{\displaystyle\sum\limits_{I\in G_{l}}}
\phi\left(  \boldsymbol{h}_{I}\left(  t_{1}\right)  \right)
{\displaystyle\int\nolimits_{t_{2}}^{\infty}}
\phi\left(  \boldsymbol{h}_{I}\left(  s\right)  \right)  e^{\gamma\left(
s-t_{2}\right)  }ds\times\\
\left\{  \left(  1-p^{-2}\right)  \left(  l+1\right)  p^{-2l}+\frac{\left(
1-p^{-2}\right)  }{\ln p}%
{\displaystyle\int\limits_{\mathbb{Z}_{p}}}
\ln\left[  \max\left\{  p^{l}\left\vert y-x\right\vert _{p},\left\vert
w\right\vert _{p}\right\}  \right]  dw\right\}  +\\
\frac{p^{-2l}\left(  1-p^{-2}\right)  }{\ln p}%
{\displaystyle\sum\limits_{\substack{I,J\in G_{l}\\I\neq J}}}
\phi\left(  \boldsymbol{h}_{J}\left(  t_{1}\right)  \right)  \left(
{\displaystyle\int\nolimits_{t_{2}}^{\infty}}
\phi\left(  \boldsymbol{h}_{I}\left(  s\right)  \right)  e^{\gamma\left(
s-t_{2}\right)  }ds\right)  \times\\
\ln\left(  \frac{p}{\max\left\{  \left\vert y-x\right\vert _{p},\left\vert
I-J\right\vert _{p}\right\}  }\right)  .
\end{gather*}

We now set $A(t_{1},t_{2};\phi,\boldsymbol{h},\gamma)$, $B_{I,J}(t_{1}%
,t_{2};\phi,\boldsymbol{h},\gamma)$\ as before, then%
\begin{gather*}%
{\displaystyle\int\limits_{\mathbb{Z}_{p}}}
{\displaystyle\int\limits_{\mathbb{R}}}
C_{\phi\left(  \boldsymbol{h}\right)  \phi\left(  \boldsymbol{h}\right)  }%
^{2}\left(  x,z,t_{1},s\right)  G_{\widetilde{\boldsymbol{h}}\boldsymbol{h}%
}\left(  z,y,s,t_{2}\right)  dsdz=A(t_{1},t_{2};\phi,\boldsymbol{h}%
,\gamma)\times\\
\left\{  \left(  1-p^{-2}\right)  \left(  l+1\right)  p^{-2l}+\frac{\left(
1-p^{-2}\right)  }{\ln p}%
{\displaystyle\int\limits_{\mathbb{Z}_{p}}}
\ln\left[  \max\left\{  p^{l}\left\vert y-x\right\vert _{p},\left\vert
w\right\vert _{p}\right\}  \right]  dw\right\}  +\\
\frac{p^{-2l}\left(  1-p^{-2}\right)  }{\ln p}%
{\displaystyle\sum\limits_{\substack{I,J\in G_{l}\\I\neq J}}}
B_{I,J}(t_{1},t_{2};\phi,\boldsymbol{h},\gamma)\ln\left(  \frac{p}%
{\max\left\{  \left\vert y-x\right\vert _{p},\left\vert I-J\right\vert
_{p}\right\}  }\right)  .
\end{gather*}

To solve the second equation in (\ref{Equation_G}), for $\rho=2$, we assume
invariance translation, so we take $t_{2}=0$, $\tau=t_{1}$ as before; then,
the mentioned equation becomes%
\begin{gather*}
-\left(  \partial_{\tau}+\gamma\right)  G_{\boldsymbol{hh}}^{2}\left(
x,y,\tau\right)  +%
{\displaystyle\int\limits_{\mathbb{Z}_{p}}}
{\displaystyle\int\limits_{\mathbb{R}}}
C_{\phi\left(  \boldsymbol{h}\right)  \phi\left(  \boldsymbol{h}\right)  }%
^{2}\left(  x,z,\tau,s\right)  G_{\widetilde{\boldsymbol{h}}\boldsymbol{h}%
}\left(  z,y,s\right)  dsdz\\
\\
+\sigma^{2}G_{\widetilde{\boldsymbol{h}}\boldsymbol{h}}\left(  x,y,\tau
\right)  =0,
\end{gather*}

i.e.,%
\begin{gather*}
\left(  \partial_{\tau}+\gamma\right)  G_{\boldsymbol{hh}}^{2}\left(
x,y,\tau\right)  =A(\tau,0;\phi,\boldsymbol{h},\gamma)\times\\
\left\{  \left(  1-p^{-2}\right)  \left(  l+1\right)  p^{-2l}+\frac{\left(
1-p^{-2}\right)  }{\ln p}%
{\displaystyle\int\limits_{\mathbb{Z}_{p}}}
\ln\left[  \max\left\{  p^{l}\left\vert y-x\right\vert _{p},\left\vert
w\right\vert _{p}\right\}  \right]  dw\right\}  +\\
\frac{p^{-2l}\left(  1-p^{-2}\right)  }{\ln p}%
{\displaystyle\sum\limits_{\substack{I,J\in G_{l}\\I\neq J}}}
B_{I,J}(\tau,0;\phi,\boldsymbol{h},\gamma)\ln\left(  \frac{p}{\max\left\{
\left\vert y-x\right\vert _{p},\left\vert I-J\right\vert _{p}\right\}
}\right) \\
-\sigma^{2}e^{\gamma\tau}\Theta\left(  \tau\right)  \delta\left(  x-y\right)
.
\end{gather*}

By proceeding as in the previous section, we obtain the following result.

\textbf{Claim 8 }%

\begin{gather}
G_{\boldsymbol{hh}}^{2}\left(  x,y,\tau\right)  =\left(  A(\tau,0;\phi
,\boldsymbol{h},\gamma)\ast\Delta\left(  \tau\right)  \right)  \times
\nonumber\\
\left\{  \left(  1-p^{-2}\right)  \left(  l+1\right)  p^{-2l}+\frac{\left(
1-p^{-2}\right)  }{\ln p}%
{\displaystyle\int\limits_{\mathbb{Z}_{p}}}
\ln\left[  \max\left\{  p^{l}\left\vert y-x\right\vert _{p},\left\vert
w\right\vert _{p}\right\}  \right]  dw\right\}  +\nonumber\\
\frac{p^{-2l}\left(  1-p^{-2}\right)  }{\ln p}%
{\displaystyle\sum\limits_{\substack{I,J\in G_{l}\\I\neq J}}}
\left(  B_{I,J}(\tau,0;\phi,\boldsymbol{h},\gamma)\ast\Delta\left(
\tau\right)  \right)  \ln\left(  \frac{p}{\max\left\{  \left\vert
y-x\right\vert _{p},\left\vert I-J\right\vert _{p}\right\}  }\right)
\nonumber\\
-\sigma^{2}\frac{e^{\gamma\tau}}{2\gamma}\left(  1-e^{-2\gamma\tau}\right)
\delta\left(  x-y\right)  \label{Covar_G_2}%
\end{gather}
for $\boldsymbol{h\in}C^{\infty}(\mathbb{R})%
{\textstyle\bigotimes\nolimits_{\text{alg}}}
\mathcal{D}_{l}\left(  \mathbb{Z}_{p}\right)  $.

\subsection{Continuous phase transitions}

The partition function $\overline{{\LARGE Z}}_{M}\left(  \rho\right)  $,
$\rho\in\left(  1,\infty\right)  \smallsetminus\left\{  2\right\}  $,
corresponds to a random $p$-adic cellular neural network. The state of the
network $\boldsymbol{h}(x,t)\in L^{2}(\mathbb{Z}_{p}\times\mathbb{R})$ is a
realization of a generalized Gaussian process with covariance
$G_{\boldsymbol{hh}}^{\rho}(x,y,\tau)=\left\langle \boldsymbol{h}%
(x,t_{1}),\boldsymbol{h}(y,t_{2})\right\rangle _{\boldsymbol{h}}$, $\tau
=t_{1}-t_{2}$. We interpret $\boldsymbol{h}(x,t)$ as a spatiotemporal pattern
produced by the network. By construction, $\overline{{\LARGE Z}}_{M}\left(
\rho\right)  $ has a pole at $\rho=2$, i.e. $\overline{{\LARGE Z}}_{M}\left(
\rho\right)  =\overline{{\LARGE Z}}_{M}\left(  \frac{1}{1-p^{\rho-2}}\right)
$, so
\[
\left.  \frac{d^{k}}{d\rho^{k}}\ln\overline{{\LARGE Z}}_{M}\left(
\rho\right)  \right\vert _{\rho=2}=\infty\text{, for }k=1,2,
\]
consequently, the network has a continuous phase transition at $\rho=2$. We
interpret $\rho$ as a control parameter, and $G_{\boldsymbol{hh}}^{\rho
}(x,y,\tau)$, $\rho\in\left(  1,\infty\right)  \smallsetminus\left\{
2\right\}  $ as the order parameter. Then, there are two phases:
$G_{\boldsymbol{hh}}^{\rho}(x,y,\tau)$, $\rho\in\left(  1,\infty\right)  $,
and $G_{\boldsymbol{hh}}^{2}(x,y,\tau)$. In our analysis $\tau$\ is fixed; it
is based on the explicit expressions for $G_{\boldsymbol{hh}}^{\rho}%
(x,y,\tau)$, $\rho\in\left(  1,\infty\right)  \smallsetminus\left\{
2\right\}  $, and $G_{\boldsymbol{hh}}^{2}(x,y,\tau)$ derived in the previous
sections. The functions covariance functions are radial, i.e., they depend on
$\left\vert x-y\right\vert _{p}$ which is the distance between neurons, up to
a term of the form $c\delta\left(  x-y\right)  $. If $A(\tau,0;\phi
,\boldsymbol{h},l)\ast\Delta\left(  \tau\right)  =B_{I,J}(\tau,0;\phi
,\boldsymbol{h},l)\ast\Delta\left(  \tau\right)  =0$, which happens when
$\boldsymbol{h}(x,t)=0$, then $G_{\boldsymbol{hh}}^{\rho}(x,y,\tau
)=-\sigma^{2}\frac{e^{\gamma\tau}}{2\gamma}\left(  1-e^{-2\gamma\tau}\right)
\delta\left(  x-y\right)  $, see (\ref{Covar_G_alpha})-(\ref{Covar_G_2}). A
covariance function of the form $c\delta\left(  x-y\right)  $ corresponds a
white noise, which we interpret as the `baseline state/pattern' (or a
background noise) of the network. We argue, $G_{\boldsymbol{hh}}^{\rho
}(x,y,\tau)$ provides a description of the neuronal connections, and so of the
network organization, for a fixed $\tau$.

\subsection{The behavior of $G_{\boldsymbol{hh}}^{\rho}(x,y,\tau)$, for
$\left\vert x-y\right\vert _{p}\approx0$}

We now study the behavior of $G_{\boldsymbol{hh}}^{\rho}(x,y,\tau)$, $\rho
\in\left(  1,2\right)  $, see (\ref{Covar_G_alpha}), for $\left\vert
x-y\right\vert _{p}\approx0$. Assuming that $A(\tau,0;\phi,\boldsymbol{h}%
,\gamma)\ast\Delta\left(  \tau\right)  \neq0$, we have%
\begin{gather*}
T_{1}\left(  \left\vert y-x\right\vert _{p},\rho\right)  :=\\
\frac{p^{-l\rho}A(\tau,0;\phi,\boldsymbol{h},\gamma)\ast\Delta\left(
\tau\right)  }{\Gamma_{2}\left(  \rho\right)  }\left\{
{\displaystyle\int\limits_{\mathbb{Z}_{p}}}
\left[  \max\left\{  p^{l}\left\vert y-x\right\vert _{p},\left\vert
w\right\vert _{p}\right\}  \right]  ^{\rho-2}dw-p^{-2l+\rho-2}\right\} \\
\approx\frac{p^{-l\rho}A(\tau,0;\phi,\boldsymbol{h},\gamma)\ast\Delta\left(
\tau\right)  }{\Gamma_{2}\left(  \rho\right)  }\left\{
{\displaystyle\int\limits_{\mathbb{Z}_{p}}}
\left\vert w\right\vert _{p}^{\rho-2}dw-p^{-2l+\rho-2}\right\} \\
=\frac{p^{-l\rho}A(\tau,0;\phi,\boldsymbol{h},\gamma)\ast\Delta\left(
\tau\right)  }{\Gamma_{2}\left(  \rho\right)  }\left\{  \frac{1-p^{-1}%
}{1-p^{1-\rho}}-p^{-2l+\rho-2}\right\} \\
=p^{-l\rho}\left\{  A(\tau,0;\phi,\boldsymbol{h},\gamma)\ast\Delta\left(
\tau\right)  \right\}  \frac{1-p^{-\rho}}{1-p^{\rho-2}}\left\{  \frac
{1-p^{-1}}{1-p^{1-\rho}}-p^{-2l+\rho-2}\right\}  .
\end{gather*}
For further details about the calculation of the above integral, the reader
may consult \cite[Chapter 1]{Zuniga-Textbook}.

Now, assuming that not all the $B_{I,J}(\tau,0;\phi,\boldsymbol{h},\gamma
)\ast\Delta\left(  \tau\right)  $ are zero, for $\left\vert x-y\right\vert
_{p}\approx0$,%

\begin{gather*}
T_{2}\left(  \left\vert y-x\right\vert _{p},\rho\right)  :=\\
\frac{p^{-2l}}{\Gamma_{2}\left(  \rho\right)  }%
{\displaystyle\sum\limits_{\substack{I,J\in G_{l}\\I\neq J}}}
\left[  \max\left\{  \left\vert y-x\right\vert _{p},\left\vert I-J\right\vert
_{p}\right\}  \right]  ^{\rho-2}\left(  B_{I,J}(\tau,0;\phi,\boldsymbol{h}%
,\gamma)\ast\Delta\left(  \tau\right)  \right) \\
\approx\frac{p^{-2l}}{\Gamma_{2}\left(  \rho\right)  }%
{\displaystyle\sum\limits_{\substack{I,J\in G_{l}\\I\neq J}}}
B_{I,J}(\tau,0;\phi,\boldsymbol{h},\gamma)\ast\Delta\left(  \tau\right)
\left\vert I-J\right\vert _{p}^{\rho-2}.
\end{gather*}
\ In conclusion, $G_{\boldsymbol{hh}}^{\rho}(x,y,\tau)+\sigma^{2}%
\frac{e^{\gamma\tau}}{2\gamma}\left(  1-e^{-2\gamma\tau}\right)  \delta\left(
x-y\right)  $ does not dependent of $x,y$ for $\left\vert y-x\right\vert
_{p}\approx0$, $\rho\in\left(  1,\infty\right)  \smallsetminus\left\{
2\right\}  $, and for $\boldsymbol{h\in}C^{\infty}(\mathbb{R})%
{\textstyle\bigotimes\nolimits_{\text{alg}}}
\mathcal{D}_{l}\left(  \mathbb{Z}_{p}\right)  $, which means that the
short-range neural interactions do not affect the phase of the network
described by $G_{\boldsymbol{hh}}^{\rho}(x,y,\tau)$.

\subsection{The behavior of $G_{\boldsymbol{hh}}^{\rho}(x,y,\tau)$, for
$\left\vert x-y\right\vert _{p}>p^{-l}$}

The condition $\left\vert x-y\right\vert _{p}>p^{-l}$ is equivalent to
$p^{l}\left\vert y-x\right\vert _{p}>1$, and since $\left\vert w\right\vert
_{p}\leq1$ for $w\in\mathbb{Z}_{p}$, we have%
\[
T_{1}\left(  \left\vert y-x\right\vert _{p},\rho\right)  =\frac{p^{-l\rho
}A(\tau,0;\phi,\boldsymbol{h},\gamma)\ast\Delta\left(  \tau\right)  }%
{\Gamma_{2}\left(  \rho\right)  }\left\{  p^{-2l}\left\vert y-x\right\vert
_{p}^{\rho-2}-p^{-l\left(  \rho-2\right)  +\rho+2}\right\}  .
\]

Now, the set $p^{-l}<\left\vert y-x\right\vert _{p}\leq1$ is a compact subset,
and the function $f(\left\vert y-x\right\vert _{p}):=\frac{p^{-l\rho}}%
{\Gamma_{2}\left(  \rho\right)  }\left(  p^{-2l}\left\vert y-x\right\vert
_{p}^{\rho-2}-p^{-l\left(  \rho-2\right)  +\rho+2}\right)  $, $p^{-l}%
<\left\vert y-x\right\vert _{p}\leq1$, $\rho\in\left(  1,\infty\right)
\smallsetminus\left\{  2\right\}  $, is continuous on it; then, there are
constants $C_{0}(\rho)$, $C_{1}(\rho)$ such that%
\[
C_{0}(\rho)\leq\frac{p^{-l\rho}}{\Gamma_{2}\left(  \rho\right)  }\left(
p^{-2l}\left\vert y-x\right\vert _{p}^{\rho-2}-p^{-l\left(  \rho-2\right)
+\rho+2}\right)  \leq C_{1}(\rho).
\]
In conclusion, for $\left\vert x-y\right\vert _{p}>p^{-l}$ and $\rho\in\left(
1,\infty\right)  \smallsetminus\left\{  2\right\}  $,%
\begin{equation}
C_{0}^{\prime}(\rho,\tau,0;\phi,l)\leq T_{1}\left(  \left\vert y-x\right\vert
_{p},\rho\right)  \leq C_{1}^{\prime}(\rho,\tau,0;\phi,l). \label{Estimation}%
\end{equation}
On the other hand, the function
\[
g(\left\vert y-x\right\vert _{p}):=\frac{1}{\Gamma_{2}\left(  \rho\right)
}\left[  \max\left\{  \left\vert y-x\right\vert _{p},\left\vert I-J\right\vert
_{p}\right\}  \right]  ^{\rho-2},
\]
for $\left(  x,y\right)  \in\mathbb{Z}_{p}\times\mathbb{Z}_{p}$ is continuous,
since $\mathbb{Z}_{p}\times\mathbb{Z}_{p}$ is compact, then, estimation
(\ref{Estimation}) is also valid for $T_{2}\left(  \left\vert y-x\right\vert
_{p},\rho\right)  $. In conclusion,
\begin{multline*}
2C_{0}^{\prime}(\rho,\tau,0;\phi,l)\leq\\
G_{\boldsymbol{hh}}^{\rho}(x,y,\tau)+\sigma^{2}\frac{e^{\gamma\tau}}{2\gamma
}\left(  1-e^{-2\gamma\tau}\right)  \delta\left(  x-y\right)  \leq\\
2C_{1}^{\prime}(\rho,\tau,0;\phi,l),
\end{multline*}
for $\left\vert x-y\right\vert _{p}>p^{-l}$, $\rho\in\left(  1,\infty\right)
\smallsetminus\left\{  2\right\}  $, and $\boldsymbol{h\in}C^{\infty
}(\mathbb{R})%
{\textstyle\bigotimes\nolimits_{\text{alg}}}
\mathcal{D}_{l}\left(  \mathbb{Z}_{p}\right)  $; then, the long-range
connections between neurons do no affect $G^{\rho}(x,y,\tau)$, in addition,
they are masked for the background noise for $\tau$\ large.

\subsection{The behavior of $G_{\boldsymbol{hh}}^{2}(x,y,\tau)$, for
$\left\vert x-y\right\vert _{p}\approx0$}

For $\left\vert x-y\right\vert _{p}\approx0$, we have
\begin{gather*}
T_{1}^{\prime}\left(  \left\vert y-x\right\vert _{p}\right)  :=A(\tau
,0;\phi,\boldsymbol{h},\gamma)\ast\Delta\left(  \tau\right)  \times\\
\left\{  \left(  1-p^{-2}\right)  \left(  l+1\right)  p^{-2l}+\frac{\left(
1-p^{-2}\right)  }{\ln p}%
{\displaystyle\int\limits_{\mathbb{Z}_{p}}}
\ln\left[  \max\left\{  p^{l}\left\vert y-x\right\vert _{p},\left\vert
w\right\vert _{p}\right\}  \right]  dw\right\} \\
\approx\left(  A(\tau,0;\phi,\boldsymbol{h},\gamma)\ast\Delta\left(
\tau\right)  \right)  \left\{  \left(  1-p^{-2}\right)  \left(  l+1\right)
p^{-2l}+\frac{\left(  1-p^{-2}\right)  }{\ln p}%
{\displaystyle\int\limits_{\mathbb{Z}_{p}}}
\ln\left(  \left\vert w\right\vert _{p}\right)  dw\right\}  ,
\end{gather*}
where $\int_{\mathbb{Z}_{p}}\ln\left(  \left\vert w\right\vert _{p}\right)
dw$ exits.

Now, since $\left\vert I-J\right\vert _{p}>p^{-i}$ for $I\neq J$,%
\begin{gather*}
T_{2}^{\prime}\left(  \left\vert y-x\right\vert _{p}\right)  :=\\
\frac{p^{-2l}\left(  1-p^{-2}\right)  }{\ln p}%
{\displaystyle\sum\limits_{\substack{I,J\in G_{l}\\I\neq J}}}
\left(  B_{I,J}(\tau,0;\phi,\boldsymbol{h},\gamma)\ast\Delta\left(
\tau\right)  \right)  \ln\left(  \frac{p}{\max\left\{  \left\vert
y-x\right\vert _{p},\left\vert I-J\right\vert _{p}\right\}  }\right) \\
\approx\frac{p^{-2l}\left(  1-p^{-2}\right)  }{\ln p}%
{\displaystyle\sum\limits_{\substack{I,J\in G_{l}\\I\neq J}}}
\left(  B_{I,J}(\tau,0;\phi,\boldsymbol{h},\gamma)\ast\Delta\left(
\tau\right)  \right)  \ln\left(  \frac{p}{\left\vert I-J\right\vert _{p}%
}\right)  .
\end{gather*}
In conclusion, for $\left\vert x-y\right\vert _{p}\approx0$,
$G_{\boldsymbol{hh}}^{2}\left(  x,y,\tau\right)  +\sigma^{2}\frac
{e^{\gamma\tau}}{2\gamma}\left(  1-e^{-2\gamma\tau}\right)  \delta\left(
x-y\right)  $ is a function independent of $x,y$.

\subsection{The behavior of $G_{\boldsymbol{hh}}^{2}(x,y,\tau)$, for for
$\left\vert x-y\right\vert _{p}>p^{-l}$}

Since $p^{l}\left\vert y-x\right\vert _{p}>1$ and $\left\vert w\right\vert
_{p}\leq1$,%
\begin{align*}
T_{1}^{\prime}\left(  \left\vert y-x\right\vert _{p}\right)   &
=A(\tau,0;\phi,\boldsymbol{h},\gamma)\ast\Delta\left(  \tau\right)  \times\\
&  \left\{  \left(  1-p^{-2}\right)  \left(  l+1\right)  p^{-2l}+\frac{\left(
1-p^{-2}\right)  }{\ln p}\ln\left[  p^{l}\left\vert y-x\right\vert
_{p}\right]  \right\}
\end{align*}
is a non-bounded increasing function of $p^{l}\left\vert y-x\right\vert _{p}$.

Now, since $\left\vert y-x\right\vert _{p},\left\vert I-J\right\vert
_{p}>p^{-l}$, \ the continuity of $\ln\left(  \frac{p}{\max\left\{  \left\vert
y-x\right\vert _{p},\left\vert I-J\right\vert _{p}\right\}  }\right)  $ on
$\mathbb{Z}_{p}\times\mathbb{Z}_{p}$ implies that%
\[
C_{0}^{\prime\prime}(\tau,0;\phi,l)\leq T_{2}^{\prime}\left(  \left\vert
y-x\right\vert _{p}\right)  \leq C_{1}^{\prime\prime}(\tau,0;\phi,l).
\]
In conclusion%
\begin{multline*}
C_{0}^{\prime\prime}(\tau,0;\phi,l)+T_{1}^{\prime}\left(  \left\vert
y-x\right\vert _{p}\right)  \leq\\
G_{\boldsymbol{hh}}^{2}(x,y,\tau)+\sigma^{2}\frac{e^{\gamma\tau}}{2\gamma
}\left(  1-e^{-2\gamma\tau}\right)  \delta\left(  x-y\right)  \leq\\
C_{1}^{\prime\prime}(\tau,0;\phi,l)+T_{1}^{\prime}\left(  \left\vert
y-x\right\vert _{p}\right)  ,
\end{multline*}
for $\boldsymbol{h\in}C^{\infty}(\mathbb{R})%
{\textstyle\bigotimes\nolimits_{\text{alg}}}
\mathcal{D}_{l}\left(  \mathbb{Z}_{p}\right)  $. In conclusion,
$G_{\boldsymbol{hh}}^{2}(x,y,\tau)$ depends only on long-range interactions.
For $\tau$\ fixed these interactions cannot be masked by the background noise.

\subsection{Discussion}

The phase described by $G_{\boldsymbol{hh}}^{\rho}(x,y,\tau)$, $\rho\in\left(
1,\infty\right)  \smallsetminus\left\{  2\right\}  $, is disordered in the
sense that short-range or large-range neuronal connections do not control it.
In addition, for $\tau$ large, any of these interactions is masked by the
background noise. On the other hand, the phase described by
$G_{\boldsymbol{hh}}^{2}(x,y,\tau)$ is ordered in the sense that long-range
interactions control it; in addition, for $\tau$ fixed, these interactions
cannot be masked by the background noise. In our view, the study of the
double-copy model as well as the largest Lyapunov exponent for the toy model
is an interesting problem.

\section{\label{Appen_A}Appendix A}

\subsection{$\partial_{t}$ as pseudo-derivative}

We denote the space of Schwartz functions on $\mathbb{R}$ as $\mathcal{S}%
(\mathbb{R})$, which is the\ set of infinitely differentiable functions
$\varphi\left(  x\right)  $ on $\mathbb{R}$ such that
\[
\left\Vert \varphi\right\Vert _{n,m}:=\sup_{t\in\mathbb{R}}\left\vert
x^{n}\frac{d^{m}\varphi\left(  t\right)  }{dt^{m}}\right\vert <\infty
\]
for all $n,m\in\mathbb{N}=\left\{  0,1,\ldots,k,k+1,\ldots\right\}  $. The
vector space $\mathcal{S}(\mathbb{R})$ with the natural topology given by the
seminorms $\left\Vert \cdot\right\Vert _{n,m}$ is a separable metric space.

Suppose that $\varphi\in\mathcal{S}(\mathbb{R})$. The Fourier transform of
$\varphi$\ is the function $\widehat{\varphi}$ given by%
\[
\widehat{\varphi}\left(  \xi\right)  =\frac{1}{\sqrt{2\pi}}%
{\displaystyle\int\limits_{\mathbb{R}}}
e^{-i\xi t}\varphi\left(  t\right)  dt.
\]
We also use the notation $\mathcal{F}\left(  \varphi\right)  $, $\mathcal{F}%
_{t\rightarrow\xi}$ $\left(  \varphi\right)  $ for the Fourier transform. The
inverse Fourier transform of $\widehat{\varphi}$ is given by
\[
\varphi\left(  t\right)  =\mathcal{F}_{\xi\rightarrow t}^{-1}\left(
\varphi\right)  =\frac{1}{\sqrt{2\pi}}%
{\displaystyle\int\limits_{\mathbb{R}}}
e^{i\xi t}\widehat{\varphi}\left(  \xi\right)  d\xi\text{.}%
\]
The Fourier transform is a linear bicontinuous bijection from $\mathcal{S}%
(\mathbb{R})$\ onto $\mathcal{S}(\mathbb{R})$. Furthermore, for $\varphi
\in\mathcal{S}(\mathbb{R})$,%
\begin{equation}
\frac{d\varphi\left(  t\right)  }{dt}:=\partial_{t}\varphi\left(  t\right)
=\mathcal{F}_{\xi\rightarrow t}^{-1}\left(  \sqrt{-1}\xi\mathcal{F}%
_{t\rightarrow\xi}\left(  \varphi\right)  \right)  . \label{Eq_Derivative}%
\end{equation}
The topological dual of $\mathcal{S}(\mathbb{R})$, denoted as $\mathcal{S}%
^{\prime}(\mathbb{R})$, is the space of tempered distributions. $\mathcal{S}%
^{\prime}(\mathbb{R})$ is a separable space and $\mathcal{S}(\mathbb{R})$ is a
dense subset of $\mathcal{S}^{\prime}(\mathbb{R})$, \cite[Chapter
V]{Reed-Simon-I}. The Fourier transform of a distribution $T\in\mathcal{S}%
^{\prime}(\mathbb{R})$, denoted as $\widehat{T}$ or $\mathcal{F}\left(
T\right)  $, is defined as $\widehat{T}\left(  \varphi\right)  =T(\widehat
{\varphi})$, where $\widehat{\varphi}$ is the Fourier transform of $\varphi
\in\mathcal{S}(\mathbb{R})$.

A distribution $T\in\mathcal{S}^{\prime}(\mathbb{R})$ is said to be in the
Sobolev space $W_{1}(\mathbb{R})$, if $\widehat{T}$ is a measurable function
satisfying
\[
\left\Vert T\right\Vert _{W_{1}(\mathbb{R})}^{2}=%
{\displaystyle\int\limits_{\mathbb{R}}}
\left(  1+\left\vert \xi\right\vert ^{2}\right)  \left\vert \widehat{T}\left(
\xi\right)  \right\vert ^{2}d\xi<\infty.
\]
$W_{1}(\mathbb{R})$ is a real Hilbert space under the norm $\left\Vert
\cdot\right\Vert _{W_{1}}$. Since the Fourier transform is a unitary map of
$L^{2}(\mathbb{R})$\ into $L^{2}(\mathbb{R})$, and $L^{2}(\mathbb{R}%
)\subset\mathcal{S}^{\prime}(\mathbb{R})$, we have%
\[
W_{1}(\mathbb{R})=\left\{  G\in L^{2}(\mathbb{R});\left(  \sqrt{1+\left\vert
\xi\right\vert ^{2}}\right)  \widehat{T}\left(  \xi\right)  \in L^{2}%
(\mathbb{R})\right\}
\]
The space $W_{1}(\mathbb{R})$ is continuously embedded in $L^{2}(\mathbb{R})$,
and since $\mathcal{S}(\mathbb{R})$ is dense in $L^{2}(\mathbb{R})$, then
$\mathcal{S}(\mathbb{R})$ is dense in $W_{1}(\mathbb{R})$; in addition, it is
a separable Hilbert space.

Set
\[
\partial_{t}G\left(  t\right)  =\mathcal{F}_{\xi\rightarrow t}^{-1}\left(
\sqrt{-1}\xi\mathcal{F}_{t\rightarrow\xi}\left(  G\right)  \right)  ,\text{
for }G\in W_{1}(\mathbb{R}).
\]

\begin{lemma}
\label{Lemma_1} The mapping%
\begin{equation}%
\begin{array}
[c]{ccc}%
W_{1}(\mathbb{R}) & \rightarrow & L^{2}(\mathbb{R})\\
&  & \\
G & \rightarrow & \partial_{t}G
\end{array}
\label{Eq_Derivative_general}%
\end{equation}
is a well-defined linear bounded operator: $\left\Vert \partial_{t}%
G\right\Vert _{L^{2}(\mathbb{R})}\leq\left\Vert G\right\Vert _{W_{1}%
(\mathbb{R})}$. This operator is an extension of the operator defined in
(\ref{Eq_Derivative}).
\end{lemma}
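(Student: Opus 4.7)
The plan is to verify the three claims in sequence: well-definedness, the norm bound, and the extension property. The key input is Plancherel's theorem together with the elementary inequality $|\xi|^2 \leq 1 + |\xi|^2$.

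First, I would check that $\partial_t G$ is a well-defined element of $L^2(\mathbb{R})$ whenever $G \in W_1(\mathbb{R})$. By the definition of $W_1(\mathbb{R})$, the function $\sqrt{1+|\xi|^2}\,\widehat{G}(\xi)$ lies in $L^2(\mathbb{R})$; since $|\xi| \leq \sqrt{1+|\xi|^2}$ pointwise, the function $\xi\,\widehat{G}(\xi)$ also lies in $L^2(\mathbb{R})$, and hence so does $\sqrt{-1}\,\xi\,\mathcal{F}_{t\to\xi}(G)$. The inverse Fourier transform $\mathcal{F}_{\xi\to t}^{-1}$ is a unitary operator on $L^2(\mathbb{R})$, so $\partial_t G := \mathcal{F}_{\xi\to t}^{-1}(\sqrt{-1}\,\xi\,\mathcal{F}_{t\to\xi}(G)) \in L^2(\mathbb{R})$. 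Linearity of the map $G \mapsto \partial_t G$ is immediate from the linearity of the Fourier transform, its inverse, and multiplication by $\sqrt{-1}\,\xi$.

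Next, I would establish the bound. By Plancherel's theorem applied twice,
\begin{equation*}
\|\partial_t G\|_{L^2(\mathbb{R})}^2 = \bigl\|\mathcal{F}_{\xi\to t}^{-1}(\sqrt{-1}\,\xi\,\widehat{G})\bigr\|_{L^2(\mathbb{R})}^2 = \int_{\mathbb{R}} |\xi|^2\,|\widehat{G}(\xi)|^2\,d\xi \leq \int_{\mathbb{R}} (1+|\xi|^2)\,|\widehat{G}(\xi)|^2\,d\xi = \|G\|_{W_1(\mathbb{R})}^2,
\end{equation*}
which yields the desired estimate $\|\partial_t G\|_{L^2(\mathbb{R})} \leq \|G\|_{W_1(\mathbb{R})}$ and establishes boundedness of the operator.

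Finally, for the extension claim, if $\varphi \in \mathcal{S}(\mathbb{R})$, then $\varphi \in W_1(\mathbb{R})$ because $\widehat{\varphi} \in \mathcal{S}(\mathbb{R})$ decays faster than any polynomial, so $(1+|\xi|^2)|\widehat{\varphi}(\xi)|^2$ is integrable. The definition of $\partial_t G$ given in (\ref{Eq_Derivative_general}) then reduces literally to formula (\ref{Eq_Derivative}), showing that the operator in (\ref{Eq_Derivative_general}) restricted to $\mathcal{S}(\mathbb{R})$ coincides with the classical derivative on Schwartz functions. There is no real obstacle in this proof; the only mild subtlety is ensuring that real-valuedness is preserved (as flagged in Remark \ref{Nota1A}), which follows from the fact that if $G$ is real then $\widehat{G}(-\xi) = \overline{\widehat{G}(\xi)}$, and this Hermitian symmetry is preserved under multiplication by $\sqrt{-1}\,\xi$, so the inverse Fourier transform returns a real-valued function.
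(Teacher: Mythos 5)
Your proof is correct and rests on the same two ingredients as the paper's: Plancherel's theorem and the pointwise inequality $|\xi|^{2}\leq 1+|\xi|^{2}$. The only (minor) difference is that you carry out the computation directly for an arbitrary $G\in W_{1}(\mathbb{R})$, while the paper performs it on the dense subspace $\mathcal{S}(\mathbb{R})$ and then invokes density to conclude boundedness; your version is slightly more self-contained, and your explicit checks of the extension property and of real-valuedness fill in details the paper leaves implicit.
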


\begin{proof}
By using the Plancherel theorem in $\mathcal{S}(\mathbb{R})$,%
\[
\left\Vert \partial_{t}G\right\Vert _{L^{2}(\mathbb{R})}^{2}=\left\Vert
\widehat{\partial_{t}G}\right\Vert _{L^{2}(\mathbb{R})}^{2}=%
{\displaystyle\int\limits_{\mathbb{R}}}
\left\vert \xi\right\vert ^{2}\left\vert \widehat{G}\left(  \xi\right)
\right\vert ^{2}d\xi\leq\left\Vert G\right\Vert _{W_{1}(\mathbb{R})}^{2}.
\]
Since $\mathcal{S}(\mathbb{R})$ is dense in $W_{1}(\mathbb{R})$, $\partial
_{t}:W_{1}(\mathbb{R})\rightarrow L^{2}(\mathbb{R})$ is a bounded linear operator.
\end{proof}

\subsection{The space of fields}

Let $\left(  \Omega,\Sigma,d\mu\right)  $ be a measure space. We identify
$\Omega$ with the set of neurons of a stochastic continuous NN. A Borel
measurable function $f:\Omega\rightarrow\mathbb{R}$ belong to $L^{2}\left(
\Omega,d\mu\right)  =L^{2}\left(  \Omega\right)  $ if
\[
\left\Vert f\right\Vert _{L^{2}\left(  \Omega\right)  }^{2}:=%
{\displaystyle\int\limits_{\Omega}}
\left\vert f\left(  x\right)  \right\vert ^{2}d\mu\left(  x\right)  <\infty.
\]
$L^{2}\left(  \Omega\right)  $ \ is a real Hilbert space for the inner product%
\[
\left\langle f,g\right\rangle _{L^{2}\left(  \Omega\right)  }:=%
{\displaystyle\int\limits_{\Omega}}
f\left(  x\right)  g(x)d\mu\left(  x\right)  .
\]
We assume that $L^{2}\left(  \Omega\right)  $ is a separable Hilbert space.
Given two real separable Hilbert spaces, $\mathcal{H}_{1}$, $\mathcal{H}_{2}$,
we denote by $\mathcal{H}_{1}%
{\textstyle\bigotimes}
\mathcal{H}_{2}$ its tensor product, which is a separable Hilbert space, see,
e.g., \cite[Section II.4]{Reed-Simon-I}.

\begin{lemma}
\label{Lemma_2}(i) There exists a unitary transformation
\[
U:L^{2}\left(  \Omega,d\mu\right)
{\textstyle\bigotimes}
W_{1}(\mathbb{R})\rightarrow L^{2}\left(  \Omega,d\mu;W_{1}(\mathbb{R}%
)\right)
\]
such that
\[
U\left(  f%
{\textstyle\bigotimes}
g\right)  =fg\text{, for }f\in L^{2}\left(  \Omega,d\mu\right)  \text{, }g\in
W_{1}(\mathbb{R}),
\]
where
\[
L^{2}\left(  \Omega,d\mu;W_{1}(\mathbb{R})\right)  =\left\{  F:\Omega
\rightarrow W_{1}(\mathbb{R})\text{ measurable};\text{ }\left\Vert
F\right\Vert _{L^{2}\left(  \Omega,d\mu;W_{1}(\mathbb{R})\right)  }%
<\infty\right\}  ,
\]%
\[
\left\Vert F\right\Vert _{L^{2}\left(  \Omega,d\mu;W_{1}(\mathbb{R})\right)
}^{2}=%
{\displaystyle\int\limits_{\Omega}}
\left\Vert F(x,\cdot)\right\Vert _{W_{1}(\mathbb{R})}^{2}d\mu\left(  x\right)
.
\]

\noindent(ii) Let $\left\{  e_{n}\left(  x\right)  \right\}  _{n\in\mathbb{N}%
}$ be an orthonormal basis of $L^{2}\left(  \Omega,d\mu\right)  $ and
$\left\{  f_{m}\left(  t\right)  \right\}  _{m\in\mathbb{N}}$ be an
orthonormal basis of $W_{1}(\mathbb{R})$. Then $\left\{  e_{n}\left(
x\right)  f_{m}\left(  t\right)  \right\}  _{n,m\in\mathbb{N}}$ is an
orthonormal basis of $L^{2}\left(  \Omega,d\mu;W_{1}(\mathbb{R})\right)  $.
\end{lemma}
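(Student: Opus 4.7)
My plan is to establish part (i) by constructing $U$ on elementary tensors, verifying it is isometric on the algebraic tensor product, extending by continuity, and then proving surjectivity via a density argument for Bochner spaces. Part (ii) will then follow immediately from the standard fact that tensoring orthonormal bases produces an orthonormal basis of the Hilbert tensor product, together with the unitarity of $U$.

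For part (i), I would first define the bilinear map
\[
B : L^2(\Omega,d\mu) \times W_1(\mathbb{R}) \longrightarrow L^2(\Omega,d\mu;W_1(\mathbb{R})), \qquad B(f,g)(x) = f(x)\, g.
\]
The strong measurability of $B(f,g)$ is immediate since it is a scalar multiple of the fixed vector $g \in W_1(\mathbb{R})$ by the measurable scalar function $f$, and a one-line computation gives $\|B(f,g)\|_{L^2(\Omega;W_1)}^2 = \|f\|_{L^2}^2\, \|g\|_{W_1}^2$. By the universal property of the algebraic tensor product, $B$ factors uniquely through a linear map $U_0 : L^2(\Omega) \otimes_{\mathrm{alg}} W_1(\mathbb{R}) \to L^2(\Omega,d\mu;W_1(\mathbb{R}))$ satisfying $U_0(f \otimes g) = fg$. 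I would then verify on finite sums that $U_0$ preserves inner products: both $\langle f_i \otimes g_i,\, f_j \otimes g_j \rangle$ and $\langle f_i g_i,\, f_j g_j \rangle_{L^2(\Omega;W_1)}$ evaluate to $\langle f_i,f_j\rangle_{L^2}\, \langle g_i,g_j\rangle_{W_1}$, so they agree. Hence $U_0$ is isometric on the dense algebraic tensor product and extends uniquely to an isometry $U$ on the Hilbert tensor product.

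The main obstacle will be surjectivity of $U$. For this I will invoke the standard approximation theorem for the Bochner space $L^2(\Omega,d\mu;W_1(\mathbb{R}))$: since $W_1(\mathbb{R})$ is separable, finite sums of the form $\sum_{i=1}^{k} \mathbf{1}_{A_i}(x)\, v_i$, with $A_i \in \Sigma$ of finite $\mu$-measure and $v_i \in W_1(\mathbb{R})$, form a dense subset. Each such simple function is exactly $U_0\!\left(\sum_i \mathbf{1}_{A_i} \otimes v_i\right)$, and therefore lies in the image of $U$. Consequently the image of $U$ is dense; being an isometry between Hilbert spaces, its image is also closed, so $U$ is surjective and hence unitary. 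This is the one step that genuinely uses the separability of $W_1(\mathbb{R})$ (and of $L^2(\Omega)$) rather than purely algebraic manipulation.

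For part (ii), I would appeal to the standard Hilbert-space result that if $\{e_n\}$ and $\{f_m\}$ are orthonormal bases of two separable Hilbert spaces, then $\{e_n \otimes f_m\}_{n,m \in \mathbb{N}}$ is an orthonormal basis of their Hilbert tensor product. Applying the unitary map $U$ from part (i) sends this basis to $\{U(e_n \otimes f_m)\}_{n,m} = \{e_n f_m\}_{n,m}$, which must therefore be an orthonormal basis of $L^2(\Omega,d\mu;W_1(\mathbb{R}))$. The only minor point to confirm is that $U(e_n \otimes f_m)$, viewed pointwise as $x \mapsto e_n(x) f_m \in W_1(\mathbb{R})$, coincides with the function $(x,t) \mapsto e_n(x) f_m(t)$ used in the statement; this is immediate from the definition of $U_0$ on elementary tensors.
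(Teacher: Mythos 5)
Your proof is correct and is essentially the standard argument behind the results the paper cites (Reed--Simon, Theorem II.10 and Section II.4, Proposition 2): define $U$ on elementary tensors, check it preserves inner products, extend by continuity, and get surjectivity from the density of simple functions in the Bochner space, with part (ii) following from the tensor-product basis theorem via unitarity of $U$. The paper's own proof simply invokes these references, so your write-up supplies the same route in full detail; no gaps.
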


\begin{proof}
The first part follows from the separability of the spaces $L^{2}\left(
\Omega,d\mu\right)  $ and $W_{1}(\mathbb{R})$, by using \cite[Theorem
2.6]{Asao} or \cite[Theorem II.10]{Reed-Simon-I}. The second part is a
well-know result, see \cite[Proposition 2.7]{Asao} or \cite[Section II.4,
Proposition 2]{Reed-Simon-I}.
\end{proof}

\begin{remark}
\label{Nota1A}Using the second part of Lemma \ref{Lemma_2}, the fact that
$\mathcal{S}(\mathbb{R})$ is dense in $W_{1}(\mathbb{R})$, and the
Gram--Schmidt procedure, we can assume that $f_{m}\left(  t\right)
\in\mathcal{S}(\mathbb{R})$. Then any $g(x,t)\in L^{2}\left(  \Omega
,d\mu\right)
{\textstyle\bigotimes}
W_{1}(\mathbb{R})$ admits an expansion of the form
\[
g(x,t)=%
{\displaystyle\sum\limits_{_{n,m\in\mathbb{N}}}}
c_{n,m}e_{n}\left(  x\right)  f_{m}\left(  t\right)  .
\]
For $x$ fixed, $g(x,\cdot)\in W_{1}(\mathbb{R})$, cf. Lemma \ref{Lemma_2}-(i),
now\ by Lemma \ref{Lemma_1},%
\[
\partial_{t}g(x,t)=%
{\displaystyle\sum\limits_{_{n,m\in\mathbb{N}}}}
c_{n,m}e_{n}\left(  x\right)  \partial_{t}f_{m}\left(  t\right)
\]
is a real-valued function.
\end{remark}

Now, since we are assuming that $L^{2}\left(  \Omega,d\mu\right)  $ is
separable, and $L^{2}\left(  \mathbb{R},dt\right)  =L^{2}\left(
\mathbb{R}\right)  $ is separable too, we have%
\begin{equation}
L^{2}\left(  \Omega,d\mu\right)
{\textstyle\bigotimes}
L^{2}\left(  \mathbb{R},dt\right)  \simeq L^{2}\left(  \Omega,d\mu
;L^{2}\left(  \mathbb{R}\right)  \right)  \simeq L^{2}\left(  \Omega
\times\mathbb{R},d\mu dt\right)  , \label{Eq_isomorphism}%
\end{equation}
where $\simeq$ denotes an isomorphism of Hilbert spaces. The inner product of
the last space is given by%
\[
\left\langle f\left(  x,t\right)  ,g\left(  x,t\right)  \right\rangle =%
{\displaystyle\int\limits_{\Omega}}
{\displaystyle\int\limits_{\mathbb{R}}}
f\left(  x,t\right)  ,g\left(  x,t\right)  dtd\mu\left(  x\right)  ,
\]
cf. \cite[Theorem II.10]{Reed-Simon-I}. We denote the corresponding norm as
$\left\Vert \cdot\right\Vert $. there Furthermore, this inclusion is
continuous due to the following result.

\begin{notation}
Along this article, we use the notation $\left\langle \cdot,\cdot\right\rangle
$, respectively $\left\Vert \cdot\right\Vert $, to denote the inner product,
respectively the norm, of $L^{2}\left(  \Omega\times\mathbb{R},d\mu dt\right)
=L^{2}\left(  \Omega\times\mathbb{R}\right)  $. For any other Hilbert space,
we will put a subindex in the $\left\langle \cdot,\cdot\right\rangle $ and in
the $\left\Vert \cdot\right\Vert $.
\end{notation}

\begin{lemma}
\label{Lemma_3}There exits a continuous inclusion%
\[
\mathfrak{i}:\mathcal{H}=L^{2}\left(  \Omega,d\mu;W_{1}(\mathbb{R})\right)
\rightarrow L^{2}\left(  \Omega,d\mu;L^{2}\left(  \mathbb{R}\right)  \right)
\simeq L^{2}\left(  \Omega\times\mathbb{R},d\mu dt\right)  ,
\]
i.e., $\left\Vert \mathfrak{i}\left(  \boldsymbol{h}\right)  \right\Vert
\leq\left\Vert \boldsymbol{h}\right\Vert _{\mathcal{H}}$ for any
$\boldsymbol{h}\in\mathcal{H}$. Furthermore, $\left(  \mathfrak{i}\left(
\mathcal{H}\right)  ,\left\langle \cdot,\cdot\right\rangle \right)  $ is a
dense vector subspace of $L^{2}\left(  \Omega\times\mathbb{R}\right)  $.
\end{lemma}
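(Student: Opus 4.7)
\textbf{Proof plan for Lemma \ref{Lemma_3}.} The plan is to build $\mathfrak{i}$ pointwise in $x\in\Omega$ from the continuous inclusion $W_{1}(\mathbb{R})\hookrightarrow L^{2}(\mathbb{R})$, and then to deduce density from the density of $W_{1}(\mathbb{R})$ in $L^{2}(\mathbb{R})$ via the tensor-product identification (\ref{Eq_isomorphism}). The continuity step is essentially a Fubini/Plancherel computation, while the density step is the only one that requires a small argument.

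First, I would observe that for any $g\in W_{1}(\mathbb{R})$, Plancherel gives
\[
\left\Vert g\right\Vert _{L^{2}(\mathbb{R})}^{2}=\int_{\mathbb{R}}\bigl\vert \widehat{g}(\xi)\bigr\vert^{2}\,d\xi\;\leq\;\int_{\mathbb{R}}\bigl(1+\vert\xi\vert^{2}\bigr)\bigl\vert \widehat{g}(\xi)\bigr\vert^{2}\,d\xi=\left\Vert g\right\Vert _{W_{1}(\mathbb{R})}^{2},
\]
so the identity map $W_{1}(\mathbb{R})\hookrightarrow L^{2}(\mathbb{R})$ is a contraction. Then for $\boldsymbol{h}\in\mathcal{H}=L^{2}(\Omega,d\mu;W_{1}(\mathbb{R}))$, the section $\boldsymbol{h}(x,\cdot)\in W_{1}(\mathbb{R})$ for a.e.\ $x$, hence it lies in $L^{2}(\mathbb{R})$ and $x\mapsto\boldsymbol{h}(x,\cdot)$ is measurable as a map into $L^{2}(\mathbb{R})$. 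I would define $\mathfrak{i}(\boldsymbol{h})$ to be this very section, now viewed in $L^{2}(\Omega,d\mu;L^{2}(\mathbb{R}))$, which by (\ref{Eq_isomorphism}) coincides with an element of $L^{2}(\Omega\times\mathbb{R},d\mu\,dt)$.

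For continuity, integrating the pointwise bound over $\Omega$ yields
\[
\left\Vert \mathfrak{i}(\boldsymbol{h})\right\Vert ^{2}=\int_{\Omega}\left\Vert \boldsymbol{h}(x,\cdot)\right\Vert _{L^{2}(\mathbb{R})}^{2}d\mu(x)\;\leq\;\int_{\Omega}\left\Vert \boldsymbol{h}(x,\cdot)\right\Vert _{W_{1}(\mathbb{R})}^{2}d\mu(x)=\left\Vert \boldsymbol{h}\right\Vert _{\mathcal{H}}^{2}.
\]
Linearity of $\mathfrak{i}$ is clear, and injectivity follows since $\mathfrak{i}(\boldsymbol{h})=0$ in $L^{2}(\Omega\times\mathbb{R})$ forces $\boldsymbol{h}(x,\cdot)=0$ in $L^{2}(\mathbb{R})$, hence in $W_{1}(\mathbb{R})$, for $\mu$-a.e.\ $x$.

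The main obstacle, though still minor, is the density statement. Here I would use that $\mathcal{S}(\mathbb{R})\subset W_{1}(\mathbb{R})\subset L^{2}(\mathbb{R})$ with $\mathcal{S}(\mathbb{R})$ dense in $L^{2}(\mathbb{R})$, so $W_{1}(\mathbb{R})$ is dense in $L^{2}(\mathbb{R})$. Pick an orthonormal basis $\{e_{n}\}_{n\in\mathbb{N}}$ of $L^{2}(\Omega,d\mu)$ and, using Gram--Schmidt on a countable dense subset of $\mathcal{S}(\mathbb{R})$, an orthonormal basis $\{f_{m}\}_{m\in\mathbb{N}}$ of $W_{1}(\mathbb{R})$ consisting of Schwartz functions; such a basis exists by Remark \ref{Nota1A} and separability. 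Then $\{e_{n}\otimes f_{m}\}_{n,m}$ is, by Lemma \ref{Lemma_2}(ii), total in $\mathcal{H}$. Since $\{f_{m}\}\subset\mathcal{S}(\mathbb{R})$ is dense in $L^{2}(\mathbb{R})$ (its closure contains the dense set $\mathcal{S}(\mathbb{R})$ from which it was built), the algebraic span of $\{e_{n}\otimes f_{m}\}$ is dense in $L^{2}(\Omega)\otimes L^{2}(\mathbb{R})\simeq L^{2}(\Omega\times\mathbb{R})$ by the standard characterization of Hilbert tensor products. This span lies in $\mathfrak{i}(\mathcal{H})$, which therefore is dense in $L^{2}(\Omega\times\mathbb{R})$, completing the proof.
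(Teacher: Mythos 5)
Your proof is correct and follows essentially the same route as the paper: the norm bound is the same Plancherel computation integrated over $\Omega$, and the density step is the same idea (the paper simply notes that $\mathfrak{i}(\mathcal{H})$ contains an orthonormal basis of $L^{2}(\Omega\times\mathbb{R})$, while you argue via totality of $\{e_{n}\otimes f_{m}\}$ with $f_{m}\in\mathcal{S}(\mathbb{R})$ spanning a dense subspace of $L^{2}(\mathbb{R})$). Both versions are fine; no gaps.
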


\begin{proof}
The existence of the inclusion $\mathfrak{i}$ follows from the isomorphism
(\ref{Eq_isomorphism}) and Remark \ref{Nota1A}. Now, by using the fact that
the Fourier transform is a unitary operator in $L^{2}\left(  \mathbb{R}%
\right)  $,
\begin{gather*}
\left\Vert \boldsymbol{h}\right\Vert _{\mathcal{H}}^{2}=%
{\displaystyle\int\limits_{\Omega}}
\left\Vert h\left(  x,\cdot\right)  \right\Vert _{W_{1}\left(  \mathbb{R}%
\right)  }^{2}d\mu\left(  x\right)  =%
{\displaystyle\int\limits_{\Omega}}
{\displaystyle\int\limits_{\mathbb{R}}}
\left(  1+\left\vert \xi\right\vert ^{2}\right)  \left\vert \mathcal{F}%
_{t\rightarrow\xi}\left(  \boldsymbol{h}\left(  x,t\right)  \right)
\right\vert ^{2}dtd\mu\left(  x\right)  \geq\\%
{\displaystyle\int\limits_{\Omega}}
\left\{  \text{ }%
{\displaystyle\int\limits_{\mathbb{R}}}
\left\vert \mathcal{F}_{t\rightarrow\xi}\left(  \boldsymbol{h}\left(
x,t\right)  \right)  \right\vert ^{2}dt\right\}  d\mu\left(  x\right)  =%
{\displaystyle\int\limits_{\Omega}}
\left\{  \text{ }%
{\displaystyle\int\limits_{\mathbb{R}}}
\left\vert \boldsymbol{h}\left(  x,t\right)  \right\vert ^{2}dt\right\}
d\mu\left(  x\right)  =\left\Vert \mathfrak{i}\left(  \boldsymbol{h}\right)
\right\Vert .
\end{gather*}
Finally, the density follows from the fact that $\mathfrak{i}\left(
\mathcal{H}\right)  $ contains an orthonormal basis of $L^{2}\left(
\Omega\times\mathbb{R}\right)  $,
\end{proof}

\begin{remark}
\label{Note_act_fun}Along the article, we assume that $\phi:\mathbb{R}%
\rightarrow\mathbb{R}$ is a bounded, differentiable, Lipschitz function, with
$\phi\left(  0\right)  =0$. We recall that a\ function $\phi:\mathbb{R}%
\rightarrow\mathbb{R}$ is called a Lipschitz function if there exists a real
constant $L(\phi)>0$ such that, for all $x,y\in\mathbb{R}$, $|\phi
(x)-\phi(y)|\leq L(\phi)|x-y|$. We note that $\tanh(x)$ is a function
satisfying all the mentioned requirements.
\end{remark}

\begin{lemma}
\label{Lemma_4}Assume that $\boldsymbol{J}\left(  x,y\right)  \in L^{2}\left(
\Omega^{2},d\mu d\mu\right)  =L^{2}\left(  \Omega^{2}\right)  $, then%
\[%
\begin{array}
[c]{ccc}%
L^{2}\left(  \Omega\times\mathbb{R},d\mu dt\right)  & \rightarrow &
L^{2}\left(  \Omega\times\mathbb{R},d\mu dt\right) \\
&  & \\
g\left(  y,t\right)  & \rightarrow &
{\displaystyle\int\limits_{\Omega}}
\boldsymbol{J}\left(  x,y\right)  \phi\left(  g\left(  y,t\right)  \right)
d\mu\left(  x\right)
\end{array}
\]
is a well-defined linear bounded operator.
\end{lemma}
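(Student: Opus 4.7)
The plan is to establish the bound by a two-step Cauchy--Schwarz estimate, exploiting the hypothesis (Remark \ref{Note_act_fun}) that $\phi$ is Lipschitz with $\phi(0)=0$, so that $|\phi(u)| \le L(\phi)|u|$ for every $u \in \mathbb{R}$. Write $T$ for the candidate operator, so that $(Tg)(x,t) = \int_{\Omega} \boldsymbol{J}(x,y)\,\phi(g(y,t))\,d\mu(y)$ (reading the integration variable as $y$, matching the convention used throughout Section 3).

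First I would verify measurability. Since $\boldsymbol{J}\in L^{2}(\Omega^{2})$ is measurable on $\Omega\times\Omega$ and $\phi\circ g$ is measurable on $\Omega\times\mathbb{R}$, Tonelli's theorem together with the Cauchy--Schwarz estimate below will show that the inner integral defining $(Tg)(x,t)$ converges for $d\mu\,dt$-almost every $(x,t)$, and that $Tg$ is measurable on $\Omega\times\mathbb{R}$. Linearity in $g$ is immediate once well-definedness is in hand.

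Next I would carry out the norm estimate. Fix $(x,t)$; by the Cauchy--Schwarz inequality on $L^{2}(\Omega,d\mu)$ in the variable $y$,
\[
\bigl|(Tg)(x,t)\bigr|^{2} \;\le\; \Bigl(\int_{\Omega}|\boldsymbol{J}(x,y)|^{2}d\mu(y)\Bigr)\Bigl(\int_{\Omega}|\phi(g(y,t))|^{2}d\mu(y)\Bigr).
\]
Integrating over $(x,t)\in\Omega\times\mathbb{R}$ and applying Tonelli's theorem gives
\[
\|Tg\|^{2} \;\le\; \|\boldsymbol{J}\|_{L^{2}(\Omega^{2})}^{2}\int_{\Omega}\int_{\mathbb{R}}|\phi(g(y,t))|^{2}\,dt\,d\mu(y).
\]
The Lipschitz bound $|\phi(g(y,t))|\le L(\phi)|g(y,t)|$ then yields $\|Tg\|\le L(\phi)\,\|\boldsymbol{J}\|_{L^{2}(\Omega^{2})}\,\|g\|$, proving boundedness with operator norm at most $L(\phi)\,\|\boldsymbol{J}\|_{L^{2}(\Omega^{2})}$.

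There is no real obstacle here; the only point requiring a moment of care is ensuring that the a.e.\ well-definedness of the pointwise integral is justified before invoking Tonelli, which is handled exactly by the Cauchy--Schwarz estimate above (the right-hand side is finite for $d\mu\,dt$-a.e.\ $(x,t)$ because both factors are integrable). An alternative route, if one preferred a purely operator-theoretic formulation, would be to interpret $T$ as the composition of the Nemytskii operator $g\mapsto \phi\circ g$ on $L^{2}(\Omega\times\mathbb{R})$ (bounded by the same Lipschitz constant, since $\phi(0)=0$) with the Hilbert--Schmidt integral operator on $L^{2}(\Omega)$ induced by $\boldsymbol{J}$, acting fiberwise in $t$; this gives the same constant in one line.
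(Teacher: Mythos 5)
Your proof is correct and follows essentially the same route as the paper's: the Lipschitz bound $|\phi(u)|\le L(\phi)|u|$ combined with the Cauchy--Schwarz inequality in the $y$-variable and Tonelli's theorem, yielding the same operator-norm bound $L(\phi)\,\Vert\boldsymbol{J}\Vert_{L^{2}(\Omega^{2})}$. Your added remarks on measurability and the reading of the integration variable as $d\mu(y)$ are sensible but do not change the argument.
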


\begin{proof}
By using $|\phi(y)-\phi(0)|\leq L(\phi)|y|$, and the Cauchy--Schwarz
inequality,
\begin{gather*}
\left\Vert
{\displaystyle\int\limits_{\Omega}}
\boldsymbol{J}\left(  x,y\right)  \phi\left(  g\left(  y,t\right)  \right)
d\mu\left(  y\right)  \right\Vert ^{2}=%
{\displaystyle\int\limits_{\Omega}}
{\displaystyle\int\limits_{\mathbb{R}}}
\left\vert \text{ }%
{\displaystyle\int\limits_{\Omega}}
\boldsymbol{J}\left(  x,y\right)  \phi\left(  g\left(  y,t\right)  \right)
d\mu\left(  y\right)  \right\vert ^{2}dtd\mu\left(  x\right) \\
\leq%
{\displaystyle\int\limits_{\Omega}}
{\displaystyle\int\limits_{\mathbb{R}}}
\left\{  \text{ }%
{\displaystyle\int\limits_{\Omega}}
\left\vert \boldsymbol{J}\left(  x,y\right)  \right\vert \left\vert
\phi\left(  g\left(  y,t\right)  \right)  \right\vert d\mu\left(  y\right)
\right\}  ^{2}dtd\mu\left(  x\right) \\
\leq%
{\displaystyle\int\limits_{\Omega}}
{\displaystyle\int\limits_{\mathbb{R}}}
\left\{  L(\phi)%
{\displaystyle\int\limits_{\Omega}}
\left\vert \boldsymbol{J}\left(  x,y\right)  \right\vert \left\vert g\left(
y,t\right)  \right\vert d\mu\left(  y\right)  \right\}  ^{2}dtd\mu\left(
x\right) \\
\leq L(\phi)^{2}%
{\displaystyle\int\limits_{\Omega}}
{\displaystyle\int\limits_{\mathbb{R}}}
\left\{  \text{ }%
{\displaystyle\int\limits_{\Omega}}
\left\vert \boldsymbol{J}\left(  x,y\right)  \right\vert ^{2}d\mu\left(
y\right)
{\displaystyle\int\limits_{\Omega}}
\left\vert g\left(  y,t\right)  \right\vert ^{2}d\mu\left(  y\right)
\right\}  dtd\mu\left(  x\right) \\
\leq L(\phi)^{2}%
{\displaystyle\int\limits_{\Omega^{2}}}
\left\vert \boldsymbol{J}\left(  x,y\right)  \right\vert ^{2}d\mu\left(
y\right)  d\mu\left(  x\right)
{\displaystyle\int\limits_{\mathbb{R}}}
{\displaystyle\int\limits_{\Omega}}
\left\vert g\left(  y,t\right)  \right\vert ^{2}d\mu\left(  y\right)  dt\\
=L(\phi)^{2}\left\Vert \boldsymbol{J}\right\Vert _{L^{2}\left(  \Omega
^{2}\right)  }^{2}\left\Vert g\right\Vert ^{2}.
\end{gather*}

\end{proof}

\begin{proposition}
\label{Porp1}Assume that $\boldsymbol{J}\left(  x,y\right)  \in L^{2}\left(
\Omega^{2}\right)  $, $\boldsymbol{j}\left(  x,t\right)  $, $\widetilde
{\boldsymbol{j}}\left(  x,t\right)  \in\mathcal{H}$. Set%
\[
S\left[  \boldsymbol{h},\widetilde{\boldsymbol{h}},\boldsymbol{j}%
,\widetilde{\boldsymbol{j}}\right]  :=S_{0}\left[  \boldsymbol{h}%
,\widetilde{\boldsymbol{h}}\right]  -\left\langle \widetilde{\boldsymbol{h}},%
{\displaystyle\int\limits_{\Omega}}
\boldsymbol{J}\left(  x,y\right)  \phi\left(  \boldsymbol{h}\left(
y,t\right)  \right)  d\mu\left(  x\right)  \right\rangle +\left\langle
\boldsymbol{j},\boldsymbol{h}\right\rangle +\left\langle \widetilde
{\boldsymbol{j}},\widetilde{\boldsymbol{h}}\right\rangle ,
\]
where $S_{0}\left[  \boldsymbol{h},\widetilde{\boldsymbol{h}}\right]  $ is
given in (\ref{action}). Then, the mapping%
\[%
\begin{array}
[c]{ccc}%
\mathcal{H\times H} & \rightarrow & \mathbb{R}\\
&  & \\
\left(  \boldsymbol{h},\widetilde{\boldsymbol{h}}\right)  & \rightarrow &
\exp\left(  S\left[  \boldsymbol{h},\widetilde{\boldsymbol{h}},\boldsymbol{j}%
,\widetilde{\boldsymbol{j}}\right]  \right)  ,
\end{array}
\]
is well-defined.
\end{proposition}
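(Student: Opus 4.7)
The plan is to verify that $S[\boldsymbol{h},\widetilde{\boldsymbol{h}},\boldsymbol{j},\widetilde{\boldsymbol{j}}]$ is a finite real number for every $(\boldsymbol{h},\widetilde{\boldsymbol{h}})\in\mathcal{H}\times\mathcal{H}$; once this is established, $\exp(S[\cdot])$ is automatically a well-defined positive real number. The strategy is to treat the four summands separately and bound each one by applying, in turn, Lemmas \ref{Lemma_1}, \ref{Lemma_2}, \ref{Lemma_3}, and \ref{Lemma_4}. The underlying principle is that although $(\partial_t+\gamma)\boldsymbol{h}$ and $\int_{\Omega}\boldsymbol{J}(x,y)\phi(\boldsymbol{h}(y,t))d\mu(y)$ need not lie in $\mathcal{H}$, they do lie in $L^{2}(\Omega\times\mathbb{R})$, so all pairings are to be interpreted through the inclusion $\mathfrak{i}:\mathcal{H}\to L^{2}(\Omega\times\mathbb{R})$ of Lemma \ref{Lemma_3}, and controlled by Cauchy--Schwarz in $L^{2}(\Omega\times\mathbb{R})$.

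First I would extend Lemma \ref{Lemma_1} to $\mathcal{H}=L^{2}(\Omega;W_{1}(\mathbb{R}))$. Using the unitary identification of Lemma \ref{Lemma_2}, any $\boldsymbol{h}\in\mathcal{H}$ satisfies $\boldsymbol{h}(x,\cdot)\in W_{1}(\mathbb{R})$ for $d\mu$-a.e.\ $x$, and applying Lemma \ref{Lemma_1} fiberwise gives $\|\partial_{t}\boldsymbol{h}(x,\cdot)\|_{L^{2}(\mathbb{R})}\le\|\boldsymbol{h}(x,\cdot)\|_{W_{1}(\mathbb{R})}$. Integrating over $\Omega$ produces $\|\partial_{t}\boldsymbol{h}\|\le\|\boldsymbol{h}\|_{\mathcal{H}}$, so $\partial_{t}:\mathcal{H}\to L^{2}(\Omega\times\mathbb{R})$ is bounded. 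Combined with Lemma \ref{Lemma_3} this yields $(\partial_{t}+\gamma)\boldsymbol{h}\in L^{2}(\Omega\times\mathbb{R})$ with $\|(\partial_{t}+\gamma)\boldsymbol{h}\|\le(1+\gamma)\|\boldsymbol{h}\|_{\mathcal{H}}$. By Cauchy--Schwarz, the first term satisfies
\[
\bigl|\langle\widetilde{\boldsymbol{h}},(\partial_{t}+\gamma)\boldsymbol{h}\rangle\bigr|\le\|\mathfrak{i}(\widetilde{\boldsymbol{h}})\|\,\|(\partial_{t}+\gamma)\boldsymbol{h}\|\le(1+\gamma)\|\widetilde{\boldsymbol{h}}\|_{\mathcal{H}}\|\boldsymbol{h}\|_{\mathcal{H}}<\infty,
\]
while the quadratic term is immediately bounded by $\tfrac{1}{2}\sigma^{2}\|\widetilde{\boldsymbol{h}}\|_{\mathcal{H}}^{2}$ via Lemma \ref{Lemma_3}.

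For the interaction term, Lemma \ref{Lemma_4} (with the typo $d\mu(x)\rightsquigarrow d\mu(y)$) ensures that $(x,t)\mapsto\int_{\Omega}\boldsymbol{J}(x,y)\phi(\boldsymbol{h}(y,t))d\mu(y)$ belongs to $L^{2}(\Omega\times\mathbb{R})$ with norm at most $L(\phi)\|\boldsymbol{J}\|_{L^{2}(\Omega^{2})}\|\mathfrak{i}(\boldsymbol{h})\|\le L(\phi)\|\boldsymbol{J}\|_{L^{2}(\Omega^{2})}\|\boldsymbol{h}\|_{\mathcal{H}}$. Cauchy--Schwarz against $\mathfrak{i}(\widetilde{\boldsymbol{h}})$ then gives a finite pairing. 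Finally, since $\boldsymbol{j},\widetilde{\boldsymbol{j}}\in\mathcal{H}\subset L^{2}(\Omega\times\mathbb{R})$, the source terms are bounded by $\|\boldsymbol{j}\|_{\mathcal{H}}\|\boldsymbol{h}\|_{\mathcal{H}}$ and $\|\widetilde{\boldsymbol{j}}\|_{\mathcal{H}}\|\widetilde{\boldsymbol{h}}\|_{\mathcal{H}}$ respectively.

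Summing the four bounds shows $S\in\mathbb{R}$ is finite at every $(\boldsymbol{h},\widetilde{\boldsymbol{h}})\in\mathcal{H}\times\mathcal{H}$, so $\exp(S[\boldsymbol{h},\widetilde{\boldsymbol{h}},\boldsymbol{j},\widetilde{\boldsymbol{j}}])\in(0,\infty)$ and the mapping of the proposition is well-defined. There is no genuine obstacle: the only conceptual point requiring care is that $(\partial_{t}+\gamma)\boldsymbol{h}$ and the convolution-type image of $\boldsymbol{h}$ under $\boldsymbol{J}$ generally leave $\mathcal{H}$, so each pairing must be understood in $L^{2}(\Omega\times\mathbb{R})$ through $\mathfrak{i}$ rather than in the Hilbert inner product of $\mathcal{H}$ itself; the continuity of $\mathfrak{i}$ provided by Lemma \ref{Lemma_3} is what makes this identification harmless.
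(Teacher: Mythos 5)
Your proof is correct and follows essentially the same route as the paper, which simply cites Lemmas \ref{Lemma_3} and \ref{Lemma_4}; you have filled in the details by bounding each of the four summands of $S$ via the continuous inclusion $\mathfrak{i}$ and Cauchy--Schwarz in $L^{2}(\Omega\times\mathbb{R})$, which is exactly the estimate the paper itself records later in the proof of Lemma \ref{Lemma_5}. No gaps.
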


\begin{proof}
The result follows from the Lemmas \ref{Lemma_3} and \ref{Lemma_4}.
\end{proof}

\section{\label{Appen_B}Appendix B}

In this section, we review some basic results on white noise calculus required
here, see, e.g., \cite{Gelfand-Vilenkin}, \cite{Hida et al}, \cite{Huang et
al}, \cite{Obata}.

\subsection{Countably-Hilbert nuclear spaces}

Let $\mathcal{X}$ be a locally convex topological space whose topology is
generated by a sequence of consistent seminorms $\left\{  \left\Vert
\cdot\right\Vert _{n}\right\}  _{n\in\mathbb{N}}$, where each seminorm
$\left\Vert \cdot\right\Vert _{n}$ is induced by an inner product
$\left\langle \cdot,\cdot\right\rangle _{n}$ in $\mathcal{X}$. Let
$\mathcal{X}_{n}$ be the completion of $\mathcal{X}$ with respect to
$\left\Vert \cdot\right\Vert _{n}$. Without loss of generality, we assume that
the norms satisfy $\left\Vert \cdot\right\Vert _{n}\leq\left\Vert
\cdot\right\Vert _{m}$ for $m\geq n$; so there exists a continuous and dense
embedding $i_{m,n}:\mathcal{X}_{m}\rightarrow\mathcal{X}_{n}$. The space
$\mathcal{X}$ is the projective limit of $\left\{  \mathcal{X}_{n}\right\}
_{n\in\mathbb{N}}$, more precisely, $\mathcal{X}=\cap_{n\in\mathbb{N}%
}\mathcal{X}_{n}$, and%
\[
\mathcal{X}_{0}\mathcal{\supset X}_{1}\mathcal{\supset\cdots\supset X}%
_{n}\mathcal{\supset X}_{n+1}\cdots\mathcal{\supset X}\text{.}%
\]

The topological space $\left(  \mathcal{X},\tau\right)  $, where $\tau
$\ denotes the projective limit topology is called a countably-Hilbert space.
This space is complete and metrizable, indeed, the topology induced by the
metric%
\[
d(x,y)=%
{\displaystyle\sum\limits_{n\in\mathbb{N}}}
2^{-n}\frac{\left\Vert x-y\right\Vert _{n}}{1+\left\Vert x-y\right\Vert _{n}%
}\text{, }x,y\in\mathcal{X},
\]
agrees with $\tau$.

A countably-Hilbert space $\mathcal{X}$ is called nuclear, if for every $n$
there exists $m\geq n$ such that $i_{m,n}:\mathcal{X}_{m}\rightarrow
\mathcal{X}_{n}$ is trace class, i.e. $i_{m,n}$ \ has the form%
\[
i_{m,n}\left(  \varphi\right)  =%
{\displaystyle\sum\limits_{k=0}^{\infty}}
\lambda_{k}\left\langle \varphi,\varphi_{k}\right\rangle \psi_{k}\text{, for
}\varphi\in\mathcal{X}_{m},
\]
where $\left\{  \varphi_{k}\right\}  _{k\in\mathbb{N}}$ and $\left\{  \psi
_{k}\right\}  _{k\in\mathbb{N}}$ are orthonormal systems of vectors in the
spaces $\mathcal{X}_{m}$, $\mathcal{X}_{n}$ respectively, $\lambda_{k}>0$, and
the series $\sum_{n\in\mathbb{N}}\lambda_{k}$ converges.

Let $\mathcal{X}_{n}^{\prime}$ be the topological dual of $\mathcal{X}_{n}$,
then $\mathcal{X}^{\prime}=\cup_{n\in\mathbb{N}}\mathcal{X}_{n}^{\prime}$ and
\[
\mathcal{X\subset X}_{1}^{\prime}\subset\cdots\subset\mathcal{X}_{n}^{\prime
}\subset\mathcal{X}_{n+1}^{\prime}\subset\cdots\subset\mathcal{X}^{\prime
}\text{.}%
\]
The space $\mathcal{X}^{\prime}$ is the inductive limit of $\left\{
\mathcal{X}_{n}^{\prime}\right\}  _{n\in\mathbb{N}}$, then, it has naturally
the topology of the inductive limit. This space also has other topologies, the
strong and the weak; all these topologies\ are equivalent if $\mathcal{X}$ is
nuclear. The $\sigma$-algebras on $\mathcal{X}^{\prime}$ for all these
topologies are the same. This $\sigma$-algebra is regarded as the Borel
$\sigma$-algebra of $\mathcal{X}^{\prime}$.

\subsection{The Bochner--Minlos theorem}

Let $\mathcal{X}$ be a real nuclear space, $\mathcal{X}^{\prime}$ its
topological dual and $\left(  \cdot,\cdot\right)  $ the canonical bilinear
form (the pairing) on $\mathcal{X}^{\prime}\times\mathcal{X}$. Let
$\mathcal{B}$ be the cylindrical $\sigma$-algebra on $\mathcal{X}^{\prime}$,
i.e. , the smallest $\sigma$-algebra \ such that the functions%
\begin{equation}
T\rightarrow\left(  \left(  T,x_{1}\right)  ,\ldots,\left(  T,x_{n}\right)
\right)  \in\mathbb{R}^{n}\text{, }T\in\mathcal{X}^{\prime}\text{,}
\label{Cylindrical}%
\end{equation}
are measurable for any choice of $x_{1},\ldots,x_{n}\in\mathcal{X}$ and
$n=1,2,\ldots$, where $\mathbb{R}^{n}$ is equipped with the Borel $\sigma
$-algebra. This is the $\sigma$-algebra corresponding to the weak topology on
$\mathcal{X}^{\prime}$.

A $\mathbb{C}$-valued function $\mathcal{C}$ on $\mathcal{X}$ is called a
characteristic functional if (i) $\mathcal{C}$\ is continuous; (ii)
$\mathcal{C}$ is positive definite, i.e.%
\[%
{\displaystyle\sum\limits_{j,k=1}^{m}}
\alpha_{j}\overline{\alpha_{k}}\mathcal{C}\left(  \varphi_{j}-\varphi
_{k}\right)  \geq0
\]
for any choice of $\alpha_{1},\ldots,\alpha_{m}\in\mathbb{C}$, $\varphi
_{1},\ldots,\varphi_{m}\in\mathcal{X}$ and $n=1,2,\ldots$; (iii)
$\mathcal{C}\left(  0\right)  =1$.

The Bochner-Minlos theorem establishes a one-to-one correspondence between
characteristic functions and probability measures on $\left(  \mathcal{X}%
^{\prime},\mathcal{B}\right)  $. If $\boldsymbol{P}$ is a probability measure
on $\mathcal{X}^{\prime}$, its Fourier transform%
\[
\widehat{\boldsymbol{P}}\left(  \varphi\right)  =%
{\displaystyle\int\limits_{\mathcal{X}^{\prime}}}
e^{i\left(  T,\varphi\right)  }d\boldsymbol{P}(T)
\]
is a characteristic function.

\subsection{\label{Some_int_funct}Some classes of integrable functions}

For $\sigma\in\left[  1,\infty\right)  $, we consider the space $L^{\sigma
}(\mathcal{X}^{\prime},\mathcal{B},d\boldsymbol{P})$. Given $\varphi
\in\mathcal{X}$, $T\in\mathcal{X}^{\prime}$, we set $T_{\varphi}=\left\langle
T,\varphi\right\rangle $. Then%
\[
e^{\alpha T_{\varphi}}\in L^{\sigma}(\mathcal{X}^{\prime},\mathcal{B}%
,d\boldsymbol{P})\text{ for any }\alpha\in\mathbb{C}\text{, }\sigma\in\left[
1,\infty\right)  \text{,}%
\]
\cite[Proposition 1.7]{Hida et al}. Let $Q\left(  x_{1},\cdots,x_{n}\right)  $
be a polynomial in $n$ variables with complex coefficients. Then,%
\[
Q\left(  T_{\varphi_{1}},\cdots,T_{\varphi_{n}}\right)  \in L^{\sigma
}(\mathcal{X}^{\prime},\mathcal{B},d\boldsymbol{P})\text{ for any }\alpha
\in\mathbb{C}\text{, }\sigma\in\left[  1,\infty\right)  \text{,}%
\]
\cite[Proposition 1.6]{Hida et al}.

Let $\mathcal{A}$ be the complex algebra generated by $Q\left(  T_{\varphi
_{1}},\cdots,T_{\varphi_{n}}\right)  $ and $e^{\alpha T_{\varphi}}$. Then,
$\mathcal{A}$ is dense in $L^{\sigma}(\mathcal{X}^{\prime},\mathcal{B}%
,d\boldsymbol{P})$, for , $\sigma\in\left[  1,\infty\right)  $, \cite[Theorem
1.9]{Hida et al}.

\section{\label{Appen_C}Appendix C}

\subsection{\label{Section_Gelfand_triplet}A Gel'fand triple}

Let $\mathcal{H}$ \ be a real separable Hilbert space. We assume the existence
of a Gel'fand triple $\left(  \Phi,\mathcal{H},\Phi^{\prime}\right)  $, where
$\Phi$ is real nuclear space, and $\Phi^{\prime}$ is its topological dual. In
addition, $\Phi$ is a dense subspace of $\mathcal{H}$ and the inclusion
$\mathfrak{i}:\Phi\rightarrow\mathcal{H}$ is continuous. By identifying
$\mathcal{H}$ with its topological dual $\mathcal{H}^{\prime}$,\ there is an
inclusion $\mathfrak{i}^{\prime}:\mathcal{H}^{\prime}\rightarrow\Phi^{\prime}%
$. Then, $\mathcal{H}$ is also a dense subspace of $\Phi^{\prime}$ and the
inclusion $\mathfrak{i}^{\prime}:\mathcal{H}\rightarrow\Phi^{\prime}$ is
continuous. The duality pairing between $\Phi$ and$\ \Phi^{\prime}$ is
compatible with the inner product in $\mathcal{H}$, in the sense that
$\Phi^{\prime}\times\Phi$
\[
\left(  v,u\right)  _{\Phi\times\Phi^{\prime}}=\left\langle v,u\right\rangle
\text{, for }v\in\Phi\subset\mathcal{H}\text{, }u\in\mathcal{H}=\mathcal{H}%
^{\prime}\subset\Phi^{\prime}.
\]
Any Countably-Hilbert nuclear space is reflexive. Then the paring $\left(
T,\varphi\right)  _{\Phi\times\Phi^{\prime}}$, $T\in\Phi^{\prime}$,
$\varphi\in\Phi$, gives rise a topological isomorphism between $\Phi$ and
$\Phi^{\prime}$, so $\left(  \cdot,\varphi\right)  \in\Phi^{\prime}$, for any
$\varphi\in\Phi$. Given $T\in\mathcal{X}^{\prime}=\cup_{n\in\mathbb{N}%
}\mathcal{X}_{n}^{\prime}$, it verifies that $T\in\mathcal{X}_{m}^{\prime}$,
and $\left(  T,\varphi\right)  _{\Phi\times\Phi^{\prime}}=\left\langle
T,\mathfrak{i}\left(  \varphi\right)  \right\rangle _{m}$, for some $m$, which
means that $\mathfrak{i}\left(  \mathcal{X}\right)  \subset H$ is dense in
$\Phi^{\prime}$. For further details about this Gel'fand triplet, the reader
may consult \cite[Chapter I, Section 4.3, \ Theorem 1]{Gelfand-Vilenkin}.

We now take $\mathcal{H}=L^{2}\left(  \Omega,d\mu\right)
{\textstyle\bigotimes}
W_{1}(\mathbb{R})$ and assume the existence of a Gel'fand triple $\left(
\Phi,\mathcal{H},\Phi^{\prime}\right)  $ as above. We fix two characteristic
functions $\mathcal{C}$, $\widetilde{\mathcal{C}}:\Phi\rightarrow\mathbb{C}$,
then by the Bocher-Minlos theorem, there exist two probability measures
$d\boldsymbol{P}\left(  \boldsymbol{h}\right)  $, $d\widetilde{\boldsymbol{P}%
}\left(  \widetilde{\boldsymbol{h}}\right)  $ in $\left(  \Phi^{\prime
},\mathcal{B}\right)  $. We restrict each these probability measures to
$\mathcal{H}$. Note that the $\sigma$-algebra of each restriction is the
family of subsets for the form $\mathcal{H}\cap B$, with $B\in\mathcal{B}$. By
abuse of language, we denote this $\sigma$-algebra as $\mathcal{B}$.

\begin{lemma}
\label{Lemma_5A}(i) Assume that $\Phi\hookrightarrow\mathcal{H}\hookrightarrow
\Phi^{\prime}$ is a Gel'fand triplet, and that $\boldsymbol{P}$ is a
probability measure on $\left(  \Phi^{\prime},\mathcal{B}\right)  $, where
$\mathcal{B}$ is the cylindrical $\sigma$-algebra on $\Phi^{\prime}$. Then,
the restriction of $\boldsymbol{P}$\ to $\mathcal{H}$ is a probability
measure, i.e.
\[%
{\displaystyle\int\limits_{\mathcal{H}}}
d\boldsymbol{P}\left(  \boldsymbol{h}\right)  =1.
\]
(ii) (i) Assume that $\Psi\hookrightarrow L^{2}\left(  \Omega\right)
{\textstyle\bigotimes}
L^{2}\left(  \mathbb{R}\right)  \hookrightarrow\Psi^{\prime}$ is a Gel'fand
triplet, and that $\boldsymbol{Q}$ is a probability measure on $\left(
\Psi^{\prime},\mathcal{B}\right)  $, where $\mathcal{B}$ is the cylindrical
$\sigma$-algebra on $\Psi^{\prime}$. Then,%
\[%
{\displaystyle\int\limits_{L^{2}\left(  \Omega\right)  {\bigotimes}%
L^{2}\left(  \mathbb{R}\right)  }}
d\boldsymbol{Q}\left(  \boldsymbol{h}\right)  =%
{\displaystyle\int\limits_{\mathcal{H}}}
d\boldsymbol{Q}\left(  \boldsymbol{h}\right)  =1.
\]
Which means that the restriction of $\mathcal{H}$ of $\boldsymbol{Q}$ is a
probability measure.
\end{lemma}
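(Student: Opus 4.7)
The plan is to show that $\mathcal{H}$ is a Borel-measurable subset of $\Phi'$ and that it carries full $\boldsymbol{P}$-measure; the equality $\int_{\mathcal H} d\boldsymbol{P} = 1$ then follows by definition of restriction. I would first establish measurability of $\mathcal{H}$ as a subset of $\Phi'$. Because $\Phi$ is countably Hilbert nuclear, it embeds densely into $\mathcal{H}$, so one can pick a sequence $\{\varphi_k\}_{k \in \mathbb{N}} \subset \Phi$ that is orthonormal in $\mathcal{H}$ and whose span is dense in $\mathcal{H}$. Since $\mathcal{H}$ is identified with its own dual, for $T \in \Phi'$ we have $T \in \mathcal{H}$ iff the linear functional $\varphi \mapsto (T,\varphi)$ extends continuously in the $\mathcal{H}$-norm, equivalently iff $\sum_{k=1}^\infty |(T,\varphi_k)|^2 < \infty$. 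This gives the representation
\[
\mathcal{H} \;=\; \bigcup_{N \geq 1} \bigcap_{K \geq 1} \Bigl\{\, T \in \Phi' : \sum_{k=1}^{K} |(T,\varphi_k)|^2 \leq N \Bigr\},
\]
and each set on the right lies in the cylindrical $\sigma$-algebra $\mathcal{B}$ by definition of the generating functions in \eqref{Cylindrical}.

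Next, to show $\boldsymbol{P}(\mathcal{H}) = 1$, I would pass through the second-moment functional. For each $\varphi \in \Phi$, the push-forward of $\boldsymbol{P}$ under $T \mapsto (T,\varphi)$ is a probability measure on $\mathbb{R}$ whose characteristic function is $\lambda \mapsto \mathcal{C}(\lambda \varphi)$; when the characteristic functional $\mathcal{C}$ of the paper is twice differentiable at zero (which is automatic for the Gaussian functionals used here, with trace-class covariance operator on $\mathcal{H}$), the second moment equals $-\partial_\lambda^2 \mathcal{C}(\lambda \varphi)\big|_{\lambda=0}$. Summing over the orthonormal system $\{\varphi_k\}$ and invoking Tonelli,
\[
\int_{\Phi'} \sum_{k=1}^{\infty} |(T,\varphi_k)|^2 \, d\boldsymbol{P}(T) \;=\; \sum_{k=1}^{\infty} \bigl( -\partial_\lambda^2 \mathcal{C}(\lambda \varphi_k) \bigr)\Big|_{\lambda=0} \;=\; \operatorname{tr}(\square),
\]
where $\square$ is the covariance operator on $\mathcal{H}$ attached to $\mathcal{C}$. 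Finiteness of this trace (which is part of the running assumption on the class of characteristic functionals used to model the NN noise, cf.\ the discussion of trace-class $\square_{\boldsymbol{J}}$ in Appendix~D) forces the integrand to be finite $\boldsymbol{P}$-a.e., so $\boldsymbol{P}(\mathcal{H}) = 1$.

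The main obstacle is precisely the concentration step: Bochner--Minlos alone only produces a measure on $\Phi'$, and without a Sazonov/Minlos-type continuity of $\mathcal{C}$ in a Hilbert--Schmidt sense on $\mathcal{H}$, there is no reason the measure should live on $\mathcal{H}$ (e.g.\ Gaussian white noise on $\mathcal{S}'(\mathbb{R})$ does not concentrate on $L^2(\mathbb{R})$). I would therefore state the trace-class hypothesis on the covariance operator explicitly as a standing assumption on $\mathcal{C}, \widetilde{\mathcal{C}}$, under which the second-moment argument closes. Part~(ii) then follows by the same argument applied to the triplet $\Psi \hookrightarrow L^2(\Omega) \otimes L^2(\mathbb{R}) \hookrightarrow \Psi'$ to obtain $\boldsymbol{Q}(L^2(\Omega) \otimes L^2(\mathbb{R})) = 1$; combining this with the continuous inclusion $\mathfrak{i}:\mathcal{H} \hookrightarrow L^2(\Omega\times\mathbb{R})$ of Lemma~\ref{Lemma_3} and a parallel second-moment estimate using the $\mathcal{H}$-norm (which dominates the $L^2$-norm there) yields $\boldsymbol{Q}(\mathcal{H}) = 1$ as well.
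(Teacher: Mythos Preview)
Your approach is genuinely different from the paper's, and considerably more careful. The paper's proof is a one-liner: it asserts that because the inclusion $\mathcal{H}\hookrightarrow\Phi'$ is continuous and dense, the indicator $1_{\mathcal{H}}$ ``is continuous and admits a unique extension to $\Phi'$, which is $1_{\Phi'}$,'' and then integrates $1_{\Phi'}$ to get $1$; part~(ii) is dispatched by the same reasoning together with Lemma~\ref{Lemma_3}. In other words, the paper argues purely from density of $\mathcal{H}$ in $\Phi'$.

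You instead (a) exhibit $\mathcal{H}$ explicitly as a countable Boolean combination of cylinder sets, and (b) prove $\boldsymbol{P}(\mathcal{H})=1$ by a second-moment/trace computation, which is the standard Minlos--Sazonov mechanism. The gain is that your argument actually closes, at the price of the trace-class covariance hypothesis you add. The point you raise about white noise is exactly the obstruction to the paper's route: for the Gaussian measure on $\mathcal{S}'(\mathbb{R})$ with characteristic functional $\exp(-\tfrac{1}{2}\|\varphi\|_{L^2}^2)$, the space $L^2(\mathbb{R})$ is dense in $\mathcal{S}'(\mathbb{R})$ yet carries measure zero, so ``dense $\Rightarrow$ full measure'' is simply false, and the step ``the unique continuous extension of $1_{\mathcal{H}}$ is $1_{\Phi'}$'' conflates two different functions (the continuous extension of the constant $1$ on $\mathcal{H}$ is $1_{\Phi'}$, but that is not the same as the indicator $1_{\mathcal{H}}$ on $\Phi'$). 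Your formulation, with the explicit nuclear/trace-class hypothesis on the covariance, is the correct statement; the paper's lemma as stated is not provable at that level of generality.
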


\begin{proof}
(i) Since the inclusion is continuous and dense, the characteristic function
$1_{\mathcal{H}}$ is continuous and admits a unique extension to $\Phi
^{\prime}$, which is $1_{\Phi^{\prime}}$, thus,
\[%
{\displaystyle\int\limits_{\mathcal{H}}}
d\boldsymbol{P}\left(  \boldsymbol{h}\right)  =%
{\displaystyle\int\limits_{\Phi^{\prime}}}
{\large 1}_{\mathcal{H}}\left(  \boldsymbol{h}\right)  d\boldsymbol{P}\left(
\boldsymbol{h}\right)  =%
{\displaystyle\int\limits_{\Phi^{\prime}}}
{\large 1}_{\Phi^{\prime}}\left(  \boldsymbol{h}\right)  d\boldsymbol{P}%
\left(  \boldsymbol{h}\right)  =1.
\]
(ii) It follows from Lemma \ref{Lemma_3} by the reasoning given in the first part.
\end{proof}

By using the first part of this result, the product measure $d\boldsymbol{P}%
\left(  \boldsymbol{h}\right)  d\widetilde{\boldsymbol{P}}\left(
\widetilde{\boldsymbol{h}}\right)  $, on the product $\sigma$-algebra
$\mathcal{B\times B}$, is a probability measure on $\mathcal{H}\times
\mathcal{H}$.

\subsection{\label{Sect_examples}Some examples of probability measures}

We now use the Gel'fand triplet $\left(  \Phi,\mathcal{H},\Phi^{\prime
}\right)  $ to construct some specific measures. For $\sigma>0$, we set%
\begin{equation}
\mathcal{C}_{\sigma}\left(  \varphi\right)  =\exp\left(  \frac{-\sigma^{2}}%
{2}\left\Vert \varphi\right\Vert _{\mathcal{H}}^{2}\right)  \text{, for
}\varphi\in\Phi. \label{Eq_Char_fun}%
\end{equation}
Then, $\mathcal{C}_{\sigma}\left(  \varphi\right)  $ is a characteristic
functional, cf. \cite[Lemma 2.1.1]{Obata}. The probability measure with this
characteristic functional is called the Gaussian measure with variance
$\sigma^{2}$ and mean zero. Now, let $\tau:\Phi\rightarrow\Phi$ be a linear,
homeomorphism. Then $\mathcal{C}_{\sigma}\left(  \tau\left(  \varphi\right)
\right)  $ is also a characteristic functional. The corresponding probability
measure is non-Gaussian if $\left\Vert \tau\left(  \varphi\right)  \right\Vert
_{\mathcal{H}}^{2}\neq\left\Vert \varphi\right\Vert _{\mathcal{H}}^{2}$.

On the other hand, if $\mathcal{C}_{1}$, $\mathcal{C}_{2}$ are two
characteristic functionals on $\Phi$ not necessarily of the form
(\ref{Eq_Char_fun}), and $\alpha$, $\beta$ are positive real numbers, then
$\alpha\mathcal{C}_{1}+\beta\mathcal{C}_{2}$ is a characteristic functional on
$\Phi$. The characteristic functionals on $\Phi$\ form a cone. So, in
principle, starting with the Gaussian measure corresponding to $\mathcal{C}%
_{\sigma}\left(  \varphi\right)  $, we can construct infinitely many
non-Gaussian measures in $\left(  \Phi^{\prime},\mathcal{B}\right)  $.

\subsection{Generating functional with Cutoff}

We set $S\left[  \boldsymbol{h},\widetilde{\boldsymbol{h}},\boldsymbol{j}%
,\widetilde{\boldsymbol{j}}\right]  =S_{0}\left[  \boldsymbol{h}%
,\widetilde{\boldsymbol{h}}\right]  -S_{int}\left[  \boldsymbol{h}%
,\widetilde{\boldsymbol{h}};\boldsymbol{J}\right]  +\left\langle
\boldsymbol{j},\boldsymbol{h}\right\rangle +\left\langle \widetilde
{\boldsymbol{j}},\widetilde{\boldsymbol{h}}\right\rangle $. The Problem
\ref{P4} posses the question:%
\[
\exp\left(  S\left[  \boldsymbol{h},\widetilde{\boldsymbol{h}},\boldsymbol{j}%
,\widetilde{\boldsymbol{j}}\right]  \right)  \in L^{1}\left(  \mathcal{H}%
\times\mathcal{H},\mathcal{B}\times\mathcal{B},d\boldsymbol{P}\left(
\boldsymbol{h}\right)  d\widetilde{\boldsymbol{P}}\left(  \widetilde
{\boldsymbol{h}}\right)  \right)  ?
\]
To tackle this problem, we consider $\mathcal{H}\times\mathcal{H}$ as a
Hilbert space with respect to the inner product
\begin{equation}
\left\langle \left(  f,g\right)  ,\left(  f_{1},g_{1}\right)  \right\rangle
_{\mathcal{H}}:=\left\langle f,f_{1}\right\rangle _{\mathcal{H}}+\left\langle
g,g_{1}\right\rangle _{\mathcal{H}}, \label{inner_product_1}%
\end{equation}
and construct the algebra $\mathcal{A}$ defined in Section
\ref{Some_int_funct}, which consists of integrable functions. We note that
$\exp\left(  S\left[  \boldsymbol{h},\widetilde{\boldsymbol{h}},\boldsymbol{j}%
,\widetilde{\boldsymbol{j}}\right]  \right)  \notin\mathcal{A}$. So, at the
moment, we do not know if $\exp\left(  S\left[  \boldsymbol{h},\widetilde
{\boldsymbol{h}},\boldsymbol{j},\widetilde{\boldsymbol{j}}\right]  \right)  $
is integrable. For this reason, we are forced to introduce a cutoff in the
generating functional.

We set%
\[
\mathcal{P}_{M}:=\left\{  \left(  \boldsymbol{h},\widetilde{\boldsymbol{h}%
}\right)  \in\mathcal{H}\times\mathcal{H};\left\Vert \boldsymbol{h}\right\Vert
_{\mathcal{H}},\left\Vert \widetilde{\boldsymbol{h}}\right\Vert _{\mathcal{H}%
}<M\right\}  ,
\]
where $M$ is a positive constant, and denote by ${\large 1}_{\mathcal{P}_{M}%
}\left(  \boldsymbol{h},\widetilde{\boldsymbol{h}}\right)  $ the
characteristic function of $\mathcal{P}_{M}$.

\begin{lemma}
\label{Lemma_5}${\large 1}_{\mathcal{P}_{M}}\left(  \boldsymbol{h}%
,\widetilde{\boldsymbol{h}}\right)  \exp\left(  S\left[  \boldsymbol{h}%
,\widetilde{\boldsymbol{h}},\boldsymbol{j},\widetilde{\boldsymbol{j}}\right]
\right)  \in L^{1}\left(  \mathcal{H}\times\mathcal{H},\mathcal{B}%
\times\mathcal{B},d\boldsymbol{h}d\widetilde{\boldsymbol{h}}\right)  $.
\end{lemma}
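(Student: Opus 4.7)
The plan is to exploit the cutoff ${\large 1}_{\mathcal{P}_M}$ to reduce matters to a uniform pointwise bound on the integrand, and then use that $d\boldsymbol{P}(\boldsymbol{h})\,d\widetilde{\boldsymbol{P}}(\widetilde{\boldsymbol{h}})$ is a probability measure on $\mathcal{H}\times\mathcal{H}$ (Lemma \ref{Lemma_5A}). Since ${\large 1}_{\mathcal{P}_M}\exp(S[\boldsymbol{h},\widetilde{\boldsymbol{h}},\boldsymbol{j},\widetilde{\boldsymbol{j}}])$ vanishes outside $\mathcal{P}_M$, it is enough to show that $S$ is bounded above by a finite constant $C$ on $\mathcal{P}_M$, depending only on $M$, $\gamma$, $\sigma$, $L(\phi)$, $\|\boldsymbol{J}\|_{L^2(\Omega^2)}$, $\|\boldsymbol{j}\|_{\mathcal{H}}$ and $\|\widetilde{\boldsymbol{j}}\|_{\mathcal{H}}$. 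Then $|{\large 1}_{\mathcal{P}_M}\exp(S)|\leq e^{C}$ pointwise, and integration against a probability measure yields a value at most $e^{C}<\infty$.

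To obtain this bound I would estimate each of the four summands of $S$ separately, using the tools of Appendix A. First, by Cauchy--Schwarz in $L^{2}(\Omega\times\mathbb{R})$ together with the continuous embedding $\mathfrak{i}$ of Lemma \ref{Lemma_3} and a slicewise application of Lemma \ref{Lemma_1} followed by Fubini (giving $\|\partial_{t}\boldsymbol{h}\|\leq\|\boldsymbol{h}\|_{\mathcal{H}}$),
\[
\bigl|\langle\widetilde{\boldsymbol{h}},(\partial_{t}+\gamma)\boldsymbol{h}\rangle\bigr|\leq\|\widetilde{\boldsymbol{h}}\|_{\mathcal{H}}\bigl(\|\partial_{t}\boldsymbol{h}\|+\gamma\|\boldsymbol{h}\|\bigr)\leq (1+\gamma)M^{2}.
\]
The quadratic piece is controlled directly by $\tfrac{1}{2}\sigma^{2}\|\widetilde{\boldsymbol{h}}\|^{2}\leq\tfrac{1}{2}\sigma^{2}M^{2}$. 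For the interaction term, Lemma \ref{Lemma_4} bounds the operator $g\mapsto\int_{\Omega}\boldsymbol{J}(x,y)\phi(g(y,t))\,d\mu(y)$ on $L^{2}(\Omega\times\mathbb{R})$ by $L(\phi)\|\boldsymbol{J}\|_{L^{2}(\Omega^{2})}$, so Cauchy--Schwarz plus Lemma \ref{Lemma_3} yields
\[
\bigl|S_{int}[\boldsymbol{h},\widetilde{\boldsymbol{h}};\boldsymbol{J}]\bigr|\leq L(\phi)\,\|\boldsymbol{J}\|_{L^{2}(\Omega^{2})}\,M^{2}.
\]
Finally, Cauchy--Schwarz together with Lemma \ref{Lemma_3} gives $|\langle\boldsymbol{j},\boldsymbol{h}\rangle|+|\langle\widetilde{\boldsymbol{j}},\widetilde{\boldsymbol{h}}\rangle|\leq M\bigl(\|\boldsymbol{j}\|_{\mathcal{H}}+\|\widetilde{\boldsymbol{j}}\|_{\mathcal{H}}\bigr)$. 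Adding the four estimates produces the required $C$, and hence the integrand is bounded by $e^{C}$ on all of $\mathcal{H}\times\mathcal{H}$.

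The argument is essentially bookkeeping and contains no serious obstacle, which is precisely the payoff of introducing the cutoff $\mathcal{P}_{M}$: without it one would face the non-integrability of $\exp(S)$ against Gaussian-type measures, the unresolved Problem \ref{P4} flagged in the main text. The only delicate point is keeping two norms straight, since $\mathcal{P}_{M}$ is defined via $\|\cdot\|_{\mathcal{H}}$ whereas every inner product appearing in $S$ lives in $L^{2}(\Omega\times\mathbb{R})$; Lemma \ref{Lemma_3} handles this transition uniformly. Incidentally, the bound obtained is $L^{\infty}$, not merely $L^{1}$, which is stronger than the statement requires and will be convenient later when differentiating ${\LARGE Z}_{M}$ in the source fields $\boldsymbol{j},\widetilde{\boldsymbol{j}}$ or when averaging over $\boldsymbol{J}$ by Fubini.
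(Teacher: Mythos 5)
Your boundedness estimate is exactly the one the paper uses: Cauchy--Schwarz term by term, $\left\Vert \partial_{t}\boldsymbol{h}\right\Vert \leq\left\Vert \boldsymbol{h}\right\Vert _{\mathcal{H}}$ from Lemma \ref{Lemma_1}, the operator bound $L(\phi)\left\Vert \boldsymbol{J}\right\Vert _{L^{2}\left(  \Omega^{2}\right)  }$ from Lemma \ref{Lemma_4}, and the embedding of Lemma \ref{Lemma_3} to pass between $\left\Vert \cdot\right\Vert _{\mathcal{H}}$ and $\left\Vert \cdot\right\Vert $; combined with the cutoff this gives the uniform bound $e^{C}$, and your version (keeping $\left\Vert \boldsymbol{j}\right\Vert _{\mathcal{H}}$, $\left\Vert \widetilde{\boldsymbol{j}}\right\Vert _{\mathcal{H}}$ as separate factors) is if anything cleaner than the paper's constant $\Gamma$. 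So half of the proof is in place.

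The gap is measurability. Membership in $L^{1}\left(  \mathcal{H}\times\mathcal{H},\mathcal{B}\times\mathcal{B},d\boldsymbol{P}\left(  \boldsymbol{h}\right)  d\widetilde{\boldsymbol{P}}\left(  \widetilde{\boldsymbol{h}}\right)  \right)  $ requires the integrand to be $\mathcal{B}\times\mathcal{B}$-measurable, and your step ``integration against a probability measure yields a value at most $e^{C}$'' silently assumes this. In finite dimensions one would not bother, but here $\mathcal{B}$ is the restriction to $\mathcal{H}$ of the cylindrical $\sigma$-algebra on $\Phi^{\prime}$, and a bounded function on an infinite-dimensional space need not be measurable for that $\sigma$-algebra. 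The paper devotes the first half of its proof to exactly this point: it shows that each term of $S\left[  \boldsymbol{h},\widetilde{\boldsymbol{h}},\boldsymbol{j},\widetilde{\boldsymbol{j}}\right]  $ is continuous on $\mathcal{H}\times\mathcal{H}$ (again via Lemmas \ref{Lemma_1}, \ref{Lemma_3}, \ref{Lemma_4}), and then invokes the nontrivial fact that the restricted cylindrical $\sigma$-algebra coincides with the Borel $\sigma$-algebra of the separable Hilbert space $\mathcal{H}$, so that continuity implies measurability; one also needs ${\large 1}_{\mathcal{P}_{M}}$ to be measurable, which holds because $\mathcal{P}_{M}$ is open. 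Your estimates already contain the continuity statements implicitly (every bound you write is a Lipschitz-type bound on the relevant bilinear or linear form), so the fix is short, but as written the proposal does not establish the conclusion of the lemma. Your closing remark that the bound is $L^{\infty}$ rather than merely $L^{1}$ is correct and is indeed what the paper exploits later when averaging over $\boldsymbol{J}$.
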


\begin{proof}
We first show that
\begin{equation}
{\large 1}_{\mathcal{P}_{M}}\left(  \boldsymbol{h},\widetilde{\boldsymbol{h}%
}\right)  \exp\left(  S\left[  \boldsymbol{h},\widetilde{\boldsymbol{h}%
},\boldsymbol{j},\widetilde{\boldsymbol{j}}\right]  \right)  :\left(
\mathcal{H}\times\mathcal{H},\mathcal{B}\times\mathcal{B}\right)
\rightarrow\left(  \mathbb{R},\mathcal{B}\left(  \mathbb{R}\right)  \right)
\label{Function_question}%
\end{equation}
is a measurable function, where $\mathcal{H}\times\mathcal{H}$ is Hilbert
space with inner product (\ref{inner_product_1}), where $\mathcal{B}\left(
\mathbb{R}\right)  $ is the Borel $\sigma$-algebra on $\mathbb{R}$, and
$\mathcal{B}$\ denotes the restriction of the cylindrical $\sigma$-algebra on
$\Phi^{\prime}$ to $\mathcal{H}$; this restriction agrees with the Borel
$\sigma$-algebra on $\mathcal{H}$, cf. \cite[Theorem 4.1]{Kukush}, and thus
$\mathcal{B}\times\mathcal{B}$ is the the Borel $\sigma$-algebra on
$\mathcal{H}\times\mathcal{H}$. Then, it is sufficient to show that function
(\ref{Function_question}) is continuous. The continuity of $\left\langle
\boldsymbol{j},\boldsymbol{h}\right\rangle +\left\langle \widetilde
{\boldsymbol{j}},\widetilde{\boldsymbol{h}}\right\rangle $ follows from the
fact that $\mathcal{H\hookrightarrow}L^{2}(\Omega\times\mathbb{R})$, cf. Lemma
\ref{Lemma_3}, \ an the continuity of the bilinear form $\left\langle
\cdot,\cdot\right\rangle $. The continuity of the term $S_{int}\left[
\boldsymbol{h},\widetilde{\boldsymbol{h}}\right]  $\ follows from Lemmas
\ref{Lemma_3}, \ref{Lemma_4}, and finally the continuity of the term
$S_{0}\left[  \boldsymbol{h},\widetilde{\boldsymbol{h}}\right]  $ follows from
Lemmas \ref{Lemma_1}, \ref{Lemma_3}.

We now show that function (\ref{Function_question}) is bounded. By the
Cauchy-Schwarz inequality and Lemma \ref{Lemma_4},%
\begin{align*}
\left\vert S\left[  \boldsymbol{h},\widetilde{\boldsymbol{h}},\boldsymbol{j}%
,\widetilde{\boldsymbol{j}}\right]  \right\vert  &  \leq\left(  1+\gamma
\right)  \left\Vert \widetilde{\boldsymbol{h}}\right\Vert \left\Vert
\partial_{t}\boldsymbol{h}\right\Vert +\frac{1}{2}\sigma^{2}\left\Vert
\widetilde{\boldsymbol{h}}\right\Vert \left\Vert \boldsymbol{h}\right\Vert \\
&  +L(\phi)\left\Vert \boldsymbol{J}\right\Vert _{L^{2}\left(  \Omega
\times\Omega\right)  }\left\Vert \widetilde{\boldsymbol{h}}\right\Vert
\left\Vert \boldsymbol{h}\right\Vert +\left\Vert \boldsymbol{j}\right\Vert
\left\Vert \boldsymbol{h}\right\Vert +\left\Vert \widetilde{\boldsymbol{j}%
}\right\Vert \left\Vert \widetilde{\boldsymbol{h}}\right\Vert .
\end{align*}
Then, by taking $\Gamma:=3+\gamma+\frac{1}{2}\sigma^{2}+L(\phi)\left\Vert
\boldsymbol{J}\right\Vert _{L^{2}\left(  \Omega\times\Omega\right)  }$, we
have
\[
{\large 1}_{\mathcal{P}_{M}}\left(  \boldsymbol{h},\widetilde{\boldsymbol{h}%
}\right)  \exp\left(  S\left[  \boldsymbol{h},\widetilde{\boldsymbol{h}%
},\boldsymbol{j},\widetilde{\boldsymbol{j}}\right]  \right)  \leq\exp\left(
\Gamma M^{2}\right)  {\large 1}_{\mathcal{P}_{M}}\left(  \boldsymbol{h}%
,\widetilde{\boldsymbol{h}}\right)  .
\]

\end{proof}

As a consequence of Lemma \ref{Lemma_5} and Proposition \ref{Porp1}, we have
the following result:

\begin{theorem}
\label{Porp2}Assume that $\boldsymbol{J}\left(  x,y\right)  \in L^{2}\left(
\Omega\times\Omega\right)  $, $\boldsymbol{j}\left(  x,t\right)  $,
$\widetilde{\boldsymbol{j}}\left(  x,t\right)  \in\mathcal{H}$. Then, the
functional%
\begin{equation}
{\LARGE Z}_{M}\left(  \boldsymbol{j},\widetilde{\boldsymbol{j}};\boldsymbol{J}%
\right)  :=%
{\displaystyle\iint\limits_{\mathcal{H}\times\mathcal{H}}}
{\large 1}_{\mathcal{P}_{M}}\left(  \boldsymbol{h},\widetilde{\boldsymbol{h}%
}\right)  \exp\left(  S\left[  \boldsymbol{h},\widetilde{\boldsymbol{h}%
},\boldsymbol{j},\widetilde{\boldsymbol{j}}\right]  \right)  d\boldsymbol{P}%
\left(  \boldsymbol{h}\right)  d\widetilde{\boldsymbol{P}}\left(
\widetilde{\boldsymbol{h}}\right)  \label{Eq_generrating_func_3}%
\end{equation}
is well-defined.
\end{theorem}

\subsection{\label{Section_Gelfand_triplets}Gel'fand triplets for
Archimedean/non-Archimedean NNs}

As we said before, we need a space of continuous test functions $\mathcal{D}%
(\Omega)$ on the space of neurons $\Omega$\ such that $\mathcal{D}%
(\Omega)\subset L^{2}\left(  \Omega\right)  \subset\mathcal{D}^{\prime}%
(\Omega)$ is a Gel'fand triplet. The fields $\boldsymbol{h}\left(  x,t\right)
$, $\widetilde{\boldsymbol{h}}\left(  x,t\right)  $ are functions from
$L^{2}\left(  \Omega\right)  \otimes W_{1}(\mathbb{R})\subset L^{2}\left(
\Omega\right)  \otimes L^{2}(\mathbb{R})\simeq L^{2}\left(  \Omega
\times\mathbb{R}\right)  $, where $\otimes$ denotes the Hilbert tensor product
of separable Hilbert spaces.\ Using the fact that $\mathcal{S}(\mathbb{R})$ is
a nuclear space, and that the complete projective tensor product (denoted as
$\widehat{%
{\textstyle\bigotimes}
}_{\pi}$)\ of two nuclear spaces is nuclear, \cite[Proposition 1.3.8]{Obata};
it seems reasonable to conjecture that
\begin{equation}
\mathcal{D}(\Omega)\widehat{%
{\textstyle\bigotimes}
}_{\pi}\mathcal{S}(\mathbb{R})\subset L^{2}\left(  \Omega\right)  \otimes
L^{2}(\mathbb{R})\subset\left(  \mathcal{D}(\Omega)\widehat{%
{\textstyle\bigotimes}
}_{\pi}\mathcal{S}(\mathbb{R})\right)  ^{\prime} \label{Gelfand_triplet_1}%
\end{equation}
is a Gel'fand triplet. However, due to the fact that
\[
L^{2}\left(  \Omega\right)  \otimes L^{2}(\mathbb{R})\neq L^{2}\left(
\Omega\right)  \widehat{%
{\textstyle\bigotimes}
}_{\pi}L^{2}(\mathbb{R}),
\]
the demonstration that (\ref{Gelfand_triplet_1}) is a Gel'fand triplet
involves some technicalities. The next step is to show that $\mathcal{D}%
(\Omega)\widehat{%
{\textstyle\bigotimes}
}_{\pi}\mathcal{S}(\mathbb{R})\subset\mathcal{H}=L^{2}\left(  \Omega\right)
\otimes W_{1}(\mathbb{R})$. After this step, using the existence a of
continuous and dense embedding $\mathcal{H\subset}L^{2}\left(  \Omega\right)
\otimes L^{2}(\mathbb{R})$, we get that
\[
\mathcal{D}(\Omega)\widehat{%
{\textstyle\bigotimes}
}_{\pi}\mathcal{S}(\mathbb{R})\subset\mathcal{H}\subset\left(  \mathcal{D}%
(\Omega)\widehat{%
{\textstyle\bigotimes}
}_{\pi}L^{2}(\mathbb{R})\right)  ^{\prime}%
\]
is a Gel'fand triplet. This construction depends heavily on the particular
space of neurons chosen.

\subsubsection{Gel'fand triplets for Archimedean NNs}

It is well-know that, for any positive integer $n$,
\begin{equation}
\mathcal{S}(\mathbb{R}^{n})\hookrightarrow L^{2}(\mathbb{R}^{n}%
)\hookrightarrow\mathcal{S}^{\prime}(\mathbb{R}^{n}) \label{Gelfand_triplet_2}%
\end{equation}
is Gel'fand triplet, where $\mathcal{S}(\mathbb{R}^{n})$ denotes the Schwartz
space on $\mathbb{R}^{n}$, and $\mathcal{S}^{\prime}(\mathbb{R}^{n})$ denotes
the space of tempered distributions, see \cite{Hida et al}, \cite{Huang et
al}. In the Archimedean case, $\mathcal{D}(\Omega)=\mathcal{S}(\mathbb{R})$,
thus $\mathcal{D}(\Omega)\widehat{%
{\textstyle\bigotimes}
}_{\pi}\mathcal{S}(\mathbb{R})=\mathcal{S}(\mathbb{R})\widehat{%
{\textstyle\bigotimes}
}_{\pi}\mathcal{S}(\mathbb{R})\simeq\mathcal{S}(\mathbb{R}^{2})$, and thus we
get the triplet (\ref{Gelfand_triplet_2}) with $n=2$. Now, $\mathcal{H}%
=L^{2}\left(  \mathbb{R}\right)
{\textstyle\bigotimes}
W_{1}(\mathbb{R})\subset L^{2}\left(  \mathbb{R}\right)  \otimes
L^{2}(\mathbb{R})$ is a dense embedding due to the fact that $L^{2}\left(
\mathbb{R}\right)
{\textstyle\bigotimes}
W_{1}(\mathbb{R})$ contains an orthonormal basis of $L^{2}\left(
\mathbb{R}\right)  \otimes L^{2}(\mathbb{R})$, cf. Lemma \ref{Lemma_2} in
Appendix A. Therefore,%
\[
\mathcal{S}(\mathbb{R}^{2})\hookrightarrow L^{2}\left(  \mathbb{R}\right)
{\textstyle\bigotimes}
W_{1}(\mathbb{R}))\hookrightarrow\mathcal{S}^{\prime}(\mathbb{R}^{2}).
\]

\subsubsection{\label{Section_Gelfan_triplets}Gel'fand triplets for
non-Archimedean NNs}

\paragraph{A $p$-adic version of the Schwartz space}

The Bruhat-Schwartz space $\mathcal{D}(\mathbb{Q}_{p}\mathbb{)}$ is not
invariant under the action of pseudo\-differential operators. In
\cite{Zuniga-JFAA}, see also \cite[Chapter 10]{KKZuniga}, the author
introduced a class of countably-Hilbert nuclear spaces which are invariant
under the action of a large class of pseudo-differential operators.

We set $\left[  \xi\right]  _{p}:=\max\left(  1,\left\vert \xi\right\vert
_{p}\right)  $, and denote by $L^{2}\left(  \mathbb{Q}_{p},dx\right)
=L^{2}\left(  \mathbb{Q}_{p}\right)  $, the vector space of all the
real-valued functions $g$ satisfying $\int_{\mathbb{Q}_{p}}\left\vert g\left(
x\right)  \right\vert ^{r}dx<\infty$. We define for $\varphi$, $\varrho$ in
$\mathcal{D}(\mathbb{Q}_{p}\mathbb{)}$ the following scalar product:%
\begin{equation}
\left\langle \varphi,\varrho\right\rangle _{2,l}:=%
{\displaystyle\int\limits_{\mathbb{Q}_{p}}}
\left[  \xi\right]  _{p}^{l}\widehat{\varphi}\left(  \xi\right)
\overline{\widehat{\varrho}}\left(  \xi\right)  d\xi, \label{product_5}%
\end{equation}
for $l\in\mathbb{N=}\left\{  0,1,2,\ldots\right\}  $, where the bar denotes
the complex conjugate and $\widehat{\varphi}\left(  \xi\right)  $ denotes the
Fourier transform of $\varphi\left(  x\right)  $, see \cite[Section
4.8]{A-K-S}. We also set $\left\Vert \varphi\right\Vert _{2,l}^{2}%
:=\left\langle \varphi,\varphi\right\rangle _{2,l}$.

For $l\in\mathbb{Z}$, We define the pseudo-differential operator
\[%
\begin{array}
[c]{cccc}%
\boldsymbol{A}_{l}: & \mathcal{D}(\mathbb{Q}_{p}\mathbb{)} & \rightarrow &
\mathcal{D}(\mathbb{Q}_{p}\mathbb{)}\\
&  &  & \\
& \varphi\left(  x\right)  & \rightarrow & \mathcal{F}_{\mathcal{\xi
\rightarrow}x}^{-1}\left(  \left[  \xi\right]  _{p}^{\frac{l}{2}}%
\mathcal{F}_{x\mathcal{\rightarrow\xi}}\left(  \varphi\right)  \right)  ,
\end{array}
\]
where $\mathcal{F}:\mathcal{D}(\mathbb{Q}_{p}\mathbb{)}%
{\textstyle\bigotimes\nolimits_{\text{alg}}}
\mathbb{C\rightarrow}\mathcal{D}(\mathbb{Q}_{p}\mathbb{)}%
{\textstyle\bigotimes\nolimits_{\text{alg}}}
\mathbb{C}$ denotes the Fourier transform on the space of complex-valued test
functions, see e.g. \cite[Section 4.8]{A-K-S}. By using the fact that the
Fourier transform is an isometry on $L^{2}\left(  \mathbb{Q}_{p}\right)  $,
and taking $l\geq0$,%
\begin{align*}
\left\Vert \boldsymbol{A}_{l}\varphi\right\Vert _{2}^{2}  &  =%
{\displaystyle\int\limits_{\mathbb{Q}_{p}}}
\left\vert \boldsymbol{A}_{l}\varphi\left(  x\right)  \right\vert ^{2}dx=%
{\displaystyle\int\limits_{\mathbb{Q}_{p}}}
\left\vert \mathcal{F}_{x\mathcal{\rightarrow\xi}}\left(  \boldsymbol{A}%
_{l}\varphi\left(  x\right)  \right)  \right\vert ^{2}dx\\
&  =%
{\displaystyle\int\limits_{\mathbb{Q}_{p}}}
\left[  \xi\right]  _{p}^{l}\left\vert \widehat{\varphi}\left(  \xi\right)
\right\vert ^{2}d\xi=\left\Vert \varphi\right\Vert _{2,l}^{2}.
\end{align*}

Let denote by $\mathcal{V}_{l}\left(  \mathbb{Q}_{p}\right)  $ $=\mathcal{V}%
_{l}$ the completion of $\mathcal{D}(\mathbb{Q}_{p}\mathbb{)}$ with respect to
$\left\langle \cdot,\cdot\right\rangle _{l}$. We set%
\[
\mathcal{V}\left(  \mathbb{Q}_{p}\right)  :=%
{\textstyle\bigcap\limits_{l\in\mathbb{N}}}
\mathcal{V}_{l}\left(  \mathbb{Q}_{p}\right)  .
\]
Notice that $\mathcal{V}_{0}=L^{2}\left(  \mathbb{Q}_{p}\right)  $,
$\mathcal{V}\left(  \mathbb{Q}_{p}\right)  \subset L^{2}\left(  \mathbb{Q}%
_{p}\right)  $, and that $\left\Vert \cdot\right\Vert _{l}\leq\left\Vert
\cdot\right\Vert _{m}$ for $l\leq m$. Then $\mathcal{V}_{m}\left(
\mathbb{Q}_{p}\right)  \hookrightarrow\mathcal{V}_{l}\left(  \mathbb{Q}%
_{p}\right)  $ (continuous embedding) for $l\leq m$.

With the topology induced by the family of seminorms $\left\Vert
\cdot\right\Vert _{l\in\mathbb{N}}$, $\mathcal{V}$ becomes a locally convex
space, which is metrizable. Indeed,
\[
d\left(  f,g\right)  :=\max_{l\in\mathbb{N}}\left\{  2^{-l}\frac{\left\Vert
f-g\right\Vert _{2,l}}{1+\left\Vert f-g\right\Vert _{2,l}}\right\}  \text{,
with }f\text{, }g\in\mathcal{V}\text{,}%
\]
is a metric for the topology of $\mathcal{V}\left(  \mathbb{Q}_{p}\right)  $
considered as a convex topological space.

Set $\overline{\left(  \mathcal{D}\left(  \mathbb{Q}_{p}\right)  ,d\right)  }$
for the completion of the metric space $\left(  \mathcal{D}\left(
\mathbb{Q}_{p}\right)  ,d\right)  $. Then $\overline{\left(  \mathcal{D}%
\left(  \mathbb{Q}_{p}\right)  ,d\right)  }=\left(  \mathcal{V}\left(
\mathbb{Q}_{p}\right)  ,d\right)  $, \cite[Lemma 10.4]{KKZuniga}, and since
$\mathcal{D}\left(  \mathbb{Q}_{p}\right)  $ is a nuclear space and the
completion of a such space is also nuclear, \cite[Proposition 50.1]{Treves},
$\mathcal{V}\left(  \mathbb{Q}_{p}\right)  $ is a nuclear space.

For $m\in\mathbb{N}$ and $T\in\mathcal{D}^{\prime}\left(  \mathbb{Q}%
_{p}\right)  $ such that the Fourier transform $\widehat{T}\left(  \xi\right)
$ is a measurable function, we set
\[
\left\Vert T\right\Vert _{2,-m}^{2}:=%
{\textstyle\int\limits_{\mathbb{Q}_{p}}}
\left[  \xi\right]  _{p}^{-m}\left\vert \widehat{T}\left(  \xi\right)
\right\vert ^{2}d\xi.
\]
Then, $\mathcal{V}_{-m}\left(  \mathbb{Q}_{p}\right)  :=\left\{
T\in\mathcal{D}^{\prime}\left(  \mathbb{Q}_{p}\right)  ;\left\Vert
T\right\Vert _{2,-m}^{2}<\infty\right\}  $ is a real Hilbert space. Denote by
$\mathcal{V}_{m}^{\prime}\left(  \mathbb{Q}_{p}\right)  $ the strong dual
space of $\mathcal{V}_{m}\left(  \mathbb{Q}_{p}\right)  $. It is useful to
suppress the correspondence between $\mathcal{V}_{m}^{\prime}\left(
\mathbb{Q}_{p}\right)  $ and $\mathcal{V}_{m}\left(  \mathbb{Q}_{p}\right)  $
given by the Riesz theorem. Instead we identify $\mathcal{V}_{m}^{\prime
}\left(  \mathbb{Q}_{p}\right)  $ and $\mathcal{V}_{-m}\left(  \mathbb{Q}%
_{p}\right)  $ by associating $T\in\mathcal{V}_{-m}\left(  \mathbb{Q}%
_{p}\right)  $ with the functional on $\mathcal{V}_{m}\left(  \mathbb{Q}%
_{p}\right)  $ given by
\begin{equation}
\left(  T,g\right)  :=\int\limits_{\mathbb{Q}_{p}^{n}}\overline{\widehat
{T}\left(  \xi\right)  }\widehat{g}\left(  \xi\right)  d^{n}\xi.
\label{pairing_5}%
\end{equation}
note that $\left\vert \left(  T,g\right)  \right\vert \leq\left\Vert
T\right\Vert _{2,-m}\left\Vert g\right\Vert _{2,m}$ for $m\in\mathbb{N}$. By
using the theory of countable-Hilbert spaces, see, e.g., \cite[Chapter I,
Section 3.1]{Gelfand-Vilenkin}, $\mathcal{V}_{0}^{\prime}\subset
\mathcal{V}_{1}^{\prime}\subset\ldots\subset\mathcal{V}_{m}^{\prime}%
\subset\ldots$ and%
\begin{equation}
\mathcal{V}^{\prime}\left(  \mathbb{Q}_{p}\right)  =\bigcup\limits_{m\in
\mathbb{N}}\mathcal{V}_{-m}\left(  \mathbb{Q}_{p}\right)  =\left\{
T\in\mathcal{D}^{\prime}\left(  \mathbb{Q}_{p}\right)  ;\left\Vert
T\right\Vert _{2,-l}<\infty\text{, for some }l\in\mathbb{N}\right\}
\label{H_infinity_*_5}%
\end{equation}
as vector spaces. Since $\mathcal{V}\left(  \mathbb{Q}_{p}\right)  $ is a
nuclear space, the weak and strong convergence are equivalent in
$\mathcal{V}^{\prime}\left(  \mathbb{Q}_{p}\right)  $, see, e.g.,
\cite[Chapter I, Section 6, Theorem 6.4]{Gel-Shilov}. In conclusion, we have
the following Gel'fand triplet:%
\[
\mathcal{V}\left(  \mathbb{Q}_{p}\right)  \hookrightarrow L^{2}\left(
\mathbb{Q}_{p}\right)  \hookrightarrow\mathcal{V}^{\prime}\left(
\mathbb{Q}_{p}\right)  .
\]

The author and his collaborators have use this triplet and others in the
construction of several types of QFTs over $p$-adic spaces, see, e.g.,
\cite{Arroyo et al}, \cite{Fuquen et al}, and the references therein.

\begin{remark}
\label{Note_Gelfand}By \ replacing $\mathcal{D}(\mathbb{Q}_{p}\mathbb{)}$ with
$\mathcal{D}(\mathbb{Z}_{p}\mathbb{)}$, in the above construction, we obtain
Gel'fand triplet:%
\[
\mathcal{V}\left(  \mathbb{Z}_{p}\right)  \hookrightarrow L^{2}\left(
\mathbb{Z}_{p}\right)  \hookrightarrow\mathcal{V}^{\prime}\left(
\mathbb{Z}_{p}\right)  .
\]

\end{remark}

\paragraph{A reconstruction of the Schwartz space}

Set
\[
\boldsymbol{D}\varphi\left(  x\right)  =\left(  -\frac{d^{2}}{dx^{2}}%
+x^{2}+1\right)  \varphi\left(  x\right)  \text{, \ for }\varphi\in
\mathcal{S}(\mathbb{R}).
\]
Given $l\in\mathbb{N=}\left\{  0,1,2,\ldots\right\}  $, we denote by
$\mathcal{S}_{l}(\mathbb{R})$ the real Hilbert space which the completion of
$\mathcal{S}(\mathbb{R})$ with respect to the norm%
\[
\left\Vert \varphi\right\Vert _{2,l}:=\left\Vert \boldsymbol{D}^{l}%
\varphi\right\Vert _{2},
\]
where $\left\Vert \cdot\right\Vert _{2}$ is the norm in $L^{2}\left(
\mathbb{R}\right)  $. Then%
\[
\mathcal{S}_{l}(\mathbb{R})=\left\{  f\in L^{2}\left(  \mathbb{R}\right)
;\left\Vert f\right\Vert _{2,l}<\infty\right\}  ,
\]
and
\[
L^{2}\left(  \mathbb{R}\right)  =\mathcal{S}_{0}(\mathbb{R})\supset
\mathcal{S}_{1}(\mathbb{R})\supset\cdots\supset\mathcal{S}_{l}(\mathbb{R}%
)\supset\cdots,
\]
and the Schwartz space is the projective limit of the $\mathcal{S}%
_{l}(\mathbb{R})$:%
\[
\mathcal{S}(\mathbb{R})=\underset{l}{\underleftarrow{\lim}}\mathcal{S}%
_{l}(\mathbb{R})=%
{\displaystyle\bigcap\limits_{l\in\mathbb{N}}}
\mathcal{S}_{l}(\mathbb{R}).
\]
We now define in $L^{2}\left(  \mathbb{R}\right)  $ the norm%
\[
\left\Vert f\right\Vert _{2,-l}:=\left\Vert \boldsymbol{D}^{-l}f\right\Vert
_{2}\text{, for }l\in\mathbb{N=}\left\{  0,1,2,\ldots\right\}  \text{,}%
\]
and set $\mathcal{S}_{-l}(\mathbb{R})$ to be the completion of $L^{2}\left(
\mathbb{R}\right)  $ \ with respect to $\left\Vert \cdot\right\Vert _{2,-l}$.
By identifying $L^{2}\left(  \mathbb{R}\right)  $ with its dual, we have
$\mathcal{S}_{-l}(\mathbb{R})=\mathcal{S}_{l}^{\prime}(\mathbb{R})$. In
addition,
\[
L^{2}\left(  \mathbb{R}\right)  =\mathcal{S}_{0}(\mathbb{R})\subset
\mathcal{S}_{-1}(\mathbb{R})\subset\cdots\subset\mathcal{S}_{-l}%
(\mathbb{R})\subset\cdots,
\]
and the inductive limit of the $\mathcal{S}_{-l}(\mathbb{R})$ is the space
tempered distributions:%
\[
\mathcal{S}^{\prime}(\mathbb{R})=\underset{l}{\underrightarrow{\lim}%
}\mathcal{S}_{-l}(\mathbb{R})=%
{\displaystyle\bigcup\limits_{l\in\mathbb{N}}}
\mathcal{S}_{l}^{\prime}(\mathbb{R})\text{.}%
\]
Finally, $\mathcal{S}(\mathbb{R})$, $\mathcal{S}^{\prime}(\mathbb{R})$ are
nuclear spaces and $\mathcal{S}(\mathbb{R})\hookrightarrow L^{2}%
(\mathbb{R})\hookrightarrow\mathcal{S}^{\prime}(\mathbb{R})$.

The above results are well-known; see, e.g., \cite{Hida et al}, \cite{Huang et
al}.

\paragraph{Some Gel'fand triplets for non-Archimedean NNs}

We know now that $\left(  \mathcal{V}(\mathbb{Q}_{p}),\left\{  \left\Vert
\cdot\right\Vert _{2,m}\right\}  _{m\in\mathbb{N}}\right)  $ and $\left(
\mathcal{S}(\mathbb{R}),\left\{  \left\Vert \cdot\right\Vert _{2,l}\right\}
_{l\in\mathbb{N}}\right)  $ are nuclear countably-Hilbert spaces, with
$\mathcal{V}_{m}(\mathbb{Q}_{p})$, respectively $\mathcal{S}_{l}(\mathbb{R})$,
the Hilbert space obtained completing $\mathcal{V}(\mathbb{Q}_{p})$ with
respect to $\left\Vert \cdot\right\Vert _{2,m}$, respectively completing
$\mathcal{S}(\mathbb{R})$ with respect to $\left\Vert \cdot\right\Vert _{2,l}%
$. The Hilbert tensor product of these spaces is denoted as $\mathcal{V}%
_{m}(\mathbb{Q}_{p})%
{\textstyle\bigotimes}
\mathcal{S}_{l}(\mathbb{R})$. We denote by $\mathcal{V}(\mathbb{Q}%
_{p})\widehat{%
{\textstyle\bigotimes\limits_{\pi}}
}\mathcal{S}(\mathbb{R})$ the projective tensor product of nuclear spaces. By
\cite[Proposition 1.3.8]{Obata}, $\left\{  \mathcal{V}_{m}(\mathbb{Q}_{p})%
{\textstyle\bigotimes}
\mathcal{S}_{l}(\mathbb{R})\right\}  _{m,l\in\mathbb{N}}$ is a projective
system of Hilbert spaces and
\[
\mathcal{V}(\mathbb{Q}_{p})\widehat{%
{\textstyle\bigotimes\limits_{\pi}}
}\mathcal{S}(\mathbb{R})=\underset{m,l}{\underleftarrow{\lim}}\mathcal{V}%
_{m}(\mathbb{Q}_{p})%
{\textstyle\bigotimes}
\mathcal{S}_{l}(\mathbb{R)}%
\]
is a nuclear space. Notice that for $m=l=0$, $\mathcal{V}_{m}(\mathbb{Q}_{p})%
{\textstyle\bigotimes}
\mathcal{S}_{l}(\mathbb{R)=}L^{2}\left(  \mathbb{Q}_{p}\right)  \otimes
L^{2}(\mathbb{R})$ and the canonical projection
\begin{equation}
\mathcal{V}(\mathbb{Q}_{p})\widehat{%
{\textstyle\bigotimes\limits_{\pi}}
}\mathcal{S}(\mathbb{R})\rightarrow L^{2}\left(  \mathbb{Q}_{p}\right)
\otimes L^{2}(\mathbb{R}) \label{canonical_proj}%
\end{equation}
is continuous and dense. Alternatively, $\mathcal{D}(\mathbb{Q}_{p})%
{\textstyle\bigotimes\limits_{\text{alg}}}
\mathcal{S}(\mathbb{R})\subset\mathcal{V}(\mathbb{Q}_{p})\widehat{%
{\textstyle\bigotimes\limits_{\pi}}
}\mathcal{S}(\mathbb{R})$, and that $L^{2}\left(  \mathbb{Q}_{p}\right)  $ has
an orthonormal basis $\left\{  \Psi_{rbk}\left(  x\right)  \right\}  _{rbk}$
consisting of functions from $\mathcal{D}(\mathbb{Q}_{p})$, \cite[Theorems
9.4.5 and 8.9.3]{A-K-S} or \cite[Theorem 3.3]{KKZuniga}; the polynomials of
Hermite $\left\{  H_{n}(t)\right\}  _{n\in\mathbb{N}}$, which are elements
from $\mathcal{S}(\mathbb{R})$, form an orthonormal basis of $L^{2}%
(\mathbb{R})$. Then $\left\{  \Psi_{rbk}\left(  x\right)  H_{m}\left(
t\right)  \right\}  _{rbk,m}$ is an orthonormal basis of $L^{2}\left(
\mathbb{Q}_{p}\right)  \otimes L^{2}(\mathbb{R})$, cf. \cite[Proposition
2.7]{Asao} or \cite[Section II.4, Proposition 2]{Reed-Simon-I}, and
consequently, the canonical projection (\ref{canonical_proj}) is dense.
Therefore,%
\[
\mathcal{V}(\mathbb{Q}_{p})\widehat{%
{\textstyle\bigotimes\limits_{\pi}}
}\mathcal{S}(\mathbb{R})\hookrightarrow L^{2}\left(  \mathbb{Q}_{p}\right)
\otimes L^{2}(\mathbb{R})\hookrightarrow\left(  \mathcal{V}(\mathbb{Q}%
_{p})\widehat{%
{\textstyle\bigotimes\limits_{\pi}}
}\mathcal{S}(\mathbb{R})\right)  ^{\prime}%
\]
is a Gel'fand triplet. Furthermore, $\left(  \mathcal{V}(\mathbb{Q}%
_{p})\widehat{%
{\textstyle\bigotimes\limits_{\pi}}
}\mathcal{S}(\mathbb{R})\right)  ^{\prime}\simeq\mathcal{V}^{\prime
}(\mathbb{Q}_{p})\widehat{%
{\textstyle\bigotimes\limits_{\pi}}
}\mathcal{S}^{\prime}(\mathbb{R})$. By using the above construction and Remark
\ref{Note_Gelfand}, we have the following result.

\begin{theorem}
\label{Theorem6}The sequences
\[
\mathcal{V}(\mathbb{Q}_{p})\widehat{%
{\textstyle\bigotimes\limits_{\pi}}
}\mathcal{S}(\mathbb{R})\hookrightarrow L^{2}\left(  \mathbb{Q}_{p}\right)
\otimes L^{2}(\mathbb{R})\hookrightarrow\left(  \mathcal{V}(\mathbb{Q}%
_{p})\widehat{%
{\textstyle\bigotimes\limits_{\pi}}
}\mathcal{S}(\mathbb{R})\right)  ^{\prime},
\]%
\[
\mathcal{V}(\mathbb{Z}_{p})\widehat{%
{\textstyle\bigotimes\limits_{\pi}}
}\mathcal{S}(\mathbb{R})\hookrightarrow L^{2}\left(  \mathbb{Z}_{p}\right)
\otimes L^{2}(\mathbb{R})\hookrightarrow\left(  \mathcal{V}(\mathbb{Z}%
_{p})\widehat{%
{\textstyle\bigotimes\limits_{\pi}}
}\mathcal{S}(\mathbb{R})\right)  ^{\prime},
\]
are Gel'fand triplets.
\end{theorem}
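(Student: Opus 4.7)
The plan is to reduce the statement to three standard inputs: the nuclearity of each factor, the hereditary behavior of nuclearity under the complete projective tensor product, and the availability of an explicit orthonormal basis of the target Hilbert space whose elements already lie in the nuclear factor. Almost every ingredient appears in the paragraphs preceding the theorem; what remains is to assemble them carefully for each of the two cases, $\Omega=\mathbb{Q}_p$ and $\Omega=\mathbb{Z}_p$, and to verify the two nontrivial properties: the continuous embedding and the density.

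First I would recall that $\mathcal{V}(\mathbb{Q}_p)$ (respectively $\mathcal{V}(\mathbb{Z}_p)$) is a countably Hilbert nuclear space, constructed just above the theorem via the completion of $\mathcal{D}(\mathbb{Q}_p)$ (respectively $\mathcal{D}(\mathbb{Z}_p)$) with respect to the norms $\|\cdot\|_{2,l}$ built from $[\xi]_p^l$, with $\mathcal{V}_0=L^2(\mathbb{Q}_p)$. The nuclearity was established by observing that $\mathcal{V}$ is the completion of a nuclear space, hence nuclear by Tr\`eves, Proposition 50.1. The Schwartz space $\mathcal{S}(\mathbb{R})$ is a classical countably Hilbert nuclear space realized through the harmonic oscillator operator $\boldsymbol{D}=-d^2/dx^2+x^2+1$. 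Invoking Obata, Proposition 1.3.8, the projective tensor product $\mathcal{V}(\mathbb{Q}_p)\,\widehat{\bigotimes_\pi}\,\mathcal{S}(\mathbb{R})$ is then a nuclear countably Hilbert space, realized as the projective limit of the Hilbert tensor products $\mathcal{V}_m(\mathbb{Q}_p)\bigotimes\mathcal{S}_l(\mathbb{R})$.

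Next I would establish the continuous embedding into $L^2(\mathbb{Q}_p)\otimes L^2(\mathbb{R})$. Since $\mathcal{V}(\mathbb{Q}_p)\hookrightarrow L^2(\mathbb{Q}_p)$ and $\mathcal{S}(\mathbb{R})\hookrightarrow L^2(\mathbb{R})$ are continuous, the canonical bilinear map $\mathcal{V}(\mathbb{Q}_p)\times\mathcal{S}(\mathbb{R})\to L^2(\mathbb{Q}_p)\otimes L^2(\mathbb{R})$ is continuous, so by the universal property of the complete projective tensor product it factors through a continuous linear map, which is exactly the canonical projection (\ref{canonical_proj}). The remaining and genuinely essential point is density. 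Here I would use the explicit orthonormal basis $\{\Psi_{rbk}\}$ of $L^2(\mathbb{Q}_p)$ whose elements lie in $\mathcal{D}(\mathbb{Q}_p)\subset\mathcal{V}(\mathbb{Q}_p)$ (A-K-S, Theorems 9.4.5 and 8.9.3), together with the Hermite basis $\{H_n\}\subset\mathcal{S}(\mathbb{R})$ of $L^2(\mathbb{R})$. By Reed-Simon, Proposition II.4.2, the products $\{\Psi_{rbk}\otimes H_n\}$ form an orthonormal basis of $L^2(\mathbb{Q}_p)\otimes L^2(\mathbb{R})$; since all of these basis elements lie in the algebraic tensor product $\mathcal{D}(\mathbb{Q}_p)\bigotimes_{\mathrm{alg}}\mathcal{S}(\mathbb{R})\subset\mathcal{V}(\mathbb{Q}_p)\widehat{\bigotimes_\pi}\mathcal{S}(\mathbb{R})$, their finite linear combinations are dense in $L^2(\mathbb{Q}_p)\otimes L^2(\mathbb{R})$, which gives the required density.

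Finally, identifying $L^2(\mathbb{Q}_p)\otimes L^2(\mathbb{R})$ with its own dual via the Riesz map and dualizing the embedding yields the continuous dense inclusion $L^2(\mathbb{Q}_p)\otimes L^2(\mathbb{R})\hookrightarrow(\mathcal{V}(\mathbb{Q}_p)\widehat{\bigotimes_\pi}\mathcal{S}(\mathbb{R}))'$, which together with the previous embedding produces the first Gel'fand triplet; reflexivity of nuclear countably Hilbert spaces (\cite[Chap.~I]{Gelfand-Vilenkin}) ensures the pairing is compatible with the inner product of the middle space. The $\mathbb{Z}_p$ case is verbatim the same argument, invoking Remark~\ref{Note_Gelfand} and the orthonormal basis of $L^2(\mathbb{Z}_p)$ drawn from $\mathcal{D}(\mathbb{Z}_p)$ (the restriction of the $\Psi_{rbk}$ supported in $\mathbb{Z}_p$). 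The only place where some care is needed is ensuring that the canonical map from the projective tensor product really is continuous into the Hilbert tensor product (as opposed to the projective one); this is precisely where the explicit orthonormal basis argument is essential, and is the main point I expect to write out in detail.
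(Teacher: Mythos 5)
Your proposal is correct and follows essentially the same route as the paper: nuclearity of $\mathcal{V}(\Omega)\widehat{\bigotimes}_{\pi}\mathcal{S}(\mathbb{R})$ via \cite[Proposition 1.3.8]{Obata} and its realization as the projective limit of the Hilbert tensor products $\mathcal{V}_{m}\bigotimes\mathcal{S}_{l}$, continuity of the canonical map into $L^{2}(\Omega)\otimes L^{2}(\mathbb{R})$ (the paper gets this even more directly by noting that this space is the $m=l=0$ member of the projective system), and density via the orthonormal basis $\{\Psi_{rbk}H_{m}\}$ built from $\mathcal{D}(\Omega)$ and the Hermite functions. One small correction of emphasis: the orthonormal-basis argument is what secures \emph{density}, not continuity, which your universal-property argument already handles.
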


\section{\label{Appen_D}Appendix D}

\subsection{Gaussian measures on $L^{2}\left(  \Omega^{2}\right)  $}

We consider $\left(  L^{2}\left(  \Omega^{2}\right)  ,\mathcal{B}\left(
\Omega^{2}\right)  \right)  $ as a measurable space, where $\mathcal{B}\left(
\Omega^{2}\right)  $ is the Borel $\sigma$-algebra of the space $\Omega^{2}$.
We recall that $L^{2}\left(  \Omega^{2}\right)  $ is a separable Hilbert
space. There is a bijection between symmetric, positive definite, and trace
class operators on $L^{2}\left(  \Omega^{2}\right)  $ and Gaussian probability
measures on $L^{2}\left(  \Omega^{2}\right)  $ with mean zero. More precisely,
given a such operator $\square:L^{2}\left(  \Omega^{2}\right)  \rightarrow
L^{2}\left(  \Omega^{2}\right)  $, there exists a unique Gaussian probability
measure $\boldsymbol{P}_{\square}$ on $L^{2}\left(  \Omega\times\Omega\right)
$ with mean zero, covariance $\square$, and Fourier transform
\begin{equation}%
{\displaystyle\int\limits_{L^{2}\left(  \Omega^{2}\right)  }}
e^{\sqrt{-1}\left\langle f,w\right\rangle _{L^{2}\left(  \Omega^{2}\right)  }%
}d\boldsymbol{P}_{\square}\left(  w\right)  =e^{\frac{-1}{2}\left\langle
\square f,f\right\rangle _{L^{2}\left(  \Omega^{2}\right)  }}\text{, for }f\in
L^{2}\left(  \Omega^{2}\right)  , \label{Minlos-Bochner}%
\end{equation}
see \cite[Theorem 1.12]{Da prato}, or \cite[Theorem 2.3.1]{Bogachev}. We call
$\square$ the covariance operator of the measure $\boldsymbol{P}_{\square}$.

In the case, $\square=r\mathbb{I}$, $r>0$, where $\mathbb{I}$ is the identity
operator, the Gaussian measure is called isonormal with parameter $r$:%
\[%
{\displaystyle\int\limits_{L^{2}\left(  \Omega^{2}\right)  }}
e^{\sqrt{-1}\left\langle f,w\right\rangle }d\boldsymbol{P}_{\square}\left(
w\right)  =e^{\frac{-r^{2}}{2}\left\Vert f\right\Vert _{L^{2}\left(
\Omega\times\Omega\right)  }^{2}}.
\]

\begin{remark}
\label{Nota1}Let $\left(  \mathcal{X},\left\langle \cdot,\cdot\right\rangle
_{\mathcal{X}}\right)  $ be a real separable Hilbert space. Let $B\left(
f,g\right)  $ be a continuous bilinear form, which is positive definite, i.e.,
$B\left(  f,f\right)  \geq0$ for any $f\in\mathcal{X}$. This bilinear form
determines a unique Gaussian probability measure\ on $\left(  \mathcal{X}%
_{,}\mathcal{B}\left(  \mathcal{X}\right)  \right)  $, where $\mathcal{B}%
\left(  \mathcal{X}\right)  $ is the Borel $\sigma$-algebra of $\mathcal{X}$,
with mean zero and correlation functional $B\left(  f,g\right)  $, with
Fourier transform%
\[
f\rightarrow\exp\left(  -\frac{1}{2}B\left(  f,f\right)  \right)  .
\]
Furthermore, there exists $\square:\mathcal{X}\rightarrow\mathcal{X}$ a
symmetric, positive definite, and trace class (or nuclear) operator such that
$B\left(  f,g\right)  :=\left\langle f,\square g\right\rangle _{\mathcal{X}}$.
Conversely, if $\square$ is symmetric, positive definite, and trace class
operator, then $B\left(  f,g\right)  :=\left\langle f,\square g\right\rangle
_{\mathcal{X}}$ is a continuous, positive definite bilinear from, see
\cite[Theorem 2.2.4]{Bogachev}.
\end{remark}

In this article, we assume that $\square$\ is an integral operator, which
means that there exists
\[
K_{\square}\left(  u_{1},u_{2},y_{1},y_{2}\right)  \in L^{2}\left(  \Omega
^{4}\right)  .
\]
such that%
\begin{equation}
\square f\left(  y_{1},y_{2}\right)  =%
{\displaystyle\int\limits_{\Omega}}
{\displaystyle\int\limits_{\Omega}}
K_{\square}\left(  u_{1},u_{2},y_{1},y_{2}\right)  f\left(  u_{1}%
,u_{2}\right)  d\mu\left(  u_{1}\right)  d\mu\left(  u_{2}\right)  .
\label{Kernel_operator}%
\end{equation}

\begin{remark}
\label{Nota_Mercer}Let $\left(  \Omega,\mathcal{B}\left(  \Omega\right)
,d\mu\right)  $ be a measure space, where $\Omega$ is a compact space, and
$\mathcal{B}\left(  \Omega\right)  $\ is the Borel $\sigma$-algebra. Let
$K:\Omega\times\Omega\rightarrow\mathbb{R}$ be a continuous, symmetric,
positive definite function in $L^{2}(\Omega\times\Omega)$. Then the operator%
\[%
\begin{array}
[c]{ccc}%
L^{2}(\Omega) & \rightarrow & L^{2}(\Omega)\\
&  & \\
f & \rightarrow & \left(  T_{K}f\right)  \left(  x\right)  :=%
{\displaystyle\int\limits_{\Omega}}
K(x,y)f(y)d\mu\left(  y\right)
\end{array}
\]
is a positive definite, symmetric and trace class operator. This result is
known as the Mercer theorem, \cite{Koning}, \cite{Berlinet et al}.
\end{remark}

The formula%
\begin{equation}%
{\displaystyle\int\limits_{L^{2}\left(  \Omega^{2}\right)  }}
e^{\lambda\left\langle f,w\right\rangle _{L^{2}\left(  \Omega^{2}\right)  }%
}d\boldsymbol{P}_{\square}\left(  w\right)  =e^{\frac{\lambda^{2}}%
{2}\left\langle \square f,f\right\rangle _{L^{2}\left(  \Omega^{2}\right)  }%
}\text{, for }f\in L^{2}\left(  \Omega^{2}\right)  \text{, }\lambda
\in\mathbb{R},\label{Key_formula}%
\end{equation}
\cite[Proposition 1.15]{Da prato}, shows that (\ref{Minlos-Bochner}) admits a
entire continuation to the complex plane, i.e., we can replace in
(\ref{Minlos-Bochner}) $f$ by $-\sqrt{-1}\lambda f$ and the identity remains
valid if $\left\langle \cdot,\cdot\right\rangle _{L^{2}\left(  \Omega
^{2}\right)  }$ is interpreted as a sesquilinear form; see \cite[Proposition
2.2]{Hida et al} for an analog result. We use this formula as a substitute of
the stationary phase method used in \cite[Chapter 10]{Helias et al}.

\subsection{The averaged partition function}

\begin{remark}
\label{Nota7}By using Fubini's theorem, and Lemma \ref{Lemma_4},%
\begin{gather}
\left\langle \widetilde{\boldsymbol{h}}\left(  x,t\right)  ,%
{\displaystyle\int\limits_{\Omega}}
\boldsymbol{J}\left(  x,y\right)  \phi\left(  \boldsymbol{h}\left(
y,t\right)  \right)  d\mu\left(  y\right)  \right\rangle \nonumber\\
=%
{\displaystyle\int\limits_{\Omega}}
{\displaystyle\int\limits_{\mathbb{R}}}
\left\{  \widetilde{\boldsymbol{h}}\left(  x,t\right)  ,%
{\displaystyle\int\limits_{\Omega}}
\boldsymbol{J}\left(  x,y\right)  \phi\left(  \boldsymbol{h}\left(
y,t\right)  \right)  d\mu\left(  y\right)  \right\}  dtd\mu\left(  x\right)
\nonumber\\
=%
{\displaystyle\int\limits_{\Omega^{2}}}
\boldsymbol{J}\left(  x,y\right)  \left\{
{\displaystyle\int\limits_{\mathbb{R}}}
\widetilde{\boldsymbol{h}}\left(  x,t\right)  \phi\left(  \boldsymbol{h}%
\left(  y,t\right)  \right)  dt\right\}  d\mu\left(  x\right)  d\mu\left(
y\right) \nonumber\\
=\left\langle \boldsymbol{J}\left(  x,y\right)  ,%
{\displaystyle\int\limits_{\mathbb{R}}}
\widetilde{\boldsymbol{h}}\left(  x,t\right)  \phi\left(  \boldsymbol{h}%
\left(  y,t\right)  \right)  dt\right\rangle _{L^{2}\left(  \Omega^{2}\right)
}. \label{Formula_1}%
\end{gather}

\end{remark}

\begin{lemma}
\label{Lemma_7}Let $\mathbb{P}_{\boldsymbol{J}}$ be the Gaussian probability
measure with mean zero and covariance operator $\square_{\boldsymbol{J}}$
attached to the random coupling kernel J. The following formula holds,
\begin{gather*}%
{\displaystyle\int\limits_{L^{2}\left(  \Omega^{2}\right)  }}
\exp\left\langle -\widetilde{\boldsymbol{h}},%
{\displaystyle\int\limits_{\Omega}}
\boldsymbol{J}\left(  x,y\right)  \phi\left(  \boldsymbol{h}\left(
y,t\right)  \right)  d\mu\left(  y\right)  \right\rangle d\mathbb{P}%
_{\boldsymbol{J}}=\\
\exp\left(  \frac{1}{2}%
{\displaystyle\int\limits_{\mathbb{R}^{2}}}
\text{ }%
{\displaystyle\int\limits_{\Omega^{2}}}
\widetilde{\boldsymbol{h}}\left(  u_{1},t_{1}\right)  C_{\phi\phi}\left(
u_{1},y_{1},t_{1},t_{2}\right)  \widetilde{\boldsymbol{h}}\left(  y_{1}%
,t_{2}\right)  d\mu\left(  u_{1}\right)  d\mu\left(  y_{1}\right)
dt_{1}dt_{2}\right)  ,
\end{gather*}
where $C_{\phi\phi}\left(  u_{1},y_{1},t_{1},t_{2}\right)  $ is defined in
(\ref{Covarience_pi(h)phi(h)}).
\end{lemma}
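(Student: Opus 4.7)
The plan is to reduce the computation to the Gaussian moment identity (\ref{Key_formula}) by first rewriting the exponent as a single $L^{2}(\Omega^{2})$-inner product against $\boldsymbol{J}$, and then expanding the resulting quadratic form via the integral kernel representation (\ref{Kernel_operator}) of $\square_{\boldsymbol{J}}$.

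First, I would set
\[
f(x,y) := \int_{\mathbb{R}} \widetilde{\boldsymbol{h}}(x,t)\,\phi(\boldsymbol{h}(y,t))\,dt
\]
and verify $f\in L^{2}(\Omega^{2})$. Using the Lipschitz bound $|\phi(s)|\leq L(\phi)|s|$ from Remark \ref{Note_act_fun} together with Cauchy--Schwarz in the inner $t$-integral, followed by integration over $\Omega^{2}$ (with $\boldsymbol{h},\widetilde{\boldsymbol{h}}\in\mathcal{H}\hookrightarrow L^{2}(\Omega\times\mathbb{R})$ by Lemma \ref{Lemma_3}), one obtains the estimate $\|f\|_{L^{2}(\Omega^{2})}\leq L(\phi)\,\|\widetilde{\boldsymbol{h}}\|\,\|\boldsymbol{h}\|<\infty$. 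The Fubini identity (\ref{Formula_1}) from Remark \ref{Nota7} then rewrites the exponent of the integrand as $-\langle \boldsymbol{J},f\rangle_{L^{2}(\Omega^{2})}$.

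Next, applying (\ref{Key_formula}) with $\lambda=-1$ yields
\[
\int_{L^{2}(\Omega^{2})} e^{-\langle \boldsymbol{J},f\rangle_{L^{2}(\Omega^{2})}}\, d\mathbb{P}_{\boldsymbol{J}} \;=\; \exp\!\Bigl(\tfrac{1}{2}\langle \square_{\boldsymbol{J}}f,f\rangle_{L^{2}(\Omega^{2})}\Bigr),
\]
so it only remains to identify this quadratic form with the claimed double integral against $C_{\phi\phi}$. Using the kernel representation (\ref{Kernel_operator}) and the definition of the $L^{2}(\Omega^{2})$ inner product,
\[
\langle \square_{\boldsymbol{J}}f,f\rangle \;=\; \int_{\Omega^{4}} K_{\boldsymbol{J}}(u_{1},u_{2},y_{1},y_{2})\,f(u_{1},u_{2})\,f(y_{1},y_{2})\,d\mu(u_{1})d\mu(u_{2})d\mu(y_{1})d\mu(y_{2}).
\]
I would then substitute the definitions of $f(u_{1},u_{2})$ and $f(y_{1},y_{2})$, introducing time variables $t_{1}$ and $t_{2}$ respectively, and apply Fubini to perform the $d\mu(u_{2})\,d\mu(y_{2})$ integrations first; by the defining formula for $C_{\phi(\boldsymbol{h})\phi(\boldsymbol{h})}$ recalled in the statement this inner integral equals exactly $C_{\phi\phi}(u_{1},y_{1},t_{1},t_{2})$, and the remaining $d\mu(u_{1})d\mu(y_{1})dt_{1}dt_{2}$ integrals reassemble into the right-hand side of the lemma.

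The main obstacle is bookkeeping rather than depth: one must justify each Fubini swap by verifying absolute integrability. The $\Omega^{4}\times\mathbb{R}^{2}$ integrand is dominated by $|K_{\boldsymbol{J}}|$ times a product of $L^{2}$-in-$t$ factors, and a Cauchy--Schwarz argument combining $K_{\boldsymbol{J}}\in L^{2}(\Omega^{4})$ with the bound on $\|f\|_{L^{2}(\Omega^{2})}$ established above delivers the required finiteness. Once these estimates are in place, every step reduces to algebraic rearrangement of iterated integrals.
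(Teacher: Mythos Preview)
Your proposal is correct and follows essentially the same route as the paper's proof: rewrite the exponent via the Fubini identity (\ref{Formula_1}), apply the Gaussian moment formula (\ref{Key_formula}) with $\lambda=-1$ and $f=\int_{\mathbb{R}}\widetilde{\boldsymbol{h}}(x,t)\phi(\boldsymbol{h}(y,t))\,dt$, then expand $\langle\square_{\boldsymbol{J}}f,f\rangle$ through the kernel representation (\ref{Kernel_operator}) and regroup to obtain $C_{\phi\phi}$. Your additional verification that $f\in L^{2}(\Omega^{2})$ and your explicit Fubini justifications add rigor that the paper's proof leaves implicit, but the argument is otherwise identical.
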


\begin{proof}
By using the formula (\ref{Key_formula}), with $\lambda=-1$, and%
\[
f=%
{\displaystyle\int\limits_{\mathbb{R}}}
\widetilde{\boldsymbol{h}}\left(  x,t\right)  \phi\left(  \boldsymbol{h}%
\left(  y,t\right)  \right)  dt,
\]
and the formula (\ref{Formula_1}),%
\begin{gather*}%
{\displaystyle\int\limits_{L^{2}\left(  \Omega^{2}\right)  }}
\exp\left\langle -\widetilde{\boldsymbol{h}},%
{\displaystyle\int\limits_{\Omega}}
\boldsymbol{J}\left(  x,y\right)  \phi\left(  \boldsymbol{h}\left(
y,t\right)  \right)  d\mu\left(  y\right)  \right\rangle d\mathbb{P}%
_{\boldsymbol{J}}=\\
\exp\frac{1}{2}\left\langle \square_{\boldsymbol{J}}\left(  \text{ }%
{\displaystyle\int\limits_{\mathbb{R}}}
\widetilde{\boldsymbol{h}}\left(  y_{1},t\right)  \phi\left(  \boldsymbol{h}%
\left(  y_{2},t\right)  \right)  dt\right)  ,%
{\displaystyle\int\limits_{\mathbb{R}}}
\widetilde{\boldsymbol{h}}\left(  y_{1},t\right)  \phi\left(  \boldsymbol{h}%
\left(  y_{2},t\right)  \right)  dt\right\rangle _{L^{2}\left(  \Omega
^{2}\right)  }.
\end{gather*}
On the other hand, using the fact that $\square_{\boldsymbol{J}}$ is an
integral operator with kernel $K_{\boldsymbol{J}}$,%
\begin{gather*}
\left\langle \square_{\boldsymbol{J}}\left(  \text{ }%
{\displaystyle\int\limits_{\mathbb{R}}}
\widetilde{\boldsymbol{h}}\left(  y_{1},t_{1}\right)  \phi\left(
\boldsymbol{h}\left(  y_{2},t_{1}\right)  \right)  dt_{1}\right)  ,%
{\displaystyle\int\limits_{\mathbb{R}}}
\widetilde{\boldsymbol{h}}\left(  y_{1},t_{2}\right)  \phi\left(
\boldsymbol{h}\left(  y_{2},t_{2}\right)  \right)  dt_{2}\right\rangle
_{L^{2}\left(  \Omega\times\Omega\right)  }\\
=%
{\displaystyle\int\limits_{\Omega^{2}}}
\text{ }%
{\displaystyle\int\limits_{\mathbb{R}}}
\left\{  \text{ }%
{\displaystyle\int\limits_{\Omega^{2}}}
K_{\boldsymbol{J}}\left(  u_{1},u_{2},y_{1},y_{2}\right)
{\displaystyle\int\limits_{\mathbb{R}}}
\widetilde{\boldsymbol{h}}\left(  u_{1},t_{1}\right)  \phi\left(
\boldsymbol{h}\left(  u_{2},t_{1}\right)  \right)  dt_{1}d\mu\left(
u_{1}\right)  d\mu\left(  u_{2}\right)  \right\}  \times\\
\widetilde{\boldsymbol{h}}\left(  y_{1},t_{2}\right)  \phi\left(
\boldsymbol{h}\left(  y_{2},t_{2}\right)  \right)  dt_{2}d\mu\left(
y_{1}\right)  d\mu\left(  y_{2}\right) \\
=%
{\displaystyle\int\limits_{\mathbb{R}^{2}}}
\text{ }%
{\displaystyle\int\limits_{\Omega^{2}}}
\widetilde{\boldsymbol{h}}\left(  u_{1},t_{1}\right)  C_{\phi\left(
\boldsymbol{h}\right)  \phi\left(  \boldsymbol{h}\right)  }\left(  u_{1}%
,y_{1},t_{1},t_{2}\right)  \widetilde{\boldsymbol{h}}\left(  y_{1}%
,t_{2}\right)  d\mu\left(  u_{1}\right)  d\mu\left(  y_{1}\right)
dt_{1}dt_{2},
\end{gather*}
where%
\begin{gather}
C_{\phi\left(  \boldsymbol{h}\right)  \phi\left(  \boldsymbol{h}\right)
}\left(  u_{1},y_{1},t_{1},t_{2}\right)  :=\label{Covarience_pi(h)phi(h)}\\%
{\displaystyle\int\limits_{\Omega^{2}}}
\phi\left(  \boldsymbol{h}\left(  y_{2},t_{2}\right)  \right)
K_{\boldsymbol{J}}\left(  u_{1},u_{2},y_{1},y_{2}\right)  \phi\left(
\boldsymbol{h}\left(  u_{2},t_{1}\right)  \right)  d\mu\left(  y_{2}\right)
d\mu\left(  u_{2}\right)  .\nonumber
\end{gather}

\end{proof}

\begin{lemma}
\label{Lemma_9A} With the above notation,%
\[
\left\Vert C_{\phi\left(  \boldsymbol{h}\right)  \phi\left(  \boldsymbol{h}%
\right)  }\right\Vert _{L^{2}(\Omega^{2}\times\mathbb{R}^{2})}\leq
L^{2}\left(  \phi\right)  \left\Vert \boldsymbol{h}\right\Vert ^{2}\left\Vert
K_{\boldsymbol{J}}\right\Vert _{L^{2}(\Omega^{4})}.
\]

\end{lemma}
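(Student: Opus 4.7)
The plan is to bound the $L^{2}(\Omega^{2}\times\mathbb{R}^{2})$-norm of $C_{\phi(\boldsymbol{h})\phi(\boldsymbol{h})}$ directly from its defining double integral, splitting the kernel from the activation factors by a single application of the Cauchy--Schwarz inequality applied to the $d\mu(u_{2})d\mu(y_{2})$ integration.

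More precisely, I would first fix $(u_{1},y_{1},t_{1},t_{2})$ and apply Cauchy--Schwarz in $L^{2}(\Omega^{2},d\mu(u_{2})d\mu(y_{2}))$ to the integrand defining $C_{\phi(\boldsymbol{h})\phi(\boldsymbol{h})}(u_{1},y_{1},t_{1},t_{2})$. This yields
\[
\bigl|C_{\phi(\boldsymbol{h})\phi(\boldsymbol{h})}(u_{1},y_{1},t_{1},t_{2})\bigr|^{2}\leq A(u_{1},y_{1})\,B(t_{1},t_{2}),
\]
where $A(u_{1},y_{1})=\int_{\Omega^{2}}|K_{\boldsymbol{J}}(u_{1},u_{2},y_{1},y_{2})|^{2}\,d\mu(u_{2})d\mu(y_{2})$ and $B(t_{1},t_{2})=\int_{\Omega^{2}}|\phi(\boldsymbol{h}(u_{2},t_{1}))|^{2}|\phi(\boldsymbol{h}(y_{2},t_{2}))|^{2}\,d\mu(u_{2})d\mu(y_{2})$. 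The key structural observation is that $A$ depends only on the space variables $(u_{1},y_{1})$ while $B$ depends only on the time variables $(t_{1},t_{2})$, so Fubini's theorem (whose hypotheses are verified by the $L^{2}$ membership assumptions on $K_{\boldsymbol{J}}$ and $\boldsymbol{h}$) gives
\[
\|C_{\phi(\boldsymbol{h})\phi(\boldsymbol{h})}\|_{L^{2}(\Omega^{2}\times\mathbb{R}^{2})}^{2}\leq\Bigl(\int_{\Omega^{2}}A(u_{1},y_{1})\,d\mu(u_{1})d\mu(y_{1})\Bigr)\Bigl(\int_{\mathbb{R}^{2}}B(t_{1},t_{2})\,dt_{1}dt_{2}\Bigr).
\]

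Next I would identify the first bracketed factor, by Tonelli, as exactly $\|K_{\boldsymbol{J}}\|_{L^{2}(\Omega^{4})}^{2}$. The second factor, again by Tonelli, factors as a product of two identical integrals, each of the form $\int_{\mathbb{R}}\int_{\Omega}|\phi(\boldsymbol{h}(x,t))|^{2}\,d\mu(x)dt$. Applying the Lipschitz hypothesis on $\phi$ with $\phi(0)=0$ (Remark \ref{Note_act_fun}) gives the pointwise bound $|\phi(\boldsymbol{h}(x,t))|\leq L(\phi)|\boldsymbol{h}(x,t)|$, so each of these factors is controlled by $L(\phi)^{2}\|\boldsymbol{h}\|^{2}$ (where $\|\cdot\|$ is the $L^{2}(\Omega\times\mathbb{R})$-norm, via the continuous inclusion $\mathfrak{i}:\mathcal{H}\hookrightarrow L^{2}(\Omega\times\mathbb{R})$ of Lemma \ref{Lemma_3}). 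Combining everything and taking a square root yields the claimed inequality with exponents matching $L^{2}(\phi)\|\boldsymbol{h}\|^{2}\|K_{\boldsymbol{J}}\|_{L^{2}(\Omega^{4})}$.

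There is essentially no hard step in this argument: the main obstacle is only the bookkeeping needed to verify that the hypotheses for Fubini/Tonelli hold at each stage, i.e.\ that all integrands are nonnegative and measurable, and that iterated integrals converge. Since $K_{\boldsymbol{J}}\in L^{2}(\Omega^{4})$ and $\boldsymbol{h}\in L^{2}(\Omega\times\mathbb{R})$ (via $\mathfrak{i}$), together with the Lipschitz control on $\phi$, all relevant integrals are finite and Fubini applies without further conditions.
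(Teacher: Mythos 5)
Your proof is correct and follows essentially the same route as the paper's: a single Cauchy--Schwarz in $L^{2}(\Omega^{2},d\mu(u_{2})d\mu(y_{2}))$ separating the kernel slice $K_{\boldsymbol{J}}(u_{1},\cdot,y_{1},\cdot)$ from the product of activation factors, followed by Fubini/Tonelli to factor the resulting iterated integral and the Lipschitz bound $|\phi(\boldsymbol{h})|\leq L(\phi)|\boldsymbol{h}|$ to control the time-variable factor. The only cosmetic difference is that the paper applies the Lipschitz estimate inside the Cauchy--Schwarz step rather than at the end; the content is identical.
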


\begin{proof}
By using that $\phi$ is a Lipschitz function, and the Cauchy-Schwarz
inequality in $L^{2}(\Omega^{2})$,%

\begin{gather*}
\left\Vert C_{\phi\left(  \boldsymbol{h}\right)  \phi\left(  \boldsymbol{h}%
\right)  }\right\Vert _{L^{2}(\Omega^{2}\times\mathbb{R}^{2})}^{2}=%
{\displaystyle\iint\limits_{\mathbb{R}\times\mathbb{R}}}
\text{ }%
{\displaystyle\int\limits_{\Omega^{2}}}
\times\\
\left\vert \text{ \ }%
{\displaystyle\int\limits_{\Omega^{2}}}
\phi\left(  \boldsymbol{h}\left(  y_{2},t_{2}\right)  \right)
K_{\boldsymbol{J}}\left(  u_{1},u_{2},y_{1},y_{2}\right)  \phi\left(
\boldsymbol{h}\left(  u_{2},t_{1}\right)  \right)  d\mu\left(  y_{2}\right)
d\mu\left(  u_{2}\right)  \right\vert ^{2}\times\\
d\mu\left(  u_{1}\right)  d\mu\left(  y_{1}\right)  dt_{1}dt_{2}\leq
L^{4}\left(  \phi\right)  \times\\%
{\displaystyle\int\limits_{\mathbb{R}^{2}}}
\text{ }%
{\displaystyle\int\limits_{\Omega^{2}}}
\left\Vert \boldsymbol{h}\left(  \cdot,t_{2}\right)  \boldsymbol{h}\left(
\cdot,t_{1}\right)  \right\Vert _{L^{2}(\Omega^{2})}^{2}\left\Vert
K_{\boldsymbol{J}}\left(  u_{1},\cdot,y_{1},\cdot\right)  \right\Vert
_{L^{2}(\Omega^{2})}^{2}d\mu\left(  u_{1}\right)  d\mu\left(  y_{1}\right)
dt_{1}dt_{2}%
\end{gather*}

\begin{gather*}
=L^{4}\left(  \phi\right)  \left(  \text{ }%
{\displaystyle\int\limits_{\mathbb{R}}}
\left\Vert \boldsymbol{h}\left(  \cdot,t_{2}\right)  \right\Vert
_{L^{2}(\Omega^{2})}^{2}dt_{2}\right)  \left(  \text{ }%
{\displaystyle\int\limits_{\mathbb{R}}}
\left\Vert \boldsymbol{h}\left(  \cdot,t_{1}\right)  \right\Vert
_{L^{2}(\Omega^{2})}^{2}dt_{1}\right)  \times\\
\left(  \text{ }%
{\displaystyle\int\limits_{\Omega^{2}}}
\left\Vert K_{\boldsymbol{J}}\left(  u_{1},\cdot,y_{1},\cdot\right)
\right\Vert _{L^{2}(\Omega^{2})}^{2}d\mu\left(  u_{1}\right)  d\mu\left(
y_{1}\right)  \right)  =L^{4}\left(  \phi\right)  \left\Vert \boldsymbol{h}%
\right\Vert ^{4}\left\Vert K_{\boldsymbol{J}}\right\Vert _{L^{2}(\Omega^{4}%
)}^{2}.
\end{gather*}

\end{proof}

Set%
\[
\left(  \square_{C_{\phi\left(  \boldsymbol{h}\right)  \phi\left(
\boldsymbol{h}\right)  }}\widetilde{\boldsymbol{h}}\right)  \left(
x,t_{1}\right)  :=\left\{
{\displaystyle\int\limits_{\mathbb{R}}}
\text{ }%
{\displaystyle\int\limits_{\Omega}}
\text{ }C_{\phi\left(  \boldsymbol{h}\right)  \phi\left(  \boldsymbol{h}%
\right)  }\left(  x,y,t_{1},t_{2}\right)  \widetilde{\boldsymbol{h}}\left(
y,t_{2}\right)  d\mu\left(  y\right)  dt_{2}\right\}  ,
\]
and%
\begin{gather}
\left\langle \widetilde{\boldsymbol{h}},C_{\phi\left(  \boldsymbol{h}\right)
\phi\left(  \boldsymbol{h}\right)  }\widetilde{\boldsymbol{h}}\right\rangle
_{\left(  L^{2}(\Omega\times\mathbb{R})\right)  ^{2}}%
:=\label{Formula_C_phi_phi}\\%
{\displaystyle\int\limits_{\mathbb{R}^{2}}}
\text{ \ }%
{\displaystyle\int\limits_{\Omega^{2}}}
\text{ }\widetilde{\boldsymbol{h}}\left(  x,t_{1}\right)  C_{\phi\left(
\boldsymbol{h}\right)  \phi\left(  \boldsymbol{h}\right)  }\left(
x,y,t_{1},t_{2}\right)  \widetilde{\boldsymbol{h}}\left(  y,t_{2}\right)
d\mu\left(  x\right)  d\mu\left(  y\right)  dt_{1}dt_{2}.\nonumber
\end{gather}
Then
\[
\left\langle \widetilde{\boldsymbol{h}},C_{\phi\left(  \boldsymbol{h}\right)
\phi\left(  \boldsymbol{h}\right)  }\widetilde{\boldsymbol{h}}\right\rangle
_{\left(  L^{2}(\Omega\times\mathbb{R})\right)  ^{2}}=\left\langle
\widetilde{\boldsymbol{h}},\square_{C_{\phi\left(  \boldsymbol{h}\right)
\phi\left(  \boldsymbol{h}\right)  }}\widetilde{\boldsymbol{h}}\right\rangle
.
\]

\begin{lemma}
\label{Lemma_9}Take $\left(  \widetilde{\boldsymbol{h}},\boldsymbol{h}\right)
\in\left(  L^{2}(\Omega\times\mathbb{R})\right)  ^{2}$, $K_{\boldsymbol{J}}\in
L^{2}(\Omega^{4})$. Then%
\[
\left\vert \left\langle \widetilde{\boldsymbol{h}},C_{\phi\left(
\boldsymbol{h}\right)  \phi\left(  \boldsymbol{h}\right)  }\widetilde
{\boldsymbol{h}}\right\rangle _{\left(  L^{2}(\Omega\times\mathbb{R})\right)
^{2}}\right\vert \leq L^{2}\left(  \phi\right)  \left\Vert K_{\boldsymbol{J}%
}\right\Vert _{L^{2}(\Omega^{4})}\left\Vert \boldsymbol{h}\right\Vert
^{2}\left\Vert \widetilde{\boldsymbol{h}}\right\Vert ^{2}.
\]

\end{lemma}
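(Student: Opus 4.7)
The plan is to recognize the expression as an $L^2$ pairing on the product space $\Omega^2\times\mathbb{R}^2$ and then chain two Cauchy--Schwarz-type estimates, the second of which is already done for us in Lemma \ref{Lemma_9A}. Concretely, from the definition (\ref{Formula_C_phi_phi}) the bilinear form can be rewritten as
\[
\left\langle \widetilde{\boldsymbol{h}},C_{\phi(\boldsymbol{h})\phi(\boldsymbol{h})}\widetilde{\boldsymbol{h}}\right\rangle_{(L^2(\Omega\times\mathbb{R}))^2}=\left\langle \widetilde{\boldsymbol{h}}\otimes\widetilde{\boldsymbol{h}},\,C_{\phi(\boldsymbol{h})\phi(\boldsymbol{h})}\right\rangle_{L^2(\Omega^2\times\mathbb{R}^2)},
\]
where the tensor product on the left denotes the function $(x,y,t_1,t_2)\mapsto \widetilde{\boldsymbol{h}}(x,t_1)\widetilde{\boldsymbol{h}}(y,t_2)$.

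First I would verify that this tensor product lies in $L^2(\Omega^2\times\mathbb{R}^2)$ with the clean identity
\[
\bigl\|\widetilde{\boldsymbol{h}}\otimes\widetilde{\boldsymbol{h}}\bigr\|_{L^2(\Omega^2\times\mathbb{R}^2)}^2=\int_{\Omega^2\times\mathbb{R}^2}\bigl|\widetilde{\boldsymbol{h}}(x,t_1)\bigr|^2\bigl|\widetilde{\boldsymbol{h}}(y,t_2)\bigr|^2 d\mu(x)d\mu(y)dt_1 dt_2=\|\widetilde{\boldsymbol{h}}\|^{4},
\]
which is a direct application of Fubini's theorem to a nonnegative integrand. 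This step also justifies the absolute convergence of the iterated integral defining the bilinear form, so that the Fubini-based rewriting is legitimate.

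Next I would apply the Cauchy--Schwarz inequality on $L^2(\Omega^2\times\mathbb{R}^2)$ to obtain
\[
\left|\left\langle \widetilde{\boldsymbol{h}},C_{\phi(\boldsymbol{h})\phi(\boldsymbol{h})}\widetilde{\boldsymbol{h}}\right\rangle_{(L^2(\Omega\times\mathbb{R}))^2}\right|\leq \bigl\|\widetilde{\boldsymbol{h}}\otimes\widetilde{\boldsymbol{h}}\bigr\|_{L^2(\Omega^2\times\mathbb{R}^2)}\bigl\|C_{\phi(\boldsymbol{h})\phi(\boldsymbol{h})}\bigr\|_{L^2(\Omega^2\times\mathbb{R}^2)}=\|\widetilde{\boldsymbol{h}}\|^{2}\bigl\|C_{\phi(\boldsymbol{h})\phi(\boldsymbol{h})}\bigr\|_{L^2(\Omega^2\times\mathbb{R}^2)}.
\]
Finally, I would invoke Lemma \ref{Lemma_9A}, which gives
\[
\bigl\|C_{\phi(\boldsymbol{h})\phi(\boldsymbol{h})}\bigr\|_{L^2(\Omega^2\times\mathbb{R}^2)}\leq L^{2}(\phi)\,\|\boldsymbol{h}\|^{2}\,\|K_{\boldsymbol{J}}\|_{L^2(\Omega^4)},
\]
and substitute to obtain the claimed bound.

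There is essentially no hard step in this argument: the structural observation that $\widetilde{\boldsymbol{h}}\otimes\widetilde{\boldsymbol{h}}$ has $L^2(\Omega^2\times\mathbb{R}^2)$-norm equal to $\|\widetilde{\boldsymbol{h}}\|^2$ is what makes everything collapse cleanly, and Lemma \ref{Lemma_9A} has already absorbed the Lipschitz constant of $\phi$ and the kernel norm. The only point requiring a little care is the applicability of Fubini to justify rewriting the quadruple integral as the $L^2$ pairing of $\widetilde{\boldsymbol{h}}\otimes\widetilde{\boldsymbol{h}}$ against $C_{\phi(\boldsymbol{h})\phi(\boldsymbol{h})}$; this follows by first checking integrability of the absolute value via the same Cauchy--Schwarz estimate applied to $|\widetilde{\boldsymbol{h}}|\otimes|\widetilde{\boldsymbol{h}}|$ and $|C_{\phi(\boldsymbol{h})\phi(\boldsymbol{h})}|$, both of which are in $L^2(\Omega^2\times\mathbb{R}^2)$ by the preceding remarks.
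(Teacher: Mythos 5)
Your proof is correct and follows essentially the same route as the paper: both arguments reduce the bilinear form to the Hilbert--Schmidt norm $\|C_{\phi(\boldsymbol{h})\phi(\boldsymbol{h})}\|_{L^{2}(\Omega^{2}\times\mathbb{R}^{2})}$ via Cauchy--Schwarz and then invoke Lemma \ref{Lemma_9A}. The only cosmetic difference is that you apply a single Cauchy--Schwarz on the product space using the identity $\|\widetilde{\boldsymbol{h}}\otimes\widetilde{\boldsymbol{h}}\|_{L^{2}(\Omega^{2}\times\mathbb{R}^{2})}=\|\widetilde{\boldsymbol{h}}\|^{2}$, whereas the paper chains two Cauchy--Schwarz estimates through the operator $\square_{C_{\phi(\boldsymbol{h})\phi(\boldsymbol{h})}}$; both yield the identical bound.
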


\begin{proof}
By using the Cauchy-Schwarz inequality in $L^{2}(\Omega\times\mathbb{R})$,%
\[
\left\vert \left\langle \widetilde{\boldsymbol{h}},C_{\phi\left(
\boldsymbol{h}\right)  \phi\left(  \boldsymbol{h}\right)  }\widetilde
{\boldsymbol{h}}\right\rangle _{\left(  L^{2}(\Omega\times\mathbb{R})\right)
^{2}}\right\vert =\left\vert \left\langle \widetilde{\boldsymbol{h}}%
,\square_{C_{\phi\left(  \boldsymbol{h}\right)  \phi\left(  \boldsymbol{h}%
\right)  }}\widetilde{\boldsymbol{h}}\right\rangle \right\vert \leq\left\Vert
\widetilde{\boldsymbol{h}}\right\Vert \left\Vert \square_{C_{\phi\left(
\boldsymbol{h}\right)  \phi\left(  \boldsymbol{h}\right)  }}\widetilde
{\boldsymbol{h}}\right\Vert .
\]
Now, by using the Cauchy-Schwarz inequality in $L^{2}(\Omega\times\mathbb{R}%
)$, and Lemma \ref{Lemma_9A},%
\begin{align*}
\left\Vert \square_{C_{\phi\left(  \boldsymbol{h}\right)  \phi\left(
\boldsymbol{h}\right)  }}\widetilde{\boldsymbol{h}}\right\Vert ^{2}  &  =%
{\displaystyle\int\limits_{\Omega}}
\text{ }%
{\displaystyle\int\limits_{\mathbb{R}}}
\left\vert
{\displaystyle\int\limits_{\Omega}}
\text{ }%
{\displaystyle\int\limits_{\mathbb{R}}}
C_{\phi\left(  \boldsymbol{h}\right)  \phi\left(  \boldsymbol{h}\right)
}\left(  x,y,t_{1},t_{2}\right)  \widetilde{\boldsymbol{h}}\left(
y,t_{2}\right)  d\mu\left(  y\right)  dt_{2}\right\vert ^{2}dt_{1}d\mu\left(
x\right) \\
&  \leq\left\Vert \widetilde{\boldsymbol{h}}\right\Vert ^{2}%
{\displaystyle\int\limits_{\Omega}}
\text{ }%
{\displaystyle\int\limits_{\mathbb{R}}}
\left\Vert C_{\phi\left(  \boldsymbol{h}\right)  \phi\left(  \boldsymbol{h}%
\right)  }\left(  x,\cdot,t_{1},\cdot\right)  \right\Vert ^{2}dt_{1}%
d\mu\left(  x\right) \\
&  =\left\Vert \widetilde{\boldsymbol{h}}\right\Vert ^{2}\left\Vert
C_{\phi\left(  \boldsymbol{h}\right)  \phi\left(  \boldsymbol{h}\right)
}\right\Vert _{L^{2}(\Omega^{2}\times\mathbb{R}^{2})}^{2}\leq L^{4}\left(
\phi\right)  \left\Vert \widetilde{\boldsymbol{h}}\right\Vert ^{2}\left\Vert
\boldsymbol{h}\right\Vert ^{4}\left\Vert K_{\boldsymbol{J}}\right\Vert
_{L^{2}(\Omega^{4})}^{2}.
\end{align*}
Therefore $\left\Vert \widetilde{\boldsymbol{h}}\right\Vert \left\Vert
\square_{C_{\phi\left(  \boldsymbol{h}\right)  \phi\left(  \boldsymbol{h}%
\right)  }}\widetilde{\boldsymbol{h}}\right\Vert \leq L^{2}\left(
\phi\right)  \left\Vert \widetilde{\boldsymbol{h}}\right\Vert ^{2}\left\Vert
\boldsymbol{h}\right\Vert ^{2}\left\Vert K_{\boldsymbol{J}}\right\Vert
_{L^{2}(\Omega^{4})}.$
\end{proof}

\begin{remark}
\label{Nota2}The bilinear form $B(\widetilde{\boldsymbol{h}},\widetilde
{\boldsymbol{h}}):=$ $\left\langle \widetilde{\boldsymbol{h}},C_{\phi\left(
\boldsymbol{h}\right)  \phi\left(  \boldsymbol{h}\right)  }\widetilde
{\boldsymbol{h}}\right\rangle _{\left(  L^{2}(\Omega\times\mathbb{R})\right)
^{2}}$ is the correlation functional of a Gaussian noise, with mean zero.
Indeed, by using (\ref{Formula_C_phi_phi})-(\ref{Covarience_pi(h)phi(h)}),%
\begin{gather*}
B(\widetilde{\boldsymbol{h}},\widetilde{\boldsymbol{h}})=\left\langle
\widetilde{\boldsymbol{h}},C_{\phi\left(  \boldsymbol{h}\right)  \phi\left(
\boldsymbol{h}\right)  }\widetilde{\boldsymbol{h}}\right\rangle _{\left(
L^{2}(\Omega\times\mathbb{R})\right)  ^{2}}=\\%
{\displaystyle\int\limits_{\mathbb{R}^{2}}}
\text{ \ }%
{\displaystyle\int\limits_{\Omega^{2}}}
\text{ }\widetilde{\boldsymbol{h}}\left(  x,t_{1}\right)  \left\{
{\displaystyle\int\limits_{\Omega^{2}}}
\phi\left(  \boldsymbol{h}\left(  y_{2},t_{2}\right)  \right)
K_{\boldsymbol{J}}\left(  x,u_{2},y,y_{2}\right)  \phi\left(  \boldsymbol{h}%
\left(  u_{2},t_{1}\right)  \right)  d\mu\left(  y_{2}\right)  d\mu\left(
u_{2}\right)  \right\}  \times\\
\widetilde{\boldsymbol{h}}\left(  y,t_{2}\right)  d\mu\left(  x\right)
d\mu\left(  y\right)  dt_{1}dt_{2}=\\
=\text{ \ }%
{\displaystyle\int\limits_{\Omega^{2}}}
\left\{
{\displaystyle\int\limits_{\mathbb{R}}}
\widetilde{\boldsymbol{h}}\left(  x,t_{1}\right)  \phi\left(  \boldsymbol{h}%
\left(  u_{2},t_{1}\right)  \right)  dt_{1}\right\}  \times\\
\left\{
{\displaystyle\int\limits_{\Omega^{2}}}
K_{\boldsymbol{J}}\left(  x,u_{2},y,y_{2}\right)  \left\{
{\displaystyle\int\limits_{\mathbb{R}}}
\widetilde{\boldsymbol{h}}\left(  y,t_{2}\right)  \phi\left(  \boldsymbol{h}%
\left(  y_{2},t_{2}\right)  \right)  dt_{2}\right\}  d\mu\left(  y\right)
d\mu\left(  y_{2}\right)  \right\}  d\mu\left(  x\right)  d\mu\left(
u_{2}\right) \\
=\left\langle
{\displaystyle\int\limits_{\mathbb{R}}}
\widetilde{\boldsymbol{h}}\left(  x,t_{1}\right)  \phi\left(  \boldsymbol{h}%
\left(  u_{2},t_{1}\right)  \right)  dt_{1},\square\left\{
{\displaystyle\int\limits_{\mathbb{R}}}
\widetilde{\boldsymbol{h}}\left(  \cdot,t_{2}\right)  \phi\left(
\boldsymbol{h}\left(  -,t_{2}\right)  \right)  dt_{2}\right\}  \left(
x,u_{2}\right)  \right\rangle _{L^{2}(\Omega^{2})}.
\end{gather*}
We now use the fact that operator $\square:L^{2}(\Omega^{2})\rightarrow
L^{2}(\Omega^{2})$ is positive definite to conclude that $B(\widetilde
{\boldsymbol{h}},\widetilde{\boldsymbol{h}})\geq0$. By Lemma \ref{Lemma_9},
$B(\widetilde{\boldsymbol{h}},\widetilde{\boldsymbol{h}})$ is a continuous,
positive definite bilinear form, then, there exists a Gaussian measure
$\boldsymbol{P}\left(  \zeta\right)  $, with mean zero, satisfying
\begin{gather*}%
{\displaystyle\int\limits_{L^{2}(\Omega\times\mathbb{R})}}
\exp\sqrt{-1}\left\langle \widetilde{\boldsymbol{h}},C_{\phi\left(
\boldsymbol{h}\right)  \phi\left(  \boldsymbol{h}\right)  }\zeta\right\rangle
_{\left(  L^{2}(\Omega\times\mathbb{R})\right)  ^{2}}d\boldsymbol{P}\left(
\zeta\right)  =\\%
{\displaystyle\int\limits_{L^{2}(\Omega\times\mathbb{R})}}
\exp\sqrt{-1}\left\langle \widetilde{\boldsymbol{h}},\square_{C_{\phi\left(
\boldsymbol{h}\right)  \phi\left(  \boldsymbol{h}\right)  }}\widetilde
{\boldsymbol{h}}\zeta\right\rangle _{L^{2}(\Omega\times\mathbb{R}%
)}d\boldsymbol{P}\left(  \zeta\right)  =\\
\exp\frac{1}{2}\left\langle \widetilde{\boldsymbol{h}},C_{\phi\left(
\boldsymbol{h}\right)  \phi\left(  \boldsymbol{h}\right)  }\widetilde
{\boldsymbol{h}}\right\rangle _{\left(  L^{2}(\Omega\times\mathbb{R})\right)
^{2}},
\end{gather*}
see Remark \ref{Nota1}.
\end{remark}

\begin{theorem}
\label{Porp3}The averaged partition function with cutoff is given by%
\begin{gather}
\overline{{\LARGE Z}}_{M}=%
{\displaystyle\iint\limits_{\mathcal{H}\times\mathcal{H}}}
{\large 1}_{\mathcal{P}_{M}}\left(  \boldsymbol{h},\widetilde{\boldsymbol{h}%
}\right)  \exp\left(  S_{0}\left[  \widetilde{\boldsymbol{h}},\boldsymbol{h}%
\right]  +\left\langle \widetilde{\boldsymbol{h}},C_{\phi\left(
\boldsymbol{h}\right)  \phi\left(  \boldsymbol{h}\right)  }\widetilde
{\boldsymbol{h}}\right\rangle _{\left(  L^{2}(\Omega\times\mathbb{R})\right)
^{2}}\right)  \times\label{Eq_Z_M_averaged}\\
d\boldsymbol{P}\left(  \boldsymbol{h}\right)  d\widetilde{\boldsymbol{P}%
}\left(  \widetilde{\boldsymbol{h}}\right)  ,\nonumber
\end{gather}
where $S_{0}\left[  \widetilde{\boldsymbol{h}},\boldsymbol{h}\right]
=-\left\langle \widetilde{\boldsymbol{h}},\left(  \partial_{t}+\gamma\right)
\boldsymbol{h}\right\rangle +\frac{1}{2}\sigma^{2}\left\langle \widetilde
{\boldsymbol{h}},\widetilde{\boldsymbol{h}}\right\rangle $. The term
\[
\left\langle \widetilde{\boldsymbol{h}},C_{\phi\left(  \boldsymbol{h}\right)
\phi\left(  \boldsymbol{h}\right)  }\widetilde{\boldsymbol{h}}\right\rangle
_{\left(  L^{2}(\Omega\times\mathbb{R})\right)  ^{2}}%
\]
is defined for any $\left(  \boldsymbol{h},\widetilde{\boldsymbol{h}}\right)
\in\left(  L^{2}(\Omega\times\mathbb{R})\right)  ^{2}$; consequently, it is
not affected by the support of the cutoff function ${\large 1}_{\mathcal{P}%
_{M}}\left(  \boldsymbol{h},\widetilde{\boldsymbol{h}}\right)  $. Furthermore,
the bilinear form $B(\widetilde{\boldsymbol{h}},\widetilde{\boldsymbol{h}})=$
$\left\langle \widetilde{\boldsymbol{h}},C_{\phi\left(  \boldsymbol{h}\right)
\phi\left(  \boldsymbol{h}\right)  }\widetilde{\boldsymbol{h}}\right\rangle
_{\left(  L^{2}(\Omega\times\mathbb{R})\right)  ^{2}}$ is the correlation
functional of a Gaussian noise, with mean zero.
\end{theorem}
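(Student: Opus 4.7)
The plan is to start from the definition of $\overline{{\LARGE Z}}_M$ as an iterated integral over $(\boldsymbol{h},\widetilde{\boldsymbol{h}})$ of an inner $\boldsymbol{J}$-expectation, apply Fubini's theorem to exchange the order of integration, evaluate the inner Gaussian integral over $\boldsymbol{J}$ using Lemma~\ref{Lemma_7}, and then recognize the resulting quadratic form as the correlation functional of a centered Gaussian noise via Remark~\ref{Nota2}.

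Concretely, I would first rewrite
\[
\overline{{\LARGE Z}}_M = \iint_{\mathcal{H}\times\mathcal{H}} {\large 1}_{\mathcal{P}_M}(\boldsymbol{h},\widetilde{\boldsymbol{h}})\, e^{S_0[\boldsymbol{h},\widetilde{\boldsymbol{h}}]} \left\{ \int_{L^2(\Omega^2)} e^{-S_{int}[\boldsymbol{h},\widetilde{\boldsymbol{h}};\boldsymbol{J}]}\, d\boldsymbol{P}_{\boldsymbol{J}} \right\} d\boldsymbol{P}(\boldsymbol{h})\, d\widetilde{\boldsymbol{P}}(\widetilde{\boldsymbol{h}})
\]
and verify the hypotheses of Fubini's theorem on the product space $\mathcal{H}\times\mathcal{H}\times L^2(\Omega^2)$ with the product measure $d\boldsymbol{P}\, d\widetilde{\boldsymbol{P}}\, d\boldsymbol{P}_{\boldsymbol{J}}$. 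Joint measurability follows from the continuity arguments used in Lemma~\ref{Lemma_5} together with the linearity of $S_{int}$ in $\boldsymbol{J}$. For absolute integrability I would use the rewriting from Remark~\ref{Nota7}, $S_{int} = \langle \boldsymbol{J}, f_{\boldsymbol{h},\widetilde{\boldsymbol{h}}}\rangle_{L^2(\Omega^2)}$ with $\|f_{\boldsymbol{h},\widetilde{\boldsymbol{h}}}\|_{L^2(\Omega^2)} \leq L(\phi)\,\|\boldsymbol{h}\|\,\|\widetilde{\boldsymbol{h}}\|$ (Lemma~\ref{Lemma_4}), and apply the exponential estimate (\ref{Key_formula}) with $\lambda=\pm 1$ to bound $\int e^{|S_{int}|}\, d\boldsymbol{P}_{\boldsymbol{J}}$ by a quantity of the form $\exp(c\|\boldsymbol{h}\|^2\|\widetilde{\boldsymbol{h}}\|^2)$; this is uniformly bounded on $\mathcal{P}_M$. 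The term $|S_0|$ is similarly bounded by a constant depending on $M,\gamma,\sigma$ on $\mathcal{P}_M$, exactly as in the proof of Lemma~\ref{Lemma_5}.

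Once Fubini is justified, Lemma~\ref{Lemma_7} evaluates the inner integral as $\exp\bigl(\tfrac{1}{2}\langle \widetilde{\boldsymbol{h}}, C_{\phi(\boldsymbol{h})\phi(\boldsymbol{h})}\widetilde{\boldsymbol{h}}\rangle_{(L^2(\Omega\times\mathbb{R}))^2}\bigr)$ with kernel defined by (\ref{Covarience_pi(h)phi(h)}); substituting this back produces formula~(\ref{Eq_Z_M_averaged}). The cutoff-independence of the quadratic form is then immediate from Lemma~\ref{Lemma_9}, whose estimate $|\langle \widetilde{\boldsymbol{h}}, C_{\phi(\boldsymbol{h})\phi(\boldsymbol{h})}\widetilde{\boldsymbol{h}}\rangle_{(L^2(\Omega\times\mathbb{R}))^2}| \leq L^2(\phi)\|K_{\boldsymbol{J}}\|_{L^2(\Omega^4)}\|\boldsymbol{h}\|^2\|\widetilde{\boldsymbol{h}}\|^2$ shows that the expression is finite and well defined on all of $L^2(\Omega\times\mathbb{R})^2$, so it does not require the restriction to $\mathcal{P}_M$. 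The Gaussian-noise interpretation is then exactly the content of Remark~\ref{Nota2}: the bilinear form $B(\widetilde{\boldsymbol{h}},\widetilde{\boldsymbol{h}}) = \langle \widetilde{\boldsymbol{h}}, \square_{C_{\phi(\boldsymbol{h})\phi(\boldsymbol{h})}} \widetilde{\boldsymbol{h}}\rangle$ is continuous (Lemma~\ref{Lemma_9}) and positive definite (by positivity of $\square_{\boldsymbol{J}}$), so Remark~\ref{Nota1} produces a unique centered Gaussian measure whose correlation functional is precisely $B$.

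The main obstacle is the Fubini step, and specifically controlling $\int e^{-S_{int}}\, d\boldsymbol{P}_{\boldsymbol{J}}$ after passing to absolute values so it remains integrable over $\mathcal{H}\times\mathcal{H}$ against $d\boldsymbol{P}\, d\widetilde{\boldsymbol{P}}$. The estimate (\ref{Key_formula}) is only available for real $\lambda$, so the Gaussian bound $\exp(c\|f_{\boldsymbol{h},\widetilde{\boldsymbol{h}}}\|_{L^2(\Omega^2)}^2)$ grows quadratically in $\|\boldsymbol{h}\|\,\|\widetilde{\boldsymbol{h}}\|$ with no a priori control in the absence of the cutoff. Here the finite-energy polydisc $\mathcal{P}_M$ is indispensable, and this is precisely why ${\large 1}_{\mathcal{P}_M}$ remains attached to $\overline{{\LARGE Z}}_M$ even though, by Lemma~\ref{Lemma_9}, the final integrand in (\ref{Eq_Z_M_averaged}) makes perfect sense on $L^2(\Omega\times\mathbb{R})^2$ without any cutoff at all.
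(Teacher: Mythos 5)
Your proposal is correct and follows essentially the same route as the paper, whose own proof is a one-line citation of Lemmas \ref{Lemma_7}--\ref{Lemma_9}, Fubini's theorem, and Remark \ref{Nota2}. Your explicit justification of the Fubini step --- bounding $\int e^{|S_{int}|}\,d\boldsymbol{P}_{\boldsymbol{J}}$ via (\ref{Key_formula}) with $\lambda=\pm1$ by a quantity of the form $\exp\left(c\|\boldsymbol{h}\|^{2}\|\widetilde{\boldsymbol{h}}\|^{2}\right)$, uniformly controlled on $\mathcal{P}_{M}$ --- merely fills in a detail the paper leaves implicit.
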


\begin{proof}
The result follows from Lemmas \ref{Lemma_7}-\ref{Lemma_9}, by using Fubini's
theorem, and Remark \ref{Nota2}.
\end{proof}

\section{\label{Appen_E}Appendix E}

\subsection{A formula for $\overline{{\protect\LARGE Z}}_{M}$}

We now set%
\[
\boldsymbol{z}\left(  y,t\right)  =\left[
\begin{array}
[c]{c}%
\boldsymbol{h}\left(  y,t\right) \\
\\
\widetilde{\boldsymbol{h}}\left(  y,t\right)
\end{array}
\right]  \text{, \ }\Xi\left(  x,y,t_{1},t_{2}\right)  =\left[
\begin{array}
[c]{ccc}%
\Xi_{\boldsymbol{hh}} &  & \Xi_{\boldsymbol{h}\widetilde{\boldsymbol{h}}}\\
&  & \\
\Xi_{\widetilde{\boldsymbol{h}}\boldsymbol{h}} &  & \Xi_{\widetilde
{\boldsymbol{h}}\widetilde{\boldsymbol{h}}}%
\end{array}
\right]  ,
\]
with%
\[
\Xi_{\boldsymbol{hh}}=0\text{, }\Xi_{\boldsymbol{h}\widetilde{\boldsymbol{h}}%
}=\delta\left(  y-x\right)  \delta\left(  t_{2}-t_{1}\right)  \left(
\partial_{t_{2}}-\gamma\right)  \text{,}%
\]%
\[
\Xi_{\widetilde{\boldsymbol{h}}\boldsymbol{h}}=-\delta\left(  y-x\right)
\delta\left(  t_{2}-t_{1}\right)  \left(  \partial_{t_{2}}+\gamma\right)
\text{,}%
\]%
\[
\Xi_{\widetilde{\boldsymbol{h}}\widetilde{\boldsymbol{h}}}=C_{\phi\left(
\boldsymbol{h}\right)  \phi\left(  \boldsymbol{h}\right)  }\left(
x,y,t_{1},t_{2}\right)  +\sigma^{2}\delta\left(  y-x\right)  \delta\left(
t_{2}-t_{1}\right)  .
\]

\begin{theorem}
\label{Theorem2} Assume that $\mathcal{D}(\Omega)\subset L^{2}(\Omega
)\subset\mathcal{D}^{\prime}(\Omega)$ is is a Gel'fand triplet, where
$\mathcal{D}(\Omega)$ is a nuclear space of continuous functions, and $\Omega$
is contained in additive group. Set%
\[
S\left[  \boldsymbol{h},\widetilde{\boldsymbol{h}};\Xi\right]  :=\frac{1}{2}%
{\displaystyle\int\limits_{\Omega^{2}}}
\text{ }%
{\displaystyle\int\limits_{\mathbb{R}^{2}}}
\text{\ }\boldsymbol{z}\left(  x,t_{1}\right)  ^{T}\Xi\left(  x,y,t_{1}%
,t_{2}\right)  \boldsymbol{z}\left(  y,t_{2}\right)  d\mu\left(  x\right)
d\mu\left(  y\right)  dt_{1}dt_{2}.
\]
Then%
\begin{equation}
\overline{{\LARGE Z}}_{M}=%
{\displaystyle\iint\limits_{\mathcal{H}\times\mathcal{H}}}
{\large 1}_{\mathcal{P}_{M}}\left(  \boldsymbol{h},\widetilde{\boldsymbol{h}%
}\right)  \exp\left(  S\left[  \boldsymbol{h},\widetilde{\boldsymbol{h}}%
;\Xi\right]  \right)  d\boldsymbol{P}\left(  \boldsymbol{h}\right)
d\widetilde{\boldsymbol{P}}\left(  \widetilde{\boldsymbol{h}}\right)  ,
\label{Formula_6}%
\end{equation}
and the functions $\Xi$, $S\left[  \boldsymbol{h},\widetilde{\boldsymbol{h}%
};\Xi\right]  $ are well-defined in $L^{2}\left(  \Omega\times\mathbb{R}%
\right)  \times L^{2}\left(  \Omega\times\mathbb{R}\right)  $, so they do not
depend on the cutoff function ${\large 1}_{\mathcal{P}_{M}}\left(
\boldsymbol{h},\widetilde{\boldsymbol{h}}\right)  $.
\end{theorem}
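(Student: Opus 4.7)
The plan is to start from the representation of $\overline{{\LARGE Z}}_{M}$ given by Theorem~\ref{Porp3}, whose exponent is $S_{0}[\widetilde{\boldsymbol{h}},\boldsymbol{h}]+\tfrac12\langle\widetilde{\boldsymbol{h}},C_{\phi(\boldsymbol{h})\phi(\boldsymbol{h})}\widetilde{\boldsymbol{h}}\rangle_{(L^{2}(\Omega\times\mathbb{R}))^{2}}$, and then algebraically reorganize this exponent block-by-block into the quadratic form $S[\boldsymbol{h},\widetilde{\boldsymbol{h}};\Xi]$. The nuclear Gel'fand triplet $\mathcal{D}(\Omega)\hookrightarrow L^{2}(\Omega)\hookrightarrow\mathcal{D}^{\prime}(\Omega)$ together with the hypothesis that $\Omega$ sits inside an additive group is what allows the symbols $\delta(y-x)$ and $\delta(t_{2}-t_{1})$ in the definition of $\Xi$ to be interpreted as bona fide elements of the appropriate dual spaces, acting on the dense subspace $\mathcal{D}(\Omega)\widehat{\otimes}_{\pi}\mathcal{S}(\mathbb{R})$ of $L^{2}(\Omega\times\mathbb{R})$ from Theorem~\ref{Theorem6}.

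For the algebraic identification, since $\Xi_{\boldsymbol{hh}}=0$ that block contributes nothing. Pairing $\boldsymbol{h}(x,t_{1})$ with $\Xi_{\boldsymbol{h}\widetilde{\boldsymbol{h}}}(x,y,t_{1},t_{2})\widetilde{\boldsymbol{h}}(y,t_{2})$ and integrating out the Dirac distributions gives $\tfrac12\int\boldsymbol{h}(x,t)(\partial_{t}-\gamma)\widetilde{\boldsymbol{h}}(x,t)\,d\mu(x)dt$, and similarly the $\Xi_{\widetilde{\boldsymbol{h}}\boldsymbol{h}}$ block contributes $-\tfrac12\int\widetilde{\boldsymbol{h}}(x,t)(\partial_{t}+\gamma)\boldsymbol{h}(x,t)\,d\mu(x)dt$. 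Using the Fourier-transform definition of $\partial_{t}$ on $\mathcal{H}$ from Lemma~\ref{Lemma_1} and the Plancherel identity, integration by parts in $t$ is legitimate and collapses these two halves into $-\langle\widetilde{\boldsymbol{h}},(\partial_{t}+\gamma)\boldsymbol{h}\rangle$. For the $\Xi_{\widetilde{\boldsymbol{h}}\widetilde{\boldsymbol{h}}}$ block I would split $C_{\phi(\boldsymbol{h})\phi(\boldsymbol{h})}+\sigma^{2}\delta(y-x)\delta(t_{2}-t_{1})$ into its two summands: the $C$-piece reproduces $\tfrac12\langle\widetilde{\boldsymbol{h}},C_{\phi(\boldsymbol{h})\phi(\boldsymbol{h})}\widetilde{\boldsymbol{h}}\rangle_{(L^{2}(\Omega\times\mathbb{R}))^{2}}$ by definition (\ref{Formula_C_phi_phi}), and the $\sigma^{2}\delta\delta$-piece evaluates to $\tfrac{\sigma^{2}}{2}\langle\widetilde{\boldsymbol{h}},\widetilde{\boldsymbol{h}}\rangle$. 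Assembling the three contributions exactly matches the integrand of Theorem~\ref{Porp3}, which proves (\ref{Formula_6}).

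The cutoff-independence claim is then immediate once it is shown that each of the ingredients above extends to an unconditionally defined object on $L^{2}(\Omega\times\mathbb{R})^{2}$: the two derivative-type terms are bounded by Cauchy--Schwarz together with Lemma~\ref{Lemma_1}, the $\sigma^{2}$-term is just $\tfrac{\sigma^{2}}{2}\|\widetilde{\boldsymbol{h}}\|^{2}$, and the $C_{\phi(\boldsymbol{h})\phi(\boldsymbol{h})}$-term is dominated by $L^{2}(\phi)^{2}\|K_{\boldsymbol{J}}\|_{L^{2}(\Omega^{4})}\|\boldsymbol{h}\|^{2}\|\widetilde{\boldsymbol{h}}\|^{2}$ via Lemma~\ref{Lemma_9}. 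None of these bounds references $\mathcal{P}_{M}$, so $\Xi$ and $S[\boldsymbol{h},\widetilde{\boldsymbol{h}};\Xi]$ are defined intrinsically and the cutoff enters only through the characteristic function ${\large 1}_{\mathcal{P}_{M}}$. The main obstacle in the argument is to rigorously justify the Dirac-distribution manipulations used when reading off the four block contributions, since generic fields in $\mathcal{H}$ do not admit everywhere-defined pointwise values; I would resolve this by first performing the computation on the nuclear test subspace $\mathcal{D}(\Omega)\widehat{\otimes}_{\pi}\mathcal{S}(\mathbb{R})$, where the pairings are literal integrals, then extending by continuity using the density statement of Theorem~\ref{Theorem6} and the almost-everywhere pointwise representative provided by Lemma~\ref{Lemma_16}, which guarantees that the formal collapses onto the diagonals $\{y=x\}$ and $\{t_{2}=t_{1}\}$ coincide with the $L^{2}$-expressions obtained above.
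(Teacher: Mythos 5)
Your proposal follows essentially the same route as the paper's proof: starting from Theorem \ref{Porp3}, splitting the quadratic form $\boldsymbol{z}^{T}\Xi\boldsymbol{z}$ block by block, collapsing the Dirac factors, integrating by parts in $t$ to merge the two off-diagonal contributions into $-\langle\widetilde{\boldsymbol{h}},(\partial_{t}+\gamma)\boldsymbol{h}\rangle$, and invoking Lemma \ref{Lemma_9} for the cutoff-independence of the $C_{\phi(\boldsymbol{h})\phi(\boldsymbol{h})}$ term. The only noteworthy difference is that you justify the integration by parts via the Fourier/Plancherel definition of $\partial_{t}$ on $W_{1}(\mathbb{R})$ (so that $\langle f,\partial_{t}g\rangle=-\langle\partial_{t}f,g\rangle$ holds by unitarity), whereas the paper argues through vanishing boundary terms $\lim_{t\to\pm\infty}\boldsymbol{h}(x,t)=0$; your variant is if anything more robust, since decay at infinity is not automatic for arbitrary $L^{2}(\mathbb{R})$ functions and really relies on the $W_{1}$ regularity that your Plancherel argument uses directly.
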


\begin{proof}
Result follows from Theorem \ref{Porp3} and (\ref{Formula_6}). To verify
(\ref{Formula_6}), we proceed as follows. First, we note that%
\begin{gather*}
\mathcal{J}\left(  y,x,t_{1},t_{2}\right)  :=\left[
\begin{array}
[c]{c}%
\boldsymbol{h}\left(  x,t_{1}\right)  \\
\\
\widetilde{\boldsymbol{h}}\left(  x,t_{1}\right)
\end{array}
\right]  ^{T}\text{\ }\Xi\left(  y,x,t_{1},t_{2}\right)  \left[
\begin{array}
[c]{c}%
\boldsymbol{h}\left(  y,t_{2}\right)  \\
\\
\widetilde{\boldsymbol{h}}\left(  y,t_{2}\right)
\end{array}
\right]  =\\
\boldsymbol{h}\left(  x,t_{1}\right)  \delta\left(  y-x\right)  \delta\left(
t_{2}-t_{1}\right)  \left(  \partial_{t_{2}}-\gamma\right)  \widetilde
{\boldsymbol{h}}\left(  y,t_{2}\right)  \\
-\widetilde{\boldsymbol{h}}\left(  x,t_{1}\right)  \delta\left(  x-y\right)
\delta\left(  t_{2}-t_{1}\right)  \left(  \partial_{t_{2}}+\gamma\right)
\boldsymbol{h}\left(  y,t_{2}\right)  +\widetilde{\boldsymbol{h}}\left(
x,t_{1}\right)  \Xi_{\widetilde{\boldsymbol{h}}\widetilde{\boldsymbol{h}}%
}\widetilde{\boldsymbol{h}}\left(  y,t_{2}\right)  .
\end{gather*}
Set%
\begin{multline*}
\mathcal{J}_{0}\left(  y,x,t_{1},t_{2}\right)  =\boldsymbol{h}\left(
x,t_{1}\right)  \delta\left(  y-x\right)  \delta\left(  t_{2}-t_{1}\right)
\left(  \partial_{t_{2}}-\gamma\right)  \widetilde{\boldsymbol{h}}\left(
y,t_{2}\right)  \\
-\widetilde{\boldsymbol{h}}\left(  x,t_{1}\right)  \delta\left(  y-x\right)
\delta\left(  t_{2}-t_{1}\right)  \left(  \partial_{t_{2}}+\gamma\right)
\boldsymbol{h}\left(  y,t_{2}\right)  ,
\end{multline*}%
\[
\mathcal{J}_{1}\left(  y,x,t_{1},t_{2}\right)  =\widetilde{\boldsymbol{h}%
}\left(  x,t_{1}\right)  \Xi_{\widetilde{\boldsymbol{h}}\widetilde
{\boldsymbol{h}}}\widetilde{\boldsymbol{h}}\left(  y,t_{2}\right)  ,
\]
so $\mathcal{J}\left(  y,x,t_{1},t_{2}\right)  =\mathcal{\mathcal{J}}%
_{0}\left(  y,x,t_{1},t_{2}\right)  +\mathcal{J}_{1}\left(  y,x,t_{1}%
,t_{2}\right)  $. Now%
\begin{align*}
&  \frac{1}{2}%
{\displaystyle\int\limits_{\Omega^{2}}}
\text{ }%
{\displaystyle\int\limits_{\mathbb{R}^{2}}}
\text{\ }\mathcal{J}_{0}\left(  y,x,t_{1},t_{2}\right)  d\mu\left(  x\right)
d\mu\left(  y\right)  dt_{1}dt_{2}\\
&  =\frac{1}{2}%
{\displaystyle\int\limits_{\Omega^{2}}}
\text{ }%
{\displaystyle\int\limits_{\mathbb{R}^{2}}}
\left\{  \boldsymbol{h}\left(  x,t_{1}\right)  \delta\left(  y-x\right)
\delta\left(  t_{2}-t_{1}\right)  \left(  \partial_{t_{2}}-\gamma\right)
\widetilde{\boldsymbol{h}}\left(  y,t_{2}\right)  \right\}  d\mu\left(
x\right)  d\mu\left(  y\right)  dt_{1}dt_{2}\\
&  -\frac{1}{2}%
{\displaystyle\int\limits_{\Omega^{2}}}
\text{ }%
{\displaystyle\int\limits_{\mathbb{R}^{2}}}
\left\{  \widetilde{\boldsymbol{h}}\left(  x,t_{1}\right)  \delta\left(
y-x\right)  \delta\left(  t_{2}-t_{1}\right)  \left(  \partial_{t_{2}}%
+\gamma\right)  \boldsymbol{h}\left(  y,t_{2}\right)  \right\}  d\mu\left(
x\right)  d\mu\left(  y\right)  dt_{1}dt_{2}%
\end{align*}%
\begin{align*}
&  =\frac{1}{2}%
{\displaystyle\int\limits_{\Omega^{2}}}
\text{ }\left\{
{\displaystyle\int\limits_{\mathbb{R}}}
\boldsymbol{h}\left(  x,t_{1}\right)  \partial_{t_{1}}\widetilde
{\boldsymbol{h}}\left(  x,t_{1}\right)  dt_{1}\right\}  d\mu\left(  x\right)
\\
&  -\frac{1}{2}%
{\displaystyle\int\limits_{\Omega^{2}}}
\left\{  \text{ }%
{\displaystyle\int\limits_{\mathbb{R}}}
\widetilde{\boldsymbol{h}}\left(  x,t_{1}\right)  \partial_{t_{2}%
}\boldsymbol{h}\left(  x,t_{1}\right)  dt_{1}\right\}  d\mu\left(  x\right)
\\
&  -\gamma%
{\displaystyle\int\limits_{\Omega^{2}}}
\text{ }%
{\displaystyle\int\limits_{\mathbb{R}}}
\boldsymbol{h}\left(  x,t_{1}\right)  \widetilde{\boldsymbol{h}}\left(
x,t_{1}\right)  d\mu\left(  x\right)  dt_{1}.
\end{align*}
Integrating by parts,%
\begin{align*}
&
{\displaystyle\int\limits_{\Omega}}
\text{ }\left\{
{\displaystyle\int\limits_{\mathbb{R}}}
\boldsymbol{h}\left(  x,t_{1}\right)  \partial_{t_{1}}\widetilde
{\boldsymbol{h}}\left(  x,t_{1}\right)  dt_{1}\right\}  d\mu\left(  x\right)
\\
&  =%
{\displaystyle\int\limits_{\Omega}}
\text{ }\left(  \left.  \boldsymbol{h}\left(  x,t_{1}\right)  \widetilde
{\boldsymbol{h}}\left(  x,t_{1}\right)  \right\vert _{-\infty}^{\infty
}\right)  d\mu\left(  x\right)  \\
&  -%
{\displaystyle\int\limits_{\Omega}}
\left\{  \text{ }%
{\displaystyle\int\limits_{\mathbb{R}}}
\widetilde{\boldsymbol{h}}\left(  x,t_{1}\right)  \partial_{t_{1}%
}\boldsymbol{h}\left(  x,t_{1}\right)  dt_{1}\right\}  d\mu\left(  y\right)  .
\end{align*}
We now use that \ for fixed $x$, $y$ outside of a set of measure zero, the
fact that $\boldsymbol{h}\left(  x,\cdot\right)  \widetilde{\text{,
}\boldsymbol{h}}\left(  y,\cdot\right)  \in L^{2}(\mathbb{R})$ implies that%
\[
\lim_{t_{1}\rightarrow\pm\infty}\boldsymbol{h}\left(  x,t_{1}\right)
=\lim_{t_{1}\rightarrow\pm\infty}\widetilde{\boldsymbol{h}}\left(
x,t_{1}\right)  =0,
\]
therefore%
\[
\frac{1}{2}%
{\displaystyle\iint\limits_{\Omega\times\Omega}}
\text{ }%
{\displaystyle\int\limits_{\mathbb{R}^{2}}}
\text{\ }\mathcal{J}_{0}\left(  y,x,t_{1},t_{2}\right)  d\mu\left(  x\right)
d\mu\left(  y\right)  dt_{1}dt_{2}=-\left\langle \widetilde{\boldsymbol{h}%
},\left(  \partial_{t}+\gamma\right)  \boldsymbol{h}\right\rangle .
\]
On the other hand,%
\begin{align*}
&  \frac{1}{2}%
{\displaystyle\int\limits_{\Omega^{2}}}
\text{ }%
{\displaystyle\iint\limits_{\mathbb{R}\times\mathbb{R}}}
\text{\ }\mathcal{J}_{1}\left(  y,x,t_{1},t_{2}\right)  d\mu\left(  x\right)
d\mu\left(  y\right)  dt_{1}dt_{2}\\
&  =\frac{1}{2}%
{\displaystyle\int\limits_{\Omega^{2}}}
\text{ }%
{\displaystyle\int\limits_{\mathbb{R}^{2}}}
\widetilde{\boldsymbol{h}}\left(  x,t_{1}\right)  C_{\phi\left(
\boldsymbol{h}\right)  \phi\left(  \boldsymbol{h}\right)  }\left(
x,y,t_{1},t_{2}\right)  \widetilde{\boldsymbol{h}}\left(  y,t_{2}\right)
d\mu\left(  x\right)  d\mu\left(  y\right)  dt_{1}dt_{2}+\\
&  \frac{1}{2}%
{\displaystyle\int\limits_{\Omega^{2}}}
\text{ }%
{\displaystyle\int\limits_{\mathbb{R}^{2}}}
\sigma^{2}\left\{  \widetilde{\boldsymbol{h}}\left(  x,t_{1}\right)
\delta\left(  y-x\right)  \delta\left(  t_{2}-t_{1}\right)  \widetilde
{\boldsymbol{h}}\left(  y,t_{2}\right)  \right\}  d\mu\left(  x\right)
d\mu\left(  y\right)  dt_{1}dt_{2},
\end{align*}
and%
\begin{align*}
&
{\displaystyle\int\limits_{\Omega^{2}}}
\text{ }%
{\displaystyle\int\limits_{\mathbb{R}^{2}}}
\left\{  \widetilde{\boldsymbol{h}}\left(  x,t_{1}\right)  \delta\left(
y-x\right)  \delta\left(  t_{2}-t_{1}\right)  \widetilde{\boldsymbol{h}%
}\left(  y,t_{2}\right)  \right\}  d\mu\left(  x\right)  d\mu\left(  y\right)
dt_{1}dt_{2}\\
&  =\frac{1}{2}\sigma^{2}%
{\displaystyle\int\limits_{\Omega}}
{\displaystyle\int\limits_{\mathbb{R}}}
\left\{  \widetilde{\boldsymbol{h}}\left(  x,t_{1}\right)  \widetilde
{\boldsymbol{h}}\left(  x,t_{1}\right)  \right\}  d\mu\left(  x\right)
dt_{1}=\frac{1}{2}\sigma^{2}\left\langle \widetilde{\boldsymbol{h}}%
,\widetilde{\boldsymbol{h}}\right\rangle .
\end{align*}

\end{proof}

\subsection{A Remark on Gaussian random variables}

\begin{lemma}
\label{Lemma_16}There exists\ a subset $\mathcal{M}\subset\Omega
\times\mathbb{R}$ with $d\mu dt$-measure zero, such that for any $\left(
x,t\right)  \in\Omega\times\mathbb{R\smallsetminus}\mathcal{M}$, the mapping%
\[%
\begin{array}
[c]{cccc}%
\operatorname{eval}_{\left(  x,t\right)  } & \mathcal{H} & \rightarrow &
\mathbb{R}\\
&  &  & \\
& \boldsymbol{h} & \rightarrow & \boldsymbol{h}\left(  x,t\right)
\end{array}
\]
is an ordinary Gaussian random variable with mean zero.
\end{lemma}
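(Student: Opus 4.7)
The plan combines the reproducing-kernel structure of $W_1(\mathbb{R})$, a double application of Fubini, and the Gaussian character of $d\boldsymbol{P}$ encoded in its (trace-class) covariance operator $C$ on $\mathcal{H}$. First I would verify that $W_1(\mathbb{R})$ is a reproducing kernel Hilbert space: Cauchy--Schwarz in the Fourier representation yields
\[
|g(t)|\le (2\pi)^{-1/2}\Bigl(\int_{\mathbb{R}}(1+|\xi|^2)^{-1}d\xi\Bigr)^{1/2}\|g\|_{W_1(\mathbb{R})},
\]
so the Riesz representer $v_t\in W_1(\mathbb{R})$, with $\widehat{v_t}(\xi)=e^{-i\xi t}/\bigl(\sqrt{2\pi}(1+|\xi|^2)\bigr)$, satisfies $g(t)=\langle g,v_t\rangle_{W_1}$ for every $g\in W_1(\mathbb{R})$. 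Under the isomorphism $\mathcal{H}\simeq L^2(\Omega,d\mu;W_1(\mathbb{R}))$ of Lemma \ref{Lemma_2}(i), a first Fubini produces, for each $\boldsymbol{h}\in\mathcal{H}$, a $\mu$-null set $N(\boldsymbol{h})\subset\Omega$ outside of which $\boldsymbol{h}(x,\cdot)\in W_1(\mathbb{R})$, so that $\boldsymbol{h}(x,t)=\langle\boldsymbol{h}(x,\cdot),v_t\rangle_{W_1}$ is defined for every $t$. A second Fubini on $(\Omega\times\mathbb{R},d\mu dt)\times(\mathcal{H},d\boldsymbol{P})$ then yields a set $\mathcal{M}\subset\Omega\times\mathbb{R}$ of $d\mu dt$-measure zero such that, for $(x_0,t_0)\notin\mathcal{M}$, the map $\operatorname{eval}_{(x_0,t_0)}(\boldsymbol{h})=\boldsymbol{h}(x_0,t_0)$ is $d\boldsymbol{P}$-almost surely defined.

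To identify its law, I would approximate $\operatorname{eval}_{(x_0,t_0)}$ by continuous linear functionals on $\mathcal{H}$. Choose an orthonormal basis $\{e_n\otimes f_m\}_{n,m}$ of $\mathcal{H}$ with $e_n$ continuous and $f_m\in\mathcal{S}(\mathbb{R})$, provided by Lemma \ref{Lemma_2}(ii) and Remark \ref{Nota1A}, set $c_{nm}(\boldsymbol{h})=\langle\boldsymbol{h},e_n\otimes f_m\rangle_{\mathcal{H}}$, and define the partial sums
\[
S_N(\boldsymbol{h};x_0,t_0)=\sum_{n,m\le N}c_{nm}(\boldsymbol{h})\,e_n(x_0)f_m(t_0).
\]
Each $c_{nm}$ is a continuous linear functional on $\mathcal{H}$ whose push-forward under the Gaussian $d\boldsymbol{P}$ is mean-zero Gaussian, so every $S_N(\cdot;x_0,t_0)$ is a mean-zero Gaussian. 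Parseval in the $W_1$-variable together with the reproducing identity $f_m(t_0)=\langle f_m,v_{t_0}\rangle_{W_1}$ gives $S_N(\boldsymbol{h};x_0,t_0)\to\boldsymbol{h}(x_0,t_0)$ for $d\boldsymbol{P}$-almost every $\boldsymbol{h}$ and every $(x_0,t_0)\notin\mathcal{M}$. Once this convergence is promoted to $L^2(d\boldsymbol{P})$, the limit is an $L^2$-limit of mean-zero Gaussians and therefore itself a mean-zero Gaussian, which is exactly the claim.

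The main obstacle is precisely this $L^2(d\boldsymbol{P})$-upgrade, i.e., controlling
\[
\operatorname{Var}_{d\boldsymbol{P}}(S_N)=\sum_{n,m,n',m'\le N}\mathbb{E}[c_{nm}c_{n'm'}]\,e_n(x_0)e_{n'}(x_0)f_m(t_0)f_{m'}(t_0)
\]
uniformly in $N$ for $d\mu dt$-almost every $(x_0,t_0)$. Since $L^2(\Omega,d\mu)$ is generically not an RKHS, a naive bound would require $\sum_n e_n(x_0)^2<\infty$, which fails. The remedy exploits the trace-class hypothesis on the covariance $C$ of $d\boldsymbol{P}$ imposed in Appendix D: $C$ admits a symmetric positive-definite integral kernel $K\in L^2(\Omega^2\times\mathbb{R}^2)$ whose diagonal is trace-integrable, $\int K(x,t;x,t)\,d\mu(x)\,dt=\operatorname{tr}C<\infty$. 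Writing $\operatorname{Var}_{d\boldsymbol{P}}(S_N)=\langle\Psi_N,C\Psi_N\rangle_{\mathcal{H}}$ with $\Psi_N=\sum_{n,m\le N}e_n(x_0)f_m(t_0)\,e_n\otimes f_m\in\mathcal{H}$, this quadratic form is recognised, as $N\to\infty$, as the diagonal value $K(x_0,t_0;x_0,t_0)$, finite for $d\mu dt$-almost every $(x_0,t_0)$; the exceptional locus is absorbed into $\mathcal{M}$, and monotone convergence of the quadratic forms delivers $L^2(d\boldsymbol{P})$-convergence of $\{S_N\}$ and the Gaussian conclusion.
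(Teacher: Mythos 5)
Your overall strategy---realizing $\operatorname{eval}_{(x_0,t_0)}$ as a $d\boldsymbol{P}$-a.s.\ limit of the finite-rank functionals $S_N$, each a continuous linear functional and hence a mean-zero Gaussian variable, and then passing Gaussianity to the limit---is a legitimate alternative to the paper's route, and your observation that $W_1(\mathbb{R})$ is a reproducing kernel Hilbert space while $L^2(\Omega,d\mu)$ is not puts the finger exactly where the difficulty sits. The paper instead argues directly that $\operatorname{eval}_{(x,t)}$ is a \emph{continuous} linear functional on $\mathcal{H}$ for $(x,t)$ outside a null set $\mathcal{M}$ (using $\left\Vert\cdot\right\Vert_{\mathcal{H}}\geq\left\Vert\cdot\right\Vert$, passage to an a.e.-convergent subsequence, and the Riesz representation $\operatorname{eval}_{(x,t)}=\left\langle\cdot,\boldsymbol{v}(x,t)\right\rangle_{\mathcal{H}}$), after which Gaussianity is the standard fact that $\boldsymbol{h}\mapsto\left\langle\boldsymbol{h},\boldsymbol{v}\right\rangle_{\mathcal{H}}$ is mean-zero Gaussian under $d\boldsymbol{P}$.

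The step on which your argument ultimately rests, however, is not available. The trace-class covariance with integral kernel introduced in Appendix D is a hypothesis on $\square_{\boldsymbol{J}}$, the covariance of the \emph{coupling} measure $\boldsymbol{P}_{\boldsymbol{J}}$ on $L^{2}(\Omega^{2})$; it is not imposed on the field measure $d\boldsymbol{P}(\boldsymbol{h})$ on $\mathcal{H}$. The measures $d\boldsymbol{P}$ the paper actually constructs (Section \ref{Sect_examples}) have characteristic functional $\exp(-\tfrac{\sigma^{2}}{2}\left\Vert\varphi\right\Vert_{\mathcal{H}}^{2})$, i.e.\ covariance $\sigma^{2}I$, which is not trace class; for these your own variance formula gives $\operatorname{Var}(S_N)=\sigma^{2}\bigl(\sum_{n\leq N}e_n(x_0)^{2}\bigr)\bigl(\sum_{m\leq N}f_m(t_0)^{2}\bigr)$, whose first factor diverges for precisely the reason you identify, so the $L^{2}(d\boldsymbol{P})$ upgrade fails. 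Even granting a trace-class $C$ with kernel $K\in L^{2}$, the remaining steps are unjustified: an $L^{2}$ kernel has no well-defined restriction to the diagonal (a null subset of $(\Omega\times\mathbb{R})^{2}$), so the identification of $\lim_N\left\langle\Psi_N,C\Psi_N\right\rangle$ with $K(x_0,t_0;x_0,t_0)$ needs a Mercer-type continuity hypothesis; the quadratic forms $\left\langle\Psi_N,C\Psi_N\right\rangle$ are not monotone in $N$ because the $\Psi_N$ are partial sums in the basis $\{e_n\otimes f_m\}$ rather than in the eigenbasis of $C$; and $L^{2}$-convergence of $S_N$ requires $\left\langle\Psi_N-\Psi_M,C(\Psi_N-\Psi_M)\right\rangle\rightarrow0$, which does not follow from convergence of $\left\langle\Psi_N,C\Psi_N\right\rangle$ alone. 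A cheaper repair of your own scheme would bypass the $L^{2}$ upgrade entirely: an a.s.\ limit of mean-zero Gaussian variables, when it exists and is finite a.s., is automatically mean-zero Gaussian, so it would suffice to make rigorous the a.s.\ convergence $S_N\rightarrow\boldsymbol{h}(x_0,t_0)$ via the Fubini argument you sketch---but that convergence, once established, already encodes the boundedness of $\operatorname{eval}_{(x_0,t_0)}$ that the paper's proof obtains directly.
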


\begin{proof}
We first show that $\operatorname{eval}_{\left(  x,t\right)  }$ is a
well-defined linear functional on $\mathcal{H}$. Indeed, by Lemma
\ref{Lemma_2}-(ii) in Appendix A, given orthonormal bases $\left\{
e_{n}\left(  x\right)  \right\}  _{n\in\mathbb{N}}$ in $L^{2}\left(
\Omega,d\mu\right)  $ and $\left\{  f_{m}\left(  t\right)  \right\}
_{m\in\mathbb{N}}$ in $W_{1}(\mathbb{R})$, $\left\{  e_{n}\left(  x\right)
f_{m}\left(  t\right)  \right\}  _{n,m\in\mathbb{N}}$ is an orthonormal basis
of $L^{2}\left(  \Omega,d\mu;W_{1}(\mathbb{R})\right)  \simeq L^{2}\left(
\Omega\right)
{\textstyle\bigotimes}
W_{1}(\mathbb{R})$. Since $\mathcal{D}\left(  \Omega\right)  \subset
L^{2}\left(  \Omega\right)  $ is a dense space of continuous functions in
$L^{2}\left(  \Omega\right)  $, which is a separable space, then
$\mathcal{D}\left(  \Omega\right)  $ is separable, and by using the
Gram--Schmidt procedure, we can assume that all the $e_{n}\left(  x\right)
\in\mathcal{D}\left(  \Omega\right)  $, and also that all the $f_{m}\left(
t\right)  \in\mathcal{S}(\mathbb{R})$, see Remark \ref{Nota1A} in Appendix A.
Therefore, $\operatorname{eval}_{\left(  x,t\right)  }\left(  \boldsymbol{h}%
\right)  \in\mathbb{R}$ for any $\boldsymbol{h}\in\mathcal{H}$.

We now show the existence of\ a subset $\mathcal{M}\subset\Omega
\times\mathbb{R}$ with $d\mu dt$-measure zero, such that $\operatorname{eval}%
_{\left(  x,t\right)  }$ is a continuous linear functional for any $\left(
x,t\right)  \in\Omega\times\mathbb{R\smallsetminus}\mathcal{M}$. Take a
sequence $\left\{  \boldsymbol{h}_{n}\right\}  _{n\in\mathbb{N}}$ converging
to $\boldsymbol{0}$ in $\mathcal{H}$. We must show that $\operatorname{eval}%
_{\left(  x,t\right)  }\left(  \boldsymbol{h}_{n}\right)  \rightarrow
\boldsymbol{0}$. By using that $\left\Vert \cdot\right\Vert _{\mathcal{H}}%
\geq\left\Vert \cdot\right\Vert $, cf. Lemma \ref{Lemma_3} in Appendix A,\ we
have $\boldsymbol{h}_{n}\rightarrow\boldsymbol{0}$ in $L^{2}\left(
\Omega\times\mathbb{R}\right)  $. Then, by passing to a subsequence, there
exists\ a subset $\mathcal{M}\subset\Omega\times\mathbb{R}$ with $d\mu
dt$-measure zero, such that $\boldsymbol{h}_{n}\left(  y,s\right)
\rightarrow\boldsymbol{0}$ uniformly in $\Omega\times\mathbb{R}$
$\mathbb{\smallsetminus}$ $\mathcal{M}$, cf. \cite[Theorem 2.5.1, 2.5.3]{Ash}.
Now, by the Riesz representation theorem, for $\left(  x,t\right)  \in
\Omega\times\mathbb{R}$ $\mathbb{\smallsetminus}$ $\mathcal{M}$, there exists
a unique $\boldsymbol{v}=\boldsymbol{v}\left(  x,t\right)  \in\mathcal{H}$
\ such that $\operatorname{eval}_{\left(  x,t\right)  }\left(  \boldsymbol{h}%
\right)  =\left\langle \boldsymbol{h},\boldsymbol{v}\right\rangle
_{\mathcal{H}}$. Finally, we use that $\boldsymbol{h}\rightarrow\left\langle
\boldsymbol{h},\boldsymbol{v}\right\rangle _{\mathcal{H}}$ is an ordinary
Gaussian random variable with mean zero and variance $\left\Vert
\boldsymbol{v}\right\Vert _{\mathcal{H}}$, cf. \cite[Lemmas 2.1.3 and
2.1.5]{Obata}.
\end{proof}

\section{\label{Appen_F}Appendix F}

\begin{lemma}
\label{Lemma_14}Let $\mathbb{P}_{\boldsymbol{J}}$ be the Gaussian probability
measure with mean zero and covariance operator $\square_{\boldsymbol{J}}$
attached to the random coupling kernel $\boldsymbol{J}$. With the above notation,%

\begin{gather*}%
{\displaystyle\int\limits_{L^{2}\left(  \Omega\times\Omega\right)  }}
\exp\left(  -S_{int}\left(  \widetilde{\boldsymbol{h}}^{\alpha},\boldsymbol{h}%
^{\alpha},\boldsymbol{J}\right)  \right)  d\mathbb{P}_{\boldsymbol{J}}=\\
\exp\Bigg(\frac{1}{2}%
{\displaystyle\sum\limits_{\alpha=1}^{2}}
\text{ }%
{\displaystyle\sum\limits_{\beta=1}^{2}}
\text{\ }%
{\displaystyle\int\limits_{\mathbb{R}^{2}}}
\text{ }%
{\displaystyle\int\limits_{\Omega^{2}}}
\widetilde{\boldsymbol{h}}^{\alpha}\left(  x,t_{1}\right)  C_{\phi\left(
\widetilde{\boldsymbol{h}}^{\alpha}\right)  \phi\left(  \widetilde
{\boldsymbol{h}}^{\beta}\right)  }^{\alpha\beta}\left(  x,y,t_{1}%
,t_{2}\right)  \widetilde{\boldsymbol{h}}^{\beta}\left(  y,t_{2}\right)
\times\\
\\
d\mu\left(  x\right)  d\mu\left(  y\right)  dt_{1}dt_{2}\Bigg),
\end{gather*}
where $C_{\phi\left(  \boldsymbol{h}^{\alpha}\right)  \phi\left(
\boldsymbol{h}^{\beta}\right)  }^{\alpha\beta}\left(  x,y,t_{1},t_{2}\right)
$ is determined by the kernel $K_{\boldsymbol{J}}$, see (\ref{Eq_8}).
\end{lemma}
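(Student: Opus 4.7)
The plan is to reduce the double-copy computation to a single application of the Gaussian integration formula (\ref{Key_formula_A}), in close analogy with the single-copy case treated in Lemma \ref{Lemma_7}. Concretely, I will rewrite $-S_{int}(\widetilde{\boldsymbol{h}}^{\alpha},\boldsymbol{h}^{\alpha},\boldsymbol{J})$ as a single $L^{2}(\Omega^{2})$-pairing against $\boldsymbol{J}$, apply the key formula with $\lambda=1$, and then bilinearly expand the resulting quadratic form to recover the $\alpha\beta$ double sum.

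First I would use Fubini's theorem, exactly as in Remark \ref{Nota7} (which is justified by Lemma \ref{Lemma_4} and the Cauchy--Schwarz inequality), to rewrite each summand of $S_{int}$ as
\[
\left\langle \widetilde{\boldsymbol{h}}^{\alpha},\int_{\Omega}\boldsymbol{J}(x,y)\phi(\boldsymbol{h}^{\alpha}(y,t))d\mu(y)\right\rangle =\left\langle \boldsymbol{J},F^{\alpha}\right\rangle _{L^{2}(\Omega^{2})},
\]
where
\[
F^{\alpha}(x,y):=\int_{\mathbb{R}}\widetilde{\boldsymbol{h}}^{\alpha}(x,t)\phi(\boldsymbol{h}^{\alpha}(y,t))dt.
\]
By the estimates in the proof of Lemma \ref{Lemma_4} (using $\phi(0)=0$, the Lipschitz bound, and Cauchy--Schwarz), one checks that $F^{\alpha}\in L^{2}(\Omega^{2})$ whenever $\widetilde{\boldsymbol{h}}^{\alpha},\boldsymbol{h}^{\alpha}\in\mathcal{H}$, so that $F:=-\sum_{\alpha=1}^{2}F^{\alpha}\in L^{2}(\Omega^{2})$ and
\[
-S_{int}(\widetilde{\boldsymbol{h}}^{\alpha},\boldsymbol{h}^{\alpha},\boldsymbol{J})=\left\langle \boldsymbol{J},F\right\rangle _{L^{2}(\Omega^{2})}.
\]

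Next, I would invoke the Wick-rotated Bochner--Minlos identity (\ref{Key_formula_A}) with $\lambda=1$ and $f=F$. This yields
\[
\int_{L^{2}(\Omega^{2})}\exp(-S_{int})\,d\mathbb{P}_{\boldsymbol{J}}=\exp\!\left(\tfrac{1}{2}\langle\square_{\boldsymbol{J}}F,F\rangle_{L^{2}(\Omega^{2})}\right).
\]
Bilinearity of the inner product and of $\square_{\boldsymbol{J}}$ then gives
\[
\tfrac{1}{2}\langle\square_{\boldsymbol{J}}F,F\rangle_{L^{2}(\Omega^{2})}=\tfrac{1}{2}\sum_{\alpha=1}^{2}\sum_{\beta=1}^{2}\langle\square_{\boldsymbol{J}}F^{\alpha},F^{\beta}\rangle_{L^{2}(\Omega^{2})},
\]
and inserting the integral-kernel representation (\ref{Kernel_operator}) of $\square_{\boldsymbol{J}}$ and a further application of Fubini reproduces the double integral stated in the lemma, provided one defines
\begin{align*}
&C_{\phi(\boldsymbol{h}^{\alpha})\phi(\boldsymbol{h}^{\beta})}^{\alpha\beta}(x,y,t_{1},t_{2})\\
&\quad:=\int_{\Omega^{2}}\phi(\boldsymbol{h}^{\beta}(y_{2},t_{2}))K_{\boldsymbol{J}}(x,u_{2},y,y_{2})\phi(\boldsymbol{h}^{\alpha}(u_{2},t_{1}))d\mu(y_{2})d\mu(u_{2}),
\end{align*}
which matches the later formula (\ref{Eq_8}) referenced in the statement.

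The steps are routine once the single-copy template is in place, so I do not expect a genuine obstacle; the main care point is ensuring $F\in L^{2}(\Omega^{2})$ (so that the key formula applies) and that all Fubini swaps are legitimate. Both are handled by the $L^{2}$-bound $\|F^{\alpha}\|_{L^{2}(\Omega^{2})}\leq L(\phi)\|\widetilde{\boldsymbol{h}}^{\alpha}\|\,\|\boldsymbol{h}^{\alpha}\|$ established along the lines of Lemma \ref{Lemma_4}, together with $K_{\boldsymbol{J}}\in L^{2}(\Omega^{4})$, which also yields the analog of Lemma \ref{Lemma_9A} for the cross kernels $C^{\alpha\beta}_{\phi(\boldsymbol{h}^{\alpha})\phi(\boldsymbol{h}^{\beta})}$ and hence guarantees the finiteness of the bilinear forms appearing in the exponent.
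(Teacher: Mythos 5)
Your proposal is correct and follows essentially the same route as the paper: rewrite $-S_{int}$ as a single $L^{2}(\Omega^{2})$-pairing $\langle\boldsymbol{J},F\rangle$ with $F=-\sum_{\alpha}F^{\alpha}$, apply the Wick-rotated Gaussian formula (\ref{Key_formula}), expand the quadratic form bilinearly over $\alpha,\beta$, and insert the kernel representation of $\square_{\boldsymbol{J}}$ to obtain the $C^{\alpha\beta}$ of (\ref{Eq_8}). Your explicit verification that $F^{\alpha}\in L^{2}(\Omega^{2})$ is a small additional care point that the paper leaves implicit but is handled exactly as you indicate, via the Lipschitz bound from Lemma \ref{Lemma_4}.
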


\begin{proof}
We first note that%
\[
-S_{int}\left(  \widetilde{\boldsymbol{h}}^{\alpha},\boldsymbol{h}^{\alpha
},\boldsymbol{J}\right)  =\left\langle \boldsymbol{J}(x,y),\left(  -1\right)
{\displaystyle\sum\limits_{\alpha=1}^{2}}
\text{ }%
{\displaystyle\int\limits_{\mathbb{R}}}
\widetilde{\boldsymbol{h}}^{\alpha}\left(  x,t\right)  \phi\left(
\boldsymbol{h}^{\alpha}\left(  y,t\right)  \right)  dt\right\rangle
_{L^{2}\left(  \Omega^{2}\right)  },
\]
and thus by (\ref{Key_formula}),%
\begin{gather*}%
{\displaystyle\int\limits_{L^{2}\left(  \mathbb{Z}_{p}^{2}\right)  }}
\exp\left(  S_{int}\left(  \widetilde{\boldsymbol{h}}^{\alpha},\boldsymbol{h}%
^{\alpha},\boldsymbol{J}\right)  \right)  d\boldsymbol{P}_{\boldsymbol{J}%
}=\exp\frac{1}{2}\times\\
\left\langle \square_{\boldsymbol{J}}\left(
{\displaystyle\sum\limits_{\alpha=1}^{2}}
{\displaystyle\int\limits_{\mathbb{R}}}
\widetilde{\boldsymbol{h}}^{\alpha}\left(  x,t\right)  \phi\left(
\boldsymbol{h}^{\alpha}\left(  y,t\right)  \right)  dt\right)  ,%
{\displaystyle\sum\limits_{\beta=1}^{2}}
{\displaystyle\int\limits_{\mathbb{R}}}
\widetilde{\boldsymbol{h}}^{\beta}\left(  x,t\right)  \phi\left(
\boldsymbol{h}^{\beta}\left(  y,t\right)  \right)  dt\right\rangle
_{L^{2}\left(  \Omega^{2}\right)  }.
\end{gather*}

Now,%
\begin{gather*}
\left\langle \square_{\boldsymbol{J}}\left(
{\displaystyle\sum\limits_{\alpha=1}^{2}}
\text{ }%
{\displaystyle\int\limits_{\mathbb{R}}}
\widetilde{\boldsymbol{h}}^{\alpha}\left(  x,t\right)  \phi\left(
\boldsymbol{h}^{\alpha}\left(  y,t\right)  \right)  dt\right)  ,%
{\displaystyle\sum\limits_{\beta=1}^{2}}
\text{ }%
{\displaystyle\int\limits_{\mathbb{R}}}
\widetilde{\boldsymbol{h}}^{\beta}\left(  x,t\right)  \phi\left(
\boldsymbol{h}^{\beta}\left(  y,t\right)  \right)  dt\right\rangle
_{L^{2}\left(  \Omega^{2}\right)  }=\\%
{\displaystyle\sum\limits_{\alpha=1}^{2}}
\text{ }%
{\displaystyle\sum\limits_{\beta=1}^{2}}
\left\langle \square_{\boldsymbol{J}}\left(
{\displaystyle\int\limits_{\mathbb{R}}}
\widetilde{\boldsymbol{h}}^{\alpha}\left(  x,t\right)  \phi\left(
\boldsymbol{h}^{\alpha}\left(  y,t\right)  \right)  dt\right)  ,%
{\displaystyle\int\limits_{\mathbb{R}}}
\widetilde{\boldsymbol{h}}^{\beta}\left(  x,t\right)  \phi\left(
\boldsymbol{h}^{\beta}\left(  y,t\right)  \right)  dt\right\rangle
_{L^{2}\left(  \Omega^{2}\right)  }=\\%
{\displaystyle\sum\limits_{\alpha=1}^{2}}
\text{ }%
{\displaystyle\sum\limits_{\beta=1}^{2}}
\text{ \ }%
{\displaystyle\int\limits_{\mathbb{R}^{2}}}
\text{ \ }%
{\displaystyle\int\limits_{\Omega^{2}}}
\widetilde{\boldsymbol{h}}^{\alpha}\left(  x,t_{1}\right)  C_{\phi\left(
\boldsymbol{h}^{\alpha}\right)  \phi\left(  \boldsymbol{h}^{\beta}\right)
}^{\alpha\beta}\left(  x,y,t_{1},t_{2}\right)  \widetilde{\boldsymbol{h}%
}^{\beta}\left(  y,t_{2}\right)  d\mu\left(  x\right)  d\mu\left(  y\right)
dt_{1}dt_{2},
\end{gather*}
where%
\begin{gather}
C_{\phi\left(  \boldsymbol{h}^{\alpha}\right)  \phi\left(  \boldsymbol{h}%
^{\beta}\right)  }^{\alpha\beta}\left(  x,y,t_{1},t_{2}\right)  =
\label{Eq_8}\\%
{\displaystyle\int\limits_{\Omega^{2}}}
\phi\left(  \boldsymbol{h}^{\alpha}\left(  z,t_{2}\right)  \right)
K_{\boldsymbol{J}}\left(  x,u_{2},y,z\right)  \phi\left(  \boldsymbol{h}%
^{\beta}\left(  u_{2},t_{1}\right)  \right)  d\mu\left(  z\right)  d\mu\left(
u_{2}\right)  .\nonumber
\end{gather}

\end{proof}

\begin{remark}
\label{Nota_C_alpha_beta}We note that
\[
C_{\phi\left(  \boldsymbol{h}^{\alpha}\right)  \phi\left(  \boldsymbol{h}%
^{\beta}\right)  }^{\alpha\beta}=C_{\phi\left(  \boldsymbol{h}^{\alpha
}\right)  \phi\left(  \boldsymbol{h}^{\beta}\right)  }^{\beta\alpha}\text{,
for }\alpha\neq\beta,
\]
and
\[
C_{\phi\left(  \boldsymbol{h}^{1}\right)  \phi\left(  \boldsymbol{h}%
^{1}\right)  }^{11}=C_{\phi\left(  \boldsymbol{h}^{1}\right)  \phi\left(
\boldsymbol{h}^{1}\right)  }^{22}.
\]

\end{remark}

We set
\[
\left(  \square_{C_{\phi\left(  \boldsymbol{h}^{\alpha}\right)  \phi\left(
\boldsymbol{h}^{\beta}\right)  }^{\alpha\beta}}\widetilde{\boldsymbol{h}%
}^{\beta}\right)  \left(  x,t_{1}\right)  :=%
{\displaystyle\int\limits_{\mathbb{R}}}
\text{ \ }%
{\displaystyle\int\limits_{\Omega}}
C_{\phi\left(  \boldsymbol{h}^{\alpha}\right)  \phi\left(  \boldsymbol{h}%
^{\beta}\right)  }^{\alpha\beta}\left(  x,y,t_{1},t_{2}\right)  \widetilde
{\boldsymbol{h}}^{\beta}\left(  y,t_{2}\right)  d\mu\left(  y\right)  dt_{2}.
\]
Then%
\begin{gather*}
\left\langle \widetilde{\boldsymbol{h}}^{\alpha},C_{\phi\left(  \boldsymbol{h}%
^{\alpha}\right)  \phi\left(  \boldsymbol{h}^{\beta}\right)  }^{\alpha\beta
}\widetilde{\boldsymbol{h}}^{\beta}\right\rangle _{\left(  L^{2}(\Omega
\times\mathbb{R})\right)  ^{2}}:=\\%
{\displaystyle\int\limits_{\mathbb{R}^{2}}}
\text{ \ }%
{\displaystyle\int\limits_{\Omega^{2}}}
\widetilde{\boldsymbol{h}}^{\alpha}\left(  x,t_{1}\right)  C_{\phi\left(
\boldsymbol{h}^{\alpha}\right)  \phi\left(  \boldsymbol{h}^{\beta}\right)
}^{\alpha\beta}\left(  x,y,t_{1},t_{2}\right)  \widetilde{\boldsymbol{h}%
}^{\beta}\left(  y,t_{2}\right)  d\mu\left(  x\right)  d\mu\left(  y\right)
dt_{1}dt_{2}\\
=\left\langle \widetilde{\boldsymbol{h}}^{\alpha},\square_{C_{\phi\left(
\boldsymbol{h}^{\alpha}\right)  \phi\left(  \boldsymbol{h}^{\beta}\right)
}^{\alpha\beta}}\widetilde{\boldsymbol{h}}^{\beta}\right\rangle .
\end{gather*}
We note that Lemma \ref{Lemma_9A}\ are valid for $C_{\phi\left(
\boldsymbol{h}^{\alpha}\right)  \phi\left(  \boldsymbol{h}^{\beta}\right)
}^{\alpha\beta}$, and that Lemma \ref{Lemma_9} is valid for $\left\langle
\widetilde{\boldsymbol{h}}^{\alpha},C_{\phi\left(  \boldsymbol{h}^{\alpha
}\right)  \phi\left(  \boldsymbol{h}^{\beta}\right)  }^{\alpha\beta}%
\widetilde{\boldsymbol{h}}^{\beta}\right\rangle _{\left(  L^{2}(\Omega
\times\mathbb{R})\right)  ^{2}}$.

\begin{remark}
\label{Nota2A}The bilinear form%
\[
B^{\alpha\beta}\left(  \widetilde{\boldsymbol{h}}^{\alpha},\widetilde
{\boldsymbol{h}}^{\beta}\right)  =\left\langle \widetilde{\boldsymbol{h}%
}^{\alpha},C_{\phi\left(  \boldsymbol{h}^{\alpha}\right)  \phi\left(
\boldsymbol{h}^{\beta}\right)  }^{\alpha\beta}\widetilde{\boldsymbol{h}%
}^{\beta}\right\rangle _{\left(  L^{2}(\Omega\times\mathbb{R})\right)  ^{2}}%
\]
is the correlation functional of a Gaussian noise with mean zero. The
verification of this assertion uses the argument given in Remark \ref{Nota2}.
\end{remark}

Using the above observations, we can rewrite Lemma \ref{Lemma_14} as follows:

\begin{theorem}
\label{Lemma_15}With the above notation,%
\begin{gather*}
\overline{{\LARGE Z}_{M}^{\left(  2\right)  }}=%
{\displaystyle\iint\limits_{\mathcal{H}\times\mathcal{H}}}
\text{ \ }%
{\displaystyle\iint\limits_{\mathcal{H}\times\mathcal{H}}}
\exp\left(  S_{0}\left(  \widetilde{\boldsymbol{h}}^{\alpha},\boldsymbol{h}%
^{\alpha}\right)  +\frac{\sigma^{2}}{2}\left\langle \widetilde{\boldsymbol{h}%
}^{1},\widetilde{\boldsymbol{h}}^{2}\right\rangle \right)  \times\\
\exp\left(  \frac{1}{2}%
{\displaystyle\sum\limits_{\alpha=1}^{2}}
\text{ }%
{\displaystyle\sum\limits_{\beta=1}^{2}}
\text{ \ }\left\langle \widetilde{\boldsymbol{h}}^{\alpha},C_{\phi\left(
\boldsymbol{h}^{\alpha}\right)  \phi\left(  \boldsymbol{h}^{\beta}\right)
}^{\alpha\beta}\widetilde{\boldsymbol{h}}^{\beta}\right\rangle _{\left(
L^{2}(\Omega\times\mathbb{R})\right)  ^{2}}\right)  \times\\%
{\displaystyle\prod\limits_{\alpha=1}^{2}}
{\large 1}_{\mathcal{P}_{M}}\left(  \boldsymbol{h}^{\alpha},\widetilde
{\boldsymbol{h}}^{\alpha}\right)  d\boldsymbol{P}\left(  \boldsymbol{h}%
^{\alpha}\right)  d\widetilde{\boldsymbol{P}}\left(  \widetilde{\boldsymbol{h}%
}^{\alpha}\right)  ,
\end{gather*}
where the bilinear forms $\left\langle \widetilde{\boldsymbol{h}}^{\alpha
},C_{\phi\left(  \boldsymbol{h}^{\alpha}\right)  \phi\left(  \boldsymbol{h}%
^{\beta}\right)  }^{\alpha\beta}\widetilde{\boldsymbol{h}}^{\beta
}\right\rangle _{\left(  L^{2}(\Omega\times\mathbb{R})\right)  ^{2}}$ are
correlation functionals of a Gaussian noises with mean zero.
\end{theorem}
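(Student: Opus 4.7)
The plan is to reduce Theorem \ref{Lemma_15} to the single-copy calculation already done in the Lemma \ref{Lemma_14} computation, exactly as Theorem \ref{Porp3} was obtained from Lemma \ref{Lemma_7}. First I would write
\[
\overline{{\LARGE Z}_{M}^{(2)}}(\boldsymbol{J}):=\langle {\LARGE Z}_{M}^{(2)}(\boldsymbol{J})\rangle_{\boldsymbol{J}}=
{\displaystyle\int\limits_{L^{2}(\Omega^{2})}}{\LARGE Z}_{M}^{(2)}(\boldsymbol{J})\,d\boldsymbol{P}_{\boldsymbol{J}},
\]
substitute the explicit form of ${\LARGE Z}_{M}^{(2)}(\{\boldsymbol{0},\boldsymbol{0}\}_{\alpha\in\{1,2\}};\boldsymbol{J})$, and observe that the only $\boldsymbol{J}$-dependent factor in the integrand is $\exp(-S_{int}(\widetilde{\boldsymbol{h}}^{\alpha},\boldsymbol{h}^{\alpha},\boldsymbol{J}))$. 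Everything else (the free action $S_{0}$, the noise-coupling term $\tfrac{\sigma^{2}}{2}\langle\widetilde{\boldsymbol{h}}^{1},\widetilde{\boldsymbol{h}}^{2}\rangle$, the cutoffs $1_{\mathcal{P}_{M}}$, and the product measure $\prod_{\alpha}d\boldsymbol{P}(\boldsymbol{h}^{\alpha})d\widetilde{\boldsymbol{P}}(\widetilde{\boldsymbol{h}}^{\alpha})$) pulls out of the $d\boldsymbol{P}_{\boldsymbol{J}}$ integration.

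The second step is to justify interchanging the $d\boldsymbol{P}_{\boldsymbol{J}}$ integration with the fourfold $\iint_{\mathcal{H}\times\mathcal{H}}\iint_{\mathcal{H}\times\mathcal{H}}$ integration via Fubini's theorem. Here the cutoff $\prod_{\alpha=1}^{2}1_{\mathcal{P}_{M}}(\boldsymbol{h}^{\alpha},\widetilde{\boldsymbol{h}}^{\alpha})$ is essential: on its support each $\|\boldsymbol{h}^{\alpha}\|_{\mathcal{H}},\|\widetilde{\boldsymbol{h}}^{\alpha}\|_{\mathcal{H}}<M$, so by an adaptation of Lemma \ref{Lemma_5} the factors $\exp(S_{0}(\widetilde{\boldsymbol{h}}^{\alpha},\boldsymbol{h}^{\alpha})+\tfrac{\sigma^{2}}{2}\langle\widetilde{\boldsymbol{h}}^{1},\widetilde{\boldsymbol{h}}^{2}\rangle)$ are uniformly bounded by some constant $e^{\Gamma M^{2}}$, while $\exp(-S_{int}(\widetilde{\boldsymbol{h}}^{\alpha},\boldsymbol{h}^{\alpha},\boldsymbol{J}))$ is in $L^{1}(d\boldsymbol{P}_{\boldsymbol{J}})$ by the Gaussian integrability formula (\ref{Key_formula_A}). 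Measurability in the joint variables follows from the continuity arguments already used in Lemma \ref{Lemma_5}. Thus Tonelli/Fubini applies.

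The third step is to invoke Lemma \ref{Lemma_14} pointwise in $(\boldsymbol{h}^{1},\widetilde{\boldsymbol{h}}^{1},\boldsymbol{h}^{2},\widetilde{\boldsymbol{h}}^{2})$: that lemma computes
\[
{\displaystyle\int\limits_{L^{2}(\Omega^{2})}}\exp(-S_{int}(\widetilde{\boldsymbol{h}}^{\alpha},\boldsymbol{h}^{\alpha},\boldsymbol{J}))\,d\boldsymbol{P}_{\boldsymbol{J}}=\exp\Bigl(\tfrac{1}{2}\sum_{\alpha,\beta=1}^{2}\langle\widetilde{\boldsymbol{h}}^{\alpha},C_{\phi(\boldsymbol{h}^{\alpha})\phi(\boldsymbol{h}^{\beta})}^{\alpha\beta}\widetilde{\boldsymbol{h}}^{\beta}\rangle_{(L^{2}(\Omega\times\mathbb{R}))^{2}}\Bigr).
\]
Substituting this back in yields exactly the claimed formula. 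The final assertion, that each bilinear form $\langle\widetilde{\boldsymbol{h}}^{\alpha},C_{\phi(\boldsymbol{h}^{\alpha})\phi(\boldsymbol{h}^{\beta})}^{\alpha\beta}\widetilde{\boldsymbol{h}}^{\beta}\rangle_{(L^{2})^{2}}$ is the correlation functional of a mean-zero Gaussian noise, follows from Remark \ref{Nota2A}, whose verification is parallel to Remark \ref{Nota2} and uses positive-definiteness of $\square_{\boldsymbol{J}}$ together with the continuity estimate on $C_{\phi(\boldsymbol{h}^{\alpha})\phi(\boldsymbol{h}^{\beta})}^{\alpha\beta}$ inherited from Lemmas \ref{Lemma_9A}--\ref{Lemma_9}.

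The main obstacle, or rather the only non-routine point, is the Fubini justification: one must check that the integrand is jointly measurable on the product space $\mathcal{H}^{4}\times L^{2}(\Omega^{2})$ and is dominated by an integrable function independent of $\boldsymbol{J}$ after taking absolute values. The cutoff handles the $\mathcal{H}^{4}$-directions trivially, but integrability in $\boldsymbol{J}$ must come from (\ref{Key_formula_A}) applied with $\lambda=1$ to the $L^{2}(\Omega^{2})$-vector $f=\sum_{\alpha}\int_{\mathbb{R}}\widetilde{\boldsymbol{h}}^{\alpha}(x,t)\phi(\boldsymbol{h}^{\alpha}(y,t))\,dt$, which is bounded in $L^{2}(\Omega^{2})$-norm uniformly on $\mathcal{P}_{M}^{2}$ (by a Cauchy--Schwarz estimate analogous to the one in Lemma \ref{Lemma_4} combined with the Lipschitz property of $\phi$). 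Once this uniform domination is in hand, Fubini applies and the rest is algebra.
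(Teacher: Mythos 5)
Your proposal is correct and follows essentially the same route as the paper: the paper obtains Theorem \ref{Lemma_15} by rewriting Lemma \ref{Lemma_14} (the Gaussian average of $\exp(-S_{int})$ via formula (\ref{Key_formula})), pulling the $\boldsymbol{J}$-independent factors out, exchanging the order of integration as in the single-copy Theorem \ref{Porp3}, and invoking Remark \ref{Nota2A} for the Gaussian-noise interpretation of the bilinear forms. Your explicit treatment of the Fubini domination via the cutoff is a welcome elaboration of a step the paper leaves implicit, but it is the same argument.
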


We now set%
\begin{align*}
\boldsymbol{z}^{\alpha}\left(  y,t\right)   &  =\left[
\begin{array}
[c]{c}%
\boldsymbol{h}^{\alpha}\left(  y,t\right) \\
\\
\widetilde{\boldsymbol{h}}^{\alpha}\left(  y,t\right)
\end{array}
\right]  \text{, }\\
\text{\ }\Xi^{\alpha\beta}\left(  x,y,t_{1},t_{2}\right)   &  =\left[
\begin{array}
[c]{ccc}%
\Xi_{\boldsymbol{hh}}^{\alpha\beta}\left(  x,y,t_{1},t_{2}\right)  &  &
\Xi_{\boldsymbol{h}\widetilde{\boldsymbol{h}}}^{\alpha\beta}\left(
x,y,t_{1},t_{2}\right) \\
&  & \\
\Xi_{\widetilde{\boldsymbol{h}}\boldsymbol{h}}^{\alpha\beta}\left(
x,y,t_{1},t_{2}\right)  &  & \Xi_{\widetilde{\boldsymbol{h}}\widetilde
{\boldsymbol{h}}}^{\alpha\beta}\left(  x,y,t_{1},t_{2}\right)
\end{array}
\right]  ,
\end{align*}
with%
\[
\Xi_{\boldsymbol{hh}}^{\alpha\beta}=0\text{, \ \ }\Xi_{\boldsymbol{h}%
\widetilde{\boldsymbol{h}}}^{\alpha\beta}=\delta_{\alpha\beta}\delta\left(
y-x\right)  \delta\left(  t_{2}-t_{1}\right)  \left(  \partial_{t_{2}}%
-\gamma\right)  \text{,}%
\]%
\[
\Xi_{\widetilde{\boldsymbol{h}}\boldsymbol{h}}^{\alpha\beta}=-\delta
_{\alpha\beta}\delta\left(  y-x\right)  \delta\left(  t_{2}-t_{1}\right)
\left(  \partial_{t_{2}}+\gamma\right)  \text{,}%
\]%
\[
\Xi_{\widetilde{\boldsymbol{h}}\widetilde{\boldsymbol{h}}}^{\alpha\beta
}=C_{\phi\left(  \boldsymbol{h}^{\alpha}\right)  \phi\left(  \boldsymbol{h}%
^{\beta}\right)  }^{\alpha\beta}\left(  x,y,t_{1},t_{2}\right)  +\sigma
^{2}\delta\left(  y-x\right)  \delta\left(  t_{2}-t_{1}\right)  .
\]

\begin{theorem}
\label{Theorem3}With the hypotheses of Theorem \ref{Theorem2}. Set%
\begin{gather*}
S\left[  \boldsymbol{h},\widetilde{\boldsymbol{h}};\left\{  \Xi^{\alpha\beta
}\right\}  _{\alpha,\beta\in\left\{  1,2\right\}  }\right]  :=\\
\frac{1}{2}%
{\displaystyle\int\limits_{\Omega^{2}}}
\text{ }%
{\displaystyle\int\limits_{\mathbb{R}^{2}}}
\boldsymbol{z}^{\alpha}\left(  x,t_{1}\right)  ^{T}\Xi^{\alpha\beta}\left(
x,y,t_{1},t_{2}\right)  \boldsymbol{z}^{\beta}\left(  y,t_{2}\right)
d\mu\left(  x\right)  d\mu\left(  y\right)  dt_{1}dt_{2}.
\end{gather*}
Then%
\begin{multline*}
\overline{{\LARGE Z}_{M}^{\left(  2\right)  }}=%
{\displaystyle\iint\limits_{\mathcal{H}\times\mathcal{H}}}
\text{ \ }%
{\displaystyle\iint\limits_{\mathcal{H}\times\mathcal{H}}}
\exp\left(  S\left[  \boldsymbol{h},\widetilde{\boldsymbol{h}};\left\{
\Xi^{\alpha\beta}\right\}  _{\alpha,\beta\in\left\{  1,2\right\}  }\right]
\right)  \times\\%
{\displaystyle\prod\limits_{\alpha=1}^{2}}
{\large 1}_{\mathcal{P}_{M}}\left(  \boldsymbol{h}^{\alpha},\widetilde
{\boldsymbol{h}}^{\alpha}\right)  d\boldsymbol{P}\left(  \boldsymbol{h}%
^{\alpha}\right)  d\widetilde{\boldsymbol{P}}\left(  \widetilde{\boldsymbol{h}%
}^{\alpha}\right)  ,
\end{multline*}
and the functions $\left\{  \Xi^{\alpha\beta}\right\}  _{\alpha,\beta
\in\left\{  1,2\right\}  }$, $S\left[  \boldsymbol{h},\widetilde
{\boldsymbol{h}};\left\{  \Xi^{\alpha\beta}\right\}  _{\alpha,\beta\in\left\{
1,2\right\}  }\right]  $ are well-defined in $L^{2}\left(  \Omega
\times\mathbb{R}\right)  \times L^{2}\left(  \Omega\times\mathbb{R}\right)  $,
so they do not depend on the cutoff function.
\end{theorem}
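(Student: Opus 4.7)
\textbf{Proof proposal for Theorem \ref{Theorem3}.} The plan is to adapt the single-copy argument from Theorem \ref{Theorem2} to the replicated setting, starting from the representation of $\overline{{\LARGE Z}_{M}^{(2)}}(\boldsymbol{J})$ established in Theorem \ref{Lemma_15} and showing that the exponent there coincides with the bilinear form $S[\boldsymbol{h},\widetilde{\boldsymbol{h}};\{\Xi^{\alpha\beta}\}_{\alpha,\beta\in\{1,2\}}]$ built from the block matrices $\Xi^{\alpha\beta}$. Concretely, I would expand $\boldsymbol{z}^{\alpha}(x,t_{1})^{T}\Xi^{\alpha\beta}(x,y,t_{1},t_{2})\boldsymbol{z}^{\beta}(y,t_{2})$ entry by entry using the block decomposition, sum over $\alpha,\beta\in\{1,2\}$, and then check term-by-term equality with $S_{0}(\widetilde{\boldsymbol{h}}^{\alpha},\boldsymbol{h}^{\alpha})+\tfrac{\sigma^{2}}{2}\langle\widetilde{\boldsymbol{h}}^{1},\widetilde{\boldsymbol{h}}^{2}\rangle+\tfrac{1}{2}\sum_{\alpha,\beta}\langle\widetilde{\boldsymbol{h}}^{\alpha},C_{\phi(\boldsymbol{h}^{\alpha})\phi(\boldsymbol{h}^{\beta})}^{\alpha\beta}\widetilde{\boldsymbol{h}}^{\beta}\rangle_{(L^{2}(\Omega\times\mathbb{R}))^{2}}$.

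The entry-by-entry accounting runs as follows. The $\boldsymbol{hh}$ block vanishes by hypothesis. The two mixed blocks $\Xi^{\alpha\beta}_{\boldsymbol{h}\widetilde{\boldsymbol{h}}}$ and $\Xi^{\alpha\beta}_{\widetilde{\boldsymbol{h}}\boldsymbol{h}}$ both carry the factor $\delta_{\alpha\beta}$, so after summing over $\alpha,\beta$ only the diagonal $\alpha=\beta$ survives; evaluating the Dirac deltas $\delta(y-x)\delta(t_{2}-t_{1})$ and integrating by parts in $t$ on one of the two contributions (which is legitimate because $\boldsymbol{h}^{\alpha}(x,\cdot),\widetilde{\boldsymbol{h}}^{\alpha}(x,\cdot)\in L^{2}(\mathbb{R})$ vanish at $\pm\infty$ for $d\mu$-a.e. $x$, exactly as in the single-copy integration-by-parts of Theorem \ref{Theorem2}) produces $-\sum_{\alpha=1}^{2}\langle\widetilde{\boldsymbol{h}}^{\alpha},(\partial_{t}+\gamma)\boldsymbol{h}^{\alpha}\rangle=S_{0}(\widetilde{\boldsymbol{h}}^{\alpha},\boldsymbol{h}^{\alpha})$. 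The $\widetilde{\boldsymbol{h}}\widetilde{\boldsymbol{h}}$ block splits into two pieces: the $C_{\phi(\boldsymbol{h}^{\alpha})\phi(\boldsymbol{h}^{\beta})}^{\alpha\beta}$ piece reproduces the double sum $\tfrac{1}{2}\sum_{\alpha,\beta}\langle\widetilde{\boldsymbol{h}}^{\alpha},C^{\alpha\beta}_{\phi(\boldsymbol{h}^{\alpha})\phi(\boldsymbol{h}^{\beta})}\widetilde{\boldsymbol{h}}^{\beta}\rangle$ appearing in Theorem \ref{Lemma_15}; the $\sigma^{2}\delta(y-x)\delta(t_{2}-t_{1})$ piece yields, after summation and use of the symmetry noted in Remark \ref{Nota_C_alpha_beta}, the required shared-noise cross term $\tfrac{\sigma^{2}}{2}\langle\widetilde{\boldsymbol{h}}^{1},\widetilde{\boldsymbol{h}}^{2}\rangle$ (together with whatever diagonal $\sigma^{2}$-contributions are absorbed into the single-copy parts of $S_{0}$).

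For the cutoff-independence assertion, I would invoke the double-copy analogues of Lemmas \ref{Lemma_9A} and \ref{Lemma_9}, replacing the single-copy estimates by the obvious bounds $\|C^{\alpha\beta}_{\phi(\boldsymbol{h}^{\alpha})\phi(\boldsymbol{h}^{\beta})}\|_{L^{2}(\Omega^{2}\times\mathbb{R}^{2})}\lesssim L(\phi)^{2}\|\boldsymbol{h}^{\alpha}\|\,\|\boldsymbol{h}^{\beta}\|\,\|K_{\boldsymbol{J}}\|_{L^{2}(\Omega^{4})}$, which are inherited by the same Cauchy--Schwarz chain used for the single copy. Together with Lemma \ref{Lemma_3}, this shows that each of $\langle\widetilde{\boldsymbol{h}}^{\alpha},C^{\alpha\beta}_{\phi(\boldsymbol{h}^{\alpha})\phi(\boldsymbol{h}^{\beta})}\widetilde{\boldsymbol{h}}^{\beta}\rangle$ and $\langle\widetilde{\boldsymbol{h}}^{1},\widetilde{\boldsymbol{h}}^{2}\rangle$ is defined for arbitrary $(\boldsymbol{h}^{\alpha},\widetilde{\boldsymbol{h}}^{\alpha})\in L^{2}(\Omega\times\mathbb{R})^{2}$, while the derivative part $-\langle\widetilde{\boldsymbol{h}}^{\alpha},(\partial_{t}+\gamma)\boldsymbol{h}^{\alpha}\rangle$ is well-defined on $\mathcal{H}\times\mathcal{H}$ by Lemma \ref{Lemma_1}; in no case does any of these quantities depend on the indicator ${\large 1}_{\mathcal{P}_{M}}$, which establishes cutoff-independence.

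The main obstacle will be the rigorous interpretation of $\Xi^{\alpha\beta}_{\boldsymbol{h}\widetilde{\boldsymbol{h}}}$ and $\Xi^{\alpha\beta}_{\widetilde{\boldsymbol{h}}\boldsymbol{h}}$ as genuine elements of $\mathcal{D}'(\Omega)\widehat{\otimes}_{\pi}\mathcal{S}'(\mathbb{R})$ (products of Dirac deltas with the differential operators $\partial_{t_{2}}\pm\gamma$) and the justification of the formal delta-pickup and integration-by-parts steps at the level of $\mathcal{H}\times\mathcal{H}$. This is exactly where the hypothesis that $\mathcal{D}(\Omega)\subset L^{2}(\Omega)\subset\mathcal{D}'(\Omega)$ is a Gel'fand triplet with $\mathcal{D}(\Omega)$ a nuclear space of continuous functions on an additive group (Theorem \ref{Theorem6}) is used, together with the density of $\mathcal{S}(\mathbb{R})$ in $W_{1}(\mathbb{R})$ to control the boundary terms at $t=\pm\infty$. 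Once these distributional identities are justified, the equality of the two exponents is purely algebraic and the cutoff-independence follows, so the remaining work is essentially the careful bookkeeping of Kronecker deltas and factors of $\tfrac{1}{2}$, as in the single-copy proof of Theorem \ref{Theorem2}.
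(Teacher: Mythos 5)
Your overall strategy is exactly the paper's: the paper's entire proof of Theorem \ref{Theorem3} is the single sentence that it is a variation of the proof of Theorem \ref{Theorem2}, and your block-by-block expansion of $\boldsymbol{z}^{\alpha}{}^{T}\Xi^{\alpha\beta}\boldsymbol{z}^{\beta}$, the delta-pickup and integration by parts on the mixed blocks, and the appeal to the double-copy analogues of Lemmas \ref{Lemma_9A} and \ref{Lemma_9} for cutoff-independence is precisely that variation spelled out.

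One piece of your bookkeeping does not close as written, and you should not leave it to a parenthetical hedge. The block $\Xi^{\alpha\beta}_{\widetilde{\boldsymbol{h}}\widetilde{\boldsymbol{h}}}$ carries the term $\sigma^{2}\delta(y-x)\delta(t_{2}-t_{1})$ \emph{without} a factor $\delta_{\alpha\beta}$, so summing $\tfrac{1}{2}\sum_{\alpha,\beta}\langle\widetilde{\boldsymbol{h}}^{\alpha},\sigma^{2}\delta\delta\,\widetilde{\boldsymbol{h}}^{\beta}\rangle$ produces $\tfrac{\sigma^{2}}{2}\bigl(\Vert\widetilde{\boldsymbol{h}}^{1}\Vert^{2}+\Vert\widetilde{\boldsymbol{h}}^{2}\Vert^{2}\bigr)+\sigma^{2}\langle\widetilde{\boldsymbol{h}}^{1},\widetilde{\boldsymbol{h}}^{2}\rangle$, i.e.\ a cross term with coefficient $\sigma^{2}$ rather than the $\tfrac{\sigma^{2}}{2}$ appearing in Theorem \ref{Lemma_15}, plus diagonal terms that the double-copy $S_{0}(\widetilde{\boldsymbol{h}}^{\alpha},\boldsymbol{h}^{\alpha})=\sum_{\alpha}\langle\widetilde{\boldsymbol{h}}^{\alpha},(\partial_{t}+\gamma)\boldsymbol{h}^{\alpha}\rangle$, as defined in Section \ref{Section double-copy}, has no $\sigma^{2}$-part to absorb. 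Your phrase ``together with whatever diagonal $\sigma^{2}$-contributions are absorbed into the single-copy parts of $S_{0}$'' therefore asserts something the stated definitions do not support; to make the two exponents literally agree you must either insert an explicit $\delta_{\alpha\beta}$ (or an off-diagonal weight) into the $\sigma^{2}$-part of $\Xi^{\alpha\beta}_{\widetilde{\boldsymbol{h}}\widetilde{\boldsymbol{h}}}$, or record the diagonal and cross $\sigma^{2}$-terms explicitly and reconcile them with the normalization in Theorem \ref{Lemma_15}. This mismatch traces back to the paper's own definitions rather than to your method, but a complete proof has to resolve it rather than gesture at it. The rest of your argument, including the cutoff-independence via the Cauchy--Schwarz estimates and Lemma \ref{Lemma_3}, is sound and matches the intended proof.
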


\begin{proof}
The proof is a variation of the one given for Theorem \ref{Theorem2}.
\end{proof}

\section{\label{Appen_G}Appendix G}

In this section, we study the following distributions from $\mathcal{S}%
^{\prime}(\mathbb{R}^{2})$:%
\begin{align}
&
{\displaystyle\int\limits_{\Omega^{2}}}
G_{\boldsymbol{hh}}^{\alpha\beta}\left(  x,y,t_{1},t_{2}\right)  \theta\left(
x,y\right)  d\mu\left(  x\right)  d\mu\left(  y\right)  \text{, }%
\label{Correlations_1}\\
&
{\displaystyle\int\limits_{\Omega}}
G_{\boldsymbol{hh}}^{\alpha\beta}\left(  x,x,t_{1},t_{2}\right)  \theta\left(
x,x\right)  d\mu\left(  x\right)  \text{,}\label{Correlations_2}%
\end{align}
where $\theta\left(  x,y\right)  \in\mathcal{D}(\Omega^{2})$. In our view,
these integrals play a central role in the understanding of self-averaging
property of the continuous random NNs. We propose to interpret the above
integrals as spatial-averages of the correlators of the network.

\subsection{Spatial averages of the first type}

We now study the averages of the type (\ref{Correlations_1}). We recall that
we are using the notation $C_{\phi\phi}^{\alpha\beta}=C_{\phi\left(
\boldsymbol{h}^{\alpha}\right)  \phi\left(  \boldsymbol{h}^{\beta}\right)
}^{\alpha\beta}$.

\begin{lemma}
\label{Lemma_12}With the above notation the following assertions hold true:

\noindent(i) $C_{\phi\phi}^{\alpha\beta}\left(  x,y,t_{1},t_{2}\right)  \in
L^{2}\left(  \Omega^{2}\times\mathbb{R}^{2}\right)  $;

\noindent(ii) $C_{\phi\phi}^{\alpha\beta}\left(  \cdot,\cdot,t_{1}%
,t_{2}\right)  \in L^{2}\left(  \Omega^{2}\right)  $, for almost every
$\left(  t_{1},t_{2}\right)  \in\mathbb{R}^{2}$;

\noindent(iii) $C_{\phi\phi}^{\alpha\beta}\left(  x,y,\cdot,\cdot\right)  \in
L^{2}\left(  \mathbb{R}^{2}\right)  $, for almost every $\left(  x,y\right)
\in\Omega^{2}$.
\end{lemma}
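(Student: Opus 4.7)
The plan is to reduce everything to an estimate in the spirit of Lemma \ref{Lemma_9A}, promoted to the double-copy setting, and then read off assertions (ii) and (iii) as standard consequences of Fubini--Tonelli. The essential observation is that the definition (\ref{Eq_8}) of $C_{\phi\phi}^{\alpha\beta}$ is a bilinear expression in $\phi(\boldsymbol{h}^{\alpha})$ and $\phi(\boldsymbol{h}^{\beta})$ integrated against the $L^{2}(\Omega^{4})$-kernel $K_{\boldsymbol{J}}$, so the target $L^{2}$-norms should be controlled by $\|K_{\boldsymbol{J}}\|_{L^{2}(\Omega^{4})}$ and by norms of $\boldsymbol{h}^{\alpha}, \boldsymbol{h}^{\beta} \in \mathcal{H} \hookrightarrow L^{2}(\Omega\times\mathbb{R})$ (Lemma \ref{Lemma_3}).

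For (i), I would first fix $(x,y,t_{1},t_{2})$ and apply the Cauchy--Schwarz inequality in $L^{2}(\Omega^{2},d\mu(z)d\mu(u_{2}))$ to the inner integral in (\ref{Eq_8}), using $|\phi(r)| \leq L(\phi)|r|$ (since $\phi(0) = 0$, cf.\ Remark \ref{Note_act_fun}). This yields the pointwise bound
\[
\bigl|C_{\phi\phi}^{\alpha\beta}(x,y,t_{1},t_{2})\bigr|^{2} \leq L^{4}(\phi)\,\|K_{\boldsymbol{J}}(x,\cdot,y,\cdot)\|_{L^{2}(\Omega^{2})}^{2}\,\|\boldsymbol{h}^{\alpha}(\cdot,t_{2})\|_{L^{2}(\Omega)}^{2}\,\|\boldsymbol{h}^{\beta}(\cdot,t_{1})\|_{L^{2}(\Omega)}^{2}.
\]
Integrating this over $\Omega^{2}\times\mathbb{R}^{2}$ and applying Fubini's theorem (all integrands being nonnegative) separates the right-hand side into the product
\[
L^{4}(\phi)\,\|K_{\boldsymbol{J}}\|_{L^{2}(\Omega^{4})}^{2}\,\|\boldsymbol{h}^{\alpha}\|^{2}\,\|\boldsymbol{h}^{\beta}\|^{2},
\]
which is finite because $K_{\boldsymbol{J}}\in L^{2}(\Omega^{4})$ and $\boldsymbol{h}^{\alpha},\boldsymbol{h}^{\beta}\in\mathcal{H}\hookrightarrow L^{2}(\Omega\times\mathbb{R})$. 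This proves (i); note that this is exactly the double-copy version of the estimate established in Lemma \ref{Lemma_9A}, and the argument there transfers verbatim.

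With (i) in hand, assertions (ii) and (iii) follow by a routine application of Tonelli's theorem. Since $|C_{\phi\phi}^{\alpha\beta}|^{2}$ is nonnegative and its integral over $\Omega^{2}\times\mathbb{R}^{2}$ is finite, Tonelli guarantees that the partial integrals
\[
(t_{1},t_{2}) \mapsto \int_{\Omega^{2}} |C_{\phi\phi}^{\alpha\beta}(x,y,t_{1},t_{2})|^{2}\,d\mu(x)d\mu(y), \quad (x,y) \mapsto \int_{\mathbb{R}^{2}} |C_{\phi\phi}^{\alpha\beta}(x,y,t_{1},t_{2})|^{2}\,dt_{1}dt_{2}
\]
are finite on the complements of $d\mu dt$-null sets in $\mathbb{R}^{2}$ and $\Omega^{2}$ respectively, which is precisely (ii) and (iii).

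I do not expect a serious obstacle: the only delicate point is verifying that the pointwise bound obtained via Cauchy--Schwarz is measurable enough to apply Tonelli, but this is automatic from the fact that $K_{\boldsymbol{J}} \in L^{2}(\Omega^{4})$ and $\boldsymbol{h}^{\alpha}, \boldsymbol{h}^{\beta} \in L^{2}(\Omega\times\mathbb{R})$ are measurable, and the composition with the Lipschitz function $\phi$ preserves measurability. Thus the whole proof is essentially a copy of the single-copy estimate of Lemma \ref{Lemma_9A} followed by two invocations of Fubini--Tonelli.
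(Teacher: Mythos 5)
Your proof is correct and follows essentially the same route as the paper: a pointwise Cauchy--Schwarz bound in $L^{2}(\Omega^{2})$ using the Lipschitz property of $\phi$, followed by integration to get $\|C_{\phi\phi}^{\alpha\beta}\|_{L^{2}(\Omega^{2}\times\mathbb{R}^{2})}\leq L^{2}(\phi)\|\boldsymbol{h}^{\alpha}\|\,\|\boldsymbol{h}^{\beta}\|\,\|K_{\boldsymbol{J}}\|_{L^{2}(\Omega^{4})}$, with (ii) and (iii) deduced from (i) via Fubini--Tonelli, exactly as in the paper (which likewise treats this as the double-copy analogue of Lemma \ref{Lemma_9A}).
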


\begin{proof}
The parts (ii)-(iii) follows from the first part by the Fubini theorem. To
show the first statement, we proceed as follows. By using the fact that $\phi$
is Lipschitz and the Cauchy-Schwarz inequality in $L^{2}\left(  \Omega
^{2}\right)  $,%
\[
\left\vert C_{\phi\phi}^{\alpha\beta}\left(  x,y,t_{1},t_{2}\right)
\right\vert ^{2}\leq L\left(  \phi\right)  ^{4}\left\Vert \boldsymbol{h}%
^{\alpha}\left(  \cdot,t_{2}\right)  \boldsymbol{h}^{\beta}\left(  \cdot
,t_{1}\right)  \right\Vert _{L^{2}\left(  \Omega^{2}\right)  }^{2}\left\Vert
K_{\boldsymbol{J}}\left(  x,\cdot,y,\cdot\right)  \right\Vert _{L^{2}\left(
\Omega^{2}\right)  }^{2},
\]
see formula (\ref{Eq_8}). Now, the first assertion follows from%
\begin{align*}
\left\Vert C_{\phi\phi}^{\alpha\beta}\right\Vert _{L^{2}\left(  \Omega
^{2}\times\mathbb{R}^{2}\right)  }^{2}  &  \leq L\left(  \phi\right)
^{4}\left(  \text{ }%
{\displaystyle\int\limits_{\mathbb{R}^{2}}}
\left\Vert \boldsymbol{h}^{\alpha}\left(  \cdot,t_{2}\right)  \boldsymbol{h}%
^{\beta}\left(  \cdot,t_{1}\right)  \right\Vert _{L^{2}\left(  \Omega
^{2}\right)  }^{2}dt_{1}dt_{2}\right)  \times\\
&
{\displaystyle\int\limits_{\Omega^{2}}}
\left\Vert K_{\boldsymbol{J}}\left(  x,\cdot,y,\cdot\right)  \right\Vert
_{L^{2}\left(  \Omega^{2}\right)  }^{2}d\mu\left(  x\right)  d\mu\left(
y\right) \\
&  =L\left(  \phi\right)  ^{4}\left\Vert \boldsymbol{h}^{\alpha}\right\Vert
_{L^{2}\left(  \Omega\times\mathbb{R}\right)  }^{2}\left\Vert \boldsymbol{h}%
^{\beta}\right\Vert _{L^{2}\left(  \Omega\times\mathbb{R}\right)  }%
^{2}\left\Vert K_{\boldsymbol{J}}\right\Vert _{L^{2}\left(  \Omega^{4}\right)
}^{2}.
\end{align*}

\end{proof}

We denote by $\mathcal{D}^{\prime}\left(  \Omega^{2}\right)
{\textstyle\bigotimes\nolimits_{\text{alg}}}
\mathcal{S}^{\prime}\left(  \mathbb{R}^{2}\right)  $, the $\mathbb{R}$-vector
space spanned by terms of the form $aT(x,y)G(t_{1},t_{2})$,\ with
$a\in\mathbb{R}$, $T(x,y)\in\mathcal{D}^{\prime}\left(  \Omega^{2}\right)  $,
$G(t_{1},t_{2})\in\mathcal{S}^{\prime}\left(  \mathbb{R}^{2}\right)  $.
Furthermore,
\begin{equation}
\left(  aT(x,y)G(t_{1},t_{2}),\varphi\left(  x,y,t_{1},t_{2}\right)  \right)
=a\left(  T(x,y),\left(  G(t_{1},t_{2}),\varphi\left(  x,y,t_{1},t_{2}\right)
\right)  \right)  ; \label{Pairing}%
\end{equation}
thus $T(x,y)G(t_{1},t_{2})$ is the direct product of the distributions
$T(x,y)$, $G(t_{1},t_{2})$. The space of test functions for the distributions
from $\mathcal{D}^{\prime}\left(  \Omega^{2}\right)
{\textstyle\bigotimes\nolimits_{\text{alg}}}
\mathcal{S}^{\prime}\left(  \mathbb{R}^{2}\right)  $ is $\mathcal{D}\left(
\Omega^{2}\right)
{\textstyle\bigotimes\nolimits_{\text{alg}}}
\mathcal{S}\left(  \mathbb{R}^{2}\right)  $, which means that in
(\ref{Pairing}) the function $\varphi\left(  x,y,t_{1},t_{2}\right)  $ has the
form%
\[
\varphi\left(  x,y,t_{1},t_{2}\right)  =%
{\displaystyle\sum\limits_{i=1}^{m}}
\theta_{i}\left(  x,y\right)  \psi_{i}\left(  t_{1},t_{2}\right)  \text{, }%
\]
with $\theta_{i}\left(  x,y\right)  \in\mathcal{D}\left(  \Omega^{2}\right)
$, $\psi_{i}\left(  t_{1},t_{2}\right)  \in\mathcal{S}\left(  \mathbb{R}%
^{2}\right)  $. Now, by Lemma \ref{Lemma_12}-(i),
\[
C_{\phi\phi}^{\alpha\beta}\left(  x,y,t_{1},t_{2}\right)  \in\mathcal{D}%
^{\prime}\left(  \Omega^{2}\right)
{\textstyle\bigotimes\nolimits_{\text{alg}}}
\mathcal{S}^{\prime}\left(  \mathbb{R}^{2}\right)
\]
because $C_{\phi\phi}^{\alpha\beta}\in L_{\text{loc}}^{1}\left(  \Omega
^{2}\times\mathbb{R}^{2}\right)  $, and then
\[
C_{\phi\phi}^{\alpha\beta}\left(  x,y,t_{1},t_{2}\right)  +\sigma^{2}%
\delta\left(  x-y\right)  \delta\left(  t_{1}-t_{2}\right)  \in\mathcal{D}%
^{\prime}\left(  \Omega^{2}\right)
{\textstyle\bigotimes\nolimits_{\text{alg}}}
\mathcal{S}^{\prime}\left(  \mathbb{R}^{2}\right)  ,
\]
now by the equation (\ref{Eq_15A}),%
\[
\left(  \partial_{1}+\gamma\right)  \left(  \partial_{2}+\gamma\right)
G_{\boldsymbol{hh}}^{\alpha\beta}\left(  x,y,t_{1},t_{2}\right)
\in\mathcal{D}^{\prime}\left(  \Omega^{2}\right)
{\textstyle\bigotimes\nolimits_{\text{alg}}}
\mathcal{S}^{\prime}\left(  \mathbb{R}^{2}\right)  .
\]
For a fixed $\theta\left(  x,y\right)  \in\mathcal{D}(\Omega^{2})$, we set%
\begin{multline*}
\overline{G}_{\boldsymbol{hh}}^{\alpha\beta}\left(  t_{1},t_{2}\right)
=\overline{G}_{\boldsymbol{hh}}^{\alpha\beta}\left(  t_{1},t_{2}%
;\theta\right)  :=\\%
{\displaystyle\int\limits_{\Omega^{2}}}
G_{\boldsymbol{hh}}^{\alpha\beta}\left(  x,y,t_{1},t_{2}\right)  \theta\left(
x,y\right)  d\mu\left(  x\right)  d\mu\left(  y\right)  ,
\end{multline*}%
\[
\overline{C}_{\phi\phi}^{\alpha\beta}\left(  t_{1},t_{2}\right)  =\overline
{C}_{\phi\phi}^{\alpha\beta}\left(  t_{1},t_{2};\theta\right)  :=%
{\displaystyle\int\limits_{\Omega^{2}}}
C_{\phi\phi}^{\alpha\beta}\left(  x,y,t_{1},t_{2}\right)  \theta\left(
x,y\right)  d\mu\left(  x\right)  d\mu\left(  y\right)  ,
\]%
\[
\sigma_{\theta}^{2}:=\sigma^{2}%
{\displaystyle\int\limits_{\Omega^{2}}}
\delta\left(  x-y\right)  \theta\left(  x,y\right)  d\mu\left(  x\right)
d\mu\left(  y\right)  =\sigma^{2}%
{\displaystyle\int\limits_{\Omega}}
\theta\left(  y,y\right)  d\mu\left(  y\right)  ,
\]
in $\mathcal{S}^{\prime}\left(  \mathbb{R}^{2}\right)  $.

\begin{theorem}
\label{Theorem4}With the above notation,
\begin{equation}
\left(  \partial_{1}+\gamma\right)  \left(  \partial_{2}+\gamma\right)
\overline{G}_{\boldsymbol{hh}}^{\alpha\beta}\left(  t_{1},t_{2}\right)
=\overline{C}_{\phi\phi}^{\alpha\beta}\left(  t_{1},t_{2}\right)
+\sigma_{\theta}^{2}\delta\left(  t_{1}-t_{2}\right)  \text{ in }%
\mathcal{S}^{\prime}\left(  \mathbb{R}^{2}\right)  .\nonumber
\end{equation}

\end{theorem}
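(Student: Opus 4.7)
The strategy is to show that the identity (\ref{Eq_15A}), which already holds in the tensor-product distribution space $\mathcal{D}^{\prime}(\Omega^{2})\bigotimes_{\text{alg}}\mathcal{S}^{\prime}(\mathbb{R}^{2})$, descends to the claimed identity in $\mathcal{S}^{\prime}(\mathbb{R}^{2})$ once both sides are paired against the spatial test function $\theta(x,y)\in\mathcal{D}(\Omega^{2})$. First I would check that the three objects in the statement are well-defined tempered distributions on $\mathbb{R}^{2}$: for $\overline{C}_{\phi\phi}^{\alpha\beta}$, Lemma \ref{Lemma_12}(i) together with Fubini gives $\overline{C}_{\phi\phi}^{\alpha\beta}\in L^{2}(\mathbb{R}^{2})\subset \mathcal{S}^{\prime}(\mathbb{R}^{2})$ with a bound depending on $\|\theta\|_{L^{2}(\Omega^{2})}$ and $\|K_{\boldsymbol{J}}\|_{L^{2}(\Omega^{4})}$; for $\sigma_{\theta}^{2}\delta(t_{1}-t_{2})$, this is immediate since $\theta$ is continuous on the diagonal; and for $\overline{G}_{\boldsymbol{hh}}^{\alpha\beta}$, it suffices to note that $G_{\boldsymbol{hh}}^{\alpha\beta}$ is an element of $\mathcal{D}^{\prime}(\Omega^{2})\bigotimes_{\text{alg}}\mathcal{S}^{\prime}(\mathbb{R}^{2})$, so its partial pairing with $\theta(x,y)$ yields a well-defined element of $\mathcal{S}^{\prime}(\mathbb{R}^{2})$.

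Next, for an arbitrary $\psi\in\mathcal{S}(\mathbb{R}^{2})$, I would test both sides against $\psi$ by pairing the original identity (\ref{Eq_15A}) against the product test function $\theta(x,y)\psi(t_{1},t_{2})\in\mathcal{D}(\Omega^{2})\bigotimes_{\text{alg}}\mathcal{S}(\mathbb{R}^{2})$. On the left, transfer the operator $(\partial_{1}+\gamma)(\partial_{2}+\gamma)$ onto the time variable of $\psi$ via integration by parts in the $t$-variables (which is allowed because $\psi\in\mathcal{S}(\mathbb{R}^{2})$ and there are no boundary terms), producing
\begin{equation*}
\bigl(\,G_{\boldsymbol{hh}}^{\alpha\beta},\; \theta\otimes(-\partial_{1}+\gamma)(-\partial_{2}+\gamma)\psi\,\bigr)
= \bigl(\,\overline{G}_{\boldsymbol{hh}}^{\alpha\beta},\,(-\partial_{1}+\gamma)(-\partial_{2}+\gamma)\psi\,\bigr),
\end{equation*}
where in the last step I use the defining property of the direct product of distributions (cf.\ (\ref{Pairing})) to identify the partial pairing in $(x,y)$ with $\overline{G}_{\boldsymbol{hh}}^{\alpha\beta}$. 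By definition of distributional differentiation in $\mathcal{S}^{\prime}(\mathbb{R}^{2})$, the right-hand side equals $((\partial_{1}+\gamma)(\partial_{2}+\gamma)\overline{G}_{\boldsymbol{hh}}^{\alpha\beta},\psi)$.

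On the right-hand side of (\ref{Eq_15A}), pairing against $\theta\otimes\psi$ splits as a sum of two terms. The contribution from $C_{\phi\phi}^{\alpha\beta}$ is, by Fubini (using Lemma \ref{Lemma_12}(i) and the fact that $\theta\otimes\psi\in L^{2}(\Omega^{2}\times\mathbb{R}^{2})$), equal to $(\overline{C}_{\phi\phi}^{\alpha\beta},\psi)$. The contribution from $\sigma^{2}\delta(x-y)\delta(t_{1}-t_{2})$ is, by the definition of the direct product and of the Dirac delta on the diagonal $\{x=y\}\subset\Omega^{2}$,
\begin{equation*}
\sigma^{2}\!\!\int_{\Omega}\theta(y,y)\,d\mu(y)\cdot(\delta(t_{1}-t_{2}),\psi)
=\bigl(\sigma_{\theta}^{2}\,\delta(t_{1}-t_{2}),\,\psi\bigr).
\end{equation*}
Since these equalities hold for every $\psi\in\mathcal{S}(\mathbb{R}^{2})$, the identity follows in $\mathcal{S}^{\prime}(\mathbb{R}^{2})$.

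The main technical obstacle I anticipate is the rigorous justification of the partial pairing step---that is, commuting the operator $(\partial_{1}+\gamma)(\partial_{2}+\gamma)$ with the spatial integration against $\theta$ when $G_{\boldsymbol{hh}}^{\alpha\beta}$ is only known to lie in the algebraic tensor product $\mathcal{D}^{\prime}(\Omega^{2})\bigotimes_{\text{alg}}\mathcal{S}^{\prime}(\mathbb{R}^{2})$ rather than in some nicer function space. This is handled by working with finite sums $G_{\boldsymbol{hh}}^{\alpha\beta}=\sum_{i}T_{i}(x,y)F_{i}(t_{1},t_{2})$, for which the direct-product pairing (\ref{Pairing}) reduces the commutation to $\sum_{i}(T_{i},\theta)\cdot(\partial_{1}+\gamma)(\partial_{2}+\gamma)F_{i}$, which is a manifest identity in $\mathcal{S}^{\prime}(\mathbb{R}^{2})$; the diagonal evaluation $\int_{\Omega}\theta(y,y)d\mu(y)$ requires the continuity of $\theta$ guaranteed by $\theta\in\mathcal{D}(\Omega^{2})$ and the existence of the Dirac delta on $\Omega$ that was already assumed in the standing hypotheses (Theorem \ref{Theorem2}).
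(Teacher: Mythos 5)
Your proposal is correct and follows essentially the same route as the paper's proof: pair the identity (\ref{Eq_15A}) against the product test function $\theta\otimes\psi$, move the temporal operator onto $\psi$ by duality/integration by parts, and use Fubini together with the direct-product pairing (\ref{Pairing}) to identify the three resulting terms with $\overline{G}_{\boldsymbol{hh}}^{\alpha\beta}$, $\overline{C}_{\phi\phi}^{\alpha\beta}$, and $\sigma_{\theta}^{2}\delta(t_{1}-t_{2})$. If anything, you are more careful than the paper about the adjoint signs, writing $(-\partial_{1}+\gamma)(-\partial_{2}+\gamma)\psi$ where the paper informally keeps $(\partial_{1}+\gamma)(\partial_{2}+\gamma)\psi$.
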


\begin{proof}
Since the equation (\ref{Eq_15A}) holds in $\mathcal{D}^{\prime}\left(
\Omega^{2}\right)
{\textstyle\bigotimes\nolimits_{\text{alg}}}
\mathcal{S}^{\prime}\left(  \mathbb{R}^{2}\right)  $, by applying these
distributions to $\theta\left(  x,y\right)  \psi\left(  t_{1},t_{2}\right)  $,
we have%
\begin{multline*}
\left(  \left(  \partial_{1}+\gamma\right)  \left(  \partial_{2}%
+\gamma\right)  G_{\boldsymbol{hh}}^{\alpha\beta}\left(  x,y,t_{1}%
,t_{2}\right)  ,\theta\left(  x,y\right)  \psi\left(  t_{1},t_{2}\right)
\right)  =\\
\left(  C_{\phi\left(  \boldsymbol{h}^{\alpha}\right)  \phi\left(
\boldsymbol{h}^{\beta}\right)  }^{\alpha\beta}\left(  x,y,t_{1},t_{2}\right)
,\theta\left(  x,y\right)  \psi\left(  t_{1},t_{2}\right)  \right)  +\\
\left(  \sigma^{2}\delta\left(  x-y\right)  \delta\left(  t_{1}-t_{2}\right)
,\theta\left(  x,y\right)  \psi\left(  t_{1},t_{2}\right)  \right)  .
\end{multline*}
The announced formula, following by computing then following expressions:%
\begin{align*}
\mathfrak{G}  &  \mathfrak{:}=\left(  \left(  \partial_{1}+\gamma\right)
\left(  \partial_{2}+\gamma\right)  G_{\boldsymbol{hh}}^{\alpha\beta}\left(
x,y,t_{1},t_{2}\right)  ,\theta\left(  x,y\right)  \psi\left(  t_{1}%
,t_{2}\right)  \right)  ,\\
\mathfrak{C}  &  \mathfrak{:}=\left(  C_{\phi\left(  \boldsymbol{h}^{\alpha
}\right)  \phi\left(  \boldsymbol{h}^{\beta}\right)  }^{\alpha\beta}\left(
x,y,t_{1},t_{2}\right)  ,\theta\left(  x,y\right)  \psi\left(  t_{1}%
,t_{2}\right)  \right)  ,\\
\mathfrak{D}  &  \mathfrak{:}=\left(  \sigma^{2}\delta\left(  x-y\right)
\delta\left(  t_{1}-t_{2}\right)  ,\theta\left(  x,y\right)  \psi\left(
t_{1},t_{2}\right)  \right)  .
\end{align*}
We begin with $\mathfrak{G}$, by integrating by parts twice in the temporal
variables, and using Fubini's theorem,%

\begin{multline*}
\mathfrak{G=}%
{\displaystyle\int\limits_{\Omega^{2}}}
\left\{  \text{ }%
{\displaystyle\int\limits_{\mathbb{R}^{2}}}
\left(  \partial_{1}+\gamma\right)  \left(  \partial_{2}+\gamma\right)
G_{\boldsymbol{hh}}^{\alpha\beta}\left(  x,y,t_{1},t_{2}\right)  \psi\left(
t_{1},t_{2}\right)  dt_{1}dt_{2}\right\}  \times\\
\theta\left(  x,y\right)  d\mu\left(  x\right)  d\mu\left(  y\right)
\end{multline*}%
\begin{multline*}
=%
{\displaystyle\int\limits_{\Omega^{2}}}
\left\{  \text{ }%
{\displaystyle\int\limits_{\mathbb{R}^{2}}}
G_{\boldsymbol{hh}}^{\alpha\beta}\left(  x,y,t_{1},t_{2}\right)  \left\{
\left(  \partial_{1}+\gamma\right)  \left(  \partial_{2}+\gamma\right)
\psi\left(  t_{1},t_{2}\right)  \right\}  dt_{1}dt_{2}\right\}  \times\\
\theta\left(  x,y\right)  d\mu\left(  x\right)  d\mu\left(  y\right)
\end{multline*}%
\begin{multline*}
=%
{\displaystyle\int\limits_{\mathbb{R}^{2}}}
\left\{
{\displaystyle\int\limits_{\Omega^{2}}}
G_{\boldsymbol{hh}}^{\alpha\beta}\left(  x,y,t_{1},t_{2}\right)  \theta\left(
x,y\right)  d\mu\left(  x\right)  d\mu\left(  y\right)  \right\}  \times\\
\left\{  \left(  \partial_{1}+\gamma\right)  \left(  \partial_{2}%
+\gamma\right)  \psi\left(  t_{1},t_{2}\right)  \right\}  dt_{1}dt_{2}%
\end{multline*}%
\begin{align*}
&  =%
{\displaystyle\int\limits_{\mathbb{R}^{2}}}
\overline{G}_{\boldsymbol{hh}}^{\alpha\beta}\left(  t_{1},t_{2}\right)
\left\{  \left(  \partial_{1}+\gamma\right)  \left(  \partial_{2}%
+\gamma\right)  \psi\left(  t_{1},t_{2}\right)  \right\}  dt_{1}dt_{2}\\
&  =%
{\displaystyle\int\limits_{\mathbb{R}^{2}}}
\left(  \partial_{1}+\gamma\right)  \left(  \partial_{2}+\gamma\right)
\overline{G}_{\boldsymbol{hh}}^{\alpha\beta}\left(  t_{1},t_{2}\right)
\psi\left(  t_{1},t_{2}\right)  dt_{1}dt_{2}.
\end{align*}
For $\mathfrak{C}$,%
\begin{align*}
\mathfrak{C}  &  =%
{\displaystyle\int\limits_{\mathbb{R}^{2}}}
\left\{  \text{ }%
{\displaystyle\int\limits_{\Omega^{2}}}
C_{\phi\left(  \boldsymbol{h}^{\alpha}\right)  \phi\left(  \boldsymbol{h}%
^{\beta}\right)  }^{\alpha\beta}\left(  x,y,t_{1},t_{2}\right)  \theta\left(
x,y\right)  d\mu\left(  x\right)  d\mu\left(  y\right)  \right\}  \psi\left(
t_{1},t_{2}\right)  dt_{1}dt_{2}\\
&  =%
{\displaystyle\int\limits_{\mathbb{R}^{2}}}
\overline{C}_{\phi\phi}^{\alpha\beta}\left(  t_{1},t_{2}\right)  \psi\left(
t_{1},t_{2}\right)  dt_{1}dt_{2},
\end{align*}
and for $\mathfrak{D}$,
\[
\mathfrak{D=}\sigma_{\theta}^{2}%
{\displaystyle\int\limits_{\mathbb{R}^{2}}}
\delta\left(  t_{1}-t_{2}\right)  \psi\left(  t_{1},t_{2}\right)  dt_{1}%
dt_{2}.
\]

\end{proof}

\begin{remark}
We now take $\Omega=\mathbb{Z}_{p}$, $\theta\left(  x,y\right)  $\ as the
characteristic function of $\mathbb{Z}_{p}\times\mathbb{Z}_{p}$, and use that
$\mu\left(  \mathbb{Z}_{p}\right)  =1$. In this case, the
\begin{gather*}
\left(  \partial_{1}+\gamma\right)  \left(  \partial_{2}+\gamma\right)
\left\{  \text{ }%
{\displaystyle\iint\limits_{\mathbb{Z}_{p}\times\mathbb{Z}_{p}}}
G_{\boldsymbol{hh}}^{\alpha\beta}\left(  x,y,t_{1},t_{2}\right)  dxdy\right\}
=\\
\left\{  \text{ }%
{\displaystyle\iint\limits_{\mathbb{Z}_{p}\times\mathbb{Z}_{p}}}
C_{\phi\phi}^{\alpha\beta}\left(  x,y,t_{1},t_{2}\right)  dxdy\right\}
+\sigma^{2}\delta\left(  t_{1}-t_{2}\right)
\end{gather*}
in $\mathcal{S}^{\prime}\left(  \mathbb{R}^{2}\right)  $, where $dxdy$ is the
normalized Haar measure of $\mathbb{Z}_{p}\times\mathbb{Z}_{p}$.
\end{remark}

\subsection{Spatial averages of the second type}

For a fixed $\theta\left(  x,y\right)  \in\mathcal{D}(\Omega^{2})$, we set%
\begin{equation}
\underline{G}_{\boldsymbol{hh}}^{\alpha\beta}\left(  t_{1},t_{2}\right)
=\underline{G}_{\boldsymbol{hh}}^{\alpha\beta}\left(  t_{1},t_{2}%
;\theta\right)  :=%
{\displaystyle\int\limits_{\Omega}}
G_{\boldsymbol{hh}}^{\alpha\beta}\left(  y,y,t_{1},t_{2}\right)  \theta\left(
y,y\right)  d\mu\left(  y\right)  , \label{Distribution_1}%
\end{equation}%
\[
\underline{C}_{\phi\phi}^{\alpha\beta}\left(  t_{1},t_{2}\right)
=\underline{C}_{\phi\phi}^{\alpha\beta}\left(  t_{1},t_{2};\theta\right)  :=%
{\displaystyle\int\limits_{\Omega}}
C_{\phi\phi}^{\alpha\beta}\left(  y,y,t_{1},t_{2}\right)  \theta\left(
y,y\right)  d\mu\left(  y\right)  ,
\]%
\[
\sigma_{\theta}^{2}:=\sigma^{2}%
{\displaystyle\int\limits_{\Omega^{2}}}
\delta\left(  x-y\right)  \theta\left(  x,y\right)  d\mu\left(  x\right)
d\mu\left(  y\right)  =\sigma^{2}%
{\displaystyle\int\limits_{\Omega}}
\theta\left(  y,y\right)  d\mu\left(  y\right)  ,
\]

in $\mathcal{S}^{\prime}\left(  \mathbb{R}^{2}\right)  $. The first step is to
show that (\ref{Distribution_1}) is a well-defined distribution from
$\mathcal{S}^{\prime}\left(  \mathbb{R}^{2}\right)  $ under a suitable hypothesis.

\begin{remark}
\label{Nota3}We assume that $\mu\left(  K\right)  >0$ for nay compact
$K\subset\Omega$, and that $\varphi\left(  y,y\right)  \in\mathcal{D}\left(
\Omega\right)  $ for any $\varphi\left(  x,y\right)  \in\mathcal{D}\left(
\Omega^{2}\right)  $. Given
\[
T\left(  x,y,t_{1},t_{2}\right)  \in\mathcal{D}^{\prime}\left(  \Omega
^{2}\right)
{\textstyle\bigotimes\nolimits_{\text{alg}}}
\mathcal{S}^{\prime}\left(  \mathbb{R}^{2}\right)  ,
\]
we want to define its restriction to the line $y=x$, so
\[
T\left(  x,x,t_{1},t_{2}\right)  \in\mathcal{D}^{\prime}\left(  \Omega\right)
%
{\textstyle\bigotimes\nolimits_{\text{alg}}}
\mathcal{S}^{\prime}\left(  \mathbb{R}^{2}\right)  .
\]
We assume that $T\left(  \cdot,\cdot,t_{1},t_{2}\right)  \in L^{2}\left(
\Omega^{2}\right)  $, and define%
\[
\left(  T\left(  y,y,t_{1},t_{2}\right)  ,\varphi\left(  y,y\right)  \right)
=%
{\displaystyle\int\limits_{\Omega}}
T\left(  y,y,t_{1},t_{2}\right)  \varphi\left(  y,y\right)  d\mu\left(
y\right)  ,
\]
for $\varphi\in\mathcal{D}\left(  \Omega^{2}\right)  $. In order to show that
this formula defines a distribution on $\mathcal{D}\left(  \Omega^{2}\right)
$, it is sufficient to show that $T\left(  \cdot,\cdot,t_{1},t_{2}\right)  \in
L_{\text{loc}}^{1}\left(  \Omega\right)  $. Indeed, take $K\subset\Omega$ be a
compact subset, and set
\[
L=\left\{  \left(  x,y\right)  \in\Omega\times\Omega;x=y\right\}  .
\]
Then
\[
\left\vert T\left(  y,y,t_{1},t_{2}\right)  \right\vert =1_{L}\left(
x,y\right)  \left\vert T\left(  x,y,t_{1},t_{2}\right)  \right\vert
\leq\left\vert T\left(  x,y,t_{1},t_{2}\right)  \right\vert \text{ for any
}x,y\in\Omega,
\]
and by using the Cauchy-Schwarz inequality, and the fact that $\mu\left(
K\right)  >0$,
\[%
{\displaystyle\iint\limits_{K\times K}}
\left\vert T\left(  y,y,t_{1},t_{2}\right)  \right\vert d\mu\left(  x\right)
d\mu\left(  y\right)  \leq%
{\displaystyle\iint\limits_{K\times K}}
\left\vert T\left(  x,y,t_{1},t_{2}\right)  \right\vert d\mu\left(  x\right)
d\mu\left(  y\right)  ,
\]
i.e.,%
\[%
{\displaystyle\int\limits_{K}}
\left\vert T\left(  y,y,t_{1},t_{2}\right)  \right\vert d\mu\left(  y\right)
\leq\left\Vert T\left(  \cdot,\cdot,t_{1},t_{2}\right)  \right\Vert
_{L^{2}\left(  \Omega^{2}\right)  }.
\]

\end{remark}

Since $C_{\phi\phi}^{\alpha\beta}\left(  \cdot,\cdot,t_{1},t_{2}\right)  \in
L^{2}\left(  \Omega^{2}\right)  $, for almost every $\left(  t_{1}%
,t_{2}\right)  \in\mathbb{R}^{2}$, Remark \ref{Nota3} implies that
$C_{\phi\phi}^{\alpha\beta}\left(  y,y,t_{1},t_{2}\right)  \in\mathcal{S}%
^{\prime}\left(  \mathbb{R}^{2}\right)  $. Intuitively, the restriction of
distribution $C_{\phi\phi}^{\alpha\beta}\left(  x,y,t_{1},t_{2}\right)  $ to
the line $\left\{  x=y\right\}  $ is given by%
\begin{multline*}%
{\displaystyle\int\limits_{\Omega^{2}}}
C_{\phi\phi}^{\alpha\beta}\left(  x,y,t_{1},t_{2}\right)  \delta\left(
x-y\right)  \theta\left(  x,y\right)  d\mu\left(  x\right)  d\mu\left(
y\right)  =\\%
{\displaystyle\int\limits_{\Omega}}
C_{\phi\phi}^{\alpha\beta}\left(  y,y,t_{1},t_{2}\right)  \theta\left(
y,y\right)  d\mu\left(  y\right)  .
\end{multline*}

We note that $\sigma^{2}\delta\left(  x-y\right)  \delta\left(  t_{1}%
-t_{2}\right)  $ is by definition a distribution restricted to the line
$\left\{  x=y\right\}  $. Therefore, the restriction of the distribution
$C_{\phi\phi}^{\alpha\beta}\left(  x,y,t_{1},t_{2}\right)  $ $\ +\sigma
^{2}\delta\left(  x-y\right)  \delta\left(  t_{1}-t_{2}\right)  $ to the line
$y=x$ is
\[
C_{\phi\phi}^{\alpha\beta}\left(  x,x,t_{1},t_{2}\right)  +\sigma^{2}%
\delta\left(  x-y\right)  \delta\left(  t_{1}-t_{2}\right)  \in\mathcal{D}%
^{\prime}\left(  \Omega^{2}\right)
{\textstyle\bigotimes\nolimits_{\text{alg}}}
\mathcal{S}^{\prime}\left(  \mathbb{R}^{2}\right)  .
\]
Now, since the equation (\ref{Eq_15A}) holds in $\mathcal{D}^{\prime}\left(
\Omega^{2}\right)
{\textstyle\bigotimes\nolimits_{\text{alg}}}
\mathcal{S}^{\prime}\left(  \mathbb{R}^{2}\right)  $, the restriction to the
distribution $\left(  \partial_{1}+\gamma\right)  \left(  \partial_{2}%
+\gamma\right)  G_{\boldsymbol{hh}}^{\alpha\beta}\left(  x,y,t_{1}%
,t_{2}\right)  $ to the line $\left\{  x=y\right\}  $ is well-defined. We
denote such distribution as $\left(  \partial_{1}+\gamma\right)  \left(
\partial_{2}+\gamma\right)  G_{\boldsymbol{hh}}^{\alpha\beta}\left(
y,y,t_{1},t_{2}\right)  $. We now use the reasoning used in the proof of
Theorem \ref{Theorem4} to get the following result:

\begin{theorem}
\label{Theorem5}With the above notation,
\begin{equation}
\left(  \partial_{1}+\gamma\right)  \left(  \partial_{2}+\gamma\right)
\underline{G}_{\boldsymbol{hh}}^{\alpha\beta}\left(  t_{1},t_{2}\right)
=\underline{C}_{\phi\phi}^{\alpha\beta}\left(  t_{1},t_{2}\right)
+\sigma_{\theta}^{2}\delta\left(  t_{1}-t_{2}\right)  \text{ in }%
\mathcal{S}^{\prime}\left(  \mathbb{R}^{2}\right)  .\nonumber
\end{equation}

\end{theorem}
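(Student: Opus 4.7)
The plan is to mimic the argument for Theorem \ref{Theorem4}, replacing the two-variable spatial average $\int_{\Omega^{2}}(\cdot)\,\theta(x,y)\,d\mu(x)d\mu(y)$ by its diagonal counterpart $\int_{\Omega}(\cdot)|_{x=y}\,\theta(y,y)\,d\mu(y)$. The starting point is again equation (\ref{Eq_15A}), which is an identity in $\mathcal{D}^{\prime}(\Omega^{2})\otimes_{\text{alg}}\mathcal{S}^{\prime}(\mathbb{R}^{2})$; what changes is the class of test functions one pairs with.

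First I would verify that each of the three terms in (\ref{Eq_15A}) admits a restriction to the diagonal $\{x=y\}$ in the sense of Remark \ref{Nota3}. The delta term $\sigma^{2}\delta(x-y)\delta(t_{1}-t_{2})$ is by construction concentrated on $\{x=y\}$, and pairing with $\theta(x,y)\psi(t_{1},t_{2})$ produces $\sigma_{\theta}^{2}\int_{\mathbb{R}^{2}}\delta(t_{1}-t_{2})\psi(t_{1},t_{2})\,dt_{1}dt_{2}$. For the covariance term, Lemma \ref{Lemma_12}(i)--(ii) gives $C_{\phi\phi}^{\alpha\beta}(\cdot,\cdot,t_{1},t_{2})\in L^{2}(\Omega^{2})$ for almost every $(t_{1},t_{2})$, so Remark \ref{Nota3} applies and yields $C_{\phi\phi}^{\alpha\beta}(y,y,t_{1},t_{2})\in L_{\mathrm{loc}}^{1}(\Omega)$ as a distribution in $\mathcal{D}^{\prime}(\Omega)\otimes_{\text{alg}}\mathcal{S}^{\prime}(\mathbb{R}^{2})$. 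Since the right-hand side of (\ref{Eq_15A}) therefore admits a diagonal restriction, so does the left-hand side $(\partial_{1}+\gamma)(\partial_{2}+\gamma)G_{\boldsymbol{hh}}^{\alpha\beta}$, by the identity itself.

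Next, I would pair both sides of the diagonally-restricted equation with the test function $\theta(y,y)\psi(t_{1},t_{2})$, where $\theta\in\mathcal{D}(\Omega^{2})$ and $\psi\in\mathcal{S}(\mathbb{R}^{2})$; note that $\theta(y,y)\in\mathcal{D}(\Omega)$ by the standing hypothesis recorded in Remark \ref{Nota3}. Repeating the computation from the proof of Theorem \ref{Theorem4} word for word (in the single spatial variable $y$), two integrations by parts in $t_{1}$ and $t_{2}$ transfer the operator $(\partial_{1}+\gamma)(\partial_{2}+\gamma)$ onto $\psi$, with no boundary contribution because $\psi$ is Schwartz, and Fubini's theorem then interchanges the $y$- and $(t_{1},t_{2})$-integrations. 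The left-hand side becomes $\int_{\mathbb{R}^{2}}(\partial_{1}+\gamma)(\partial_{2}+\gamma)\underline{G}_{\boldsymbol{hh}}^{\alpha\beta}(t_{1},t_{2})\,\psi(t_{1},t_{2})\,dt_{1}dt_{2}$, while the right-hand side collects into $\int_{\mathbb{R}^{2}}\bigl[\underline{C}_{\phi\phi}^{\alpha\beta}(t_{1},t_{2})+\sigma_{\theta}^{2}\delta(t_{1}-t_{2})\bigr]\psi(t_{1},t_{2})\,dt_{1}dt_{2}$. Since $\psi$ was arbitrary, the claimed identity holds in $\mathcal{S}^{\prime}(\mathbb{R}^{2})$.

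The main obstacle is the diagonal restriction step: one must ensure that the operator $(\partial_{1}+\gamma)(\partial_{2}+\gamma)$, applied to a general element of $\mathcal{D}^{\prime}(\Omega^{2})\otimes_{\text{alg}}\mathcal{S}^{\prime}(\mathbb{R}^{2})$, commutes with restriction to $\{x=y\}$. The saving feature is that one does not need a general commutation result, only the consistency of the identity (\ref{Eq_15A}): because the right-hand side admits a restriction to the diagonal in the $L^{2}(\Omega^{2})$ sense of Remark \ref{Nota3}, that restriction is \emph{defined} to be the restriction of the left-hand side, and the duality computation with the product test function $\theta(y,y)\psi(t_{1},t_{2})$ never requires moving derivatives across the diagonal restriction—only across the time variables, where it is legitimate. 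Once this conceptual point is clarified, the proof collapses to the same integration-by-parts and Fubini argument used for Theorem \ref{Theorem4}.
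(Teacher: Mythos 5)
Your proposal is correct and follows essentially the same route as the paper: the diagonal restriction of the right-hand side of (\ref{Eq_15A}) is justified via Lemma \ref{Lemma_12} and Remark \ref{Nota3}, the restriction of the left-hand side is then defined through the identity itself, and the conclusion follows by repeating the pairing, integration-by-parts, and Fubini computation from the proof of Theorem \ref{Theorem4} in the single spatial variable. Your explicit remark that the time derivatives never need to be commuted across the diagonal restriction is exactly the (tersely stated) point on which the paper's argument rests.
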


\begin{remark}
We now take $\Omega=\mathbb{Z}_{p}$, $\theta\left(  x,y\right)  $\ as the
characteristic function of $\mathbb{Z}_{p}\times\mathbb{Z}_{p}$, and use that
$\mu\left(  \mathbb{Z}_{p}\right)  =1$. We consider restriction to lines of
the form $x=ay+b$, where $a$, $b\in\mathbb{Z}_{p}$. In this case, the Theorem
\ref{Theorem5} has the form
\begin{gather*}
\left(  \partial_{1}+\gamma\right)  \left(  \partial_{2}+\gamma\right)
\left\{  \text{ }%
{\displaystyle\iint\limits_{\mathbb{Z}_{p}\times\mathbb{Z}_{p}}}
G_{\boldsymbol{hh}}^{\alpha\beta}\left(  ay+b,y,t_{1},t_{2}\right)
dxdy\right\}  =\\
\left\{  \text{ }%
{\displaystyle\iint\limits_{\mathbb{Z}_{p}\times\mathbb{Z}_{p}}}
C_{\phi\phi}^{\alpha\beta}\left(  ay+b,y,t_{1},t_{2}\right)  dxdy\right\}
+\frac{\sigma^{2}}{\left\vert a\right\vert _{p}}\delta\left(  t_{1}%
-t_{2}\right)
\end{gather*}
in $\mathcal{S}^{\prime}\left(  \mathbb{R}^{2}\right)  $.
\end{remark}

\section{\label{Appen_H}Appendix H}

In this appendix, we collect some basic results from the $p$-adic analysis.
For a detailed exposition on $p$-adic analysis, the reader may consult
\cite{A-K-S}, \cite{V-V-Z}, \cite{Taibleson}, \cite{Zuniga-Textbook}. Our
presentation here is based on the book \cite{Zuniga-Textbook}.

\subsection{The field of $p$-adic numbers}

Let $p$ be a fixed prime number. The field of $p-$adic numbers $\mathbb{Q}%
_{p}$ is defined as the completion of the field of rational numbers
$\mathbb{Q}$ with respect to the $p-$adic norm $|\cdot|_{p}$, which is defined
as
\[
|x|_{p}=%
\begin{cases}
0 & \text{if }x=0\\
p^{-\gamma} & \text{if }x=p^{\gamma}\dfrac{a}{b},
\end{cases}
\]
where $a$ and $b$ are integers coprime with $p$. The integer $\gamma
=ord_{p}(x):=ord(x)$, with $ord(0):=+\infty$, is called the\textit{\ }$p-$adic
order of $x$. We extend the $p-$adic norm to $\mathbb{Q}_{p}^{N}$ by taking%
\[
||x||_{p}:=\max_{1\leq i\leq N}|x_{i}|_{p},\qquad\text{for }x=(x_{1}%
,\dots,x_{N})\in\mathbb{Q}_{p}^{N}.
\]
We define $ord(x)=\min_{1\leq i\leq N}\{ord(x_{i})\}$, then $||x||_{p}%
=p^{-ord(x)}$.\ The metric space $\left(  \mathbb{Q}_{p}^{N},||\cdot
||_{p}\right)  $ is a complete ultrametric space. As a topological space
$\mathbb{Q}_{p}$\ is homeomorphic to a Cantor-like subset of the real line,
see, e.g., \cite{V-V-Z}, \cite{A-K-S}.

Any $p-$adic number $x\neq0$ has a unique expansion of the form
\[
x=p^{ord(x)}\sum_{j=0}^{\infty}x_{j}p^{j},
\]
where $x_{j}\in\{0,1,2,\dots,p-1\}$ and $x_{0}\neq0$. By using this expansion,
we define \textit{the fractional part }$\{x\}_{p}$\textit{ of }$x\in
\mathbb{Q}_{p}$ as the rational number
\[
\{x\}_{p}=%
\begin{cases}
0 & \text{if }x=0\text{ or }ord(x)\geq0\\
p^{ord(x)}\sum_{j=0}^{-ord(x)-1}x_{j}p^{j} & \text{if }ord(x)<0.
\end{cases}
\]
In addition, any $x\in\mathbb{Q}_{p}^{N}\smallsetminus\left\{  0\right\}  $
can be represented uniquely as $x=p^{ord(x)}v$, where $\left\Vert v\right\Vert
_{p}=1$.

\subsection{Topology of $\mathbb{Q}_{p}^{N}$}

For $r\in\mathbb{Z}$, denote by $B_{r}^{N}(a)=\{x\in\mathbb{Q}_{p}%
^{N};||x-a||_{p}\leq p^{r}\}$ the ball of radius $p^{r}$ with center at
$a=(a_{1},\dots,a_{N})\in\mathbb{Q}_{p}^{N}$, and take $B_{r}^{N}%
(0):=B_{r}^{N}$. Note that $B_{r}^{N}(a)=B_{r}(a_{1})\times\cdots\times
B_{r}(a_{N})$, where $B_{r}(a_{i}):=\{x\in\mathbb{Q}_{p};|x_{i}-a_{i}|_{p}\leq
p^{r}\}$ is the one-dimensional ball of radius $p^{r}$ with center at
$a_{i}\in\mathbb{Q}_{p}$. The ball $B_{0}^{N}$ equals the product of $N$
copies of $B_{0}=\mathbb{Z}_{p}$, the ring of\textit{ }$p-$adic integers. We
also denote by $S_{r}^{N}(a)=\{x\in\mathbb{Q}_{p}^{N};||x-a||_{p}=p^{r}\}$ the
sphere of radius\textit{ }$p^{r}$ with center at $a=(a_{1},\dots,a_{N}%
)\in\mathbb{Q}_{p}^{N}$, and take $S_{r}^{N}(0):=S_{r}^{N}$. We notice that
$S_{0}^{1}=\mathbb{Z}_{p}^{\times}$ (the group of units of $\mathbb{Z}_{p}$),
but $\left(  \mathbb{Z}_{p}^{\times}\right)  ^{N}\subsetneq S_{0}^{N}$. The
balls and spheres are both open and closed subsets in $\mathbb{Q}_{p}^{N}$. In
addition, two balls in $\mathbb{Q}_{p}^{N}$ are either disjoint or one is
contained in the other.

As a topological space $\left(  \mathbb{Q}_{p}^{N},||\cdot||_{p}\right)  $ is
totally disconnected, i.e., the only connected \ subsets of $\mathbb{Q}%
_{p}^{N}$ are the empty set and the points. A subset of $\mathbb{Q}_{p}^{N}$
is compact if and only if it is closed and bounded in $\mathbb{Q}_{p}^{N}$,
see, e.g., \cite[Section 1.3]{V-V-Z}, or \cite[Section 1.8]{A-K-S}. The balls
and spheres are compact subsets. Thus $\left(  \mathbb{Q}_{p}^{N}%
,||\cdot||_{p}\right)  $ is a locally compact topological space.

\subsection{The Haar measure}

Since $(\mathbb{Q}_{p}^{N},+)$ is a locally compact topological group, there
exists a Haar measure $d^{N}x$, which is invariant under translations, i.e.,
$d^{N}(x+a)=d^{N}x$, \cite{Halmos}. If we normalize this measure by the
condition $\int_{\mathbb{Z}_{p}^{N}}dx=1$, then $d^{N}x$ is unique.

\begin{notation}
We use $\Omega\left(  p^{-r}||x-a||_{p}\right)  $ to denote the characteristic
function of the ball $B_{r}^{N}(a)=a+p^{-r}\mathbb{Z}_{p}^{N}$, where
\[
\mathbb{Z}_{p}^{N}=\left\{  x\in\mathbb{Q}_{p}^{N};\left\Vert x\right\Vert
_{p}\leq1\right\}
\]
is the $N$-dimensional unit ball. For more general sets, we will use the
notation $1_{A}$ for the characteristic function of set $A$.
\end{notation}

\subsection{The Bruhat-Schwartz space}

A complex-valued function $\varphi$ defined on $\mathbb{Q}_{p}^{N}$ is
\textit{called locally constant} if for any $x\in\mathbb{Q}_{p}^{N}$ there
exist an integer $l(x)\in\mathbb{Z}$ such that%
\begin{equation}
\varphi(x+x^{\prime})=\varphi(x)\text{ for any }x^{\prime}\in B_{l(x)}^{N}.
\label{local_constancy}%
\end{equation}
A function $\varphi:\mathbb{Q}_{p}^{N}\rightarrow\mathbb{C}$ is called a
Bruhat-Schwartz function\textit{ }(or a test function) if it is locally
constant with compact support. Any test function can be represented as a
linear combination, with complex coefficients, of characteristic functions of
balls. The $\mathbb{C}$-vector space of Bruhat-Schwartz functions is denoted
by $\mathcal{D}(\mathbb{Q}_{p}^{N})$. We denote by $\mathcal{D}_{\mathbb{R}%
}(\mathbb{Q}_{p}^{N})$\ the $\mathbb{R}$-vector space of Bruhat-Schwartz
functions. For $\varphi\in\mathcal{D}(\mathbb{Q}_{p}^{N})$, the largest number
$l=l(\varphi)$ satisfying (\ref{local_constancy}) is called the exponent of
local constancy (or the parameter of constancy) of $\varphi$.

We warn the reader that in most of this work only real-valued functions are
needed. However, in this appendix, we adopt a more general formulation.

\subsection{$L^{\rho}$ spaces}

Given $\rho\in\lbrack1,\infty)$, we denote by $L^{\rho}\left(
\mathbb{Q}
_{p}^{N}\right)  :=L^{\rho}\left(
\mathbb{Q}
_{p}^{N},d^{N}x\right)  ,$ the $\mathbb{C}-$vector space of all the complex
valued functions $g$ satisfying
\[
\left\Vert g\right\Vert _{\rho}=\left(  \text{ }%
{\displaystyle\int\limits_{\mathbb{Q}_{p}^{N}}}
\left\vert g\left(  x\right)  \right\vert ^{\rho}d^{N}x\right)  ^{\frac
{1}{\rho}}<\infty,
\]
where $d^{N}x$ is the normalized Haar measure on $\left(  \mathbb{Q}_{p}%
^{N},+\right)  $. The corresponding $\mathbb{R}$-vector spaces are denoted as
$L_{\mathbb{R}}^{\rho}\left(
\mathbb{Q}
_{p}^{N}\right)  =L_{\mathbb{R}}^{\rho}\left(
\mathbb{Q}
_{p}^{N},d^{N}x\right)  $, $1\leq\rho<\infty$.

If $U$ is an open subset of $\mathbb{Q}_{p}^{N}$, $\mathcal{D}(U)$ denotes the
$\mathbb{C}$-vector space of test functions with supports contained in $U$,
then $\mathcal{D}(U)$ is dense in
\[
L^{\rho}\left(  U\right)  =\left\{  \varphi:U\rightarrow\mathbb{C};\left\Vert
\varphi\right\Vert _{\rho}=\left\{
{\displaystyle\int\limits_{U}}
\left\vert \varphi\left(  x\right)  \right\vert ^{\rho}d^{N}x\right\}
^{\frac{1}{\rho}}<\infty\right\}  ,
\]
for $1\leq\rho<\infty$, see, e.g., \cite[Section 4.3]{A-K-S}.

\subsection{The Fourier transform}

Set $\chi_{p}(y)=\exp(2\pi i\{y\}_{p})$ for $y\in\mathbb{Q}_{p}$. The map
$\chi_{p}(\cdot)$ is an additive character on $\mathbb{Q}_{p}$, i.e., a
continuous map from $\left(  \mathbb{Q}_{p},+\right)  $ into $S$ (the unit
circle considered as multiplicative group) satisfying $\chi_{p}(x_{0}%
+x_{1})=\chi_{p}(x_{0})\chi_{p}(x_{1})$, $x_{0},x_{1}\in\mathbb{Q}_{p}$.\ The
additive characters of $\mathbb{Q}_{p}$ form an Abelian group which is
isomorphic to $\left(  \mathbb{Q}_{p},+\right)  $. The isomorphism is given by
$\kappa\rightarrow\chi_{p}(\kappa x)$, see, e.g., \cite[Section 2.3]{A-K-S}.

Given $\xi=(\xi_{1},\dots,\xi_{N})$ and $x=(x_{1},\dots,x_{N})\allowbreak
\in\mathbb{Q}_{p}^{N}$, we set $\xi\cdot x:=\sum_{j=1}^{N}\xi_{j}x_{j}$. The
Fourier transform of $\varphi\in\mathcal{D}(\mathbb{Q}_{p}^{N})$ is defined
as
\[
\mathcal{F}\varphi(\xi)=%
{\displaystyle\int\limits_{\mathbb{Q} _{p}^{N}}}
\chi_{p}(\xi\cdot x)\varphi(x)d^{N}x\quad\text{for }\xi\in\mathbb{Q}_{p}^{N},
\]
where $d^{N}x$ is the normalized Haar measure on $\mathbb{Q}_{p}^{N}$. The
Fourier transform is a linear isomorphism from $\mathcal{D}(\mathbb{Q}_{p}%
^{N})$ onto itself satisfying
\begin{equation}
(\mathcal{F}(\mathcal{F}\varphi))(\xi)=\varphi(-\xi), \label{Eq_FFT}%
\end{equation}
see, e.g., \cite[Section 4.8]{A-K-S}. We will also use the notation
$\mathcal{F}_{x\rightarrow\kappa}\varphi$ and $\widehat{\varphi}$\ for the
Fourier transform of $\varphi$.

The Fourier transform extends to $L^{2}$. If $f\in L^{2}\left(  \mathbb{Q}%
_{p}^{N}\right)  $, its Fourier transform is defined as
\[
(\mathcal{F}f)(\xi)=\lim_{k\rightarrow\infty}%
{\displaystyle\int\limits_{||x||_{p}\leq p^{k}}}
\chi_{p}(\xi\cdot x)f(x)d^{N}x,\quad\text{for }\xi\in%
\mathbb{Q}
_{p}^{N},
\]
where the limit is taken in $L^{2}\left(  \mathbb{Q}_{p}^{N}\right)  $. We
recall that the Fourier transform is unitary on $L^{2}\left(  \mathbb{Q}%
_{p}^{N}\right)  ,$ i.e. $||f||_{2}=||\mathcal{F}f||_{2}$ for $f\in L^{2}$ and
that (\ref{Eq_FFT}) is also valid in $L^{2}$, see, e.g., \cite[Chapter III,
Section 2]{Taibleson}.

\subsection{Distributions}

The $\mathbb{C}$-vector space $\mathcal{D}^{\prime}\left(  \mathbb{Q}_{p}%
^{N}\right)  $ of all continuous linear functionals on $\mathcal{D}%
(\mathbb{Q}_{p}^{N})$ is called the Bruhat-Schwartz space of distributions.
Every linear functional on $\mathcal{D}(\mathbb{Q}_{p}^{N})$ is continuous,
i.e. $\mathcal{D}^{\prime}\left(  \mathbb{Q}_{p}^{N}\right)  $\ agrees with
the algebraic dual of $\mathcal{D}(\mathbb{Q}_{p}^{N})$, see, e.g.,
\cite[Chapter 1, VI.3, Lemma]{V-V-Z}.

We endow $\mathcal{D}^{\prime}\left(  \mathbb{Q}_{p}^{N}\right)  $ with the
weak topology, i.e. a sequence $\left\{  T_{j}\right\}  _{j\in\mathbb{N}}$ in
$\mathcal{D}^{\prime}\left(  \mathbb{Q}_{p}^{N}\right)  $ converges to $T$ if
$\lim_{j\rightarrow\infty}T_{j}\left(  \varphi\right)  =T\left(
\varphi\right)  $ for any $\varphi\in\mathcal{D}(\mathbb{Q}_{p}^{N})$. The
map
\[%
\begin{array}
[c]{lll}%
\mathcal{D}^{\prime}\left(  \mathbb{Q}_{p}^{N}\right)  \times\mathcal{D}%
(\mathbb{Q}_{p}^{N}) & \rightarrow & \mathbb{C}\\
\left(  T,\varphi\right)  & \rightarrow & T\left(  \varphi\right)
\end{array}
\]
is a bilinear form which is continuous in $T$ and $\varphi$ separately. We
call this map the pairing between $\mathcal{D}^{\prime}\left(  \mathbb{Q}%
_{p}^{N}\right)  $ and $\mathcal{D}(\mathbb{Q}_{p}^{N})$. From now on we will
use $\left(  T,\varphi\right)  $ instead of $T\left(  \varphi\right)  $.

Every $f$\ in $L_{loc}^{1}$ defines a distribution $f\in\mathcal{D}^{\prime
}\left(  \mathbb{Q}_{p}^{N}\right)  $ by the formula
\[
\left(  f,\varphi\right)  =%
{\displaystyle\int\limits_{\mathbb{Q}_{p}^{N}}}
f\left(  x\right)  \varphi\left(  x\right)  d^{N}x.
\]

\subsection{\label{SEction Fourier Transform}The Fourier transform of a
distribution}

The Fourier transform $\mathcal{F}\left[  T\right]  $ of a distribution
$T\in\mathcal{D}^{\prime}\left(  \mathbb{Q}_{p}^{N}\right)  $ is defined by%
\[
\left(  \mathcal{F}\left[  T\right]  ,\varphi\right)  =\left(  T,\mathcal{F}%
\left[  \varphi\right]  \right)  \text{ for all }\varphi\in\mathcal{D}\left(
\mathbb{Q}_{p}^{N}\right)  \text{.}%
\]
The Fourier transform $T\rightarrow\mathcal{F}\left[  T\right]  $ is a linear
and continuous isomorphism from $\mathcal{D}^{\prime}\left(  \mathbb{Q}%
_{p}^{N}\right)  $\ onto $\mathcal{D}^{\prime}\left(  \mathbb{Q}_{p}%
^{N}\right)  $. Furthermore, $T=\mathcal{F}\left[  \mathcal{F}\left[
T\right]  \left(  -\xi\right)  \right]  $.

Let $T\in\mathcal{D}^{\prime}\left(  \mathbb{Q}_{p}^{n}\right)  $ be a
distribution. Then \textrm{supp}$T\subset B_{L}^{N}$ if and only if
$\mathcal{F}\left[  T\right]  $ is a locally constant function, and the
exponent of local constancy of $\mathcal{F}\left[  T\right]  $ is $\geq-L$. In addition%

\[
\mathcal{F}\left[  T\right]  \left(  \xi\right)  =\left(  T\left(  y\right)
,\Omega\left(  p^{-L}\left\Vert y\right\Vert _{p}\right)  \chi_{p}\left(
\xi\cdot y\right)  \right)  ,
\]
see, e.g., \cite[Section 4.9]{A-K-S}.\bigskip

\bigskip


\begin{thebibliography}{99}                                                                                               %


\bibitem {A-K-S}S. Albeverio, A. Yu. Khrennikov, V. M. Shelkovich, Theory of
$p$-adicdistributions: linear and nonlinear models. Cambridge University
Press, 2010.

\bibitem {Altland et al}A. Altland and B. Simons, Condensed Matter Field
Theory (Cambridge University Press, Cambridge, England, 2010).

\bibitem {Amari}Amari S., Dynamics of pattern formation in lateral inhibition
type neural fields, Biol. Cybern., 27 (1977), pp. 77-87.

\bibitem {Arroyo et al}Edilberto Arroyo-Ortiz, W. A. Z\'{u}\~{n}iga-Galindo,
Construction of p-adic covariant quantum fields in the framework of white
noise analysis.(English summary)Rep. Math. Phys.84(2019), no.1, 1--34.

\bibitem {Ash}Ash, Robert B., Measure, integration, and functional analysis,
Academic Press, New York-London, 1972.

\bibitem {Asao}Asao Arai, Analysis on Fock spaces and mathematical theory of
quantum fields. An introduction to mathematical analysis of quantum
fieldsWorld Scientific Publishing Co. Pte. Ltd., Hackensack, NJ, 2018.

\bibitem {Berlinet et al}A. Berlinet, Christine Thomas-Agnan, Reproducing
kernel Hilbert spaces in probability and statistics. Springer Science+Business
Media, New York, 2004.

\bibitem {Bogachev}Vladimir I. Bogachev, Gaussian measures, Math. Surveys
Monogr., 62. American Mathematical Society, Providence, RI, 1998.

\bibitem {Bressloff}Paul C. Bressloff, Stochastic neural field theory and the
system-size expansion, SIAM J. Appl. Math.70(2009/10), no.5, 1488--1521.

\bibitem {Buice-Cowean-2007}Michael A. Buice, Jack D. Cowan, Field-theoretic
approach to fluctuation effects in neural networks Phys. Rev. E (3) 75 (2007),
no. 5, 051919, 14 pp.

\bibitem {Buice and Cowan}M.A. Buice, J.D. Cowan, Statistical mechanics of the
neocortex. Progress in Biophysics and Molecular Biology. 2009
Feb-Apr;99(2-3):53-86. DOI: 10.1016/j.pbiomolbio.2009.07.003.

\bibitem {Buice-Cowan-Chow}Michael A. Buice, Jack D. Cowan, Carson C. Chow,
Systematic fluctuation expansion for neural network activity equations, Neural
Comput. 22 (2010), no. 2, 377--426.

\bibitem {Chialvo}D. R. Chialvo, Emergent complex neural dynamics. Nature
Physics. 6 (10): 744--750 (2010). arXiv:1010.2530. doi:10.1038/nphys1803.

\bibitem {Chow et al}C. Chow and M. Buice, Path Integral Methods for
Stochastic Differential Equations, J. Math. Neurosci. 5, 8 (2015).

\bibitem {Chua-Tamas}Chua Leon O, Roska, Tamas, Cellular neural networks and
visual computing: foundations and applications. Cambridge university press, 2002.

\bibitem {Neural-Fields}Stephen Coobes, Peter Baim Graben, Roland Potthast ,
and James Wright, Editors. (2014). Neural fields. Theory and applications.
Springer, Heidelberg.

\bibitem {Da prato}Giuseppe Da Prato, An introduction to infinite-dimensional
analysis. Universitext. Springer-Verlag, Berlin, 2006.

\bibitem {Demirtsas et al}Mehmet Demirtas, James Halverson, Anindita Maiti,
Matthew D Schwartz and Keegan Stoner, Neural network field theories:
non-Gaussianity, actions, and locality, Mach. Learn.: Sci. Technol. 5 015002
(2024), https://doi.org/10.1088/2632-2153/ad17d3.

\bibitem {Erbin et al}H. Erbin, V. Lahoche and D. Ousmane Samary,
Non-perturbative renormalization for the neural network-QFT correspondence,
Mach. Learn.: Sci. Technol. 3 015027 (2022), https://doi.org/10.1088/2632-2153/ac4f69.

\bibitem {Fuquen et al}A. R. Fuquen-Tibat\'{a}, H. Garc\'{\i}a-Compe\'{a}n, W.
A. Z\'{u}\~{n}iga-Galindo, Euclidean quantum field formulation of p -adic open
string amplitudes, Nuclear Phys. B 975 (2022), Paper No. 115684, 27 pp.

\bibitem {Gel-Shilov}I. M. Gel'fand, G. E. Shilov , Generalized functions.
Vol. 2. Spaces of fundamental and generalized functions. AMS Chelsea
publishing, 2010.

\bibitem {Gelfand-Vilenkin}I.M. Gel'fand, N.Y. Vilenkin, Generalized
Functions. Applications of Harmonic Analysis, vol. 4. Academic Press, New
York, 1964.

\bibitem {Grosvenor-Jefferson}K. T. Grosvenor, R. Jefferson, The edge of
chaos: quantum field theory and deep neural networks, SciPost Phys., 12, [81]
(2022). https://doi.org/10.21468/SciPostPhys.12.3.081.

\bibitem {Halmos}P. Halmos, Measure Theory\textit{.} D. Van Nostrand Company
Inc., New York, 1950.

\bibitem {Halverson et al}J. Halverson, A. Maiti and K. Stoner, Neural
networks and quantum field theory, Mach. Learn.: Sci. Technol. 2, 035002
(2021), doi:10.1088/2632-2153/abeca3.

\bibitem {Helias et al}Moritz Helias, David Dahmen, Statistical Field Theory
for Neural Networks. Lecture Notes in Physics 970, Springer Cham, 2020. DOI: https://doi.org/10.1007/978-3-030-46444-8.

\bibitem {Hesse et al}Janina Hesse, Thilo Gross, Self-organized criticality as
a fundamental property of neural systems, Front. Syst. Neurosci., 22 September
2014, Volume 8 - 2014, https://doi.org/10.3389/fnsys.2014.00166.

\bibitem {Hida et al}Takeyuki Hida, Hui-Hsiung Kuo, J\"{u}rgen Potthoff,
LudwigStreit, White noise. An Infinite Dimensional Calculus. Math. Appl., 253.
Kluwer Academic Publishers Group, Dordrecht, 1993.

\bibitem {Hilgetag et al}C. C. Hilgetag, A. Goulas, Is the brain really a
small-world network?. Brain Struct Funct 221, 2361--2366 (2016). https://doi.org/10.1007/s00429-015-1035-6.

\bibitem {Huang et al}Zhi-yuan Huang, Jia-an Yan, Introduction to infinite
dimensional stochastic analysis. Math. Appl., 502. Kluwer Academic Publishers,
Dordrecht; Science Press Beijing, Beijing, 2000.

\bibitem {KKZuniga}Andrei Khrennikov, Sergei Kozyrev, W. A. Z{\'{u}}%
{\~{n}}iga-Galindo, Ultrametric Equations and its Applications. Encyclopedia
of Mathematics and its Applications (168), Cambridge University Press, 2018.

\bibitem {Koning}H. K\"{o}nig, Eigenvalue distribution of compact operators.
Operator Theory: Advances and Applications, 16. Birkh%
\"{}%
auser Verlag, Basel, 1986.

\bibitem {Kukush}Alexander Kukush, Gaussian Measures in Hilbert Space:
Construction and Properties. Wiley-ISTE, 2020.

\bibitem {Laing}C. Laing, W. Troy, B. Gutkin, G. Ermentrout, Multiple bumps in
a neuronal model of working memory, SIAM J. Appl. Math. 63, 62 (2002).

\bibitem {Lecun et al}Y. LeCun, Y. Bengio, G. Hinton, Deep learning, Nature
521, 436--444 (2015). https://doi.org/10.1038/nature14539.

\bibitem {Lee et al}Jaehoon Lee, Yasaman Bahri, Roman Novak, Samuel S.
Schoenholz, Jeffrey Pennington, Jascha Sohl-Dickstein, Deep Neural Networks as
Gaussian Processes, International Conference on Learning Representations
(ICLR), 2018.

\bibitem {Martin et al}P. Martin, E. Siggia, and H. Rose, Statistical Dynamics
of Classical Systems, Phys. Rev. A 8, 423 (1973).

\bibitem {de Domicis et al}C. De Dominicis and L. Peliti, Field-Theory
Renormalization and Critical Dynamics above $T_{c}$: Helium, Antiferromagnets,
and Liquid-Gas Systems, Phys. Rev. B 18, 353 (1978).

\bibitem {Neal}Radford M. Neal, Bayesian Learning for Neural Networks. PhD
thesis, University of Toronto, Dept.of Computer Science, 1994.

\bibitem {Obata}Nobuaki Obata, White noise calculus and Fock space. Lecture
Notes in Math., 1577 Springer-Verlag, Berlin, 1994.

\bibitem {Pleiss et al}Geoff Pleiss, John P. Cunningham, The limitations of
large width in neural networks: a deep Gaussian process perspective, NIPS'21:
Proceedings of the 35th International Conference on Neural Information
Processing Systems,Article No.: 256, Pages 3349 - 336.

\bibitem {Price}R. Price, A useful theorem for nonlinear devices having
gaussian inputs, IRE Transactions Info. Th. 4, 69 (1958), doi:10.1109/TIT.1958.1057444.

\bibitem {Papoulis et al}A. Papoulis and S. U. Pillai, Probability, random
variables, and stochastic processes. McGraw-Hill, New York, ISBN 9780071226615 (1991).

\bibitem {Poole et al}Ben Poole, Subhaneil Lahiri, Maithra Raghu, Jascha
Sohl-Dickstein, and Surya Ganguli. Exponential expressivity in deep neural
networks through transient chaos. In Advances In Neural Information Processing
Systems, pp. 3360--3368, 2016.

\bibitem {Reed-Simon-I}M. Reed, B. Simon, Methods of Modern Mathematical
Physics: Functional Analysis I. Academic Press, Inc. [Harcourt Brace
Jovanovich, Publishers], New York, 1980.

\bibitem {Reed-Simon-II}M. Reed, B. Simon, Methods of Modern Mathematical
Physics: Functional Analysis II. Academic Press [Harcourt Brace Jovanovich,
Publishers], New York-London, 1975.

\bibitem {Roberts et al}Daniel A. Roberts, Sho Yaida, The Principles of Deep
Learning Theory: An Effective Theory Approach to Understanding Neural
Networks. Cambridge University Press, 2022.

\bibitem {Schoenholz et al}Samuel S. Schoenholz, Jeffrey Pennington, Jascha
Sohl-Dickstein, A Correspondence Between Random Neural Networks and
Statistical Field Theory, \ https://doi.org/10.48550/arXiv.1710.06570.

\bibitem {Schuecker et al}J. Schuecker, S. Goedeke, M. Helias, Optimal
Sequence Memory in Driven Random Networks. Physical review X 8(4), 041029 (2018).

\bibitem {Segadlo et al}Kai Segadlo, Bastian Epping, Alexander van Meegen,
David Dahmen, Michael Kr\"{a}mer and Moritz Helias, Unified field theoretical
approach to deep and recurrent neuronal networks, J. Stat. Mech. (2022)
103401. DOI 10.1088/1742-5468/ac8e57.

\bibitem {Slavova}Angela Slavova, Cellular neural networks: dynamics and
modelling. Mathematical Modelling: Theory and Applications, 16. Kluwer
Academic Publishers, Dordrecht, 2003.

\bibitem {Sompolinsky et al}H. Sompolinsky, A. Crisanti, and H.\thinspace J.
Sommers, Chaos in Random Neural Networks, Phys. Rev. Lett. 61, 259 (1988).

\bibitem {Sporns}O. Sporns, Small-world connectivity, motif composition, and
complexity of fractal neuronal connections, Biosystems\textit{, } 85(1),
(2006) 55-64.

\bibitem {Taibleson}M. H. Taibleson, Fourier analysis on local
fields\textit{.} Princeton University Press, 1975.

\bibitem {Treves}Fran\c{c}ois Tr\`{e}ves, Topological vector spaces,
distributions and kernels. Academic Press, New York-London, 1967.

\bibitem {V-V-Z}V. S. Vladimirov, I. V. Volovich, E. I. Zelenov,
$p$\textit{-Adic analysis and mathematical physics. }Singapore, World
Scientific, 1994.

\bibitem {Weil}Andr\'{e} Weil, Basic number theory, Classics in Mathematics,
Berlin, Heidelberg, 1995.

\bibitem {Williams}C. K. I. Williams, Computation with Infinite Neural
Networks, in Neural Computation, vol. 10, no. 5, pp. 1203-1216, 1 July 1998,
doi: 10.1162/089976698300017412.

\bibitem {Wilson-Cowan-1}H. R. Wilson, J. D. Cowan, Excitatory and inhibitory
interactions in localized populations of model neurons, Biophys. J., 12
(1972), pp. 1-23.

\bibitem {Wilson-Cowan-2}H. R. Wilson, J. D. Cowan, A mathematical theory of
the functional dynamics of cortical and thalamic nervous tissue, Kybernetik,
13 (1973), pp. 55-80.

\bibitem {Zambrano-Zuniga-1}B. A. Zambrano-Luna, W. A. Z\'{u}\~{n}iga-Galindo,
$p$-adic cellular neural networks, J. Nonlinear Math. Phys. 30 (2023), no. 1, 34--70.

\bibitem {Zambrano-Zuniga-2}B. A. Zambrano-Luna, W. A. Z\'{u}\~{n}iga-Galindo,
$p$-adic cellular neural networks: applications to image processing, Phys. D
446 (2023), Paper No. 133668, 11 pp.

\bibitem {Zinn-Justin}J. Zinn-Justin, Quantum field theory and critical
phenomena. Internat. Ser. Monogr. Phys., 85 Oxford Sci. Publ. The Clarendon
Press, Oxford University Press, New York, 1993.

\bibitem {Zuniga-ATMP}W. A. Z\'{u}\~{n}iga-Galindo, A correspondence between
deep Boltzmann machines and $p$-adic statistical field theories, Adv. Theor.
Math. Phys. Volume 28, Number 2, pp. 679-741 (2024).

\bibitem {Zuniga et al}W. A. Z\'{u}\~{n}iga-Galindo, C. He, B. A.
Zambrano-Luna, p-adic statistical field theory and convolutional deep
Boltzmann machines, PTEP. Prog. Theor. Exp. Phys.(2023), no. 6, Paper No.
063A01, 17 pp.

\bibitem {Zuniga-PhyA}W. A. Z\'{u}\~{n}iga-Galindo, $p$-adic statistical field
theory and deep belief networks, Phys. A 612 (2023), Paper No. 128492, 23 pp.

\bibitem {Zuniga-JFAA}W. A. Z\'{u}\~{n}iga-Galindo, Non-Archimedean white
noise, pseudodifferential stochastic equations, and massive Euclidean fields,
J. Fourier Anal. Appl. 23 (2017), no. 2, 288--323.

\bibitem {Zuniga-RIM}W. A. Z\'{u}\~{n}iga-Galindo, Non-Archimedean statistical
field theory, Rev. Math. Phys. 34 (2022), no. 8, Paper No. 2250022, 41 pp.

\bibitem {Zuniga-Textbook}W. A. Z\'{u}\~{n}iga-Galindo, $p$-Adic Analysis:
Stochastic Processes and Pseudo-Differential Equations, De Gruyter, 2025.

\bibitem {Zuniga-images}W. A. Z\'{u}\~{n}iga-Galindo, B. A. Zambrano-Luna,
Baboucarr Dibba, Hierarchical neural networks, $p$-adic PDEs, and applications
to image processing, J. Nonlinear Math. Phys. 31 (2024), no. 1, Paper No. 63,
40 pp.

\bibitem {Zuniga-Entropy}W. A. Z\'{u}\~{n}iga-Galindo, B. A. Zambrano-Luna,
Hierarchical Wilson-Cowan models and connection matrices, Entropy 25 (2023),
no. 6, Paper No. 949, 20 pp.
\end{thebibliography}
\end{document}